\documentclass[11pt]{article}

\usepackage[utf8]{inputenc}
\usepackage[T1]{fontenc}
\usepackage{amsmath,amssymb,amsthm,mathtools}
\usepackage{mathrsfs}
\usepackage{booktabs}
\usepackage{enumitem}
\usepackage{geometry}
\usepackage{microtype}
\usepackage[unicode,hidelinks]{hyperref}
\usepackage{bookmark}

\numberwithin{equation}{section}
\allowdisplaybreaks
\newtheorem{theorem}{Theorem}[section]
\newtheorem{lemma}[theorem]{Lemma}
\newtheorem{proposition}[theorem]{Proposition}
\newtheorem{corollary}[theorem]{Corollary}
\theoremstyle{definition}

\newtheorem{example}[theorem]{Example}
\theoremstyle{remark}
\newtheorem{remark}[theorem]{Remark}

\newcommand{\F}{\mathbb F}
\newcommand{\Z}{\mathbb Z}
\newcommand{\A}{\mathbb A}
\newcommand{\PP}{\mathbb P}
\newcommand{\Gm}{\mathbb G_m}
\newcommand{\Gr}{\operatorname{Gr}}
\newcommand{\Spec}{\operatorname{Spec}}
\newcommand{\Tr}{\operatorname{Tr}}
\newcommand{\wt}{\operatorname{wt}}
\newcommand{\rwt}{\operatorname{rwt}}
\newcommand{\cont}{\operatorname{cont}}
\newcommand{\ord}{\operatorname{ord}}
\newcommand{\coeff}{\operatorname{coeff}}
\newcommand{\Duniv}{\mathscr D_{n,r}}
\newcommand{\Ddet}{\mathscr D^{\mathrm{det}}_{n,r}}
\newcommand{\DSch}{\mathscr D^{\mathrm{Sch}}_{n,r}}
\newcommand{\Schur}[1]{\mathscr S_{#1}}
\newcommand{\Part}{\mathcal P_{n,r}}
\newcommand{\MDSlocus}{\mathcal M_{n,r}}
\newcommand{\GRSlocus}{\mathcal G_{n,r}}

\hypersetup{
  pdftitle={Schur–Plücker Geometry of the MDS Locus for Principal-Ideal Codes},
  pdfauthor={Yangcheng Li and Pingzhi Yuan},
  pdfkeywords={MDS codes; principal-ideal codes; Schur polynomials; Plücker coordinates; Grassmannians; confluent Vandermonde determinants; sparse multiples},
  pdfdisplaydoctitle=true,
  bookmarksnumbered=true,
  bookmarksopen=true
}

\title{Schur--Pl\"ucker Geometry of the MDS Locus\\
for Principal-Ideal Codes}
\author{Yangcheng Li\thanks{School of Mathematical Sciences, South China Normal University, Guangzhou 510631, China. Email: \texttt{liyc@m.scnu.edu.cn}.}
\and
Pingzhi Yuan\thanks{School of Mathematical Sciences, South China Normal University, Guangzhou 510631, China. Email: \texttt{yuanpz@scnu.edu.cn}.}}
\date{}

\begin{document}
\maketitle

\begin{abstract}
Let \(g\) be a monic polynomial of degree \(r<n\), and let \(C_g(n)\) be the coefficient-vector code formed by multiples \(ug\) with \(\deg(ug)<n\). We study the coefficient-space MDS locus \(M_{n,r}\). The companion construction identifies coefficient space with the moduli of cyclic matrix–vector pairs, and the remainder-orbit map embeds it as a smooth complete intersection in the standard big cell of \(\operatorname{Gr}(r,n)\).

We prove that every normalized maximal Plücker coordinate pulls back, up to sign, to a power of the constant coefficient \(A_0\) times a Schur polynomial \(S_\kappa(g)=s_\kappa(\Lambda_g)\), where \(\kappa\subseteq (n-r)^{r-1}\). Hence the universal MDS polynomial is
\[
D_{n,r}=A_0\prod_{\kappa\subseteq (n-r)^{r-1}}S_\kappa.
\]
This description yields a flat non-MDS boundary over \(\mathbb Z\), explicit degree and finite-field estimates, and a length filtration governed by sparse multiples. It also gives bad-characteristic criteria on root-multiplicity strata and density-one results on the irreducible stratum. 
Finally, for $r\ge 3$ and $N\ge r+3$, every nonempty first-failure layer over an algebraically closed field has a dense open non-GRS locus.
\end{abstract}

\noindent\textbf{Keywords:} MDS codes; principal-ideal codes; Schur
polynomials; Pl\"ucker coordinates; \mbox{Grassmannians}; confluent Vandermonde
determinants; sparse multiples.

\noindent\textbf{2020 Mathematics Subject Classification:} 94B05; 05E05;
11T71; 14M15.

\section{Introduction}
\label{sec:introduction}

Let $K$ be a field and let
\begin{equation}
 g(x)=x^r+a_{r-1}x^{r-1}+\cdots+a_1x+a_0\in K[x],
 \qquad 1\le r<n.
 \label{eq:intro-g}
\end{equation}
The associated coefficient-vector code is
\begin{equation}
 \mathcal C_g(n)=
 \{u(x)g(x):u(x)\in K[x],\ \deg u<n-r\}\subseteq K^n,
 \label{eq:intro-code}
\end{equation}
where a polynomial of degree less than $n$ is identified with its coefficient
 vector.  It has dimension $n-r$.  To make the principal-ideal terminology
 explicit, put $s=n-r$, choose any monic $v\in K[x]$ of degree $s$, and set
 $h=gv$.  Multiplication by $g$ induces an isomorphism of $K[x]$-modules
 (and hence of $K$-vector spaces)
\begin{equation}
 K[x]/(v)\xrightarrow{\ \sim\ }(\bar g)\subseteq K[x]/(h),
 \qquad \bar u\longmapsto\overline{ug},
 \label{eq:intro-principal-ideal-realization}
\end{equation}
because $ug\equiv0\pmod{gv}$ if and only if $v\mid u$.  The unique
representatives with $\deg u<s$ map to the degree-$<n$ coefficient vectors
in~\eqref{eq:intro-code}.  Thus the ambient quotient realization may depend
on $v$, but the code and all remainder-coordinate tests below depend only on
$(g,n)$.  We therefore begin with the companion remainder orbit rather than
with the quotient ring.

Codes realized as ideals in polynomial factor rings are commonly studied as
polycyclic codes; see
\cite{LopezPermouthEtAl,LopezPermouthParraSzabo}.  The present family is
distinguished by the coefficient-vector realization above, which depends
only on $(g,n)$ although the ambient factor ring may vary with $v$.

Throughout, over an arbitrary field $K$, an $[n,n-r]$ code is said to be
\emph{of GRS type} if it is monomially equivalent to a homogeneous
(projective) extended generalized Reed--Solomon code evaluated at pairwise
distinct $K$-rational points of $\PP^1_K$.  For $r\ge2$, equivalently, the
projective columns of a full-rank parity-check matrix are pairwise distinct
and lie on a split rational normal curve in $\PP_K^{r-1}$.  Li and
Yuan~\cite[Proposition~2.5]{LiYuanCyclicOrbits} prove the corresponding
projective characterization over finite fields, and their argument applies
verbatim over an arbitrary field $K$.  A code is called \emph{non-GRS} if it
is not of GRS type in this sense.  Over finite fields this agrees with the
convention in~\cite{LiYuanCyclicOrbits}.
The polynomial-evaluation construction goes back to Reed and
Solomon~\cite{ReedSolomon}.  For the classical correspondence between linear
MDS codes and projective arcs, and the role of rational normal curves, see
\cite{AldersonBruenSilverman,BallLavrauw,Segre}.

Li and Yuan~\cite[Proposition~3.1 and Theorem~3.2]{LiYuanCyclicOrbits}
construct a canonical remainder-coordinate parity-check matrix
$H_g^{\mathrm{rem}}$ and introduce an exact MDS polynomial $D_{n,r}(g)$
satisfying
\begin{equation}
 \mathcal C_g(n)\text{ is MDS}
 \quad\Longleftrightarrow\quad D_{n,r}(g)\ne0.
 \label{eq:intro-LiYuan-MDS}
\end{equation}
They also classify, for $r\ge3$ and $n\ge r+3$, the coefficient points for
which the code is of GRS type, using cyclic projective orbits and
rational-normal-curve rigidity
\cite[Theorems~5.3 and~6.6]{LiYuanCyclicOrbits}.

Here we study a different aspect of the same code family.  For each field
$K$, write
\[
 \GRSlocus(K)
 \subseteq\mathcal M_{n,r}(K)\subseteq K^r
\]
for the GRS-type coefficient set and the set of $K$-points of the
coefficient-space MDS locus of this principal-ideal family, respectively.
The notation $\GRSlocus(K)$ is purely set-theoretic; no $\Z$-subscheme is
being asserted.  The classification in~\cite{LiYuanCyclicOrbits} studies the
former set through rational-normal-curve geometry, whereas our focus is the
internal geometry of $\MDSlocus$, using Grassmannians, Pl\"ucker coordinates,
Schur functions, arithmetic factorization strata, length filtrations, and
sparse multiples.

Except where we explicitly invoke~\cite{LiYuanCyclicOrbits} for the
pure-power criterion and the GRS classification and counting results, the
Schur--Pl\"ucker, divisor, filtration, and sparse-threshold arguments below
are self-contained.

The universal determinantal product
$D_{n,r}=A_0\prod_U\delta_U$, introduced in
\cite[Equation~(3.2) and Theorem~3.2]{LiYuanCyclicOrbits}, is nonzero exactly
when $\mathcal C_g(n)$ is MDS.  We denote this same polynomial by $\Ddet$
when emphasizing its determinantal realization; the superscript does not
indicate a new lift or a new MDS polynomial.  The new result here is its
identification with the rectangular Schur realization $\DSch$.  Its behavior
on factorization strata and its variation with $n$ lead to three questions:
\begin{enumerate}[label=\textup{(Q\arabic*)},leftmargin=3.2em]
\item Do all Pl\"ucker coordinates of the remainder orbit admit a uniform
symmetric-function description?
\item How does the MDS open set meet prescribed root-multiplicity and
irreducible Frobenius strata?
\item As $n$ increases, which new equations enter, and at which length does
the MDS property fail for the first time?
\end{enumerate}

Our answer to the first question is the following master theorem.  Put
\[
 H_g^{\mathrm{rem}}=(Q_0\ Q_1\ \cdots\ Q_{n-1}),
\]
where $Q_i$ is the coefficient vector of $x^i\bmod g$.  For
$I=\{i_0<i_1<\cdots<i_{r-1}\}$ set
\[
 U_I=\{i_1-i_0,\ldots,i_{r-1}-i_0\}
\]
and, for $U=\{u_1<\cdots<u_{r-1}\}$, define
\[
 \kappa(U)=
 \bigl(u_{r-1}-(r-1),u_{r-2}-(r-2),\ldots,u_1-1,0\bigr).
\]
For $r=1$, all index lists are empty and we use
$\kappa(\varnothing)=\varnothing$ and $\Schur{\varnothing}=1$.
If $\Lambda_g$ is the root multiset of $g$, let
$\Schur{\kappa}(g)=s_\kappa(\Lambda_g)$.

\begin{theorem}[Rectangular Schur--Pl\"ucker theorem]
\label{thm:intro-master}
For every field, every monic $g$ of degree $r$, and every $r$-subset
$I\subseteq\{0,\ldots,n-1\}$,
\begin{equation}
 \det(Q_{i_0},\ldots,Q_{i_{r-1}})
 =\bigl((-1)^ra_0\bigr)^{i_0}\Schur{\kappa(U_I)}(g).
 \label{eq:intro-all-pluecker}
\end{equation}
The reduced index sets are in bijection with all partitions in the rectangle
$(n-r)^{r-1}$.  If
$\DSch:=A_0\prod_{\kappa\subseteq(n-r)^{r-1}}\Schur{\kappa}$, then
\begin{equation}
 \boxed{\ \Ddet=\DSch\ },
 \label{eq:intro-product}
\end{equation}
and we denote their common value by $\Duniv$.  Moreover,
\begin{equation}
 \mathcal C_g(n)\text{ is MDS}
 \quad\Longleftrightarrow\quad
 a_0\ne0\ \text{and}\quad
 \Schur{\kappa}(g)\ne0
 \quad\text{for every }\kappa\subseteq(n-r)^{r-1}.
 \label{eq:intro-schur-MDS}
\end{equation}
\end{theorem}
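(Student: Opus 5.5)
The proof reduces everything to the generic case and then uses two identities: one for the shift $i_0$, one for the remaining minor. Both sides of \eqref{eq:intro-all-pluecker} are polynomials with integer coefficients in $a_0,\ldots,a_{r-1}$. Indeed, $Q_i$ is obtained by iterating the companion matrix $C_g$ on $Q_0=e_1$. Also $s_\kappa(\Lambda_g)$ is a polynomial in the elementary symmetric functions, and these equal $(-1)^{r-j}a_{j}$ up to indexing. It therefore suffices to prove the identity in $\Z[a_0,\ldots,a_{r-1}]$ and specialize. For that it is enough to work over the splitting field of a generic $g$ with distinct roots $\alpha_1,\ldots,\alpha_r$, because identities holding on a Zariski-dense set hold identically.

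\emph{Shift.} Since $Q_{i+1}=C_gQ_i$, we have $\det(Q_{i_0},\ldots,Q_{i_{r-1}})=\det(C_g)^{i_0}\det(Q_0,Q_{u_1},\ldots,Q_{u_{r-1}})$ with $u_j=i_j-i_0$. Moreover $\det C_g=(-1)^r a_0$. This produces the factor $((-1)^ra_0)^{i_0}$ and reduces the claim to $i_0=0$.

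\emph{The case $i_0=0$.} Let $V=(\alpha_k^{\,j})_{0\le j<r,\,1\le k\le r}$ be the Vandermonde matrix. The coordinate vector $Q_i$ of $x^i \bmod g$ satisfies $Q_i^{\top}V=(\alpha_1^i,\ldots,\alpha_r^i)$, since evaluating the remainder at a root of $g$ gives $\alpha_k^i$. Hence $(Q_{0},Q_{u_1},\ldots,Q_{u_{r-1}})^{\top}V=(\alpha_k^{u_j})$ with $u_0=0$. The right-hand side is a generalized Vandermonde determinant, and it equals $s_\kappa(\alpha)\det V$ by the bialternant formula, where $\kappa_j=u_{r-j}-(r-j)$ read in decreasing order, which is exactly $\kappa(U)$. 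Dividing by $\det V\neq 0$ gives $\det(Q_0,Q_{u_1},\ldots)=s_{\kappa(U)}(\Lambda_g)$. The one point needing care is the sign convention: row order versus column order is identical on both sides, so no sign appears. This is where I expect the only real bookkeeping, namely checking that the last part is $u_0-0=0$ and that the bialternant ordering matches.

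\emph{Bijection and product.} The map $U\mapsto\kappa(U)$ sends $(r-1)$-subsets of $\{1,\ldots,n-1-i_0\}$ bijectively to partitions with at most $r-1$ parts and largest part $\le n-r-i_0$. Taking $i_0=0$ exhausts the rectangle $(n-r)^{r-1}$; this is the standard bijection $u_j\mapsto u_j-j$. For $\Ddet=\DSch$, recall from \cite[Equation~(3.2)]{LiYuanCyclicOrbits} that $\Ddet=A_0\prod_U\delta_U$, where $\delta_U$ ranges over the minors with $i_0=0$ and $U$ ranges over the reduced index sets. I would check there that $\delta_U$ is precisely $\det(Q_0,Q_{u_1},\ldots,Q_{u_{r-1}})$, with any sign normalization matching. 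The identity then follows factor by factor from the case $i_0=0$. Finally, \eqref{eq:intro-schur-MDS} follows from \eqref{eq:intro-LiYuan-MDS}, because a product over a field is nonzero iff each factor is. One could also argue directly: the code is MDS iff all maximal minors of $H_g^{\mathrm{rem}}$ are nonzero, and by the shift formula this holds iff $a_0\neq0$ and every $\Schur{\kappa}(g)\ne0$.
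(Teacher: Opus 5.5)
Your proposal is correct and follows essentially the paper's proof: you factor $T_g^{i_0}$ out of the minor using $\det T_g=(-1)^ra_0$, identify the reduced minor with a bialternant quotient by evaluating at distinct roots, pass to arbitrary $g$ by a generic/universal argument, and use the index-set/partition bijection to obtain $\Ddet=\DSch$. The differences are cosmetic. The paper universalizes over algebraically independent root variables $X_1,\ldots,X_r$ rather than over the generic coefficient vector. For the MDS converse it exhibits the minor $\det(Q_1,\ldots,Q_r)=(-1)^ra_0$, available since $n>r$, to force $a_0\ne0$, a step your direct argument leaves implicit.
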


Although written in terms of a root multiset, each $\Schur{\kappa}(g)$ lies
in the coefficient field and is independent of the chosen splitting field
and of the ordering of the roots.
Thus~\eqref{eq:intro-product} identifies the determinantal and Schur
realizations of the same exact MDS polynomial.
The generalized and confluent bialternant identities used in the proof are
classical or known~\cite{FloweHarris,GonzalezSerranoMaximenko,LiLin,Macdonald}.
The new point is not the isolated identity ``generalized Vandermonde equals
Vandermonde times Schur.''  It is the identification of \emph{all} reduced
MDS minors with exactly one complete rectangle of partitions, the extension
to every Pl\"ucker coordinate of the companion remainder orbit, and the
coding-theoretic and arithmetic consequences of that complete system.
In particular, after separating the constant-coefficient factor $A_0$, the
MDS condition is governed by a single rectangular Schur arrangement rather
than by an unrelated collection of minor equations.

Before passing to the Grassmannian, coefficient space has an intrinsic
moduli interpretation.  The scheme of pairs $(A,z)$ for which
$z,Az,\ldots,A^{r-1}z$ is a basis is
$\operatorname{GL}_{r,\Z}\times\A^r_{\Z}$; simultaneous change of basis acts
by left translation on the first factor.  Thus its fpqc quotient is
$\A^r_{\Z}$, and the companion pair is a global slice.

The theorem places this coefficient family in the Grassmannian through
\[
 \Phi_{n,r}:\A^r_{\Z}\longrightarrow\Gr(r,n),
 \qquad g\longmapsto\operatorname{rowspan}H_g^{\mathrm{rem}}.
\]
The first $r$ columns form the identity.  More precisely, $\Phi_{n,r}$
factors as a closed immersion of $\A^r_{\Z}$ into the standard big cell;
its composite with the big-cell open immersion is therefore a locally
closed immersion into $\Gr(r,n)$.  In big-cell coordinates its image is cut
out by $r(n-r-1)$ triangular companion-recurrence equations forming a
regular sequence.  It is therefore an explicit smooth closed complete
intersection, and its intersection with the uniform-matroid open set is
isomorphic to $\MDSlocus$.  The canonical principal effective Cartier
boundary divisor is flat over $\Spec\Z$ and has support
\[
 V(A_0)\ \cup\!
 \bigcup_{\kappa\subseteq(n-r)^{r-1}}V(\Schur{\kappa}).
\]
This is a canonical Schur-coordinate decomposition, not a claim that every
displayed factor is irreducible in the coefficient ring.
The pullback of the product of \emph{all} Pl\"ucker coordinates records the
multiplicities not recorded when one retains one representative from each
translation class of minors:
\[
 \binom n{r+1}\operatorname{div}(A_0)
 +\sum_{\kappa\subseteq(n-r)^{r-1}}
   (n-r+1-\kappa_1)\operatorname{div}(\Schur{\kappa}).
\]
It has the same support as the canonical boundary and ordinary degree
$r\binom n{r+1}$.

The remaining results develop four consequences of this description.
\begin{enumerate}[label=\textup{(\roman*)},leftmargin=2.8em]
\item The universal boundary polynomial is primitive over $\Z$, and the
associated divisor, as well as each of its outer-layer increments, is flat
over $\Z$.  For $r\ge2$ the increments have nonzero fibers in every
characteristic; for $r=1$ they are empty.
Its exact total degree is
\[
 \delta_{n,r}=
 1+\frac{(r-1)(n-r)}{r}\binom{n-1}{r-1},
\]
and its root-weighted degree is computed exactly.  This yields the explicit bound
$\#\MDSlocus(\F_q)\ge q^r-\delta_{n,r}q^{r-1}$.  Root dilation and
reciprocity arise from explicit monomial and permutation equivalences of
codes, while the Schur identities describe their action factor by factor.
For $r\ge2$, the factor $\Schur{(1)}=-A_{r-1}$ is a unit on the MDS locus,
and normalization of $A_{r-1}$ gives a scheme isomorphism
$\MDSlocus\simeq\Gm\times\mathcal M_{n,r}^{(1)}$ with explicit quotient
coordinates.  In particular, its number of $\F_q$-points is divisible by
$q-1$.
Combining the estimate with the exact GRS count in
\cite[Theorem~6.14]{LiYuanCyclicOrbits} gives a quantitative lower bound for
MDS non-GRS members.  A separate avoidance argument gives an explicit lower
bound for completely split MDS non-GRS polynomials.

 \item The partition rectangles are nested with $n$, producing, for $r\ge2$,
 a filtration of effective boundary divisors with nonzero flat Cartier
 increments in every characteristic.  This divisor statement alone does not
 assert strict containment of supports.  The corresponding MDS open sets form
 a decreasing filtration; each first-failure subscheme is a flat effective
 Cartier divisor in the preceding MDS open, with underlying space equal to
 the corresponding set-theoretic difference.  The layer at index $N$ is
 faithfully flat over $\Z[1/N]$, so the corresponding inclusion of MDS opens
 is geometrically strict after base change to every field of characteristic
 zero or prime to $N$; when $r\ge3$ and $N\ge r+3$, we count its GRS-type
 coefficient points exactly.  After base change to an algebraically closed
 field, for $r\ge3$ and $N\ge r+3$, the closure of the GRS-type coefficient
 set in $\mathcal X_{N,r}$ has dimension at most one, hence codimension at
 least $r-2$ in every component it meets; the non-GRS locus is dense in every
 component of every nonempty such layer.

\item For every multiplicity type
$\mathbf m=(m_1,\ldots,m_s)$, the pullback of $\MDSlocus$ to the ordered
root stratum is cut out by
\[
 \prod_{\kappa\subseteq(n-r)^{r-1}}
 s_\kappa(Y_1^{[m_1]},\ldots,Y_s^{[m_s]}).
\]
This gives an exact content criterion: the prime divisors of the contents of
these factors are exactly the
characteristics in which the entire stratum collapses into the non-MDS
boundary, and every such prime is smaller than $n$.  Away from those primes
we give a quantitative density estimate.  In positive characteristic, every
non-squarefree multiplicity type has a finite structural bad-length bound,
with exact formulas for pure powers and the type $(2,1)$.  On the irreducible
stratum the master theorem becomes a
Frobenius--Schur criterion.  We prove explicit large-field estimates and
 density one for irreducible MDS members as $q\to\infty$, both for fixed
 $(n,r)$ and, with $r$ fixed, for $n=n(q)$ satisfying $n(q)^r=o(q)$.  When
 $r\ge3$ and $n\ge r+3$, these irreducible MDS members are automatically
 non-GRS by~\cite[Corollary~5.5]{LiYuanCyclicOrbits}.

\item Over $\F_q$, the order of the companion class in
$\operatorname{PGL}_r(\F_q)$ is computed exactly from root ratios and
multiplicities.  It equals the least exponent of a binomial $x^N-c$
divisible by $g$ and gives an explicit upper bound for the sparse threshold.
The first vanishing Schur width equals, up to the shift $r-1$, the degree of
the first nonzero multiple of $g$ having at most $r$ nonzero coefficients.
On every generic multiplicity stratum this sparse threshold equals the
structural bad-length threshold minus one.
\end{enumerate}

The organization follows this chain.  Section~\ref{sec:setup} constructs the
remainder and Grassmannian models and fixes Schur notation.
Section~\ref{sec:schur-system} proves the rectangular Schur--Pl\"ucker
theorem and its confluent form.  Section~\ref{sec:global-arithmetic} studies
the universal MDS boundary divisor and finite-field estimates.
Section~\ref{sec:strata} treats multiplicity and irreducible arithmetic
strata.  Section~\ref{sec:length} develops the length filtration and sparse
thresholds.  Section~\ref{sec:examples} gives specializations and examples
organized by the rectangular Schur system.  Appendix~A
contains the detailed quadratic Frobenius reconstructions used in the
coefficient tests of Section~\ref{sec:strata}.

\section{Remainder orbits and the Grassmannian model}
\label{sec:setup}

\subsection{Coefficient codes and remainder coordinates}

Let
\[
 R_r=\Z[A_0,\ldots,A_{r-1}]
\]
and consider the universal monic polynomial
\begin{equation}
 \mathbf g(x)=x^r+A_{r-1}x^{r-1}+\cdots+A_1x+A_0\in R_r[x].
 \label{eq:universal-g}
\end{equation}
For a specialization $g$ as in~\eqref{eq:intro-g}, write
\begin{equation}
 x^i\equiv b_{0,i}+b_{1,i}x+\cdots+b_{r-1,i}x^{r-1}\pmod g,
 \qquad
 Q_i=(b_{0,i},\ldots,b_{r-1,i})^T.
 \label{eq:remainder-vector}
\end{equation}
Thus $Q_0,\ldots,Q_{r-1}$ are the standard basis vectors.  If $T_g$ is the
matrix of multiplication by $x$ on $K[x]/(g)$ in the power basis, then
\begin{equation}
 Q_i=T_g^iQ_0,
 \qquad
 \det(T_g)=(-1)^ra_0,
 \label{eq:companion-shift}
\end{equation}
and the columns satisfy the recurrence
\begin{equation}
 Q_{i+r}=-a_{r-1}Q_{i+r-1}-\cdots-a_1Q_{i+1}-a_0Q_i.
 \label{eq:remainder-recurrence}
\end{equation}

For a polynomial $f(x)=\sum_i c_ix^i$, we write
\begin{equation}
 \wt(f):=\#\{i:c_i\ne0\}
 \label{eq:coefficient-weight}
\end{equation}
for its coefficient weight.

The next proposition combines
\cite[Proposition~3.1 and Theorem~3.2\textup{(i)--(iii)}]{LiYuanCyclicOrbits};
we include the short argument to fix the remainder notation used throughout.

\begin{proposition}[Remainder parity check and sparse multiples]
\label{prop:remainder-parity-sparse}
The matrix
\begin{equation}
 H_g^{\mathrm{rem}}=(Q_0\ Q_1\ \cdots\ Q_{n-1})
 \label{eq:remainder-matrix}
\end{equation}
is a parity-check matrix of $\mathcal C_g(n)$.  Consequently,
\begin{align}
 \mathcal C_g(n)\text{ is MDS}
 &\Longleftrightarrow
 \text{every $r$ columns of $H_g^{\mathrm{rem}}$ are independent},
 \label{eq:parity-MDS}\\
 &\Longleftrightarrow
 \nexists\,0\ne f\in K[x]
 \text{ with }g\mid f,\ \deg f<n,\ \wt(f)\le r.
 \label{eq:sparse-criterion}
\end{align}
\end{proposition}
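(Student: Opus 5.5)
The plan is to exhibit $H_g^{\mathrm{rem}}$ as a parity-check matrix through the reduction map $K[x]_{<n}\to K[x]/(g)$, and then to read both MDS criteria off the kernel of this map.

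Identify $K^n$ with the space $K[x]_{<n}$ of polynomials of degree less than $n$, and $K^r$ with $K[x]/(g)$ via the power basis $1,x,\ldots,x^{r-1}$. By \eqref{eq:remainder-vector}, the linear map
\[
 \rho:K[x]_{<n}\longrightarrow K[x]/(g),\qquad f\longmapsto f\bmod g,
\]
has matrix $H_g^{\mathrm{rem}}$ in these bases: the image of $x^i$ is $Q_i$, so for $c=(c_0,\ldots,c_{n-1})$ we get $H_g^{\mathrm{rem}}c=\sum_i c_iQ_i$, the coefficient vector of $f\bmod g$. The kernel of $\rho$ consists of the $f$ with $\deg f<n$ and $g\mid f$. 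Such an $f$ equals $ug$ with $\deg u<n-r$, so $\ker\rho=\mathcal C_g(n)$. Since $Q_0,\ldots,Q_{r-1}$ are the standard basis vectors, $H_g^{\mathrm{rem}}$ has rank $r$. Its kernel therefore has dimension $n-r$, which agrees with $\dim\mathcal C_g(n)$, and $H_g^{\mathrm{rem}}$ is a full-rank parity-check matrix of $\mathcal C_g(n)$.

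For \eqref{eq:parity-MDS} I will use the standard fact that an $[n,k]$ code with full-rank parity-check matrix $H$ has minimum distance $d$ equal to the least number of linearly dependent columns of $H$. A codeword of weight $w$ is exactly a dependence relation among $w$ columns. The code is MDS precisely when $d=n-k+1=r+1$, which happens exactly when every set of $r$ columns is independent. The upper bound $d\le r+1$ is automatic (Singleton), and in any case $r+1$ columns in $K^r$ are always dependent, so only this condition needs checking.

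For \eqref{eq:sparse-criterion}, a dependence among at most $r$ columns corresponds to a nonzero $c\in\ker H_g^{\mathrm{rem}}$ with at most $r$ nonzero entries. By the kernel identification above, this is a nonzero $f$ with $\deg f<n$, $g\mid f$, and $\wt(f)\le r$, where $\wt$ is the coefficient weight \eqref{eq:coefficient-weight}. Both equivalences follow.

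I expect no real obstacle. The only point that needs care is the dimension count: it ensures that $\ker\rho$ equals $\mathcal C_g(n)$ rather than merely containing it, and it uses that the first $r$ columns of $H_g^{\mathrm{rem}}$ form the identity matrix.
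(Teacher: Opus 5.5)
Your proposal is correct and follows essentially the same route as the paper. Both arguments identify $H_g^{\mathrm{rem}}$ with reduction modulo $g$, read off its kernel as $\mathcal C_g(n)$, and get full rank from the identity block; they then apply the standard parity-check characterization of MDS codes and translate weight-$\le r$ kernel vectors into sparse multiples of $g$. One minor point: your divisibility argument already gives $\ker\rho=\mathcal C_g(n)$ directly, so the identity block is needed for full rank (which the MDS column criterion uses), not for that equality.
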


\begin{proof}
For $c(x)=\sum_{i=0}^{n-1}c_ix^i$, the coordinate vector of the remainder
of $c$ modulo $g$ is $\sum_i c_iQ_i$.  Since $\deg c<n$,
\[
 c\in\mathcal C_g(n)
 \Longleftrightarrow g\mid c
 \Longleftrightarrow \sum_i c_iQ_i=0.
\]
The first $r$ columns form the identity, so the matrix has rank $r$ and is a
parity check.  The first equivalence is the standard parity-check
characterization of an $[n,n-r]$ MDS code.  The minimum distance is the
least coefficient weight of a nonzero degree-$<n$ multiple of $g$, which
gives the second equivalence.
\end{proof}

For later use, if $g$ is monic of degree $r$ with $a_0\ne0$, define its
sparse-multiple degree by
\begin{equation}
 \sigma_r(g)
 :=\min\{\deg f:0\ne f,\ g\mid f,\ \wt(f)\le r\},
 \label{eq:sigma-definition}
\end{equation}
with value $\infty$ if the displayed set is empty.

The sparse-multiple problem for general input polynomials is algorithmically
subtle~\cite{GiesbrechtRocheTilak}.  In the present companion-orbit setting,
\eqref{eq:sparse-criterion} will instead lead to an exact threshold governed
by the first vanishing rectangular Schur coordinate.

\subsection{Cyclic pairs and the companion slice}

Over a field, the matrix $[\,z\ Az\ \cdots\ A^{r-1}z\,]$ is the single-input
reachability (Krylov) matrix, and its relation to cyclic matrices and
companion normal form is classical~\cite{FerranteWimmer}.  The change-of-basis
and moduli viewpoint is part of algebraic systems
theory~\cite{HazewinkelKalman}.  The companion pair $(T_g,Q_0)$ is the
universal normal form for a matrix together with a cyclic vector.  The
following statement gives the corresponding scheme-theoretic assertion over
$\Z$ and exhibits the companion pair as a global slice.

\begin{proposition}[Cyclic pairs and the companion slice]
\label{prop:cyclic-pair-scheme}
Let $\operatorname{Mat}_r:=\A_{\Z}^{r^2}$, let
$e_0,\ldots,e_{r-1}$ be the standard column basis, and put
\[
 B(A,z):=[\,z\ Az\ \cdots\ A^{r-1}z\,].
\]
The cyclic-pair scheme
\[
 \operatorname{Cyc}_r
 :=D\!\left(\det B(A,z)\right)
 \subseteq\operatorname{Mat}_r\times\A_{\Z}^r
\]
is isomorphic to $\operatorname{GL}_{r,\Z}\times\A_{\Z}^r$.  More precisely,
for $a=(a_0,\ldots,a_{r-1})$, let $T(a)$ be the companion matrix determined by
\[
 T(a)e_i=e_{i+1}\quad(0\le i<r-1),
 \qquad
 T(a)e_{r-1}=-\sum_{i=0}^{r-1}a_i e_i.
\]
Then
\begin{equation}
 \Theta:\operatorname{GL}_{r,\Z}\times\A_{\Z}^r
 \xrightarrow{\ \sim\ }\operatorname{Cyc}_r,
 \qquad
 (P,a)\longmapsto\bigl(PT(a)P^{-1},Pe_0\bigr)
 \label{eq:cyclic-pair-trivialization}
\end{equation}
is an isomorphism, with inverse
\begin{equation}
 (A,z)\longmapsto\bigl(B(A,z),a(A,z)\bigr),
 \qquad
 \begin{pmatrix}a_0(A,z)\\ \vdots\\ a_{r-1}(A,z)\end{pmatrix}
 =-B(A,z)^{-1}A^rz.
 \label{eq:cyclic-pair-inverse}
\end{equation}
For the simultaneous change-of-basis action
\[
 U\cdot(A,z)=(UAU^{-1},Uz),
\]
the isomorphism~\eqref{eq:cyclic-pair-trivialization} identifies the action
with $U\cdot(P,a)=(UP,a)$.  Hence the projection
\[
 \pi:\operatorname{Cyc}_r\longrightarrow\A_{\Z}^r,
 \qquad
 (A,z)\longmapsto a(A,z),
\]
is a trivial $\operatorname{GL}_{r,\Z}$-torsor, and
\[
 \operatorname{Cyc}_r/\operatorname{GL}_{r,\Z}\simeq\A_{\Z}^r
\]
for the fpqc sheaf quotient (and for the categorical quotient).  Moreover,
the quotient coordinates are the characteristic-polynomial coefficients:
\[
 \det(xI_r-A)=x^r+a_{r-1}(A,z)x^{r-1}+\cdots+a_1(A,z)x+a_0(A,z).
\]
All assertions are preserved by arbitrary base change.
\end{proposition}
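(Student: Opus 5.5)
We prove Proposition~\ref{prop:cyclic-pair-scheme} by checking directly that the two morphisms $\Theta$ and $\Psi:(A,z)\mapsto(B(A,z),a(A,z))$ are mutually inverse. Everything else then follows formally. We work with functors of points on arbitrary rings $R$, so every identity is polynomial (or involves only $\det B$ inverted) and base change is automatic.

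\emph{$\Theta$ lands in $\operatorname{Cyc}_r$.} Put $A=PT(a)P^{-1}$ and $z=Pe_0$. Then $A^iz=PT(a)^ie_0=Pe_i$ for $0\le i\le r-1$. Hence $B(A,z)=P$, and $\det B$ is a unit.

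\emph{$\Psi\circ\Theta=\mathrm{id}$.} Since $B(A,z)=P$, it remains to recover $a$. We have
\[
A^rz=PT(a)e_{r-1}=-P\sum_i a_ie_i,
\]
so $-B^{-1}A^rz=a$.

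\emph{$\Theta\circ\Psi=\mathrm{id}$.} Let $(A,z)\in\operatorname{Cyc}_r(R)$, and set $P=B(A,z)\in\operatorname{GL}_r(R)$ and $a=-P^{-1}A^rz$. For $i<r-1$,
\[
AP e_i=A^{i+1}z=Pe_{i+1}=PT(a)e_i .
\]
For the last column,
\[
APe_{r-1}=A^rz=-Pa=PT(a)e_{r-1}.
\]
Thus $AP=PT(a)$, and also $Pe_0=z$. Both maps are given by polynomials in the entries and in $\det B^{-1}$, so they are morphisms of schemes over $\Z$. This proves that $\Theta$ is an isomorphism.

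\emph{Equivariance.} For $U\in\operatorname{GL}_r(R)$ we have $B(UAU^{-1},Uz)=UB(A,z)$, since $(UAU^{-1})^iUz=UA^iz$. For the second coordinate,
\[
a(UAU^{-1},Uz)=-(UB)^{-1}UA^rz=a(A,z).
\]
So the action transports to $U\cdot(P,a)=(UP,a)$. Consequently $\pi$ corresponds to the second projection of $\operatorname{GL}_{r,\Z}\times\A^r_\Z$, with $\operatorname{GL}_r$ acting by left translation on the first factor. This is a trivial torsor.

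\emph{Quotients.} The fpqc sheaf quotient of a trivial torsor $G\times X$ by $G$ is $X$. Indeed, the action is free and transitive on fibers, and the section $a\mapsto(I,a)$ splits it. For the categorical quotient, any $G$-invariant morphism $G\times X\to Y$ factors uniquely through $X$ via this section.

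\emph{Characteristic polynomial.} We have $\det(xI-A)=\det(xI-T(a))$, because $A$ is conjugate to $T(a)$ over $R$. The standard companion computation gives $\det(xI-T(a))=x^r+\sum_i a_ix^i$: expand along the last column, or apply Cayley--Hamilton to the cyclic vector $e_0$.

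There is no real obstacle. The only care needed is to phrase everything on $R$-points, so that the statements hold over $\Z$ and under base change rather than only over fields.
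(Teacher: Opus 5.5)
Your proof is correct and follows essentially the same route as the paper: check on $R$-points that $\Theta$ and $(A,z)\mapsto(B(A,z),a(A,z))$ are mutually inverse via $A^jz=Pe_j$ and $AB=BT(a)$, transport the action to left translation, and read off the torsor, quotient, and characteristic-polynomial statements. The only cosmetic differences are that you verify equivariance on the inverse map rather than on $\Theta$, and that you offer Cayley--Hamilton applied to the cyclic vector $e_0$ as an alternative to the direct companion determinant.
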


\begin{proof}
Let $R$ be a commutative ring and $(A,z)\in\operatorname{Cyc}_r(R)$.  Since
$B:=B(A,z)$ is invertible, there is a unique $a\in R^r$ such that
\[
 A^rz=-\sum_{i=0}^{r-1}a_iA^iz;
\]
its coordinate column is exactly~\eqref{eq:cyclic-pair-inverse}.  Comparing
columns gives $AB=BT(a)$, and therefore
\[
 A=BT(a)B^{-1},\qquad z=Be_0.
\]
Conversely, if $(A,z)=(PT(a)P^{-1},Pe_0)$, then $A^jz=Pe_j$ for
$0\le j<r$, so $B(A,z)=P$.  Thus the formulas are mutually inverse on
$R$-points.  They are natural in $R$, and the entries of $B(A,z)^{-1}$ are
regular on $D(\det B(A,z))$; hence they define mutually inverse morphisms of
$\Z$-schemes.

The companion convention gives
\[
 \det(xI_r-T(a))=x^r+a_{r-1}x^{r-1}+\cdots+a_0,
\]
which proves the characteristic-polynomial assertion.  Finally,
$U\cdot\Theta(P,a)=\Theta(UP,a)$, so under $\Theta$ the quotient map is the
second projection.  To verify the categorical universal property explicitly,
let $Y$ be a $\Z$-scheme and let
\[
 f:\operatorname{GL}_{r,\Z}\times\A_{\Z}^r\longrightarrow Y
\]
be invariant under left translation on the first factor.  Then
$f(P,a)=f(I_r,a)$, so $f$ factors through the second projection via
$a\mapsto f(I_r,a)$; this factorization is unique.  The same argument and all
the displayed formulas commute with arbitrary base change.  This proves the
torsor and quotient statements.
\end{proof}

\subsection{The root-jet model}

For a polynomial \(f(x)=\sum_i c_ix^i\), write
\[
 D^{(j)}f(x)=\sum_{i\ge j}\binom ijc_ix^{i-j}
\]
for its \(j\)-th Hasse derivative.  Suppose that over a splitting field
\[
 g(x)=\prod_{\nu=1}^s(x-\alpha_\nu)^{m_\nu},
 \qquad \sum_{\nu=1}^s m_\nu=r,
\]
with distinct centers.  Li and Yuan~\cite[Theorem~3.4]{LiYuanCyclicOrbits}
give the following model, which we record to identify precisely the input
used by the Schur factorization.

Here and below,
\(\binom{i}{j}\alpha^{\,i-j}\) is understood to be \(0\) when \(i<j\);
this is the Hasse-derivative convention and also covers \(\alpha=0\).

\begin{proposition}[Root-jet parity check]
\label{prop:root-jet-recalled}
The matrix
\begin{equation}
 H_g^{\mathrm{jet}}
 =
 \left(
 \binom{i}{j}\alpha_\nu^{\,i-j}
 \right)_{\substack{1\le\nu\le s,\ 0\le j<m_\nu\\0\le i<n}}
 \label{eq:root-jet-matrix}
\end{equation}
is a parity-check matrix of the scalar extension of
\(\mathcal C_g(n)\) to the splitting field.  It differs from
\(H_g^{\mathrm{rem}}\) by an invertible row transformation.  Hence the
code is MDS if and only if every generalized confluent Vandermonde maximal
minor of~\eqref{eq:root-jet-matrix} is nonzero.
\end{proposition}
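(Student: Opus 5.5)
The plan is to compare the two parity-check matrices directly over a splitting field $L$ of $g$ and then identify the transition matrix. Fix the root multiset $\alpha_1^{[m_1]},\ldots,\alpha_s^{[m_s]}$ and define the linear functionals on $L[x]/(g)$
\[
 \ell_{\nu,j}(f)=\bigl(D^{(j)}f\bigr)(\alpha_\nu),\qquad 1\le\nu\le s,\ 0\le j<m_\nu .
\]
These are well defined modulo $g$, because the Hasse derivatives $D^{(j)}$ with $j<m_\nu$ of a multiple of $(x-\alpha_\nu)^{m_\nu}$ vanish at $\alpha_\nu$. This holds in every characteristic, since the Leibniz rule for Hasse derivatives involves no factorials.

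Consider the matrix $M$ whose rows are the $\ell_{\nu,j}$ written in the power basis $1,x,\ldots,x^{r-1}$. This is the square confluent Vandermonde matrix. Applying $\ell_{\nu,j}$ to the remainder of $x^i$ gives $\ell_{\nu,j}(x^i)=\binom ij\alpha_\nu^{i-j}$, so
\[
 M\,H_g^{\mathrm{rem}}=H_g^{\mathrm{jet}} .
\]
To see that $M$ is invertible, I will use the Chinese remainder theorem. The map
\[
 L[x]/(g)\to\prod_\nu L[x]/(x-\alpha_\nu)^{m_\nu}
\]
is an isomorphism. On each factor, the $\ell_{\nu,j}$ are exactly the Taylor coefficients in the basis $(x-\alpha_\nu)^j$. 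So the family $\{\ell_{\nu,j}\}$ is a basis of the dual space, and $M$ is invertible; this proves the second assertion.

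Since an invertible row transformation preserves the row space, $H_g^{\mathrm{jet}}$ has rank $r$ and the same kernel as $H_g^{\mathrm{rem}}$. By Proposition~\ref{prop:remainder-parity-sparse}, it is therefore a parity-check matrix of $\mathcal C_g(n)\otimes_K L$. Every $r\times r$ minor satisfies
\[
 \det\bigl(H_g^{\mathrm{jet}}\bigr)_I=\det M\cdot\det\bigl(H_g^{\mathrm{rem}}\bigr)_I ,
\]
so the minors of the two matrices vanish simultaneously. The MDS property is invariant under field extension, because the minors of $H_g^{\mathrm{rem}}$ already have entries in $K$. Combining this with \eqref{eq:parity-MDS} gives the final criterion.

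The main obstacle is the characteristic-free invertibility of $M$ when some $\alpha_\nu=0$ or when $\operatorname{char}K$ divides a multiplicity. This is exactly why I use Hasse derivatives and the Chinese remainder argument rather than ordinary derivatives or a determinant formula.
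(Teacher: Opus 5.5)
Your proposal is correct and follows essentially the same route as the paper: both write $H_g^{\mathrm{jet}}=M\,H_g^{\mathrm{rem}}$ with $M$ the Hasse-normalized square confluent Vandermonde matrix (the jet map on polynomials of degree $<r$), and then transfer the maximal-minor criterion via Proposition~\ref{prop:remainder-parity-sparse}. The only difference is how invertibility of $M$ is shown: the paper uses the zero-kernel argument (a polynomial of degree $<r$ divisible by $g$ vanishes), while you use the Chinese remainder theorem and Hasse--Taylor coordinates, which is the same fact spelled out in more detail.
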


\begin{proof}
For every \(f\),
\[
 g\mid f
 \quad\Longleftrightarrow\quad
 D^{(j)}f(\alpha_\nu)=0
 \quad(1\le\nu\le s,\ 0\le j<m_\nu).
\]
The Hermite--Hasse jet map on polynomials of degree \(<r\) is an
isomorphism, because its kernel consists of degree-\(<r\) polynomials
divisible by \(g\).  Applying this isomorphism to the remainder of \(x^i\)
shows that the jet and remainder matrices differ by an invertible left
factor.  The maximal-minor criterion is then
Proposition~\ref{prop:remainder-parity-sparse}.
\end{proof}

For a reduced index set
\[
 U=\{u_1<\cdots<u_{r-1}\}\subseteq\{1,\ldots,n-1\},
\]
put
\begin{equation}
 \delta_U(g)=\det(Q_0,Q_{u_1},\ldots,Q_{u_{r-1}}).
 \label{eq:reduced-minor}
\end{equation}
The determinantal product introduced in
\cite[Equation~(3.2) and Theorem~3.2]{LiYuanCyclicOrbits} packages the exact
test as the evaluation of
\begin{equation}
 \Ddet
 =A_0
 \prod_{\substack{U\subseteq\{1,\ldots,n-1\}\\|U|=r-1}}
 \delta_U(\mathbf g)\in R_r.
 \label{eq:determinantal-D}
\end{equation}
Section~\ref{sec:schur-system} identifies every factor in this product.

\subsection{The Grassmannian map}

Because the first $r$ columns of the universal remainder matrix form the
identity, its rows define a morphism
\begin{equation}
 \Phi_{n,r}:\A^r_{\Z}\longrightarrow\Gr(r,n),
 \qquad
 (A_0,\ldots,A_{r-1})\longmapsto
 \operatorname{rowspan}H_{\mathbf g}^{\mathrm{rem}}.
 \label{eq:grassmannian-map}
\end{equation}
It lands in the standard big cell
\[
 \Gr(r,n)^\circ
 =D\bigl(p_{\{0,\ldots,r-1\}}\bigr)
 \simeq\operatorname{Mat}_{r\times(n-r)}.
\]
In particular, the Pl\"ucker coordinate
\(p_{\{0,1,\ldots,r-1\}}\) is normalized to \(1\); all Pl\"ucker
coordinates used below are therefore concrete regular functions on this
big-cell chart, rather than homogeneous coordinates left up to a common
scalar.
For the standard Pl\"ucker embedding and big-cell coordinates, see
\cite[Lecture~6]{HarrisAG}.

\begin{proposition}[The companion coefficient space in the big cell]
\label{prop:companion-big-cell-immersion}
For every $n>r$, the factorization of $\Phi_{n,r}$ through the big cell is
a closed immersion
\begin{equation}
 \Phi_{n,r}^{\circ}:\A^r_{\Z}\hookrightarrow\Gr(r,n)^\circ.
 \label{eq:companion-big-cell-graph}
\end{equation}
Under the big-cell representation
\[
 [\,I_r\mid Q_r\ Q_{r+1}\ \cdots\ Q_{n-1}\,],
\]
its image is the graph of the polynomial map determined by the companion
recurrence over the first column
\[
 Q_r=(-A_0,-A_1,\ldots,-A_{r-1})^T.
\]
Consequently, $\Phi_{n,r}$ is a locally closed immersion into
$\Gr(r,n)$.  Its image
$\mathscr K_{n,r}:=\Phi_{n,r}^{\circ}(\A^r_{\Z})$ is closed in the big
cell, integral, and smooth of relative dimension $r$ over $\Spec\Z$.
\end{proposition}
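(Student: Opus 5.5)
The plan is to write the morphism $\Phi^{\circ}_{n,r}$ in big-cell coordinates and check that it is the graph of a polynomial map. The big cell $\Gr(r,n)^\circ\simeq\operatorname{Mat}_{r\times(n-r)}\simeq\A^{r(n-r)}_{\Z}$ has coordinate ring $\Z[y_{k,j}:0\le k<r,\ r\le j<n]$. Under this identification a point corresponds to the matrix $[\,I_r\mid Y\,]$. Since the universal remainder matrix already has the form $[\,I_r\mid Q_r\cdots Q_{n-1}\,]$, the comorphism of $\Phi^\circ_{n,r}$ sends $y_{k,j}\mapsto b_{k,j}(\mathbf g)\in R_r$. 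By \eqref{eq:remainder-recurrence} (with $Q_0,\dots,Q_{r-1}$ the standard basis), the first column is $Q_r=(-A_0,\dots,-A_{r-1})^T$. Hence $y_{k,r}\mapsto -A_k$.

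Surjectivity of the comorphism then follows at once. Each $A_k$ is the image of $-y_{k,r}$, so $\Z[y]\to R_r$ is surjective and $\Phi^\circ_{n,r}$ is a closed immersion. To describe the kernel explicitly, I would use that every later column $Q_{j}$ with $j>r$ is a polynomial $F_j(Q_r)$ in the entries of $Q_r$, obtained by iterating the recurrence. In particular, $Q_{j+1}=T(a)Q_j$ with $T(a)$ determined by $Q_r$. The kernel is therefore generated by the $r(n-r-1)$ elements $y_{k,j}-F_{k,j}(-y_{0,r},\dots,-y_{r-1,r})$ for $j>r$. These show that the image is the graph of the polynomial map $\A^r\to\A^{r(n-r-1)}$, $Q_r\mapsto(Q_{r+1},\dots,Q_{n-1})$.

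A graph of a morphism $\A^r_{\Z}\to\A^{m}_{\Z}$ is a closed subscheme isomorphic to $\A^r_{\Z}$. So $\mathscr K_{n,r}\simeq\A^r_{\Z}$ is integral, since $\Z[A_0,\dots,A_{r-1}]$ is a domain, and smooth of relative dimension $r$ over $\Spec\Z$. The triangular shape of the equations also gives the regular-sequence claim: each generator is linear in a new variable $y_{k,j}$, in increasing order of $j$. Finally, $\Phi_{n,r}$ is the composite of the closed immersion $\Phi^\circ_{n,r}$ with the open immersion $\Gr(r,n)^\circ\hookrightarrow\Gr(r,n)$, and is therefore a locally closed immersion.

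The only point needing care is the identification of big-cell coordinates with the entries of $Q_r,\dots,Q_{n-1}$. One must note that the row span of $H^{\mathrm{rem}}_{\mathbf g}$ has $p_{\{0,\dots,r-1\}}=1$, so the chart representative $[\,I_r\mid Y\,]$ is unique and equal to $H^{\mathrm{rem}}_{\mathbf g}$ itself. This holds functorially over any ring, because the standard chart represents exactly the subfunctor of rank-$r$ quotients (or row spans) on which that Plücker coordinate is a unit. I do not expect any genuine obstacle beyond this bookkeeping.
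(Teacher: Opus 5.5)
Your argument is correct and follows the paper's own proof: $Q_r=(-A_0,\dots,-A_{r-1})^T$ recovers the coefficients, and the later columns are polynomials in them by the recurrence, so the image is a closed graph isomorphic to $\A^r_{\Z}$; you only make the closed-immersion step more explicit, via surjectivity of the comorphism and the open/closed composite. One cosmetic point: if $F_j$ is a function of the entries of $Q_r$, the kernel generators should read $y_{k,j}-F_{k,j}(y_{0,r},\dots,y_{r-1,r})$, without the minus signs.
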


\begin{proof}
The first column after $I_r$ recovers every coefficient through
$A_i=-(Q_r)_i$.  Each remaining column is a polynomial in these
coefficients by~\eqref{eq:remainder-recurrence}.  Thus the image in the big
cell is the graph of a polynomial map and is closed there.  The final
claims follow because this graph is isomorphic to $\A^r_{\Z}$.
\end{proof}

\begin{proposition}[Explicit complete-intersection equations]
\label{prop:companion-complete-intersection}
Put $m=n-r$ and write a point of the standard big cell as
\[
 [\,I_r\mid Z_0\ Z_1\ \cdots\ Z_{m-1}\,],
 \qquad Z_j=(z_{0,j},\ldots,z_{r-1,j})^T.
\]
The closed subscheme $\mathscr K_{n,r}$ is defined scheme-theoretically by
\begin{align}
 z_{0,j+1}-z_{0,0}z_{r-1,j}&=0,
 \label{eq:companion-CI-zero}\\
 z_{i,j+1}-z_{i-1,j}-z_{i,0}z_{r-1,j}&=0
 \quad(1\le i\le r-1),
 \label{eq:companion-CI-positive}
\end{align}
for $0\le j\le m-2$.  These $r(m-1)$ equations form a regular sequence.
Thus $\mathscr K_{n,r}$ is a smooth closed complete intersection of
codimension $r(n-r-1)$ in $\Gr(r,n)^\circ$.  When $m=1$, the equation set
is empty and $\mathscr K_{r+1,r}=\Gr(r,r+1)^\circ$.
\end{proposition}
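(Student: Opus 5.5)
The plan is to prove the proposition by a triangular change of variables in the coordinate ring of the big cell. Write $\Gr(r,n)^\circ\simeq\operatorname{Mat}_{r\times m}=\Spec\Z[z_{i,j}]$. Throughout, the index $i$ runs over $0\le i\le r-1$ and the index $j$ runs over $0\le j\le m-1$.

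First I verify that the equations vanish on $\mathscr K_{n,r}$. By Proposition~\ref{prop:companion-big-cell-immersion}, a point of the image has $Z_j=Q_{r+j}$, and $Z_0=Q_r=(-A_0,\ldots,-A_{r-1})^T$. By~\eqref{eq:companion-shift} we have $Q_{r+j+1}=T_{\mathbf g}Q_{r+j}$. In the companion convention $T e_i=e_{i+1}$ for $i<r-1$. The last column of $T$ is $T e_{r-1}=-\sum_i A_ie_i=Z_0$. Hence for any vector $v$,
\[
(Tv)_0=z_{0,0}v_{r-1},\qquad (Tv)_i=v_{i-1}+z_{i,0}v_{r-1}\quad(1\le i\le r-1).
\]
Taking $v=Z_j$ gives exactly \eqref{eq:companion-CI-zero} and~\eqref{eq:companion-CI-positive}.

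Next I show these equations cut out the graph scheme-theoretically. Set $f_{i,j}:=z_{i,j+1}-p_{i,j}(Z_0,Z_j)$, where $p_{i,j}$ is the polynomial displayed in the equations. Then $p_{i,j}$ involves only variables from columns $0$ and $j$, both of index $\le j$. Order the variables column by column. The substitution
\[
z_{i,0}\mapsto z_{i,0},\qquad z_{i,j+1}\mapsto f_{i,j}\quad(0\le j\le m-2)
\]
is then a triangular (unipotent) change of variables. So it is an automorphism of $\Z[z_{i,j}]$, and its inverse is obtained by recursively solving for the columns $Z_1,\ldots,Z_{m-1}$. Consequently:
\begin{enumerate}[label=\textup{(\alph*)},leftmargin=2.4em]
\item $\Z[z_{i,j}]=\Z[z_{\bullet,0}][f_{i,j}]$ is a polynomial ring in the $z_{i,0}$ together with the $f_{i,j}$.
\item The $f_{i,j}$ are part of a system of polynomial variables. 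Hence they form a regular sequence in any order.
\item The quotient $\Z[z_{i,j}]/(f_{i,j})\cong\Z[z_{0,0},\ldots,z_{r-1,0}]$.
\end{enumerate}

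The closed subscheme $V(f_{i,j})$ is therefore the graph of the map $Z_0\mapsto(Z_1,\ldots,Z_{m-1})$ obtained by iterating the recurrence. By the first step this graph coincides with $\mathscr K_{n,r}$, since both are graphs of the same polynomial map over the first column. The count of equations is $r(m-1)=r(n-r-1)$. This equals $\dim\operatorname{Mat}_{r\times m}-r$, which gives the codimension. Smoothness follows from the isomorphism with $\A^r_\Z$. When $m=1$ there are no equations and no columns after $Z_0$, so $\mathscr K_{r+1,r}$ is the whole big cell.

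The only point needing care is the triangularity claim. Each $f_{i,j}$ must be linear in a variable $z_{i,j+1}$ that occurs in no other equation's correction term from an earlier or equal column. I would check this by inspecting the indices in the displayed formula $p_{i,j}$. This step is routine, and I do not expect a genuine obstacle.
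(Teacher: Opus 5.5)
Your proposal is correct and follows essentially the same route as the paper. Both arguments identify the equations as the companion recurrence $Z_{j+1}=T_gZ_j$, eliminate $Z_1,\ldots,Z_{m-1}$ triangularly so the quotient is $\Z[z_{0,0},\ldots,z_{r-1,0}]\cong R_r$, and get regularity because each equation is monic linear in a new variable $z_{i,j+1}$. Your version packages the elimination as a global triangular automorphism, so the $f_{i,j}$ become part of a coordinate system and are regular in any order; this is a slightly cleaner form of the paper's successive-quotient argument.
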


\begin{proof}
Let
\[
 S=\Z[z_{i,j}:0\le i\le r-1,\ 0\le j\le m-1]
\]
be the coordinate ring of the big cell, and let $J$ be generated by
\eqref{eq:companion-CI-zero}--\eqref{eq:companion-CI-positive}.  Multiplication
by $x$ modulo the polynomial whose coefficients are
$A_i=-z_{i,0}$ sends a column $v=(v_0,\ldots,v_{r-1})^T$ to
\[
 (z_{0,0}v_{r-1},\ v_0+z_{1,0}v_{r-1},\ldots,
   v_{r-2}+z_{r-1,0}v_{r-1})^T.
\]
Thus the displayed equations are exactly the companion recurrences
$Z_{j+1}=T_gZ_j$.  Successively eliminating
$Z_1,\ldots,Z_{m-1}$ gives
\[
 S/J\simeq\Z[z_{0,0},\ldots,z_{r-1,0}]
 \simeq R_r,
 \qquad z_{i,0}\longmapsto-A_i.
\]
Hence $J$ is the scheme-theoretic defining ideal of $\mathscr K_{n,r}$.
Ordered by increasing $j$, each equation is monic linear in a new variable
$z_{i,j+1}$, and every successive quotient remains a polynomial ring.
The equations therefore form a regular sequence.  Their number is
$r(m-1)=r(n-r-1)$, which is the asserted codimension; the remaining claims
also follow from the displayed quotient.
\end{proof}

Let
\begin{equation}
 \Gr(r,n)^{\mathrm{unif}}
 :=\bigcap_{\substack{I\subseteq\{0,\ldots,n-1\}\\|I|=r}}D_+(p_I)
 \label{eq:uniform-Plucker-open}
\end{equation}
be the Pl\"ucker open subscheme on which every Pl\"ucker coordinate is
nonvanishing; its points represent the uniform matroid $U_{r,n}$.  This is
the Grassmannian stratum associated with the uniform matroid;
compare~\cite{GGMS}.  On the standard big cell, where
$p_{\{0,\ldots,r-1\}}=1$, the remaining normalized Pl\"ucker coordinates
are regular functions, and the uniform-matroid condition requires them to be
units.  Let
$\MDSlocus\subseteq\A^r_{\Z}$ be the open subscheme on which every maximal
minor of the universal remainder matrix is invertible.  We call
$\MDSlocus$ the \emph{coefficient MDS locus}.

\begin{proposition}[Grassmannian model of the coefficient MDS locus]
\label{prop:grassmannian-MDS}
As open subschemes over $\Spec\Z$,
\begin{equation}
 \boxed{\ \MDSlocus=
 \Phi_{n,r}^{-1}\bigl(\Gr(r,n)^{\mathrm{unif}}\bigr)\ }.
 \label{eq:MDS-pullback-uniform}
\end{equation}
Equivalently, inside the standard big cell,
\begin{equation}
 \MDSlocus\simeq
 \mathscr K_{n,r}\cap\Gr(r,n)^{\mathrm{unif}}.
 \label{eq:companion-uniform-intersection}
\end{equation}
The equality is preserved by arbitrary base change; over a field $K$, its
$K$-points are precisely the coefficient vectors for which
$\mathcal C_g(n)$ is MDS.
\end{proposition}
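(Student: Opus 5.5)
The plan is to unwind both sides as open subschemes of $\A^r_{\Z}=\Spec R_r$ and compare their defining functions.

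\emph{Left-hand side.} By definition, $\MDSlocus$ is the open subscheme on which every maximal minor of $H_{\mathbf g}^{\mathrm{rem}}$ is a unit. Equivalently, it is the locus where every $\det(Q_{i_0},\ldots,Q_{i_{r-1}})\in R_r$ is invertible, that is, the intersection $\bigcap_I D(\det H_{\mathbf g,I}^{\mathrm{rem}})$.

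\emph{Right-hand side.} First I will reduce the pullback of $\Gr(r,n)^{\mathrm{unif}}$ to the big cell. Since $\Phi_{n,r}$ factors through $\Gr(r,n)^\circ=D_+(p_{\{0,\ldots,r-1\}})$, the preimage of each $D_+(p_I)$ can be computed inside the big cell. There $D_+(p_I)\cap\Gr(r,n)^\circ$ is the principal open $D(p_I/p_{\{0,\ldots,r-1\}})$, and on the chart $[\,I_r\mid Z\,]$ the normalized coordinate $p_I/p_{\{0,\ldots,r-1\}}$ is the $I$-th maximal minor of $[\,I_r\mid Z\,]$. This is the standard description of Plücker coordinates on the big cell. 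Pulling back along $\Phi_{n,r}^\circ$ substitutes $Z=(Q_r\ \cdots\ Q_{n-1})$ computed from $\mathbf g$. The $I$-th minor then becomes exactly $\det(Q_{i_0},\ldots,Q_{i_{r-1}})$. Since the preimage of a principal open $D(f)$ is $D(\Phi^*f)$ and preimages commute with finite intersections, the two opens coincide.

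\emph{Second formulation.} Since $\Phi_{n,r}^\circ$ is a closed immersion with image $\mathscr K_{n,r}$ (Proposition~\ref{prop:companion-big-cell-immersion}), it restricts to an isomorphism from $\Phi_{n,r}^{-1}(V)$ onto $\mathscr K_{n,r}\cap V$ for any open $V$. Taking $V=\Gr(r,n)^{\mathrm{unif}}\cap\Gr(r,n)^\circ$ gives \eqref{eq:companion-uniform-intersection}.

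\emph{Base change.} Both sides are defined by the invertibility of the same finite list of polynomials in $R_r$, and principal opens and preimages commute with base change $\Spec K\to\Spec\Z$. So the equality persists after any base change. On $K$-points it says that every maximal minor of $H_g^{\mathrm{rem}}$ is nonzero, which by Proposition~\ref{prop:remainder-parity-sparse} is equivalent to $\mathcal C_g(n)$ being MDS.

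The only point needing care is the identification of the dehomogenized Plücker coordinates on the big cell with the maximal minors of $[\,I_r\mid Z\,]$ as regular functions over $\Z$, not just on field points. This holds because the universal row-span over the big cell is represented by the matrix $[\,I_r\mid Z\,]$ itself, and the Plücker embedding sends a representing matrix to its vector of maximal minors.
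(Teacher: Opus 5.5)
Your proposal is correct and follows the same route as the paper. In both, the pulled-back normalized Pl\"ucker coordinates on the big cell are exactly the maximal minors of the universal remainder matrix, so both opens are cut out by invertibility of the same finite list of elements of $R_r$, and the $K$-point statement is Proposition~\ref{prop:remainder-parity-sparse}. You just spell out what the paper leaves implicit: the big-cell dehomogenization, and the closed-immersion step that gives \eqref{eq:companion-uniform-intersection}.
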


\begin{proof}
The pullbacks of the Pl\"ucker coordinates are the maximal minors of the
universal remainder matrix.  Both sides of
\eqref{eq:MDS-pullback-uniform} are therefore the open subscheme obtained by
requiring all these minors to be invertible.  After base change to a field,
the assertion on points is exactly~\eqref{eq:parity-MDS}.
\end{proof}

\subsection{Partitions and universal Schur coordinates}

A partition $\kappa=(\kappa_1,\ldots,\kappa_r)$ is padded with zeros when
necessary.  We write $\kappa\subseteq b^a$ if
$\kappa_1\le b$ and $\ell(\kappa)\le a$, and put
\begin{equation}
 \Part=\{\kappa:\kappa\subseteq(n-r)^{r-1}\}.
 \label{eq:partition-rectangle}
\end{equation}
For $r=1$ we adopt the conventions
\begin{equation}
 \Part=\{\varnothing\},\qquad
 \kappa(\varnothing)=\varnothing,\qquad
 \Schur{\varnothing}=1,
 \label{eq:r-one-conventions}
\end{equation}
and put $\ell(\varnothing)=|\varnothing|=0$; when
$\kappa=\varnothing$, we interpret $\kappa_1=0$.
For variables $X=(X_1,\ldots,X_r)$, let $s_\kappa(X)$ be the Schur
polynomial.  We use the bialternant formula
\begin{equation}
 s_\kappa(X)=
 \frac{\det(X_i^{\kappa_j+r-j})_{1\le i,j\le r}}
      {\det(X_i^{r-j})_{1\le i,j\le r}}
 \label{eq:bialternant}
\end{equation}
and the dual Jacobi--Trudi formula
\begin{equation}
 s_\kappa(X)=
 \det(e_{\kappa_i'-i+j}(X))_{1\le i,j\le\kappa_1},
 \label{eq:dual-JT}
\end{equation}
where $e_0=1$ and $e_j=0$ for $j<0$ or $j>r$; see, for example,
\cite[Chapter~I]{Macdonald}.

The fundamental theorem of symmetric polynomials gives an isomorphism
\begin{equation}
 R_r\longrightarrow\Z[X_1,\ldots,X_r]^{S_r},
 \qquad
 A_{r-j}\longmapsto(-1)^je_j(X).
 \label{eq:symmetric-isomorphism}
\end{equation}
For every partition of length at most $r$, define
$\Schur{\kappa}\in R_r$ to be the inverse image of $s_\kappa(X)$.  If
$\Lambda_g=(\lambda_1,\ldots,\lambda_r)$ is the root multiset of a concrete
$g$ in an algebraic closure, then
\begin{equation}
 \Schur{\kappa}(g)=s_\kappa(\Lambda_g).
 \label{eq:Schur-evaluation}
\end{equation}
This value belongs to the coefficient field, even when the roots do not.
Define the universal Schur product
\begin{equation}
 \DSch:=A_0\prod_{\kappa\in\Part}\Schur{\kappa}\in R_r.
 \label{eq:Schur-D}
\end{equation}
In particular, $\mathscr D^{\mathrm{Sch}}_{n,1}=A_0$.

\section{The rectangular Schur--Pl\"ucker system}
\label{sec:schur-system}

\subsection{Index sets and the partition rectangle}

For $r>1$ and a reduced index set
$U=\{u_1<\cdots<u_{r-1}\}$ define
\begin{equation}
 \kappa(U)=
 \bigl(u_{r-1}-(r-1),u_{r-2}-(r-2),\ldots,u_1-1,0\bigr).
 \label{eq:kappa-U}
\end{equation}
For $r=1$ use~\eqref{eq:r-one-conventions}.

\begin{lemma}[Index-set/partition bijection]
\label{lem:index-partition}
The map $U\mapsto\kappa(U)$ is a bijection from the $(r-1)$-subsets of
$\{1,\ldots,n-1\}$ to $\Part$.  Its inverse is
\begin{equation}
 u_j=j+\kappa_{r-j},\qquad 1\le j\le r-1.
 \label{eq:partition-index-inverse}
\end{equation}
In particular, $|\Part|=\binom{n-1}{r-1}$.
\end{lemma}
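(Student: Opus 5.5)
The plan is a direct verification that the two maps are mutually inverse between finite sets, followed by a count. Dispose of $r=1$ first: both sides are singletons, $\{\varnothing\}$ on the index side and $\Part=\{\varnothing\}$ by~\eqref{eq:r-one-conventions}, and $\binom{n-1}{0}=1$. For the rest assume $r\ge2$ and write $U=\{u_1<\cdots<u_{r-1}\}\subseteq\{1,\ldots,n-1\}$.

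\emph{Well-definedness.} Put $\kappa_{r-j}=u_j-j$ for $1\le j\le r-1$ and $\kappa_r=0$. This is just~\eqref{eq:kappa-U} read off entry by entry. Since the $u_j$ are strictly increasing integers, $u_{j+1}\ge u_j+1$, so $u_{j+1}-(j+1)\ge u_j-j$. Hence $\kappa_1\ge\kappa_2\ge\cdots\ge\kappa_{r-1}$. Because $u_1\ge1$ we get $\kappa_{r-1}=u_1-1\ge0=\kappa_r$, so $\kappa$ is a partition with $\ell(\kappa)\le r-1$. Because $u_{r-1}\le n-1$ we get $\kappa_1\le n-r$. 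Therefore $\kappa(U)\subseteq(n-r)^{r-1}$, that is, $\kappa(U)\in\Part$.

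\emph{The inverse map.} Given $\kappa\in\Part$, define $u_j=j+\kappa_{r-j}$ as in~\eqref{eq:partition-index-inverse}. Since $\kappa_{r-j-1}\ge\kappa_{r-j}$, we get $u_{j+1}-u_j=1+\kappa_{r-j-1}-\kappa_{r-j}\ge1$, so the $u_j$ are strictly increasing. The bounds $\kappa_{r-1}\ge0$ and $\kappa_1\le n-r$ give $u_1\ge1$ and $u_{r-1}\le n-1$. So the $u_j$ form an $(r-1)$-subset of $\{1,\ldots,n-1\}$. The two assignments $\kappa_{r-j}=u_j-j$ and $u_j=j+\kappa_{r-j}$ visibly undo each other, which gives the bijection. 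The only point needing care here is matching the reversed indexing, with $\kappa_r=0$ forced.

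\emph{Count.} Counting the $(r-1)$-subsets of an $(n-1)$-element set gives $|\Part|=\binom{n-1}{r-1}$. There is no real obstacle beyond the index bookkeeping.
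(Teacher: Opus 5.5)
Your proof is correct and takes essentially the same approach as the paper. The paper likewise disposes of $r=1$ by convention, derives weak decrease, the bound $\kappa_1\le n-r$ and $\kappa_r=0$ from the strict increase and range of the $u_j$, and notes that the inverse formula gives a strictly increasing subset of $\{1,\ldots,n-1\}$ that visibly undoes $\kappa(U)$; you simply write out the index inequalities the paper leaves implicit.
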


\begin{proof}
The case $r=1$ is immediate from~\eqref{eq:r-one-conventions}; assume
$r>1$.  
The strict inequalities among the $u_j$ make the displayed parts weakly
decreasing.  The largest part is at most
$(n-1)-(r-1)=n-r$, and the last part is zero.  Conversely,
\eqref{eq:partition-index-inverse} gives a strictly increasing subset of
$\{1,\ldots,n-1\}$ and is visibly inverse to~\eqref{eq:kappa-U}.
\end{proof}

\subsection{Reduced minors as Schur coordinates}

\begin{theorem}[Schur coordinates of the reduced minors]
\label{thm:reduced-Schur}
For every field, every monic $g$ of degree $r$, and every reduced index set
$U$,
\begin{equation}
 \delta_U(g)=\Schur{\kappa(U)}(g)
 =s_{\kappa(U)}(\lambda_1,\ldots,\lambda_r).
 \label{eq:reduced-Schur-identity}
\end{equation}
The identity is characteristic-free and requires no squarefreeness
assumption.
\end{theorem}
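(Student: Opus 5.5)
Our plan is to prove the identity universally, over the coefficient ring $R_r$, and then specialize. Both sides of \eqref{eq:reduced-Schur-identity} are polynomials in the coefficients: $\delta_U(\mathbf g)\in R_r$ because the entries of the $Q_i$ come from the recurrence \eqref{eq:remainder-recurrence}, and $\Schur{\kappa(U)}\in R_r$ by definition via \eqref{eq:symmetric-isomorphism}. It therefore suffices to prove the equality $\delta_U(\mathbf g)=\Schur{\kappa(U)}$ in $R_r$. Since $R_r\hookrightarrow\Z[X_1,\ldots,X_r]$ is injective, we may work in the splitting ring $\Z[X_1,\ldots,X_r]$, where $\mathbf g=\prod_i(x-X_i)$. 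The identity is then one between polynomials in $\Z[X]$, so it is enough to check it after tensoring with $\mathbb Q$, that is, in $\mathbb Q(X_1,\ldots,X_r)$. There the roots are distinct, and the confluent and positive-characteristic cases follow automatically by specialization. This is why the statement needs neither squarefreeness nor a restriction on the characteristic.

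Over $\mathbb Q(X)$ with distinct roots, we apply Proposition~\ref{prop:root-jet-recalled} with all $m_\nu=1$. The jet matrix is the Vandermonde-type matrix $V=(X_\nu^{\,i})_{\nu,\,0\le i<n}$, and $V=L\,H^{\mathrm{rem}}_{\mathbf g}$ for an invertible left factor $L$. We determine $L$ concretely. The $\nu$-th row of $V$ evaluates the polynomial $x^i$ at $X_\nu$, and $x^i\equiv\sum_k b_{k,i}x^k$ mod $\mathbf g$ with $\mathbf g(X_\nu)=0$. Hence $X_\nu^i=\sum_k X_\nu^k b_{k,i}$, so $L=(X_\nu^{k})_{\nu,k<r}$, the square Vandermonde matrix. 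Taking the columns indexed by $\{0,u_1,\ldots,u_{r-1}\}$ gives
\[
 \det\bigl(X_\nu^{\,c}\bigr)_{c\in\{0,u_1,\ldots,u_{r-1}\}}=\det L\cdot\delta_U(\mathbf g).
\]
The determinant $\det L$ is the Vandermonde determinant $\prod_{\nu<\mu}(X_\mu-X_\nu)$, up to a column-ordering sign.

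Next we match these with the bialternant \eqref{eq:bialternant}. Its numerator uses exponents $\kappa_j+r-j$ for $j=1,\ldots,r$, in decreasing order. With $\kappa=\kappa(U)$ the exponents are $\kappa_j+r-j=u_{r-j}$ for $j<r$, and $0$ for $j=r$, by \eqref{eq:partition-index-inverse}. So the numerator is the generalized Vandermonde with column exponents $u_{r-1},\ldots,u_1,0$, and the denominator has exponents $r-1,\ldots,0$. Our matrices list the exponents in increasing order $0,u_1,\ldots$ and $0,1,\ldots,r-1$. Both are reversed by the same permutation, so the signs cancel in the ratio. This gives $\delta_U(\mathbf g)=s_{\kappa(U)}(X)$ in $\mathbb Q(X)$, hence in $\Z[X]^{S_r}$, hence in $R_r$.

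Finally, specializing $A_i\mapsto a_i$ into any field $K$ gives the statement, with \eqref{eq:Schur-evaluation} identifying the specialization as $s_\kappa(\Lambda_g)$. For $r=1$ the identity is trivial: $\delta_\varnothing=\det(Q_0)=1=\Schur{\varnothing}$.

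The main obstacle is bookkeeping rather than ideas. We must pin down the orientation of $L$ (rows versus columns, and whether the jet map is on the left) and make sure the two reversal signs cancel, so that no stray $\pm1$ survives. We also must justify that passing from $\Z[X]$ to $\mathbb Q(X)$ and back loses nothing. That holds because $\Z[X]$ is a domain embedded in its fraction field, and the division by $\det L$ is exact there.
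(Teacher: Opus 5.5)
Your proposal is correct and is essentially the paper's proof. You evaluate the remainder identity at distinct roots to get $(X_\nu^{m})_\nu=B\,Q_m$ with $B$ the square Vandermonde, and match the resulting generalized Vandermonde quotient with the bialternant by reversing columns in both numerator and denominator. You then pass from the generic squarefree case to every field and characteristic through the integral identity in $\Z[X_1,\ldots,X_r]$. The only cosmetic difference is that you run the universal step directly through $R_r\hookrightarrow\Z[X_1,\ldots,X_r]^{S_r}$, whereas the paper first does a concrete squarefree $g$ and then lifts.
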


\begin{proof}
First suppose that the roots are pairwise distinct and let
\[
 B=(\lambda_i^j)_{\substack{1\le i\le r\\0\le j\le r-1}}.
\]
Evaluation of the remainder identity at all roots gives
\[
 (\lambda_1^m,\ldots,\lambda_r^m)^T=BQ_m.
\]
Therefore
\[
 \det(1,\lambda_i^{u_1},\ldots,\lambda_i^{u_{r-1}})_{1\le i\le r}
 =\det(B)\,\delta_U(g).
\]
For $\kappa(U)$, the numerator exponents in~\eqref{eq:bialternant} are
$u_{r-1},\ldots,u_1,0$.  Reversing the columns in both numerator and
denominator identifies the determinant ratio with $\delta_U(g)$.

For the universal argument, take algebraically independent variables
$X_1,\ldots,X_r$ and
$G_X(x)=\prod_i(x-X_i)$.  The recurrence defining the remainder columns
shows that $\delta_U(G_X)$ is a symmetric integral polynomial in the $X_i$.
The squarefree calculation over the fraction field gives
\[
 \delta_U(G_X)=s_{\kappa(U)}(X_1,\ldots,X_r)
\]
as an identity of integral polynomials.  Specializing the $X_i$, allowing
coincidences, and reducing modulo an arbitrary prime proves the theorem.
\end{proof}

\subsection{All Pl\"ucker coordinates and the MDS boundary}

\begin{theorem}[Rectangular Schur--Pl\"ucker theorem]
\label{thm:rectangular-Schur-Pluecker}
Let
$I=\{i_0<i_1<\cdots<i_{r-1}\}\subseteq\{0,\ldots,n-1\}$ and set
$U_I=\{i_1-i_0,\ldots,i_{r-1}-i_0\}$.  Then
\begin{equation}
 p_I\bigl(\Phi_{n,r}(g)\bigr)
 =\det(Q_{i_0},\ldots,Q_{i_{r-1}})
 =\bigl((-1)^ra_0\bigr)^{i_0}\Schur{\kappa(U_I)}(g).
 \label{eq:all-Pluecker}
\end{equation}
Moreover, the determinantal and Schur products agree identically:
\begin{equation}
 \boxed{\ \Ddet=\DSch\ }.
 \label{eq:det-Schur-identity}
\end{equation}
We henceforth denote their common value by $\Duniv$; thus
\begin{equation}
 \Duniv=A_0\prod_{\kappa\in\Part}\Schur{\kappa}.
 \label{eq:canonical-Schur-product}
\end{equation}
and, over every field,
\begin{equation}
 \mathcal C_g(n)\text{ is MDS}
 \Longleftrightarrow
 a_0\ne0\ \text{and}\quad
 \Schur{\kappa}(g)\ne0\quad(\kappa\in\Part).
 \label{eq:Schur-MDS-test}
\end{equation}
\end{theorem}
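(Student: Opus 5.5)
The plan is to reduce everything to Theorem~\ref{thm:reduced-Schur} by means of the companion shift~\eqref{eq:companion-shift}.

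\textbf{Step 1: all Pl\"ucker coordinates.} The identity $p_I(\Phi_{n,r}(g))=\det(Q_{i_0},\ldots,Q_{i_{r-1}})$ holds by construction: $p_I$ pulls back to the maximal minor on the columns indexed by $I$ (see the proof of Proposition~\ref{prop:grassmannian-MDS}). Write $Q_{i_k}=T_g^{i_0}Q_{i_k-i_0}$ using~\eqref{eq:companion-shift}. Then
\[
\det(Q_{i_0},\ldots,Q_{i_{r-1}})=\det(T_g)^{i_0}\det(Q_0,Q_{u_1},\ldots,Q_{u_{r-1}}),
\]
where $\{u_1,\ldots,u_{r-1}\}=U_I$. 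By~\eqref{eq:companion-shift}, $\det(T_g)=(-1)^ra_0$. Theorem~\ref{thm:reduced-Schur} identifies the reduced minor $\delta_{U_I}(g)$ with $\Schur{\kappa(U_I)}(g)$, which proves~\eqref{eq:all-Pluecker}.

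This is an identity of polynomials in the universal coefficients. It therefore holds universally over $R_r$ and hence over every field, with no squarefreeness or characteristic hypothesis. Since $U_I\subseteq\{1,\ldots,n-1-i_0\}\subseteq\{1,\ldots,n-1\}$, the set $U_I$ is a reduced index set in the sense of Lemma~\ref{lem:index-partition}. For $r=1$ both sides equal $(-a_0)^{i_0}$, using $\Schur{\varnothing}=1$.

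\textbf{Step 2: the product identity.} In~\eqref{eq:determinantal-D}, replace each factor $\delta_U(\mathbf g)$ by $\Schur{\kappa(U)}$, again by Theorem~\ref{thm:reduced-Schur} applied universally over $R_r$. Then reindex the product using the bijection of Lemma~\ref{lem:index-partition} between $(r-1)$-subsets of $\{1,\ldots,n-1\}$ and $\Part$. This gives $\Ddet=A_0\prod_{\kappa\in\Part}\Schur{\kappa}=\DSch$ in $R_r$.

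\textbf{Step 3: the MDS test.} By Proposition~\ref{prop:remainder-parity-sparse}, the code $\mathcal C_g(n)$ is MDS if and only if every maximal minor of $H_g^{\mathrm{rem}}$ is nonzero. By Step~1, each maximal minor equals $((-1)^ra_0)^{i_0}\Schur{\kappa(U_I)}(g)$.

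For the forward direction, take $I=\{1,\ldots,r\}$. Then $U_I=\{1,\ldots,r-1\}$, so $\kappa(U_I)=\varnothing$ and $\Schur{\varnothing}=1$; the minor is $\pm a_0$, which forces $a_0\ne0$. Taking $i_0=0$ and $U_I=U$ for each reduced $U$ then forces every $\Schur{\kappa}(g)$ to be nonzero, since Lemma~\ref{lem:index-partition} shows every $\kappa\in\Part$ arises this way.

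For the converse, suppose $a_0\ne0$ and all $\Schur{\kappa}(g)\ne0$. Every minor is then a nonzero power of $\pm a_0$ times one of these nonzero Schur values, hence nonzero.

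\textbf{Main obstacle.} There is essentially none, since the heavy lifting is in Theorem~\ref{thm:reduced-Schur}. The only point needing care is that the shift identity $Q_{i_0+u}=T_g^{i_0}Q_u$ is used for the universal polynomial $\mathbf g$, where $T_g$ need not be invertible. This causes no difficulty, because the determinant multiplicativity $\det(T_g^{i_0}M)=\det(T_g)^{i_0}\det M$ holds over any commutative ring.
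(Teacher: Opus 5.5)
Your proposal is correct and follows the paper's proof essentially step for step. Both factor $T_g^{i_0}$ out of the minor, then use $\det T_g=(-1)^ra_0$ and Theorem~\ref{thm:reduced-Schur}, reindex the determinantal product through Lemma~\ref{lem:index-partition}, and use the minor $\det(Q_1,\ldots,Q_r)=\pm a_0$ together with the reduced minors for the forward direction of the MDS test. Your note that determinant multiplicativity needs no invertibility of $T_{\mathbf g}$ over $R_r$ is a correct clarification that the paper leaves implicit.
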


\begin{proof}
Since $Q_j=T_g^jQ_0$, every column indexed by $i_j$ can be written as
$T_g^{i_0}Q_{i_j-i_0}$.  Factoring $T_g^{i_0}$ from the determinant gives
\[
 \det(Q_{i_0},\ldots,Q_{i_{r-1}})
 =\det(T_g)^{i_0}\delta_{U_I}(g).
\]
Use~\eqref{eq:companion-shift} and
Theorem~\ref{thm:reduced-Schur}.  Lemma~\ref{lem:index-partition} then
transforms~\eqref{eq:determinantal-D} into
\eqref{eq:canonical-Schur-product}.  If all reduced factors and $a_0$ are
nonzero, formula~\eqref{eq:all-Pluecker} makes every maximal minor nonzero.
The converse follows from the reduced minors and, for example, the minor
$\det(Q_1,\ldots,Q_r)=\det(T_g)=(-1)^ra_0$.  Proposition~\ref{prop:remainder-parity-sparse}
completes the proof.
\end{proof}

\begin{corollary}[The universal MDS boundary divisor]
\label{cor:universal-MDS-divisor}
As an open subscheme of coefficient space,
\begin{equation}
 \MDSlocus=D(\Duniv)\subseteq\A^r_{\Z}.
 \label{eq:MDS-principal-open}
\end{equation}
Since $R_r$ is a domain and $\Duniv\ne0$, the polynomial $\Duniv$ is a
non-zero-divisor.  It therefore defines the principal effective Cartier
divisor
\begin{equation}
 \Delta_{n,r}:=\operatorname{div}(\Duniv)
 \quad\text{on }\A^r_{\Z}.
 \label{eq:MDS-Cartier-divisor}
\end{equation}
Its complement is $\MDSlocus$, and its support is the coefficient-space
non-MDS locus:
\begin{equation}
 |\Delta_{n,r}|=\A^r_{\Z}\setminus\MDSlocus
 =V(A_0)\cup
 \bigcup_{\kappa\in\Part}V(\Schur{\kappa}).
 \label{eq:MDS-boundary-union}
\end{equation}
More precisely, as effective Cartier divisors,
\begin{equation}
 \Delta_{n,r}=\operatorname{div}(A_0)
 +\sum_{\kappa\in\Part}\operatorname{div}(\Schur{\kappa}).
 \label{eq:MDS-boundary-divisor-sum}
\end{equation}
The product~\eqref{eq:canonical-Schur-product} is a canonical
Schur-coordinate factorization; no irreducibility assertion about the
individual $\Schur{\kappa}$ is intended, nor are the displayed factors
asserted to be pairwise coprime.
\end{corollary}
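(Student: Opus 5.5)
The plan is to derive Corollary~\ref{cor:universal-MDS-divisor} almost formally from Theorem~\ref{thm:rectangular-Schur-Pluecker} and Proposition~\ref{prop:grassmannian-MDS}, working throughout in the ring $R_r$.

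First I identify $\MDSlocus$ with $D(\Duniv)$. By definition, $\MDSlocus$ is the open set where every maximal minor $p_I\circ\Phi_{n,r}$ is a unit. By \eqref{eq:all-Pluecker} each such minor equals $\pm A_0^{i_0}\Schur{\kappa(U_I)}$, which divides $A_0\Schur{\kappa(U_I)}$ up to sign. Hence on $D(\Duniv)$ every minor is invertible. Conversely, on $\MDSlocus$ the minors $\det(Q_1,\ldots,Q_r)=(-1)^rA_0$ and $\delta_U$ for all reduced $U$ are units. By Lemma~\ref{lem:index-partition} these $\delta_U$ exhaust $\Schur{\kappa}$ for $\kappa\in\Part$, so the product $\Duniv$ is a unit. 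Since both sides are opens described by invertibility of explicit functions on an affine scheme over $\Z$, this is an equality of open subschemes, and it commutes with base change.

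Second, I need $\Duniv\ne0$. The factor $A_0$ is nonzero. Each $\Schur{\kappa}$ corresponds under \eqref{eq:symmetric-isomorphism} to the nonzero Schur polynomial $s_\kappa(X)$, since its leading monomial $X^\kappa$ has coefficient $1$. Because $R_r$ is a domain, the product is nonzero, hence a non-zero-divisor, and $\operatorname{div}(\Duniv)$ is an effective Cartier divisor.

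Third, the support is $V(\Duniv)=\A^r_\Z\setminus D(\Duniv)$, and $V$ of a product is the union of the $V$'s of the factors, which gives \eqref{eq:MDS-boundary-union}. Additivity $\operatorname{div}(fg)=\operatorname{div}(f)+\operatorname{div}(g)$ for non-zero-divisors gives \eqref{eq:MDS-boundary-divisor-sum}. The final sentence of the statement is a disclaimer and needs no proof.

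I expect no real obstacle. The only points needing care are checking that each Pl\"ucker minor is a unit exactly when the product is, which is handled by the divisibility noted in the first step, and the nonvanishing of the Schur factors over $\Z$.
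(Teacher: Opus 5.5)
Your argument is correct and is essentially the paper's: the corollary is read off from \eqref{eq:all-Pluecker}, the reduced minors $\delta_U$ and the minor $\det(Q_1,\ldots,Q_r)=(-1)^rA_0$, exactly as in the proof of Theorem~\ref{thm:rectangular-Schur-Pluecker}; the support and divisor statements are then formal. One fix: $\pm A_0^{i_0}\Schur{\kappa(U_I)}$ does not divide $A_0\Schur{\kappa(U_I)}$ when $i_0\ge2$, so state instead that $A_0$ and $\Schur{\kappa(U_I)}$ each divide $\Duniv$, hence the minor divides $\Duniv^{\,i_0+1}$ and is a unit on $D(\Duniv)$.
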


\begin{proposition}[The full Pl\"ucker product and its boundary multiplicities]
\label{prop:full-Pluecker-product}
Using the big-cell normalization
$p_{\{0,\ldots,r-1\}}=1$, let
\[
 \mathcal B^{\mathrm{Pl}}_{n,r}
 :=\sum_{\substack{I\subseteq\{0,\ldots,n-1\}\\ |I|=r}}
   \operatorname{div}(p_I)
 \quad\text{on }\Gr(r,n)^\circ,
\]
and put
\[
 F^{\mathrm{Pl}}_{n,r}
 :=\prod_{\substack{I\subseteq\{0,\ldots,n-1\}\\ |I|=r}}
     \bigl(p_I\circ\Phi_{n,r}\bigr)\in R_r,
 \qquad
 E_{n,r}:=\binom n{r+1}.
\]
Then
\begin{equation}
 \boxed{\
 F^{\mathrm{Pl}}_{n,r}
 =(-1)^{rE_{n,r}}A_0^{E_{n,r}}
  \prod_{\kappa\in\Part}
  \Schur{\kappa}^{\,n-r+1-\kappa_1}
 }.
 \label{eq:full-Pluecker-product}
\end{equation}
Its ordinary total degree is $rE_{n,r}$.  Moreover,
\begin{equation}
\begin{aligned}
 (\Phi_{n,r}^{\circ})^*\mathcal B^{\mathrm{Pl}}_{n,r}
 &=\operatorname{div}(F^{\mathrm{Pl}}_{n,r})\\
 &=E_{n,r}\operatorname{div}(A_0)
   +\sum_{\kappa\in\Part}(n-r+1-\kappa_1)
     \operatorname{div}(\Schur{\kappa})\\
 &=\Delta_{n,r}+(E_{n,r}-1)\operatorname{div}(A_0)
   +\sum_{\kappa\in\Part}(n-r-\kappa_1)
     \operatorname{div}(\Schur{\kappa}).
\end{aligned}
\label{eq:full-Pluecker-divisor}
\end{equation}
In particular,
\begin{equation}
 D(F^{\mathrm{Pl}}_{n,r})=\MDSlocus,
 \qquad
 \bigl|\operatorname{div}(F^{\mathrm{Pl}}_{n,r})\bigr|
 =|\Delta_{n,r}|.
 \label{eq:full-Pluecker-support}
\end{equation}
Thus the full Pl\"ucker product and the canonical Schur product define the
same open set and the same boundary support, although their Cartier
multiplicities are generally different.
\end{proposition}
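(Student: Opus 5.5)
The plan is to start from the Plücker formula \eqref{eq:all-Pluecker} of Theorem~\ref{thm:rectangular-Schur-Pluecker} and group the $r$-subsets $I$ by their reduced set $U_I$. Each $I=\{i_0<\cdots<i_{r-1}\}$ is determined by the pair $(i_0,U_I)$. Given $U=\{u_1<\cdots<u_{r-1}\}$ with largest element $u_{r-1}$, the admissible $i_0$ are $0,\ldots,n-1-u_{r-1}$, since we need $i_0+u_{r-1}\le n-1$. Under Lemma~\ref{lem:index-partition}, $u_{r-1}=(r-1)+\kappa_1$, so the multiplicity of $\kappa=\kappa(U)$ is
\[
 n-1-u_{r-1}+1=n-r+1-\kappa_1.
\]
For $r=1$ we have $U=\varnothing$, $\kappa_1=0$, and the count is $n$, which is consistent. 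This already yields the exponent of $\Schur{\kappa}$ in \eqref{eq:full-Pluecker-product}.

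Next comes the power of $(-1)^rA_0$. It equals $\sum_I i_0$, summed over all $r$-subsets of $\{0,\ldots,n-1\}$. I will identify this sum with $\binom{n}{r+1}$ as follows. The value $i_0=\min I$ counts the elements $j<\min I$. Hence $\sum_I\min I$ counts pairs $(j,I)$ with $j<\min I$, and these pairs correspond bijectively to $(r+1)$-subsets $\{j\}\cup I$, with $j$ recovered as the minimum. The sum is therefore $E_{n,r}$, which gives the sign $(-1)^{rE_{n,r}}$ and the factor $A_0^{E_{n,r}}$. The product formula \eqref{eq:full-Pluecker-product} follows, and it holds in $R_r$ because \eqref{eq:all-Pluecker} holds universally: specialize to the universal $\mathbf g$, using Theorem~\ref{thm:reduced-Schur}.

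For the degree, each Plücker coordinate $p_I\circ\Phi_{n,r}$ is a product of the factor $\bigl((-1)^rA_0\bigr)^{i_0}$ and $\Schur{\kappa(U_I)}$. I would rather avoid summing Schur degrees directly, so I use the grading in which $A_{r-j}$ has weight $j$. Then $\Schur{\kappa}$ is weighted-homogeneous of weight $|\kappa|$, but its ordinary degree is not obviously $|\kappa|$. The clean route instead is the dual Jacobi--Trudi formula \eqref{eq:dual-JT}: each $\Schur{\kappa}$ is a $\kappa_1\times\kappa_1$ determinant in the $\pm A$'s. The term $\prod_i e_{\kappa'_i}$ shows that the ordinary degree is $\kappa_1$ whenever $\kappa'_i\le r$, which is the case here since $\ell(\kappa)\le r-1$. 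I expect this degree bookkeeping to be the main obstacle. The plan is to show that the top-degree part in the $A$-variables is the diagonal product, which is nonzero, so $\deg\Schur{\kappa}=\kappa_1$. The total degree is then
\[
 E_{n,r}+\sum_{\kappa\in\Part}(n-r+1-\kappa_1)\kappa_1 .
\]
Per $I$, this degree is $i_0+\kappa_1=i_0+u_{r-1}-(r-1)=i_{r-1}-(r-1)$. Summing over $I$ gives
\[
 \sum_I\bigl(\max I-(r-1)\bigr),
\]
which counts pairs $(j,I)$ with $j<\max I$ and $j\notin I$. Each such pair gives an $(r+1)$-set $I\cup\{j\}$, and from that set $j$ can be any of its $r$ non-maximal elements. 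The sum is therefore $rE_{n,r}$, as claimed. One can also check the degree via the Plücker-degree argument, since $Q_m$ has entries of degree $\le m-r+1$, but the bijective count is cleaner.

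Finally, the divisor identities \eqref{eq:full-Pluecker-divisor} follow formally. Pullback of the principal divisors $\operatorname{div}(p_I)$ along $\Phi^\circ_{n,r}$ is $\operatorname{div}(p_I\circ\Phi_{n,r})$; this is legitimate because each pullback is nonzero in the domain $R_r$. Additivity of $\operatorname{div}$ on non-zero-divisors then turns the product formula into the stated sum. Subtracting \eqref{eq:MDS-boundary-divisor-sum} from this sum gives the third line, in which all coefficients are nonnegative because $\kappa_1\le n-r$ and $E_{n,r}\ge1$. Since $F^{\mathrm{Pl}}_{n,r}$ and $\Duniv$ have the same irreducible factors up to positive multiplicity, \eqref{eq:full-Pluecker-support} follows from Corollary~\ref{cor:universal-MDS-divisor}.
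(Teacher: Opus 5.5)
Your proposal is correct in outline and is essentially the paper's proof. Grouping by translation class gives the exponents $n-r+1-\kappa_1$, the sign and $A_0$-power come from $\sum_I i_0=E_{n,r}$, the degree comes from $i_0+\kappa_1=i_{r-1}-(r-1)$, and the divisor statements are formal. Your $(r+1)$-subset double counts are tidy substitutes for the paper's hockey-stick sums.

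One sub-claim is false as stated. The top-degree part of $\Schur{\kappa}$ is generally not the diagonal product alone: for $r\ge3$, $s_{(2,2)}=e_2^2-e_1e_3$ has two terms of degree $2$. What you actually need is only that the monomial $e_{\kappa'_1}\cdots e_{\kappa'_{\kappa_1}}$ survives. This holds because, for $\pi\ne\mathrm{id}$, the partial sums $\sum_{i\le k}(\pi(i)-i)$ are nonnegative and positive for some $k$. Hence the sorted index multiset $\{\kappa'_i-i+\pi(i)\}$ strictly dominates $\kappa'$ and gives a different monomial in the algebraically independent $e_1,\ldots,e_r$. Alternatively, you can simply cite Lemma~\ref{lem:Schur-degree}, exactly as the paper does.
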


\begin{proof}
Multiplying~\eqref{eq:all-Pluecker} over all $r$-subsets $I$ gives an
$A_0$-exponent $\sum_I i_0$ and the sign $(-1)^{r\sum_I i_0}$.  Since
\[
 \sum_I i_0
 =\sum_{t=1}^{n-r}\binom{n-t}{r}
 =\binom n{r+1}=E_{n,r},
\]
these are the first two factors in~\eqref{eq:full-Pluecker-product}.  For a
fixed $\kappa$, the corresponding reduced set has
$u_{r-1}=\kappa_1+r-1$ and may be translated by precisely
$0\le i_0\le n-1-u_{r-1}$.  Hence $\Schur{\kappa}$ occurs
$n-r+1-\kappa_1$ times.

By Lemma~\ref{lem:Schur-degree},
$\deg\Schur{\kappa}=\kappa_1$.  Also
$i_0+\kappa_1(U_I)=i_{r-1}-(r-1)$, and therefore
\[
 \deg F^{\mathrm{Pl}}_{n,r}
 =\sum_{j=r-1}^{n-1}(j-r+1)\binom j{r-1}
 =r\sum_{j=r}^{n-1}\binom jr
 =r\binom n{r+1}.
\]
Taking principal divisors gives~\eqref{eq:full-Pluecker-divisor}; every
factor exponent in~\eqref{eq:full-Pluecker-product} is positive, and
\eqref{eq:full-Pluecker-support} follows from
Corollary~\ref{cor:universal-MDS-divisor}.
\end{proof}

\begin{proposition}[The constant-coefficient component]
\label{prop:constant-coefficient-component}
Let
\[
 H_0:=V(A_0)\subseteq\A^r_{\Z}.
\]
For every $n>r\ge1$, the integral prime divisor $H_0$ is an irreducible
component of $|\Delta_{n,r}|$, and its coefficient in the effective Cartier
divisor $\Delta_{n,r}$ is one.  Equivalently, $\Delta_{n,r}$ is generically
reduced along $H_0$.  More precisely,
\begin{equation}
 A_0\nmid\Schur{\kappa}
 \qquad(\kappa\in\Part),
 \label{eq:A0-not-Schur-factor}
\end{equation}
so $H_0$ is contained in none of the Schur divisors
$V(\Schur{\kappa})$.  The same assertions hold in every field fiber.
\end{proposition}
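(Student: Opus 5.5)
The key claim is \eqref{eq:A0-not-Schur-factor}: $A_0$ does not divide any $\Schur{\kappa}$ for $\kappa\in\Part$, and the same holds after reduction to every field fiber. The rest then follows formally. I would establish this by specializing to a single explicit polynomial with $a_0=0$ and showing that every $\Schur{\kappa}$ with $\kappa\in\Part$ is nonzero there, over every prime field. If $A_0\mid\Schur{\kappa}$ in $R_r$, or in $R_r\otimes\F_p$, then $\Schur{\kappa}$ vanishes on all of $V(A_0)$, so one witness point suffices.

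\textbf{The witness.} Take $g=x^r-x^{r-1}=x^{r-1}(x-1)$, i.e.\ $A_{r-1}=-1$ and all other $A_i=0$. Its root multiset is $(1,0,\ldots,0)$, and in every characteristic
\[
 s_\kappa(1,0,\ldots,0)=1 \quad\text{if }\ell(\kappa)\le1,\qquad s_\kappa(1,0,\ldots,0)=0\quad\text{otherwise}.
\]
This is too weak, since partitions of length greater than one vanish. So I would choose roots $(t,1,0,\ldots,0)$ more cleverly, or better, use a point of $V(A_0)$ where $r-1$ roots are generic. Concretely, take $g=x\cdot h(x)$ with $h$ having $r-1$ algebraically independent roots $X_1,\ldots,X_{r-1}$. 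Then
\[
 \Schur{\kappa}(g)=s_\kappa(X_1,\ldots,X_{r-1},0).
\]
This equals $s_\kappa(X_1,\ldots,X_{r-1})$ when $\ell(\kappa)\le r-1$, and that is exactly the rectangle condition $\kappa\subseteq(n-r)^{r-1}$. A Schur polynomial in $r-1$ variables indexed by a partition of length at most $r-1$ is a nonzero integral polynomial. Indeed, its monomial $X^\kappa$ has coefficient $1$ (the Kostka number $K_{\kappa\kappa}=1$), so it stays nonzero modulo every prime.

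\textbf{Making this rigorous in $R_r$.} Under \eqref{eq:symmetric-isomorphism}, setting $A_0=0$ corresponds to setting $e_r(X)=0$. Since $e_r=X_1\cdots X_r$ is prime in $\Z[X]$, this reduction is compatible with specializing $X_r=0$, which gives the map $R_r/(A_0)\cong\Z[X_1,\ldots,X_{r-1}]^{S_{r-1}}$. Then $\Schur{\kappa}\bmod A_0$ maps to $s_\kappa(X_1,\ldots,X_{r-1})\ne0$, even modulo $p$. The case $r=1$ is trivial, since $\Schur{\varnothing}=1$.

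\textbf{Consequences.} $H_0$ is integral because $R_r/(A_0)$ is a polynomial ring over $\Z$, and likewise over each field. None of the $\Schur{\kappa}$ lies in the prime $(A_0)$. Hence in the factorization \eqref{eq:MDS-boundary-divisor-sum} the valuation of $\Duniv$ at the generic point of $H_0$ is exactly $1$, coming from the factor $\operatorname{div}(A_0)$. It follows that $H_0$ is an irreducible component of the support with multiplicity one, and by the identical argument over $\F_p$ or any field $K$, the same holds fiberwise.

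The only real obstacle is justifying the description of $R_r/(A_0)$, i.e.\ that reducing modulo $A_0$ agrees with setting one root to $0$. This follows from the explicit isomorphism $\Z[e_1,\ldots,e_{r-1}]\cong\Z[X_1,\ldots,X_{r-1}]^{S_{r-1}}$. Alternatively one can avoid it entirely by working with the concrete specialization above.
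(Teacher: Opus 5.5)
Your proof is correct and follows the paper's argument. You reduce modulo $A_0$ by setting the last root variable to zero, note that $s_\kappa(X_1,\ldots,X_{r-1},0)=s_\kappa(X_1,\ldots,X_{r-1})$ contains $X^\kappa$ with coefficient one and so survives in every characteristic, and conclude $\ord_{H_0}(\Duniv)=1$ because the Schur product is a unit in the local ring at $(A_0)$. One harmless slip: $e_r=X_1\cdots X_r$ is prime in $\Z[X]^{S_r}=\Z[e_1,\ldots,e_r]$ but not in $\Z[X]$; in any case non-divisibility only needs a ring map that kills $A_0$ and under which $\Schur{\kappa}$ survives, and your specialization provides exactly that.
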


\begin{proof}
Under the symmetric-polynomial isomorphism
\eqref{eq:symmetric-isomorphism}, quotienting by $A_0$ amounts to setting
$e_r=0$, equivalently to the stable specialization obtained by setting the
last root variable equal to zero.  Since every $\kappa\in\Part$ has
$\ell(\kappa)\le r-1$,
\[
 s_\kappa(X_1,\ldots,X_{r-1},0)
 =s_\kappa(X_1,\ldots,X_{r-1})\ne0.
\]
By the unitriangular Kostka expansion, its monomial-symmetric expansion
contains $m_\kappa$ with coefficient one~\cite[Chapter~I, \S6]{Macdonald},
so this specialization remains nonzero over every field.  Thus no
$\Schur{\kappa}$ is divisible by $A_0$.  If
$F=\prod_{\kappa\in\Part}\Schur{\kappa}$, then $F$ is a unit in the local
ring at the height-one prime $(A_0)$, while $\Duniv=A_0F$.  Hence
$\operatorname{ord}_{H_0}(\Duniv)=1$.  For $r=1$, only the empty Schur
factor occurs, and the conclusion is immediate.
\end{proof}

\subsection{Coefficient formulas and confluent specialization}

Vieta's formulas give $e_j(\Lambda_g)=(-1)^ja_{r-j}$, so the dual
Jacobi--Trudi determinant~\eqref{eq:dual-JT} computes every
$\Schur{\kappa}(g)$ directly from the coefficients.  There is also a useful
recurrence through the complete homogeneous functions $h_m(\Lambda_g)$:
\begin{equation}
 \sum_{m\ge0}h_m(\Lambda_g)T^m
 =\frac{1}{1+a_{r-1}T+a_{r-2}T^2+\cdots+a_0T^r},
 \label{eq:h-generating}
\end{equation}
so $h_0=1$, $h_m=0$ for $m<0$, and
\begin{equation}
 h_m=-\sum_{j=1}^{\min(m,r)}a_{r-j}h_{m-j}.
 \label{eq:h-recurrence}
\end{equation}
The ordinary Jacobi--Trudi identity then gives a second coefficient-only
calculation of all factors.

Suppose now that
\begin{equation}
 g(x)=\prod_{\nu=1}^s(x-\alpha_\nu)^{m_\nu},
 \qquad
 \alpha_\mu\ne\alpha_\nu\ (\mu\ne\nu),
 \qquad
 \sum_{\nu=1}^s m_\nu=r.
 \label{eq:multiplicity-factorization}
\end{equation}
Let $0\le i_1<\cdots<i_r$ and put $I=\{i_1,\ldots,i_r\}$.  Let
$F_I(\alpha_1,\ldots,\alpha_s)$ be the Hasse-derivative confluent
Vandermonde determinant whose rows are indexed by $(\nu,j)$,
$0\le j<m_\nu$, and whose entries are
$\binom{i_t}{j}\alpha_\nu^{i_t-j}$, with the same zero convention when
$i_t<j$.  Put
\begin{equation}
 \lambda(I)=
 (i_r-(r-1),i_{r-1}-(r-2),\ldots,i_2-1,i_1).
 \label{eq:lambda-I}
\end{equation}

\begin{proposition}[Confluent Schur factorization]
\label{prop:confluent-Schur}
Order the root blocks by $\alpha_1,\ldots,\alpha_s$ and the rows in the
$\nu$-th block by $j=0,\ldots,m_\nu-1$.  Then
\begin{equation}
 F_I(\alpha_1,\ldots,\alpha_s)
 =\prod_{\mu<\nu}
 (\alpha_\nu-\alpha_\mu)^{m_\mu m_\nu}
 s_{\lambda(I)}
 (\alpha_1^{[m_1]},\ldots,\alpha_s^{[m_s]}).
 \label{eq:confluent-Schur}
\end{equation}
For $I=\{0,u_1,\ldots,u_{r-1}\}$, the last factor is
$s_{\kappa(U)}(\alpha_1^{[m_1]},\ldots,\alpha_s^{[m_s]})$.
The displayed formula is an identity in
\(\Z[\alpha_1,\ldots,\alpha_s]\), and therefore specializes to every
characteristic.
\end{proposition}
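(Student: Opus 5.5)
The plan is to obtain \eqref{eq:confluent-Schur} as a confluent limit of the squarefree bialternant identity, carried out in a characteristic-free way via divided differences, so that the Hasse-derivative normalization comes out exactly with no factorials. Work first over $\Z$ with $r$ independent variables $X_1,\ldots,X_r$, grouped into blocks of sizes $m_1,\ldots,m_s$; write the block-$\nu$ variables as $X_{\nu,0},\ldots,X_{\nu,m_\nu-1}$. The generalized Vandermonde $\det(X_k^{i_t})$ equals $\prod_{k<l}(X_l-X_k)\,s_{\lambda(I)}(X)$ by \eqref{eq:bialternant}; here the exponents $i_1<\cdots<i_r$ give $\lambda(I)$ after reversal, and the sign conventions match those used in Theorem~\ref{thm:reduced-Schur}.

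Inside each block, replace the row of $X_{\nu,j}$ by the $j$-th divided difference $[X_{\nu,0},\ldots,X_{\nu,j}]$ applied to the monomials $x^{i_t}$. This is a unitriangular row operation over the fraction field. Its net effect on the determinant is division by $\prod_{0\le a<b<m_\nu}(X_{\nu,b}-X_{\nu,a})$, which is exactly the intra-block part of the Vandermonde product. After cancellation we obtain a polynomial identity:
\[
 \det\bigl([X_{\nu,0},\ldots,X_{\nu,j}]x^{i_t}\bigr)
 =\prod_{\text{inter-block pairs}}(X_l-X_k)\cdot s_{\lambda(I)}(X).
\]
Both sides lie in $\Z[X]$, since the divided difference of a monomial is the complete homogeneous polynomial $h_{i_t-j}(X_{\nu,0},\ldots,X_{\nu,j})$.

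Next, specialize $X_{\nu,j}\mapsto\alpha_\nu$ for all $j$; this is a ring homomorphism out of $\Z[X]$, so no limits are needed. Under it, $h_{i-j}$ of $j+1$ equal arguments becomes $\binom{i}{j}\alpha_\nu^{i-j}$, which is the Hasse-derivative entry, including the convention that it vanishes for $i<j$ and the case $\alpha_\nu=0$. The left side therefore becomes $F_I$ with the stated row order. The inter-block product becomes $\prod_{\mu<\nu}(\alpha_\nu-\alpha_\mu)^{m_\mu m_\nu}$, provided the variables are ordered block by block, so that each pair $k<l$ from blocks $\mu<\nu$ contributes $(\alpha_\nu-\alpha_\mu)$. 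The Schur factor becomes $s_{\lambda(I)}(\alpha_1^{[m_1]},\ldots,\alpha_s^{[m_s]})$. This proves \eqref{eq:confluent-Schur} in $\Z[\alpha_1,\ldots,\alpha_s]$, and hence in every characteristic by reduction. For $I=\{0,u_1,\ldots,u_{r-1}\}$ we have $i_1=0$, and $\lambda(I)$ is literally $\kappa(U)$ by \eqref{eq:kappa-U}.

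The main obstacle is the bookkeeping in the divided-difference step. One has to check that the successive row operations within a block have determinant factor precisely the inverse of the intra-block Vandermonde, and that the overall sign of $\prod_{k<l}(X_l-X_k)$ is consistent with the bialternant orientation and the row ordering in $F_I$. I would verify this by induction on $m_\nu$ using the recursion $[x_0,\ldots,x_j]f=\bigl([x_1,\ldots,x_j]f-[x_0,\ldots,x_{j-1}]f\bigr)/(x_j-x_0)$, applied from the last row upward. That recursion yields exactly the factors $(X_{\nu,b}-X_{\nu,a})$ with $a<b$.
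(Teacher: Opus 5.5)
Your proposal is correct and follows essentially the same route as the paper. In both, blockwise divided-difference rows strip the internal Vandermonde factors from the squarefree bialternant identity, and the confluent specialization turns $h_{i-j}$ of $j+1$ equal letters into the Hasse entry $\binom{i}{j}\alpha_\nu^{i-j}$.

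Your version is a bit cleaner: it specializes an identity already established in $\Z[X]$ by a ring homomorphism, instead of taking limits over a characteristic-zero field. One wording correction: the within-block row operation is triangular, not unitriangular. Its diagonal entries $\prod_{b<j}(X_{\nu,j}-X_{\nu,b})^{-1}$ are exactly what produce the division by the intra-block Vandermonde.
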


\begin{proof}
Work first over a characteristic-zero fraction field.  Introduce pairwise
distinct variables
\[
 X_{\nu,0},\ldots,X_{\nu,m_\nu-1}
 \qquad(1\le\nu\le s),
\]
ordered blockwise in the same order as the rows of the confluent matrix.
For distinct variables, the ordinary generalized bialternant identity
identifies the quotient of the alternant with exponent set $I$ by the
ordinary Vandermonde alternant with $s_{\lambda(I)}(X)$.

For each block divide both alternants by its internal Vandermonde
\[
 V_\nu=\prod_{0\le a<b<m_\nu}(X_{\nu,b}-X_{\nu,a}),
\]
replace the rows successively by divided-difference rows, and let
$X_{\nu,j}\to\alpha_\nu$.  The $j$-th divided-difference row then tends to
the Hasse row
\[
 D^{(j)}(x^{i_t})(\alpha_\nu)
 =\binom{i_t}{j}\alpha_\nu^{i_t-j}.
\]
Thus the numerator tends to $F_I(\alpha_1,\ldots,\alpha_s)$.  After the
internal Vandermonde factors have been removed, the denominator tends to
\[
 \prod_{\mu<\nu}(\alpha_\nu-\alpha_\mu)^{m_\mu m_\nu}.
\]
Consequently the limiting bialternant quotient gives exactly
\eqref{eq:confluent-Schur}.  This calculation also fixes the displayed sign
under the chosen block order and shows directly why the Hasse normalization
introduces no factorial factors; compare
\cite[Theorem~1.1]{GonzalezSerranoMaximenko}.  Both sides are integral
polynomials, so the identity extends to
$\Z[\alpha_1,\ldots,\alpha_s]$, and hence to every characteristic.
\end{proof}

\section{Arithmetic of the universal MDS boundary divisor}
\label{sec:global-arithmetic}

\subsection{Primitivity and exact degrees}

Give the coefficient variables the root-weight grading
\begin{equation}
 \rwt(A_{r-j})=j,
 \qquad 1\le j\le r.
 \label{eq:root-weight}
\end{equation}
Under the root dilation
\[
 g(x)\longmapsto g_c(x):=c^rg(x/c),
\]
the coefficient $A_{r-j}$ is multiplied by $c^j$.

\begin{lemma}[Degree of a Schur coordinate]
\label{lem:Schur-degree}
For every partition $\kappa$ of length at most $r$,
\begin{equation}
 \deg_{A_0,\ldots,A_{r-1}}\Schur{\kappa}=\kappa_1,
 \qquad
 \rwt(\Schur{\kappa})=|\kappa|.
 \label{eq:one-Schur-degree}
\end{equation}
\end{lemma}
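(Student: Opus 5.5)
The plan is to read both quantities off the dual Jacobi--Trudi formula~\eqref{eq:dual-JT}, transported to $R_r$ through the isomorphism~\eqref{eq:symmetric-isomorphism}. Under that isomorphism $e_j(X)\mapsto(-1)^jA_{r-j}$, so every entry $e_{\kappa'_i-i+j}$ of the $\kappa_1\times\kappa_1$ determinant becomes $\pm$ a single coefficient variable, $\pm1$, or $0$. Consequently $\Schur{\kappa}$ is a determinant of size $\kappa_1$ whose entries are of degree at most one. This immediately gives $\deg\Schur{\kappa}\le\kappa_1$ (and $\Schur{\varnothing}=1$ takes care of $\kappa_1=0$).

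The root weight comes out directly. The grading~\eqref{eq:root-weight} is exactly the pullback of the ordinary degree in $X$, since $e_j$ is homogeneous of degree $j$ and corresponds to $A_{r-j}$, which has root weight $j$. Because $s_\kappa$ is homogeneous of degree $|\kappa|$ and nonzero (it has length at most $r$), $\Schur{\kappa}$ is root-weight homogeneous of weight $|\kappa|$. Equivalently, one can invoke the dilation $g\mapsto g_c$, under which the roots scale by $c$.

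For the lower bound $\deg\Schur{\kappa}\ge\kappa_1$, I would exhibit a monomial of degree exactly $\kappa_1$ with coefficient $\pm1$. The natural candidate is the diagonal term $\prod_{i=1}^{\kappa_1} e_{\kappa'_i}$, because every $\kappa'_i\ge1$ for $i\le\kappa_1$, so each diagonal factor is a genuine variable, up to sign. The step I expect to be the main obstacle is showing that this monomial does not cancel against the other permutation terms, some of which may also be products of $\kappa_1$ variables. To handle it, I would use the classical expansion $s_\kappa=\sum_\mu K_{\kappa'\mu'}e_{\mu'}$, i.e.\ the unitriangularity in the $e$-basis obtained from applying $\omega$ to Kostka unitriangularity, as already cited from \cite[Chapter~I, \S6]{Macdonald} in the proof of Proposition~\ref{prop:constant-coefficient-component}. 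In that expansion the term $e_{\kappa'}=\prod_i e_{\kappa'_i}$ appears with coefficient $1$. Moreover, the products $e_\nu$, over partitions $\nu$ with parts at most $r$, correspond to distinct monomials in $R_r$, since the $e_1,\dots,e_r$ are algebraically independent.

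It remains to check that no term exceeds degree $\kappa_1$. Every $\nu$ occurring in the expansion satisfies $\nu\le\kappa'$ in dominance with $|\nu|=|\kappa|$, so $\ell(\nu)\ge\ell(\kappa')=\kappa_1$. At first sight this points the wrong way, so I would instead rely on the determinant bound, which already caps the degree at $\kappa_1$. Among the terms of degree exactly $\kappa_1$, the monomial $A$-image of $e_{\kappa'}$ carries coefficient exactly $1$, so it survives. This shows the degree equals $\kappa_1$, and since the coefficient is $\pm1$, the same conclusion holds over every characteristic.
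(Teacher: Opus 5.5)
This is essentially the paper's argument: the dual Jacobi--Trudi determinant gives $\deg\Schur{\kappa}\le\kappa_1$, unitriangularity of the $e$-expansion shows that $e_{\kappa'}$ survives with coefficient $\pm1$ as a monomial distinct from all others in the algebraically independent $e_1,\ldots,e_r$, and homogeneity gives the root weight. (Terms containing some $e_j$ with $j>r$ vanish, while $\kappa'$ has parts at most $\ell(\kappa)\le r$.)

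One correction: the expansion $s_\kappa=\sum_\mu K_{\kappa'\mu'}e_{\mu'}$ is false as written; it would give $s_{(2)}=e_1^2$ instead of $e_1^2-e_2$. Applying $\omega$ to $h_\mu=\sum_\lambda K_{\lambda\mu}s_\lambda$ and inverting yields $s_\kappa=\sum_\nu (K^{-1})_{\nu\kappa'}e_\nu$. This is still unitriangular with $e_{\kappa'}$ at coefficient one, but now every occurring $\nu$ dominates $\kappa'$, so $\ell(\nu)\le\kappa_1$ and the ``wrong way'' inequality you ran into disappears; your fallback on the determinant bound is nonetheless valid.
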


\begin{proof}
If $\kappa=\varnothing$, both assertions are immediate.  Assume
$\kappa\ne\varnothing$, put $b=\kappa_1$, and write
$\kappa'=(c_1,\ldots,c_b)$.  In the dual Jacobi--Trudi determinant, each
nonzero term has ordinary degree at most $b$.  For any partition $\mu$,
write
\[
 e_{\mu'}:=\prod_{j=1}^{\mu_1}e_{\mu'_j}.
\]
Inverse Kostka triangularity, with respect to the dominance order on
partitions of $|\kappa|$, gives
\[
 s_\kappa=e_{\kappa'}
 +\sum_{\mu\lhd\kappa}d_{\kappa\mu}e_{\mu'},
 \qquad d_{\kappa\mu}\in\Z;
\]
see~\cite[Chapter~I, \S6]{Macdonald}.  Here $\mu\lhd\kappa$ means that
$\mu$ is strictly dominated by $\kappa$.  After specialization to $r$
variables, terms with $\ell(\mu)>r$ vanish because they contain
$e_{\mu'_1}=e_{\ell(\mu)}=0$, and are omitted.  Distinct surviving
partitions give distinct monomials in the algebraically independent
variables $e_1,\ldots,e_r$.  Hence
$e_{\kappa'}=e_{c_1}\cdots e_{c_b}$ occurs with coefficient one and cannot
cancel.
Since every $c_i>0$, this monomial has ordinary degree $b$, which proves the
first equality.  Finally, $e_j=(-1)^jA_{r-j}$ has root weight $j$, and
homogeneity gives $\rwt(\Schur{\kappa})=|\kappa|$.
\end{proof}

\begin{theorem}[Primitivity and exact degrees]
\label{thm:degree-primitivity}
For every $n>r\ge1$, the polynomial $\Duniv$ is primitive in
$\Z[A_0,\ldots,A_{r-1}]$ and remains nonzero after reduction modulo every
prime.  Put
\begin{align}
 W_{n,r}
 &:=\frac{(r-1)(n-r)}{2}\binom{n-1}{r-1},
 \label{eq:Wnr}\\
 \delta_{n,r}
 &:=1+\frac{(r-1)(n-r)}{r}\binom{n-1}{r-1}.
 \label{eq:delta-nr}
\end{align}
Then
\begin{align}
 \sum_{\kappa\in\Part}|\kappa|&=W_{n,r},
 \label{eq:sum-sizes}\\
 \deg\Duniv&=\delta_{n,r},
 \label{eq:D-total-degree}\\
 \rwt(\Duniv)&=r+W_{n,r}.
 \label{eq:D-weighted-degree}
\end{align}
\end{theorem}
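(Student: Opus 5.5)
Primitivity comes first. By Gauss's lemma, a product of primitive polynomials in $\Z[A_0,\ldots,A_{r-1}]$ is primitive, so it suffices to show that $A_0$ and each $\Schur{\kappa}$ with $\kappa\in\Part$ are primitive. For $A_0$ this is clear. For $\Schur{\kappa}$, the proof of Lemma~\ref{lem:Schur-degree} already shows that the monomial $e_{\kappa'}=\pm\prod_i A_{r-\kappa'_i}$ occurs with coefficient $\pm1$. Hence each factor is primitive and stays nonzero modulo every prime $p$. Since $\F_p[A_0,\ldots,A_{r-1}]$ is a domain, the reduction of $\Duniv$ is then a product of nonzero factors, hence nonzero. (Alternatively, one can invoke the unitriangular Kostka argument of Proposition~\ref{prop:constant-coefficient-component}.)

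For the degrees, note that ordinary degree and root weight are both additive on products in the domain $R_r$: the top homogeneous components multiply to a nonzero component. By Lemma~\ref{lem:Schur-degree},
\[
\deg\Duniv=1+\sum_{\kappa\in\Part}\kappa_1,
\qquad
\rwt(\Duniv)=r+\sum_{\kappa\in\Part}|\kappa|,
\]
since $\rwt(A_0)=r$. This gives \eqref{eq:D-weighted-degree} once \eqref{eq:sum-sizes} is established.

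The remaining content is therefore two rectangle sums over partitions in $(n-r)^{r-1}$. For $\sum|\kappa|$, use the box-complement involution $\kappa\mapsto\kappa^c$ with $|\kappa|+|\kappa^c|=(r-1)(n-r)$. It gives
\[
\sum|\kappa|=\tfrac12(r-1)(n-r)\,|\Part|=W_{n,r},
\]
using $|\Part|=\binom{n-1}{r-1}$ from Lemma~\ref{lem:index-partition}.

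For $\sum\kappa_1$, pass to reduced sets through Lemma~\ref{lem:index-partition}: $\kappa_1=u_{r-1}-(r-1)$ with $u_{r-1}=m$ ranging over $r-1\le m\le n-1$, with $\binom{m-1}{r-2}$ choices. So
\[
\sum\kappa_1=\sum_{m=r-1}^{n-1}(m-r+1)\binom{m-1}{r-2}.
\]
I would evaluate this with hockey-stick identities, writing $(m-r+1)\binom{m-1}{r-2}=(r-1)\binom{m-1}{r-1}$ and then summing to $(r-1)\binom{n-1}{r}=\frac{(r-1)(n-r)}{r}\binom{n-1}{r-1}$. This matches $\delta_{n,r}-1$, and it mirrors the computation already done in the proof of Proposition~\ref{prop:full-Pluecker-product}.

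The case $r=1$ must be checked separately under the conventions \eqref{eq:r-one-conventions}: $\Duniv=A_0$, with degree $1$ and root weight $1$, and both formulas give exactly these values. The only step needing care is the binomial identity for $\sum\kappa_1$; everything else is bookkeeping.
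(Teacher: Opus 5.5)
Your proof is correct and takes essentially the same route as the paper: Gauss's lemma applied to primitive Schur factors, additivity of the ordinary and root-weighted degrees via Lemma~\ref{lem:Schur-degree}, and box complementation for $\sum_{\kappa}|\kappa|$. The differences are minor. You certify primitivity through the coefficient $\pm1$ of $e_{\kappa'}$, whereas the paper uses the coefficient $1$ of $m_\kappa$ in the root variables. You evaluate $\sum_\kappa\kappa_1$ by grouping reduced sets by their largest element and applying the hockey-stick identity, whereas the paper uses the tail count $\sum_{j=1}^{b}\bigl(\binom{a+b}{a}-\binom{a+j-1}{a}\bigr)$ with $a=r-1$ and $b=n-r$.
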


\begin{proof}
The monomial-symmetric expansion of $s_\kappa$ contains $m_\kappa$ with
coefficient one.  Thus $s_\kappa$ does not vanish after reduction modulo any
prime.  Under the integral isomorphism~\eqref{eq:symmetric-isomorphism}, this
implies that every $\Schur{\kappa}$ is primitive.  Gauss' lemma and
\eqref{eq:canonical-Schur-product} give the primitivity of $\Duniv$.

Set $a=r-1$ and $b=n-r$.  Complementation in the $a\times b$ rectangle
pairs every partition of size $j$ with one of size $ab-j$, so
\[
 \sum_{\kappa\subseteq b^a}|\kappa|
 =\frac{ab}{2}\binom{a+b}{a},
\]
which is~\eqref{eq:sum-sizes}.  By Lemma~\ref{lem:Schur-degree}, the
ordinary degree of the Schur product is $\sum_\kappa\kappa_1$.  The number
of partitions in $j^a$ is $\binom{a+j}{a}$, and therefore
\begin{align*}
 \sum_{\kappa\subseteq b^a}\kappa_1
 &=\sum_{j=1}^b
 \left(\binom{a+b}{a}-\binom{a+j-1}{a}\right)\\
 &=b\binom{a+b}{a}-\binom{a+b}{a+1}
 =\frac{ab}{a+1}\binom{a+b}{a}.
\end{align*}
Adding the external factor $A_0$ gives~\eqref{eq:D-total-degree}.  Its
root weight is $r$, so~\eqref{eq:D-weighted-degree} follows from
\eqref{eq:sum-sizes}.
\end{proof}

\begin{lemma}[Flatness criterion for hypersurfaces over $\Z$]
\label{lem:integral-hypersurface-flatness}
Let $F\in\Z[X_1,\ldots,X_m]$ have nonzero reduction modulo every prime.
Then
\[
 V(F)\longrightarrow\Spec\Z
\]
is flat.
\end{lemma}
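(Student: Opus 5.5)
The plan is to use the standard local criterion for flatness over a Dedekind domain: a module over $\Z$ is flat if and only if it is torsion-free. Put $S=\Z[X_1,\ldots,X_m]$ and $B=S/(F)$. Then $V(F)=\Spec B$ is affine over $\Spec\Z$, and flatness of $V(F)\to\Spec\Z$ is equivalent to flatness of $B$ as a $\Z$-module. Since $\Z$ is a principal ideal domain, this reduces to showing that no nonzero integer annihilates a nonzero element of $B$. It suffices to treat multiplication by each prime $p$, because an integer annihilator factors into primes and the set of non-zero-divisors is multiplicatively closed.

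The key step is the following. Suppose $p\,h\in(F)$ for some $h\in S$, say $ph=Fk$ with $k\in S$. Reduce modulo $p$. In $\F_p[X_1,\ldots,X_m]$, which is a domain, we get $\bar F\,\bar k=0$. By hypothesis $\bar F\neq0$, so $\bar k=0$. Hence $k=pk'$ for some $k'\in S$. Then $ph=pFk'$, and since $S$ is $\Z$-torsion-free we get $h=Fk'\in(F)$. Thus $p$ is a non-zero-divisor on $B$ for every prime $p$, so $B$ is torsion-free and therefore flat over $\Z$.

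I expect no real obstacle. The only point needing care is the equivalence ``torsion-free $\Leftrightarrow$ flat'' over $\Z$, which is standard for modules over a PID and needs no finiteness assumption, since flatness is checked on finitely generated ideals. It is also worth observing that the hypothesis forces $F$ to be nonconstant or a unit up to sign. If $F=\pm1$, then $V(F)$ is empty and the statement is trivially true. Otherwise the content of $F$ is $1$, which is exactly the condition used above. In the applications this hypothesis is supplied by Theorem~\ref{thm:degree-primitivity}.
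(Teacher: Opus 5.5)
Your proposal is correct and follows essentially the same route as the paper. Both show that each prime $p$ acts injectively on $\Z[X_1,\ldots,X_m]/(F)$ by reducing $ph=Fk$ modulo $p$ and using that $\F_p[X_1,\ldots,X_m]$ is a domain with $\overline F\ne0$, then conclude from torsion-freeness over the PID $\Z$. One small point: the fact that torsion-free implies flat over a PID is not usually called the ``local criterion'' for flatness, but the fact you use is correct.
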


\begin{proof}
Put $R=\Z[X_1,\ldots,X_m]$.  If $p[f]=0$ in $R/(F)$ for a prime $p$, then
$pf=Fh$ for some $h\in R$.  Reduction modulo $p$ gives
$\overline F\,\overline h=0$ in the domain
$\F_p[X_1,\ldots,X_m]$.  Since $\overline F\ne0$, one has
$\overline h=0$.  Thus $h\in pR$, say $h=ph_1$;
cancelling $p$ in $R$ gives $f=Fh_1$.  Thus $R/(F)$ is torsion-free over
$\Z$.  Since $\Z$ is a PID, every torsion-free $\Z$-module is flat.
\end{proof}

\begin{corollary}[Flatness and constant degree of the universal boundary]
\label{cor:boundary-flatness}
The structural morphism
$\Delta_{n,r}\to\Spec\Z$ is flat.  For every field $k$, its fiber is the
nonzero effective Cartier hypersurface
\[
 V\bigl((\Duniv)_k\bigr)\subseteq\A_k^r
\]
of exact total degree $\delta_{n,r}$.  Moreover,
$(\MDSlocus)_k=D((\Duniv)_k)$ is a nonempty smooth geometrically integral
open subscheme of $\A_k^r$.  In particular, the structural morphism
\[
 \MDSlocus\longrightarrow\Spec\Z
\]
is smooth and surjective of relative dimension $r$, with geometrically
integral fibers.
\end{corollary}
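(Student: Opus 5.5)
We combine Lemma~\ref{lem:integral-hypersurface-flatness} with Theorem~\ref{thm:degree-primitivity}, and then read off everything on the open complement fiberwise. By Corollary~\ref{cor:universal-MDS-divisor}, $\Delta_{n,r}=V(\Duniv)\subseteq\A^r_{\Z}$. Theorem~\ref{thm:degree-primitivity} says that $\Duniv$ has nonzero reduction modulo every prime. Lemma~\ref{lem:integral-hypersurface-flatness} therefore gives flatness of $\Delta_{n,r}\to\Spec\Z$ directly.

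Next we treat the fibers. For a field $k$, the fiber is $\Spec\bigl(k[A_0,\ldots,A_{r-1}]/(\Duniv)_k\bigr)$, since base change commutes with quotients. The image $(\Duniv)_k$ is nonzero because its reduction to the prime field is nonzero: in characteristic zero this is immediate, and in characteristic $p$ it is the mod-$p$ nonvanishing. Hence $(\Duniv)_k$ is a non-zero-divisor in the domain $k[A]$, and its zero locus is an effective Cartier hypersurface.

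For the exact degree, the point to check is that the top-degree homogeneous part of $\Duniv$ does not die mod $p$. This follows from the multiplicativity used in Theorem~\ref{thm:degree-primitivity}. Each $\Schur{\kappa}$ has top-degree part containing the monomial $e_{\kappa'}$ with coefficient $\pm1$, by the inverse Kostka argument of Lemma~\ref{lem:Schur-degree}. That lemma's argument is integral: the coefficient one of $e_{\kappa'}$ and the distinctness of the monomials survive reduction. So $\deg(\Schur{\kappa})_k=\kappa_1$ over every field. Top-degree parts multiply without cancellation in the domain $k[A]$, so $\deg(\Duniv)_k=1+\sum_\kappa\kappa_1=\delta_{n,r}$. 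We expect this to be the only step needing care: one must note that the leading form of each factor is nonzero mod $p$, which is exactly what the unitriangularity provides.

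Finally, the open part. Base change is compatible with principal opens, and by Proposition~\ref{prop:grassmannian-MDS} the MDS locus commutes with base change, so $(\MDSlocus)_k=D((\Duniv)_k)$. This is a nonempty open subscheme of $\A^r_k$ because $(\Duniv)_k\ne0$. It is therefore smooth and geometrically integral, since $\A^r_k$ is and passing to $\bar k$ keeps the polynomial nonzero. $\MDSlocus\to\Spec\Z$ is the restriction of the smooth morphism $\A^r_{\Z}\to\Spec\Z$ to an open subscheme, so it is smooth of relative dimension $r$. Surjectivity follows because every fiber over a point $\Spec k(\mathfrak p)$ of $\Spec\Z$ is nonempty.
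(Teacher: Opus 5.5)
Your proposal is correct and follows the paper's own proof essentially step for step: mod-$p$ nonvanishing of the universal MDS polynomial fed into Lemma~\ref{lem:integral-hypersurface-flatness}, the inverse-Kostka leading monomial with coefficient $\pm1$ giving exact degree $\kappa_1$ for each Schur factor in every characteristic (hence total degree $\delta_{n,r}$), and nonemptiness of the principal open in each fiber for smoothness, geometric integrality and surjectivity. The only cosmetic difference is that you cite Theorem~\ref{thm:degree-primitivity} for the mod-$p$ nonvanishing, whereas the paper reads it off directly from the surviving leading monomial.
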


\begin{proof}
The proof of Lemma~\ref{lem:Schur-degree} exhibits in the ordinary top-degree
part of every $\Schur{\kappa}$ a monomial with coefficient $1$ in the
elementary symmetric variables, hence with coefficient $\pm1$ in the
coefficient variables.  Consequently
\[
 \deg(\Schur{\kappa}\bmod p)=\kappa_1
\]
for every prime $p$.  Degrees add in a polynomial ring over a field, so
$\deg((\Duniv)_k)=\delta_{n,r}$ for every field $k$; in particular, every
fiber is a nonzero effective Cartier hypersurface.

Flatness follows from Lemma~\ref{lem:integral-hypersurface-flatness}.
Finally, the principal open defined by a nonzero polynomial in
affine space is nonempty; as an open subscheme of $\A_k^r$, it is smooth and
geometrically integral.  Since $\MDSlocus$ is open in $\A^r_{\Z}$, its
structural morphism is smooth of relative dimension $r$.  Every
residue-field fiber is nonempty, so the morphism is surjective.
\end{proof}

\begin{proposition}[Scaling action]
\label{prop:scaling-action}
For $c\in K^*$ put $g_c(x)=c^rg(x/c)$.  Then
\begin{equation}
 \Schur{\kappa}(g_c)=c^{|\kappa|}\Schur{\kappa}(g),
 \qquad
 \Duniv(g_c)=c^{r+W_{n,r}}\Duniv(g).
 \label{eq:scaling-action}
\end{equation}
Hence $\MDSlocus$ is invariant under the $\Gm$-action with the weights
in~\eqref{eq:root-weight}.
\end{proposition}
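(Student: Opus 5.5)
The plan is to work in root coordinates and reduce everything to the homogeneity recorded in Lemma~\ref{lem:Schur-degree} and Theorem~\ref{thm:degree-primitivity}.

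\emph{Step 1: the roots of $g_c$.} First we identify the coefficients and roots of $g_c$. If $g=\sum_j a_jx^j$ with $a_r=1$, then
\[
 g_c(x)=\sum_j a_jc^{r-j}x^j .
\]
So $g_c$ is again monic, and its coefficient $a_{r-j}$ is multiplied by $c^j$, matching~\eqref{eq:root-weight}. Over a splitting field we have $g(x)=\prod_i(x-\lambda_i)$, hence
\[
 g_c(x)=c^r\prod_i(x/c-\lambda_i)=\prod_i(x-c\lambda_i),
\]
and therefore $\Lambda_{g_c}=c\Lambda_g$.

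\emph{Step 2: one Schur factor.} By~\eqref{eq:Schur-evaluation} and the homogeneity of $s_\kappa$ of degree $|\kappa|$,
\[
 \Schur{\kappa}(g_c)=s_\kappa(c\Lambda_g)=c^{|\kappa|}s_\kappa(\Lambda_g)=c^{|\kappa|}\Schur{\kappa}(g).
\]
Equivalently, and without any splitting field, $\Schur{\kappa}$ is root-weight homogeneous of weight $|\kappa|$ by Lemma~\ref{lem:Schur-degree}. This works over any field and in any characteristic, since the identity holds in $R_r$ and then specializes.

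\emph{Step 3: the full product.} The constant coefficient of $g_c$ is $c^ra_0$, so the factor $A_0$ picks up $c^r$. Multiplying over $\kappa\in\Part$ in~\eqref{eq:canonical-Schur-product} and using~\eqref{eq:sum-sizes} gives the exponent
\[
 r+\sum_{\kappa\in\Part}|\kappa| = r+W_{n,r}.
\]
Equivalently, this is~\eqref{eq:D-weighted-degree} together with root-weight homogeneity of $\Duniv$.

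\emph{Step 4: invariance of $\MDSlocus$.} Since $c\in K^*$, the factor $c^{r+W_{n,r}}$ is a unit, so $\Duniv(g_c)\ne0$ if and only if $\Duniv(g)\ne0$. Scheme-theoretically, the $\Gm$-action $A_{r-j}\mapsto tA_{r-j}\,t^{j-1}$, i.e.\ $A_{r-j}\mapsto t^jA_{r-j}$ on $\A^r_{\Z}$, pulls $\Duniv$ back to $t^{r+W_{n,r}}\Duniv$, a unit multiple. It therefore preserves $D(\Duniv)=\MDSlocus$ by Corollary~\ref{cor:universal-MDS-divisor}.

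The only point needing care is this last scheme-level phrasing, namely checking that the coaction maps $\Duniv$ to a unit multiple of itself. Given the graded homogeneity, this is immediate.
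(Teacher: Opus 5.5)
Your proposal is correct and follows the paper's proof. The paper's argument likewise observes that the roots of $g_c$ are $c\lambda_1,\ldots,c\lambda_r$, applies the homogeneity of each $s_\kappa$, and multiplies over $\Part$ together with the factor $c^r$ from $A_0$ to obtain $c^{r+W_{n,r}}$. Your observation that the $\Gm$-coaction sends $\Duniv$ to a unit multiple of itself makes explicit the scheme-level invariance, which the paper leaves implicit.
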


\begin{proof}
The roots of $g_c$ are $c\lambda_1,\ldots,c\lambda_r$.  Apply the ordinary
homogeneity of the Schur polynomial and then multiply the factors in
\eqref{eq:canonical-Schur-product}.
\end{proof}

\begin{proposition}[Reciprocal symmetry]
\label{prop:reciprocal-symmetry}
Let $g$ be monic with constant coefficient $a_0\ne0$, put
$\rho=(-1)^ra_0$, and define
\[
 g^\#(x)=a_0^{-1}x^rg(x^{-1}).
\]
For $\kappa=(\kappa_1,\ldots,\kappa_r)\in\Part$, where
$\kappa_r=0$, put
\[
 \kappa^\vee=(\kappa_1-\kappa_r,\kappa_1-\kappa_{r-1},
 \ldots,\kappa_1-\kappa_1).
\]
Then $\kappa\mapsto\kappa^\vee$ is an involution of $\Part$, and
\begin{align}
 \Schur{\kappa}(g^\#)
 &=\rho^{-\kappa_1}\Schur{\kappa^\vee}(g),
 \label{eq:reciprocal-Schur}\\
 \Duniv(g^\#)
 &=(-1)^{r(\delta_{n,r}-1)}
   a_0^{-(\delta_{n,r}+1)}\Duniv(g).
 \label{eq:reciprocal-D}
\end{align}
In particular, $g$ defines an MDS code of length $n$ if and only if
$g^\#$ does.
\end{proposition}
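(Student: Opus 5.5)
The plan is to pass to root multisets and reduce everything to the classical inversion identity for Schur polynomials in $r$ variables. First I record the elementary facts about $g^\#$. It is monic of degree $r$: its leading coefficient is $a_0^{-1}a_0=1$. Its constant coefficient is $a_0^{-1}\ne0$. If $\Lambda_g=(\lambda_1,\ldots,\lambda_r)$, all $\lambda_i\ne0$ since $a_0\ne0$, and $\Lambda_{g^\#}=(\lambda_1^{-1},\ldots,\lambda_r^{-1})$. By Vieta, $e_r(\Lambda_g)=\lambda_1\cdots\lambda_r=(-1)^ra_0=\rho$.

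Next, the involution. For $\kappa\in\Part$ we have $\kappa_r=0$. Hence $\kappa^\vee_1=\kappa_1\le n-r$ and $\kappa^\vee_r=\kappa_1-\kappa_1=0$. The parts $\kappa_1-\kappa_{r+1-j}$ are weakly decreasing in $j$, so $\kappa^\vee\in\Part$. Since $(\kappa^\vee)_1=\kappa_1$, applying the construction twice returns $\kappa$. The case $r=1$ is trivial because only $\varnothing$ occurs.

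The core step is the Laurent identity
\[
 s_\kappa(X_1^{-1},\ldots,X_r^{-1})=(X_1\cdots X_r)^{-\kappa_1}s_{\kappa^\vee}(X_1,\ldots,X_r)
 \quad\text{in }\Z[X_1^{\pm1},\ldots,X_r^{\pm1}].
\]
I would prove it with the bialternant formula~\eqref{eq:bialternant}.
\begin{itemize}
\item In the numerator $\det(X_i^{-(\kappa_j+r-j)})$, multiply row $i$ by $X_i^{\kappa_1+r-1}$. The exponents become $(\kappa_1-\kappa_j)+(j-1)$.
\item Reverse the columns with $j\mapsto r+1-j$. The exponents become $\kappa^\vee_{j'}+r-j'$, so the numerator is the alternant for $\kappa^\vee$.
\item Treat the denominator the same way with $\kappa=0$, multiplying row $i$ by $X_i^{r-1}$. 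This gives the ordinary Vandermonde.
\item The column-reversal signs agree in numerator and denominator. The row scalings contribute $(X_1\cdots X_r)^{\kappa_1+r-1}$ and $(X_1\cdots X_r)^{r-1}$ respectively, whose ratio is $(X_1\cdots X_r)^{\kappa_1}$.
\end{itemize}
Both sides of the identity are Laurent polynomials with integer coefficients, so it holds as an identity and specializes to any field with nonzero, possibly repeated, roots. This is the same universal-then-specialize argument used in Theorem~\ref{thm:reduced-Schur}. Evaluating at $\Lambda_g$ and using~\eqref{eq:Schur-evaluation} gives~\eqref{eq:reciprocal-Schur}. The only care needed is this characteristic-free justification and the sign bookkeeping, and I expect neither to be a real obstacle.

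Finally, multiply over $\Part$ using~\eqref{eq:canonical-Schur-product}:
\[
 \Duniv(g^\#)=a_0^{-1}\prod_{\kappa}\rho^{-\kappa_1}\Schur{\kappa^\vee}(g).
\]
Because $\kappa\mapsto\kappa^\vee$ is a bijection of $\Part$, the Schur product is unchanged:
\[
 \prod_\kappa\Schur{\kappa^\vee}(g)=\prod_\kappa\Schur{\kappa}(g)=a_0^{-1}\Duniv(g).
\]
The proof of Theorem~\ref{thm:degree-primitivity} gives $\sum_\kappa\kappa_1=\delta_{n,r}-1$. Therefore
\[
 \rho^{-(\delta_{n,r}-1)}=(-1)^{r(\delta_{n,r}-1)}a_0^{-(\delta_{n,r}-1)}.
\]
Collecting the powers $a_0^{-1}\cdot a_0^{-(\delta_{n,r}-1)}\cdot a_0^{-1}$ gives $a_0^{-(\delta_{n,r}+1)}$, which is~\eqref{eq:reciprocal-D}. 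The MDS equivalence then follows from Theorem~\ref{thm:rectangular-Schur-Pluecker}: the prefactor in~\eqref{eq:reciprocal-D} is a unit, so $\Duniv(g^\#)\ne0$ if and only if $\Duniv(g)\ne0$.
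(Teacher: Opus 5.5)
Your proposal is correct and follows essentially the paper's argument. You pass to the inverted root alphabet, apply the complement identity $s_\kappa(X_1^{-1},\ldots,X_r^{-1})=(X_1\cdots X_r)^{-\kappa_1}s_{\kappa^\vee}(X_1,\ldots,X_r)$ with $\lambda_1\cdots\lambda_r=\rho$, and multiply over $\Part$ using $\sum_\kappa\kappa_1=\delta_{n,r}-1$ and the constant coefficient $a_0^{-1}$ of $g^\#$. The only differences are that you verify the involution property explicitly and derive the complement identity from the bialternant formula, where the paper simply cites it as standard.
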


\begin{proof}
The roots of $g^\#$ are the inverses of the roots of $g$.  The standard
complement identity for Schur polynomials gives
\[
 s_\kappa(\lambda_1^{-1},\ldots,\lambda_r^{-1})
 =(\lambda_1\cdots\lambda_r)^{-\kappa_1}
  s_{\kappa^\vee}(\lambda_1,\ldots,\lambda_r).
\]
Since $\lambda_1\cdots\lambda_r=\rho$, this proves
\eqref{eq:reciprocal-Schur}.  Multiplying over $\Part$, using
$\sum_{\kappa\in\Part}\kappa_1=\delta_{n,r}-1$ and the fact that the
constant coefficient of $g^\#$ is $a_0^{-1}$, gives
\eqref{eq:reciprocal-D}.  The last assertion follows from the Schur MDS
test.
\end{proof}

\begin{remark}[Code equivalences behind the symmetries]
\label{rem:code-equivalences}
The MDS invariances above also follow from explicit code equivalences, while
the Schur identities describe their finer action factor by factor.  For a
length-$n$ polynomial $f(x)=\sum_{i=0}^{n-1}f_ix^i$ and $c\in K^*$, define
\[
 \Theta_c(f)(x)=c^{n-1}f(x/c).
\]
On coefficient vectors this is the invertible diagonal transformation
\[
 (f_0,f_1,\ldots,f_{n-1})\longmapsto
 (c^{n-1}f_0,c^{n-2}f_1,\ldots,f_{n-1}).
\]
If $f=ug$, then
\[
 \Theta_c(f)=\bigl(c^{n-r-1}u(x/c)\bigr)g_c(x),
\]
and therefore
\begin{equation}
 \Theta_c\bigl(\mathcal C_g(n)\bigr)=\mathcal C_{g_c}(n).
 \label{eq:dilation-code-equivalence}
\end{equation}
Likewise, coordinate reversal
\[
 R_n(f)(x)=x^{n-1}f(x^{-1})
\]
satisfies
\[
 R_n(ug)=a_0\bigl(x^{n-r-1}u(x^{-1})\bigr)g^\#(x),
\]
so
\begin{equation}
 R_n\bigl(\mathcal C_g(n)\bigr)=\mathcal C_{g^\#}(n).
 \label{eq:reciprocal-code-equivalence}
\end{equation}
Thus dilation gives a monomial equivalence and reciprocal inversion gives a
permutation equivalence.  In particular, the paired codes have the same
complete weight distribution and the same GRS or non-GRS status.  The Schur
identities above refine these elementary code equivalences by describing
their action factor by factor on the universal MDS divisor.  Together the
two operations generate a $\Gm\rtimes C_2$ symmetry on $D(A_0)$, where the
nontrivial element of $C_2$ sends $c\in\Gm$ to $c^{-1}$.
\end{remark}

\begin{proposition}[Splitting off the scaling direction]
\label{prop:scaling-quotient}
Assume $r\ge2$ and $n>r$, and define the normalized slice
\begin{equation}
 \mathcal M_{n,r}^{(1)}
 :=\MDSlocus\cap V(A_{r-1}-1).
 \label{eq:normalized-MDS-slice}
\end{equation}
Then $\Schur{(1)}=-A_{r-1}$ is a unit on $\MDSlocus$, and the weighted
scaling action induces a scheme isomorphism
\begin{equation}
 \Gm\times\mathcal M_{n,r}^{(1)}
 \xrightarrow{\ \sim\ }\MDSlocus,
 \qquad
 (t,b)\longmapsto t\cdot b.
 \label{eq:MDS-scaling-product}
\end{equation}
Its inverse has first coordinate $t=A_{r-1}$ and quotient coordinates
\begin{equation}
 B_j=\frac{A_{r-j}}{A_{r-1}^{\,j}},
 \qquad 2\le j\le r.
 \label{eq:scaling-quotient-coordinates}
\end{equation}
Thus the action is free and $\mathcal M_{n,r}^{(1)}$ represents the fpqc
sheaf quotient of $\MDSlocus$ by $\Gm$.  In particular, for every finite
field $\F_q$,
\begin{equation}
 \#\MDSlocus(\F_q)
 =(q-1)\#\mathcal M_{n,r}^{(1)}(\F_q).
 \label{eq:MDS-count-divisibility}
\end{equation}
\end{proposition}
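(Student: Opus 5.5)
The plan is to show that $\Schur{(1)}=-A_{r-1}$ is a unit on $\MDSlocus$, to write down the two morphisms explicitly, and to check that they are mutually inverse as maps of $\Z$-schemes.

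\emph{Unit step.} Since $r\ge2$ and $n>r$, the rectangle $(n-r)^{r-1}$ contains the partition $(1)$, so $(1)\in\Part$. Under \eqref{eq:symmetric-isomorphism}, $s_{(1)}=e_1$ corresponds to $-A_{r-1}$, hence $\Schur{(1)}=-A_{r-1}$. By Corollary~\ref{cor:universal-MDS-divisor}, $\MDSlocus=D(\Duniv)$ and $\Duniv=A_0\prod_{\kappa}\Schur{\kappa}$, so $\Schur{(1)}$ divides $\Duniv$. A divisor of a unit is a unit, so $A_{r-1}$ is invertible on $\MDSlocus$. In particular $\MDSlocus\subseteq D(A_{r-1})$.

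\emph{Forward map.} The weighted action is $t\cdot(A_{r-j})_j=(t^jA_{r-j})_j$, the algebraic form of $g\mapsto g_t$. It is a morphism $\Gm\times\A^r_{\Z}\to\A^r_{\Z}$. By Proposition~\ref{prop:scaling-action}, $\Duniv(t\cdot b)=t^{r+W_{n,r}}\Duniv(b)$. Since $t$ is a unit, the action preserves $D(\Duniv)$, so it restricts to a morphism $\Gm\times\mathcal M^{(1)}_{n,r}\to\MDSlocus$. This scaling identity is an identity in $R_r[t]$, so it holds for points with values in any ring, not only field points.

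\emph{Inverse map.} Define $\MDSlocus\to\Gm\times\A^r$ by $t=A_{r-1}$ and $b=(B_j)$, with $B_1=1$ and $B_j=A_{r-j}A_{r-1}^{-j}$ for $2\le j\le r$. In the notation above, $b=t^{-1}\cdot A$. This is the action of $t^{-1}$, so by the same scaling identity $\Duniv(b)$ is a unit multiple of $\Duniv(A)$. Hence $b$ lies in $\MDSlocus\cap V(A_{r-1}-1)=\mathcal M^{(1)}_{n,r}$.

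\emph{Mutual inverses.} On $R$-points:
\begin{itemize}[leftmargin=2em]
\item Starting from $A$, we have $t\cdot(t^{-1}\cdot A)=A$.
\item Starting from $(t,b)$ with $b_{r-1}=1$, the element $A=t\cdot b$ has $A_{r-1}=t$. Then $t^{-1}\cdot A=b$.
\end{itemize}
So the two morphisms are inverse isomorphisms of functors, hence of schemes.

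\emph{Quotient.} Under the isomorphism, the $\Gm$-action on $\MDSlocus$ becomes left translation on the first factor of $\Gm\times\mathcal M^{(1)}_{n,r}$. Freeness and the fact that $\mathcal M^{(1)}_{n,r}$ represents the fpqc sheaf quotient then follow as in the trivial-torsor argument of Proposition~\ref{prop:cyclic-pair-scheme}.

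\emph{Point count.} Taking $\F_q$-points of the isomorphism gives $\#\MDSlocus(\F_q)=\#\Gm(\F_q)\cdot\#\mathcal M^{(1)}_{n,r}(\F_q)=(q-1)\#\mathcal M^{(1)}_{n,r}(\F_q)$, which is \eqref{eq:MDS-count-divisibility}.

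\emph{Main point of care.} The argument needs the locus to be preserved scheme-theoretically, not just on field points. The scaling identity of Proposition~\ref{prop:scaling-action} is stated for fields, but it is really homogeneity of $\Duniv$ in the root-weight grading, which is Theorem~\ref{thm:degree-primitivity}. That homogeneity gives the identity $\Duniv(t\cdot A)=t^{r+W_{n,r}}\Duniv(A)$ in $R_r[t,t^{-1}]$, and this is what I would invoke.
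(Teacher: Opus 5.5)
Your proposal is correct and follows essentially the same route as the paper's proof. Both arguments use $(1)\in\Part$ to get $\MDSlocus\subseteq D(A_{r-1})$, the change of variables $A_{r-j}=t^jB_j$, and the root-weight homogeneity $\Duniv(t\cdot b)=t^{r+W_{n,r}}\Duniv(b)$ to show that the MDS condition depends only on the normalized coordinates. Your explicit check of mutual inverses on $R$-points is a more detailed version of the paper's identification of $D(A_{r-1})$ with $\Gm\times\A^{r-1}_{\Z}$.
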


\begin{proof}
The partition $(1)$ belongs to $(n-r)^{r-1}$, and
$\Schur{(1)}=e_1=-A_{r-1}$.  The Schur MDS test therefore places
$\MDSlocus$ inside $D(A_{r-1})$.  On this principal open put
$t=A_{r-1}$, $B_1=1$, and use~\eqref{eq:scaling-quotient-coordinates}.  The
change of variables and its inverse are
\[
 A_{r-j}=t^jB_j\quad(1\le j\le r),
 \qquad
 t=A_{r-1},\quad B_j=A_{r-j}/A_{r-1}^j.
\]
It identifies $D(A_{r-1})$ with
 $\Gm\times\A_{\Z}^{r-1}$.  Since $\Duniv$ is root-weight homogeneous of
 weight $r+W_{n,r}$,
\[
 \Duniv(t^rB_r,\ldots,t^2B_2,t)
 =t^{\,r+W_{n,r}}\Duniv(B_r,\ldots,B_2,1).
\]
Hence its nonvanishing depends only on the normalized coordinates, proving
\eqref{eq:MDS-scaling-product}.  The quotient and counting statements follow
immediately.
\end{proof}

\subsection{Finite-field estimates and the exact GRS subtraction}

\begin{corollary}[Explicit MDS estimate]
\label{cor:MDS-count}
For every finite field $\F_q$,
\begin{equation}
 \#\MDSlocus(\F_q)
 =\#\{g\in\F_q[x]:g\text{ monic of degree }r,
                 \ \mathcal C_g(n)\text{ MDS}\}
 \ge q^r-\delta_{n,r}q^{r-1}.
 \label{eq:MDS-count}
\end{equation}
In particular, $q>\delta_{n,r}$ guarantees an MDS member in every
characteristic, and for fixed $(n,r)$ the MDS density is
$1-O_{n,r}(q^{-1})$.  More generally, if $r$ is fixed and
$n=n(q)>r$ satisfies $n(q)^r=o(q)$, then
\begin{equation}
 \frac{\#\mathcal M_{n(q),r}(\F_q)}{q^r}
 =1-O_r\!\left(\frac{n(q)^r}{q}\right)\longrightarrow1.
 \label{eq:growing-full-density}
\end{equation}
\end{corollary}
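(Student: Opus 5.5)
The plan is to apply the elementary Schwartz--Zippel bound to the single polynomial $\Duniv$, using the degree information from Corollary~\ref{cor:boundary-flatness}. First, the equality of the two counts is just Proposition~\ref{prop:grassmannian-MDS} specialized to $K=\F_q$: the $\F_q$-points of $\MDSlocus$ are exactly the coefficient vectors $(a_0,\ldots,a_{r-1})\in\F_q^r$, i.e.\ the monic degree-$r$ polynomials $g$, for which $\mathcal C_g(n)$ is MDS. By Corollary~\ref{cor:universal-MDS-divisor}, $\MDSlocus(\F_q)$ is the complement in $\F_q^r$ of the zero set of the reduction $(\Duniv)_{\F_q}$. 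By Corollary~\ref{cor:boundary-flatness}, this reduction is a nonzero polynomial of exact total degree $\delta_{n,r}$ in every characteristic.

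Next I would invoke the standard bound: a nonzero polynomial of total degree $d$ in $r$ variables over $\F_q$ has at most $dq^{r-1}$ zeros in $\F_q^r$. This follows by induction on $r$, writing the polynomial in the last variable with a leading coefficient of lower degree. Applied with $d=\delta_{n,r}$, it gives $\#\MDSlocus(\F_q)\ge q^r-\delta_{n,r}q^{r-1}$. If $q>\delta_{n,r}$, the right side is positive, so an MDS member exists. For fixed $(n,r)$, dividing by $q^r$ gives density $1-O_{n,r}(q^{-1})$.

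For the growing-length statement, the only extra input is an estimate of $\delta_{n,r}$ uniform in $n$ for fixed $r$. From \eqref{eq:delta-nr},
\[
 \delta_{n,r}\le 1+\frac{(r-1)n}{r}\binom{n-1}{r-1}\le 1+\frac{(r-1)n^{r}}{r\,(r-1)!}=O_r(n^r).
\]
Dividing the lower bound by $q^r$ then gives
\[
 1\ge\frac{\#\mathcal M_{n(q),r}(\F_q)}{q^r}\ge 1-\frac{\delta_{n(q),r}}{q}=1-O_r\!\left(\frac{n(q)^r}{q}\right),
\]
and the hypothesis $n(q)^r=o(q)$ makes this tend to $1$.

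The only place real care is needed is that the degree of $\Duniv$ does not drop modulo $p$, since otherwise the bound would depend on the characteristic. This is exactly what Corollary~\ref{cor:boundary-flatness} (via Lemma~\ref{lem:Schur-degree}) supplies, so the remaining steps are routine once that corollary is cited. Strictly, Schwartz--Zippel needs only nonvanishing and an upper bound on the degree, both of which hold.
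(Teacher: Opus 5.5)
Your proposal is correct and follows the paper's proof. The paper likewise applies the standard finite-field zero bound to the nonzero reduction of $\Duniv$ (citing Theorem~\ref{thm:degree-primitivity} where you cite Corollary~\ref{cor:boundary-flatness}), uses $\MDSlocus=D(\Duniv)$, and obtains the growing-length statement from $\delta_{n,r}=O_r(n^r)$; your explicit bound on $\delta_{n,r}$ just writes out that last step.
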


\begin{proof}
Theorem~\ref{thm:degree-primitivity} makes $\Duniv$ a nonzero polynomial of
degree $\delta_{n,r}$ over every $\F_q$.  The standard finite-field zero
bound gives at most $\delta_{n,r}q^{r-1}$ zeros
\cite[Chapter~6]{LidlNiederreiter}.  Use~\eqref{eq:MDS-principal-open}.
The growing-length assertion follows from
$\delta_{n,r}=O_r(n^r)$ for fixed $r$.
\end{proof}

For a positive integer $M$, define
\begin{equation}
 \mathcal E_M(n)=\#\{t\in C_M:\ord(t)\ge n\}
 =\sum_{\substack{e\mid M\\e\ge n}}\varphi(e),
 \label{eq:EMn}
\end{equation}
where $C_M$ is a cyclic group of order $M$.  Here $\varphi$ is Euler's
totient function, and $\mathbf 1_{\{\mathcal P\}}$ equals $1$ when the
property $\mathcal P$ holds and $0$ otherwise.  We retain the set-theoretic
notation $\GRSlocus(\F_q)$ introduced above.

\begin{corollary}[Explicit MDS non-GRS lower bound]
\label{cor:non-GRS-exact-subtraction}
Assume $r\ge3$ and $n\ge r+3$, and put $p=\operatorname{char}\F_q$.
Li and Yuan~\cite[Theorem~6.14]{LiYuanCyclicOrbits} give
\begin{equation}
 \#\GRSlocus(\F_q)
 =(q-1)\mathbf 1_{\{n\le p\}}
 +\frac{q-1}{2}\bigl(\mathcal E_{q-1}(n)+\mathcal E_{q+1}(n)\bigr).
 \label{eq:exact-GRS-count}
\end{equation}
Consequently,
\begin{align}
 &\#\{g:\mathcal C_g(n)\text{ is MDS and non-GRS}\}
 \notag\\
 &\quad\ge
 \max\!\left\{0,\,
 \begin{aligned}
 &q^r-\delta_{n,r}q^{r-1}
 -(q-1)\mathbf 1_{\{n\le p\}}\\
 &\quad-\frac{q-1}{2}
 \bigl(\mathcal E_{q-1}(n)+\mathcal E_{q+1}(n)\bigr)
 \end{aligned}
 \right\}.
 \label{eq:non-GRS-exact-lower}
\end{align}
\end{corollary}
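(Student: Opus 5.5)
The bound follows by subtracting the GRS count from the MDS count, via the inclusion $\GRSlocus(\F_q)\subseteq\MDSlocus(\F_q)$. By the definition adopted in the introduction, a code of GRS type is monomially equivalent to an extended generalized Reed--Solomon code evaluated at pairwise distinct points of $\PP^1$. Such a code is MDS, and monomial equivalence preserves minimum distance. Hence every coefficient vector in $\GRSlocus(\F_q)$ lies in $\MDSlocus(\F_q)$, and the number of MDS non-GRS coefficient vectors is exactly
\[
 \#\MDSlocus(\F_q)-\#\GRSlocus(\F_q).
\]

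Next I would bound each term. Corollary~\ref{cor:MDS-count} gives $\#\MDSlocus(\F_q)\ge q^r-\delta_{n,r}q^{r-1}$. Under the standing hypotheses $r\ge3$ and $n\ge r+3$, the exact GRS count~\eqref{eq:exact-GRS-count} is the cited result \cite[Theorem~6.14]{LiYuanCyclicOrbits}, which we take as given. One check is needed first: the set counted there must be the same set $\GRSlocus(\F_q)$ used here. This holds because, over finite fields, our GRS convention agrees with that of \cite{LiYuanCyclicOrbits}, as noted in the introduction.

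Substituting both into the difference gives the second expression inside the maximum in~\eqref{eq:non-GRS-exact-lower}. The cardinality is trivially nonnegative, so the bound can be replaced by its maximum with $0$, which is the stated inequality.

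There is no real obstacle here. The only point that needs care is the inclusion $\GRSlocus\subseteq\MDSlocus$, which rests on the GRS convention matching that of the cited count. Everything else is arithmetic substitution.
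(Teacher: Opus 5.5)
Your proposal is correct and follows the same route as the paper: subtract the cited exact GRS count from the lower bound $\#\MDSlocus(\F_q)\ge q^r-\delta_{n,r}q^{r-1}$ of Corollary~\ref{cor:MDS-count}, then take the maximum with zero. Your explicit check that GRS-type codes are MDS is extra care that the paper leaves implicit, and the inequality does not actually require it, since $\#\bigl(\MDSlocus(\F_q)\cap\GRSlocus(\F_q)\bigr)\le\#\GRSlocus(\F_q)$ already suffices.
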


\begin{proof}
Subtract the exact number of GRS members in~\eqref{eq:exact-GRS-count} from
the MDS lower bound~\eqref{eq:MDS-count}.  Formula
\eqref{eq:non-GRS-exact-lower} is an explicit quantitative refinement of
the asymptotic genericity theorem
\cite[Theorem~6.15]{LiYuanCyclicOrbits}; density one itself is not being
reasserted as a new result.  Taking the maximum with zero uses the
tautological nonnegativity of the count.
\end{proof}

\subsection{Application to the GRS classification: a split non-GRS bound}

We next derive, directly in remainder coordinates, the low-degree obstruction
needed below.  The calculation is self-contained once the
rational-normal-curve rigidity lemma
\cite[Lemma~4.3]{LiYuanCyclicOrbits} is taken as input, and it keeps the MDS
and GRS tests logically separate.

\begin{proposition}[A low-degree coefficient obstruction for GRS type]
\label{prop:Omega-obstruction}
Let $K$ be a field, let $r\ge3$ and $n\ge r+3$, and write
\[
 g(x)=x^r+a_{r-1}x^{r-1}+\cdots+a_1x+a_0\in K[x].
\]
If $\mathcal C_g(n)$ is MDS and of GRS type, then
\begin{equation}
 \Omega_r(g):=a_0a_2a_{r-1}^2-a_1^2a_{r-2}=0.
 \label{eq:Omega}
\end{equation}
The polynomial $\Omega_r$ is nonzero in every characteristic and has total
degree at most four.  Let $e_j(Y)$ be the $j$-th elementary symmetric
polynomial in $Y=(Y_1,\ldots,Y_r)$ and put
\begin{equation}
 \Psi_r(Y):=e_r(Y)e_{r-2}(Y)e_1(Y)^2-e_{r-1}(Y)^2e_2(Y).
 \label{eq:Psi-root}
\end{equation}
Then $\Psi_r$ is a nonzero homogeneous polynomial of degree $2r$ in every
characteristic.  Over an algebraic closure, if
$g(x)=\prod_{i=1}^r(x-\lambda_i)$, then
\begin{equation}
 \Omega_r(g)=\Psi_r(\lambda_1,\ldots,\lambda_r).
 \label{eq:Omega-root-specialization}
\end{equation}
\end{proposition}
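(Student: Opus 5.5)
The plan is to reduce the GRS hypothesis to a structural statement about the roots of $g$, namely that they are stable under an inversion $\lambda\mapsto P/\lambda$, and then verify $\Psi_r=0$ on such root multisets.

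\textbf{Identity, nonvanishing, degrees.} First, Vieta's formulas give $a_{r-j}=(-1)^je_j(\Lambda_g)$. Hence
\[
a_0a_2a_{r-1}^2=(-1)^r e_r\,(-1)^{r-2}e_{r-2}\,e_1^2=e_re_{r-2}e_1^2
\quad\text{and}\quad
a_1^2a_{r-2}=e_{r-1}^2e_2,
\]
which is \eqref{eq:Omega-root-specialization}. The polynomial $\Omega_r$ is a difference of two distinct monomials with coefficients $\pm1$ in the independent variables $A_i$; they are distinct because $r\ge3$ makes $a_{r-1}\ne a_1$ as variables. So $\Omega_r$ is nonzero in every characteristic and of degree at most four. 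Transporting it through the isomorphism \eqref{eq:symmetric-isomorphism} shows that $\Psi_r$ is nonzero, and homogeneous of degree $r+(r-2)+2=2r$.

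\textbf{Geometric step.} Next comes the core input. Pass to $\bar K$; GRS type is preserved and the defining curve stays a rational normal curve $C\subseteq\PP^{r-1}$. By hypothesis the pairwise distinct points $[Q_0],\dots,[Q_{n-1}]$ lie on $C$. Since $Q_{i+1}=T_gQ_i$, the projectivity $[T_g]$ maps the $n-1\ge r+2$ points $[Q_0],\dots,[Q_{n-2}]$ into $C$. The MDS hypothesis puts every $r$ of them in general position. I would invoke the rigidity lemma \cite[Lemma~4.3]{LiYuanCyclicOrbits} to conclude $T_g(C)=C$; if it is only phrased for $r+3$ points, I would use that $n\ge r+3$ supplies enough consecutive orbit points. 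Then $[T_g]$ restricts to a Möbius transformation $\phi$ of $\PP^1$. Moreover $T_g$ is, up to a scalar, the $(r-1)$-st symmetric power $\operatorname{Sym}^{r-1}(M)$ of a $2\times2$ matrix $M$ representing $\phi$, because $\operatorname{PGL}_2$ is the stabilizer of $C$. Consequently the eigenvalues of $T_g$, i.e.\ the roots of $g$, are $\lambda_k=c\,\mu^k\nu^{r-1-k}$ for $0\le k\le r-1$, where $\mu,\nu$ are the eigenvalues of $M$. If $M$ is non-semisimple this degenerates to $\mu=\nu$, giving $g=(x-\lambda)^r$. I expect this step to be the main obstacle: making rigidity apply cleanly, and justifying the symmetric-power description of $T_g$ in every characteristic, including the unipotent case.

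\textbf{Algebraic verification.} Finally, in all cases the root multiset is invariant under $\lambda\mapsto P/\lambda$ with $P=\lambda_0\lambda_{r-1}=c^2(\mu\nu)^{r-1}$; here $a_0\ne0$ by MDS, so every root is nonzero. Pairing roots gives $e_{r-j}=e_r\,e_j\,P^{-j}$. Therefore
\[
e_re_{r-2}e_1^2=e_r^2e_2e_1^2P^{-2}=e_{r-1}^2e_2,
\]
so $\Psi_r(\Lambda_g)=0$, and hence $\Omega_r(g)=0$ by the identity established in the first step.
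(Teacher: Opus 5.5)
Your proof is correct, and it takes a genuinely different route from the paper's. The paper never analyzes how $T_g$ acts on the curve. It builds an explicit rational normal curve $\Gamma$ through $[Q_0],\ldots,[Q_{r+1}]$ from the binary forms $u_iv_i\prod_{j\ne i}(v_js-u_jt)$, where $U=Q_r$ and $V=Q_{r+1}$. It then uses the rigidity lemma only to force the GRS curve to equal $\Gamma$, and encodes $[Q_{r+2}]\in\Gamma$ as a rank-$\le2$ condition on an $r\times3$ matrix. Finally it computes that one $3\times3$ minor factors as $a_0^3(a_0-a_1a_{r-1})\Omega_r$, with the first two factors nonzero by MDS.

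You instead use rigidity to get $T_g(C)=C$: both curves contain the $n-1\ge r+2$ points $[Q_1],\ldots,[Q_{n-1}]$. The lemma needs only $(r-1)+3=r+2$ points in $\PP^{r-1}$, so your hedge about $r+3$ points is unnecessary. You then deduce that $[T_g]$ comes from $\operatorname{PGL}_2$, read off the root multiset $\{c\mu^{r-1-k}\nu^k\}$, and obtain $\Omega_r=0$ from its reciprocal symmetry.

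The two routes buy different things. The paper's stays in remainder coordinates and needs nothing beyond rigidity and one polynomial identity. Yours avoids the explicit minor computation and proves considerably more, namely the twisted self-reciprocity $x^rg(P/x)=a_0g(x)$, i.e.\ $a_jP^j=a_0a_{r-j}$ for all $j$. The vanishing of $\Omega_r$ is just the elimination of $P$ between $j=1$ and $j=2$. In effect you recover the geometric-progression/pure-power structure behind Li--Yuan's classification.

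The obstacle you flag is not real. Normalize $C$ to the standard monomial curve $\nu(\PP^1)$ and define $S(M)$ by $\nu(Mp)=S(M)\nu(p)$. Then $[T_g][S(M)]^{-1}$ fixes $C$ pointwise, hence fixes a projective frame and is the identity. Triangularizing $M$ makes $S(M)$ triangular with diagonal entries $\mu^{r-1-k}\nu^k$, in every characteristic and including the unipotent case.

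Two small points. First, you need $a_0\ne0$ (from MDS, since $\det T_g=(-1)^ra_0$) already when you call $[T_g]$ a projectivity, not only at the end. Second, the two monomials of $\Omega_r$ are distinct because $A_0$ divides the first but not the second. The role of $r\ge3$ is that $a_2$ is then a variable rather than $a_r=1$; indeed $\Omega_2\equiv0$.
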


\begin{proof}
Put
\[
 U=Q_r=(u_0,\ldots,u_{r-1})^T,
 \qquad
 V=Q_{r+1}=(v_0,\ldots,v_{r-1})^T.
\]
Since the first $r+2$ columns of $H_g^{\mathrm{rem}}$ are in linearly
general position, the MDS hypothesis gives
\begin{equation}
 u_i\ne0,\qquad v_i\ne0,\qquad
 u_iv_j-u_jv_i\ne0\quad(i\ne j).
 \label{eq:Omega-MDS-nonvanishing}
\end{equation}
For $0\le i\le r-1$, consider the binary form
\[
 X_i(s,t)=u_iv_i\prod_{j\ne i}(v_js-u_jt)
\]
of degree $r-1$.  Evaluation at $[u_i:v_i]$ annihilates $X_j$ for
$j\ne i$ and gives a nonzero value for $X_i$.  Hence the $X_i$ are linearly
independent, so they form a basis of the binary forms of degree $r-1$ and
the morphism
\[
 [s:t]\longmapsto[X_0(s,t):\cdots:X_{r-1}(s,t)]
\]
parametrizes a rational normal curve $\Gamma\subseteq\PP_K^{r-1}$.
Evaluation at $[1:0]$, $[0:1]$, and the points $[u_i:v_i]$ shows that
$\Gamma$ contains
\[
 [Q_0],\ldots,[Q_{r-1}],[U],[V].
\]
The displayed $r+2$ points are in linearly general position by the MDS
property.  Since $r+2=(r-1)+3$, the rigidity lemma
\cite[Lemma~4.3]{LiYuanCyclicOrbits}, applied after scalar extension to an
algebraic closure, shows that $\Gamma$ is the unique geometric rational
normal curve containing them.

Let $W=Q_{r+2}=(w_0,\ldots,w_{r-1})^T$.  If the code is of GRS type, then
the projective columns of the parity-check matrix $H_g^{\mathrm{rem}}$ lie
on a split rational normal curve.
The uniqueness just noted forces that curve to be $\Gamma$, so $[W]\in
\Gamma$.  If $W$ is proportional to $(X_i(s,t))_i$, the matrix
\begin{equation}
 R(W)=
 \begin{pmatrix}
  w_0v_0&-w_0u_0&-u_0v_0\\
  \vdots&\vdots&\vdots\\
  w_{r-1}v_{r-1}&-w_{r-1}u_{r-1}&-u_{r-1}v_{r-1}
 \end{pmatrix}
 \label{eq:Omega-rank-matrix}
\end{equation}
annihilates a nonzero vector.  Explicitly, after extending scalars if
necessary, write $W=c(X_i(s,t))_i$ and take
$z=c\prod_j(v_js-u_jt)$; then $R(W)(s,t,z)^T=0$.  Hence
$\operatorname{rank}R(W)\le2$, and in
particular the minor in rows $0,1,2$ vanishes.

The recurrence~\eqref{eq:remainder-recurrence} gives
\begin{align*}
 u_i&=-a_i,\\
 v_0&=a_0a_{r-1},&
 v_i&=a_ia_{r-1}-a_{i-1}\quad(1\le i\le r-1).
\end{align*}
Writing $\delta=a_{r-1}^2-a_{r-2}$, one further step gives
\[
 w_0=-a_0\delta,
 \qquad
 w_i=v_{i-1}-a_i\delta\quad(1\le i\le r-1).
\]
Direct substitution in the indicated $3\times3$ minor yields
\begin{equation}
 \det R(W)_{\{0,1,2\}}
 =a_0^3(a_0-a_1a_{r-1})
   \bigl(a_0a_2a_{r-1}^2-a_1^2a_{r-2}\bigr).
 \label{eq:Omega-minor-factorization}
\end{equation}
Here $a_0\ne0$ and $a_0-a_1a_{r-1}=-v_1\ne0$ by
\eqref{eq:Omega-MDS-nonvanishing}.  The vanishing of the minor therefore
forces~\eqref{eq:Omega}.

The two monomials in $\Omega_r$ are distinct and have coefficients $1$ and
$-1$, so they do not cancel in any characteristic; their total degrees are
at most four.  Likewise, the elementary symmetric polynomials are
algebraically independent over every prime field, and the two monomials
defining $\Psi_r$ are distinct.  Thus $\Psi_r$ is nonzero in every
characteristic; both terms have root-variable degree $2r$.  Finally, Vieta's
formulas $a_{r-j}=(-1)^je_j$ give
\eqref{eq:Omega-root-specialization}.
\end{proof}

\begin{remark}
The equation $\Omega_r=0$ is only a necessary condition for GRS type and is
not an MDS test.  No converse is asserted: in applications below the Schur
system first establishes the MDS property, and $\Omega_r\ne0$ then excludes
GRS type.  When $r=3$, this equation defines the cubic-polynomial GRS surface described in \cite[Corollary~6.9]{LiYuanCyclicOrbits}.  What is new here is the
remainder-coordinate calculation from the cited rigidity input, valid
uniformly in $r$, and its use with the rectangular Schur system to produce
completely split MDS non-GRS members.
\end{remark}

\begin{theorem}[Completely split MDS non-GRS members]
\label{thm:split-existence}
Assume $r\ge3$ and $n\ge r+3$, and put
\[
 d_{n,r}^{\mathrm{spl}}
 :=W_{n,r}+3r+\binom r2.
\]
If
\begin{equation}
 q>d_{n,r}^{\mathrm{spl}},
 \label{eq:split-existence-bound}
\end{equation}
then at least
\[
 \left\lceil
 \frac{q^{r-1}\bigl(q-d_{n,r}^{\mathrm{spl}}\bigr)}{r!}
 \right\rceil
\]
distinct monic polynomials
\[
 g(x)=\prod_{i=1}^r(x-\alpha_i),
 \qquad
 \alpha_1,\ldots,\alpha_r\in\F_q^*
 \quad\text{pairwise distinct},
\]
define MDS non-GRS codes $\mathcal C_g(n)$.
\end{theorem}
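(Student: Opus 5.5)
The plan is to apply the Schwartz--Zippel (finite-field zero) bound to a single root-variable polynomial. This polynomial should vanish at every ordered root tuple that fails one of the requirements. In $\F_q[Y_1,\ldots,Y_r]$ I would set
\[
 P(Y):=e_r(Y)\prod_{1\le i<j\le r}(Y_j-Y_i)\,
 \prod_{\kappa\in\Part}s_\kappa(Y)\,\Psi_r(Y).
\]
Each factor handles one requirement:
\begin{itemize}
\item $e_r(Y)=Y_1\cdots Y_r$ forces $\alpha_i\in\F_q^*$;
\item the Vandermonde product forces the $\alpha_i$ to be pairwise distinct;
\item the Schur factors encode the MDS test of Theorem~\ref{thm:rectangular-Schur-Pluecker};
\item $\Psi_r$ encodes the GRS obstruction of Proposition~\ref{prop:Omega-obstruction}.
\end{itemize}

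First I would check that $P$ is nonzero over $\F_p$ and compute its degree. Each Schur polynomial $s_\kappa$ contains $m_\kappa$ with coefficient one, so it is nonzero mod $p$ and homogeneous of degree $|\kappa|$. The factor $\Psi_r$ is nonzero and homogeneous of degree $2r$ in every characteristic by Proposition~\ref{prop:Omega-obstruction}. The factors $e_r$ and the Vandermonde product are visibly nonzero. Since $\F_q[Y]$ is a domain, $P\ne0$. By~\eqref{eq:sum-sizes}, its total degree is
\[
 r+\binom r2+W_{n,r}+2r=d_{n,r}^{\mathrm{spl}}.
\]
The zero bound over $\F_q^r$ then gives at least $q^r-d^{\mathrm{spl}}_{n,r}q^{r-1}=q^{r-1}(q-d^{\mathrm{spl}}_{n,r})$ ordered tuples $\alpha\in\F_q^r$ with $P(\alpha)\ne0$. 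This number is positive under~\eqref{eq:split-existence-bound}.

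Next I would verify the coding conclusions for each such tuple, with $g=\prod_i(x-\alpha_i)$.
\begin{itemize}
\item Nonvanishing of the first two factors gives distinct nonzero roots in $\F_q$.
\item $a_0=(-1)^r\alpha_1\cdots\alpha_r\ne0$, and by~\eqref{eq:Schur-evaluation} $\Schur{\kappa}(g)=s_\kappa(\alpha)\ne0$ for all $\kappa\in\Part$. Hence~\eqref{eq:Schur-MDS-test} shows $\mathcal C_g(n)$ is MDS.
\item By~\eqref{eq:Omega-root-specialization}, $\Omega_r(g)=\Psi_r(\alpha)\ne0$. Proposition~\ref{prop:Omega-obstruction} (with $r\ge3$, $n\ge r+3$) then shows the MDS code is not of GRS type.
\end{itemize}

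Finally, I would pass from ordered tuples to polynomials. The nonvanishing set of $P$ is $S_r$-stable: every factor except the Vandermonde product is symmetric, and the Vandermonde product changes only by a sign. Because the roots are distinct, each admissible $g$ arises from exactly $r!$ ordered tuples. The number of distinct polynomials is therefore at least $q^{r-1}(q-d^{\mathrm{spl}}_{n,r})/r!$, and since it is an integer, at least the ceiling. I expect no serious obstacle. The only points needing care are that the degree is counted as the homogeneous root-variable degree, not the coefficient degree, and that every factor is confirmed nonzero modulo $p$; the earlier results already supply both facts.
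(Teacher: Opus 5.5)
Your proposal is correct and follows the paper's proof essentially step for step. The paper uses the same avoidance polynomial $\mathcal F_{n,r}(Y)$, with coordinate, Vandermonde, $\Psi_r$, and Schur blocks of degrees $r$, $\binom r2$, $2r$, and $W_{n,r}$. It then applies the same finite-field zero bound and the same MDS and $\Omega_r$-obstruction checks, and passes to distinct monic polynomials through the same free $\mathfrak S_r$-action with fibers of size $r!$.
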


\begin{proof}
Consider the polynomial
\begin{equation}
 \mathcal F_{n,r}(Y)=
 \left(\prod_{i=1}^rY_i\right)
 \left(\prod_{i<j}(Y_j-Y_i)\right)
 \Psi_r(Y)
 \prod_{\kappa\in\Part}s_\kappa(Y).
 \label{eq:split-avoidance-polynomial}
\end{equation}
The coordinate and Vandermonde factors are nonzero, the Schur factors are
nonzero by the monomial-symmetric expansion used in the proof of
Theorem~\ref{thm:degree-primitivity}, and
Proposition~\ref{prop:Omega-obstruction} shows that $\Psi_r$ is nonzero in
every characteristic.  The degrees of the four blocks are, respectively,
$r$, $\binom r2$, $2r$, and $W_{n,r}$.  Hence
the total degree of $\mathcal F_{n,r}$ is $d_{n,r}^{\mathrm{spl}}$.  Put
\[
 \mathcal U_{n,r}(q)
 :=\{Y\in\F_q^r:\mathcal F_{n,r}(Y)\ne0\}.
\]
The finite-field zero bound gives
\[
 \#\mathcal U_{n,r}(q)
 \ge q^r-d_{n,r}^{\mathrm{spl}}q^{r-1}
 =q^{r-1}\bigl(q-d_{n,r}^{\mathrm{spl}}\bigr).
\]
For every $Y=(\alpha_1,\ldots,\alpha_r)$ in this set, the coordinate and
Vandermonde factors make the $\alpha_i$ nonzero and pairwise distinct; the
Schur block gives the MDS property by
Theorem~\ref{thm:rectangular-Schur-Pluecker}; and $\Psi_r(Y)\ne0$ gives
$\Omega_r(g)\ne0$, which certifies non-GRS type by
Proposition~\ref{prop:Omega-obstruction}.

The symmetric group $\mathfrak S_r$ acts freely on
$\mathcal U_{n,r}(q)$ by permuting the coordinates.  The coordinate,
$\Psi_r$, and Schur factors are symmetric, while the Vandermonde factor
changes only by the sign of the permutation, so $\mathcal U_{n,r}(q)$ is
$\mathfrak S_r$-stable.  Its orbits are exactly the fibers of
\[
 (\alpha_1,\ldots,\alpha_r)
 \longmapsto\prod_{i=1}^r(x-\alpha_i)
\]
on this distinct-root locus, and every such fiber has cardinality $r!$.
Consequently the number of distinct monic polynomials obtained is
\[
 \frac{\#\mathcal U_{n,r}(q)}{r!}
 \ge
 \frac{q^{r-1}\bigl(q-d_{n,r}^{\mathrm{spl}}\bigr)}{r!}.
\]
Since the left-hand side is an integer, it is at least the ceiling asserted
in the theorem.  Under~\eqref{eq:split-existence-bound} this lower bound is
positive, so existence follows as well.
\end{proof}

\section{Arithmetic strata of the coefficient MDS locus}
\label{sec:strata}

The rectangular system admits two complementary arithmetic
specializations.  Repeated roots give algebraic strata parameterized by
root centers and multiplicities.  Irreducible polynomials give Frobenius
orbits of roots over finite fields.  The same Schur coordinates control both.

\subsection{Pullback to root-multiplicity strata}

Fix a multiplicity type
\begin{equation}
 \mathbf m=(m_1,\ldots,m_s),
 \qquad m_\nu>0,
 \qquad m_1+\cdots+m_s=r.
 \label{eq:multiplicity-type}
\end{equation}
Let
\begin{equation}
 U_{\mathbf m}
 =D\!\left(
 \prod_{\nu=1}^sY_\nu
 \prod_{\mu<\nu}(Y_\nu-Y_\mu)
 \right)\subseteq\A^s_{\Z}
 \label{eq:multiplicity-parameter-space}
\end{equation}
be the ordered center space, and define
\begin{equation}
 \pi_{\mathbf m}:U_{\mathbf m}\longrightarrow\A^r_{\Z},
 \qquad
 (Y_1,\ldots,Y_s)\longmapsto
 \coeff\prod_{\nu=1}^s(x-Y_\nu)^{m_\nu}.
 \label{eq:multiplicity-map}
\end{equation}
For every partition $\kappa$, put
\begin{equation}
 S_{\kappa,\mathbf m}(Y_1,\ldots,Y_s)
 =s_\kappa(Y_1^{[m_1]},\ldots,Y_s^{[m_s]})
 \in\Z[Y_1,\ldots,Y_s].
 \label{eq:repeated-Schur}
\end{equation}
Here $Y_\nu^{[m_\nu]}$ means $m_\nu$ repetitions of the letter $Y_\nu$ in
the alphabet.  At the all-ones point,
\[
 S_{\kappa,\mathbf m}(1,\ldots,1)=s_\kappa(1^r)>0,
\]
so $S_{\kappa,\mathbf m}$ is a nonzero integral polynomial.

\begin{theorem}[Multiplicity-stratum pullback]
\label{thm:multiplicity-pullback}
As open subschemes over $\Spec\Z$, the inverse image of the coefficient MDS
locus is
the principal open subset
\begin{equation}
 \pi_{\mathbf m}^{-1}(\MDSlocus)
 =D\!\left(
 \prod_{\kappa\in\Part}S_{\kappa,\mathbf m}
 \right)\subseteq U_{\mathbf m}.
 \label{eq:multiplicity-pullback}
\end{equation}
Thus the MDS equations on a root-multiplicity stratum are exactly the
repeated-variable specializations of the rectangular Schur system.  This
equality is preserved by arbitrary base change.
\end{theorem}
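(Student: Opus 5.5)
The plan is to pull back the principal-open description of the MDS locus and then simplify the pulled-back polynomial on $U_{\mathbf m}$. By Corollary~\ref{cor:universal-MDS-divisor}, $\MDSlocus=D(\Duniv)$ in $\A^r_{\Z}$. The inverse image of a principal open $D(F)$ under any morphism is the principal open defined by the pullback of $F$, so
\[
 \pi_{\mathbf m}^{-1}(\MDSlocus)=D\bigl(\pi_{\mathbf m}^*\Duniv\bigr)\subseteq U_{\mathbf m}.
\]
The first step is therefore to compute $\pi_{\mathbf m}^*\Duniv=\pi_{\mathbf m}^*A_0\cdot\prod_{\kappa\in\Part}\pi_{\mathbf m}^*\Schur{\kappa}$ in the coordinate ring $\Z[Y_1,\ldots,Y_s][\,(\prod_\nu Y_\nu\prod_{\mu<\nu}(Y_\nu-Y_\mu))^{-1}]$.

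For the Schur factors, I would use the definition of $\Schur{\kappa}$ as the inverse image of $s_\kappa$ under the isomorphism~\eqref{eq:symmetric-isomorphism}. Pulling back along $\pi_{\mathbf m}$ substitutes for $A_{r-j}$ the coefficient of $x^{r-j}$ in $\prod_\nu(x-Y_\nu)^{m_\nu}$, which by Vieta is $(-1)^je_j(Y_1^{[m_1]},\ldots,Y_s^{[m_s]})$. Now $\Schur{\kappa}$ is an integral polynomial $P_\kappa(A_0,\ldots,A_{r-1})$ with $P_\kappa((-1)^je_j(X))=s_\kappa(X)$ identically in $\Z[X_1,\ldots,X_r]$. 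Specializing the $X$'s to the repeated alphabet gives $\pi_{\mathbf m}^*\Schur{\kappa}=S_{\kappa,\mathbf m}$ as an identity in $\Z[Y]$. No splitting field or characteristic hypothesis is needed, because the identity already holds over $\Z$.

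For the constant coefficient, $\pi_{\mathbf m}^*A_0=(-1)^r\prod_\nu Y_\nu^{m_\nu}$, which is a unit on $U_{\mathbf m}$ because $\prod_\nu Y_\nu$ is inverted there. Hence
\[
 D\bigl(\pi_{\mathbf m}^*\Duniv\bigr)=D\Bigl(\prod_{\kappa\in\Part}S_{\kappa,\mathbf m}\Bigr),
\]
which is~\eqref{eq:multiplicity-pullback}. For $r=1$ the product is empty, equal to $1$, and both sides equal $U_{\mathbf m}$, consistent with $\mathscr D_{n,1}=A_0$.

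Base-change compatibility follows because both sides are principal opens defined by integral polynomials, and formation of $D(\cdot)$ and of the pullback commutes with any base change $\Z\to R$.

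The only step requiring care is the identification $\pi_{\mathbf m}^*\Schur{\kappa}=S_{\kappa,\mathbf m}$. The point to check is that it is a polynomial identity over $\Z$ obtained by specialization, rather than a pointwise identity over fields. Once that is in place, the rest is formal. Proposition~\ref{prop:confluent-Schur} is not needed, though it gives an alternative route through the root-jet minors.
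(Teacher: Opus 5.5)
Your proposal is correct and follows the paper's proof. The paper likewise notes that the pulled-back constant coefficient $(-1)^r\prod_\nu Y_\nu^{m_\nu}$ is a unit on $U_{\mathbf m}$ and substitutes the repeated alphabet $(Y_1^{[m_1]},\ldots,Y_s^{[m_s]})$ into the Schur factorization; pulling back the principal open $D(\Duniv)$ and checking $\pi_{\mathbf m}^*\Schur{\kappa}=S_{\kappa,\mathbf m}$ as an identity in $\Z[Y]$ simply makes the scheme-theoretic content of that substitution explicit.
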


\begin{proof}
On $U_{\mathbf m}$, the constant coefficient is
$(-1)^r\prod_\nu Y_\nu^{m_\nu}$ and is a unit.  The root alphabet of the
polynomial in~\eqref{eq:multiplicity-map} is
$(Y_1^{[m_1]},\ldots,Y_s^{[m_s]})$.  Substitute this alphabet into
\eqref{eq:Schur-MDS-test}.
\end{proof}

\subsection{Structural bad characteristics}

For an integral polynomial $F$, let $\cont(F)$ be the positive greatest
common divisor of its coefficients.  Define
\begin{equation}
 \mathcal B_{n,\mathbf m}
 =\left\{p\text{ prime}:
 p\mid\cont(S_{\kappa,\mathbf m})
 \text{ for some }\kappa\in\Part\right\}.
 \label{eq:bad-primes}
\end{equation}

\begin{theorem}[Structural bad characteristics]
\label{thm:structural-bad}
For a prime $p$, the following are equivalent.
\begin{enumerate}[label=\textup{(\roman*)}]
\item $p\in\mathcal B_{n,\mathbf m}$.
\item Some required repeated-variable Schur coordinate is identically zero
in characteristic $p$.
\item Over every finite extension of $\F_p$, every polynomial
\[
 g(x)=\prod_{\nu=1}^s(x-\alpha_\nu)^{m_\nu}
\]
with pairwise distinct nonzero centers fails to define an MDS code of length
$n$.
\end{enumerate}
If $p\notin\mathcal B_{n,\mathbf m}$, an MDS member of this type exists over
some finite extension of $\F_p$.
Moreover,
\[
 \mathcal B_{n,\mathbf m}
 \subseteq\{p\text{ prime}:p<n\}.
\]
\end{theorem}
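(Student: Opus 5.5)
The plan is to argue through the universal integral polynomials $S_{\kappa,\mathbf m}\in\Z[Y_1,\ldots,Y_s]$, using Theorem~\ref{thm:multiplicity-pullback} as the only link to codes. Fix a prime $p$.

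\textbf{(i)$\Leftrightarrow$(ii).} Reduction modulo $p$ of an integral polynomial is zero exactly when $p$ divides all of its coefficients, that is, when $p\mid\cont(S_{\kappa,\mathbf m})$. So the two conditions are literally the same statement.

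\textbf{(ii)$\Rightarrow$(iii).} Let $\alpha_1,\ldots,\alpha_s$ be pairwise distinct nonzero centers in a finite extension $\F_{p^e}$. Then $(\alpha_\nu)$ is an $\F_{p^e}$-point of $U_{\mathbf m}$. By Theorem~\ref{thm:multiplicity-pullback}, after base change to $\F_{p^e}$, the code $\mathcal C_g(n)$ is MDS iff every $S_{\kappa,\mathbf m}(\alpha)\ne0$. A factor that is identically zero in characteristic $p$ vanishes at every point, so every such $g$ is non-MDS.

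\textbf{(iii)$\Rightarrow$(ii), and the existence claim.} I argue by contraposition. Suppose no $S_{\kappa,\mathbf m}$ vanishes modulo $p$. Then
\[
 G=\prod_\nu Y_\nu\prod_{\mu<\nu}(Y_\nu-Y_\mu)\prod_{\kappa\in\Part}S_{\kappa,\mathbf m}
\]
is a nonzero polynomial over the domain $\F_p[Y]$. Choose $e$ with $p^e>\deg G$. The finite-field zero bound (as in Corollary~\ref{cor:MDS-count}) then gives a point of $\F_{p^e}^s$ where $G\ne0$. That point has distinct nonzero centers, and by Theorem~\ref{thm:multiplicity-pullback} it defines an MDS code. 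This contradicts (iii), and it also proves the final existence sentence.

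\textbf{The bound $\mathcal B_{n,\mathbf m}\subseteq\{p<n\}$.} This is the main step. For each $\kappa\in\Part$ I will exhibit one coefficient of $S_{\kappa,\mathbf m}$ whose prime factors are all $<n$.

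Use the tableau expansion of $s_\kappa$ in the alphabet $Y_1^{[m_1]},\ldots,Y_s^{[m_s]}$, ordered blockwise. Grade first by the exponent of $Y_1$, then by that of $Y_2$, and so on, lexicographically.
\begin{itemize}[leftmargin=2em]
\item Block-$1$ letters can occupy only the first $m_1$ rows, by column strictness. So the maximal $Y_1$-exponent is $\kappa_1+\cdots+\kappa_{m_1}$, attained exactly when those rows are filled entirely with block-$1$ letters.
\item Iterating, the lex-leading monomial is $\prod_\nu Y_\nu^{|\kappa^{(\nu)}|}$. Here $\kappa^{(\nu)}$ is the partition formed by the rows of $\kappa$ in the $\nu$-th row block.
\item Because all letters of earlier blocks are smaller, column strictness across block boundaries is automatic. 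Hence the coefficient of this monomial factors as $\prod_\nu s_{\kappa^{(\nu)}}(1^{m_\nu})$.
\end{itemize}

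By the hook-content formula,
\[
 s_{\mu}(1^m)=\prod_{x\in\mu}\frac{m+c(x)}{h(x)}.
\]
For $\mu=\kappa^{(\nu)}$ we have $\mu_1\le n-r$ and $\ell(\mu)\le m_\nu\le r$. Each numerator factor therefore satisfies
\[
 1\le m_\nu+c(x)\le r+(n-r-1)=n-1,
\]
where positivity holds because $\ell(\mu)\le m_\nu$. The value $s_\mu(1^m)$ is a positive integer equal to a quotient of products of integers in $[1,n-1]$, so every prime factor of it is $<n$. Hence any prime $p\ge n$ fails to divide this coefficient, so $p\nmid\cont(S_{\kappa,\mathbf m})$.

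The point I expect to need the most care is the claim that the lex-maximal monomial is attained only by the described block-row fillings, so that no cancellation or extra contribution occurs. Since all tableau contributions are positive, cancellation cannot happen and it suffices to check the row-containment argument.
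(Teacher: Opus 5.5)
Your proposal is correct. The equivalences (i)$\Leftrightarrow$(ii)$\Leftrightarrow$(iii) and the existence statement follow the paper's argument essentially verbatim: content, then Theorem~\ref{thm:multiplicity-pullback}, then a nonvanishing point of the avoidance polynomial over a large extension. Your explicit choice $p^e>\deg G$ just makes ``sufficiently large'' precise.

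The genuine difference is in the bound $\mathcal B_{n,\mathbf m}\subseteq\{p<n\}$. The paper simply evaluates at the all-ones point, $S_{\kappa,\mathbf m}(1,\ldots,1)=s_\kappa(1^r)$. Hence $\cont(S_{\kappa,\mathbf m})$ divides $s_\kappa(1^r)$, and the hook--content formula with $1\le r+j-i\le n-1$ shows that no prime $p\ge n$ divides this value. This needs no tableau combinatorics and no extremality argument.

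You instead isolate a single monomial, the lex-leading one, and compute its coefficient $\prod_\nu s_{\kappa^{(\nu)}}(1^{m_\nu})$. Your row-containment argument for this is sound. Entries in row $i$ are at least $i$, so the tableaux attaining that monomial are exactly the block-row fillings. Coefficients are tableau counts, so no cancellation can occur.

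Your route costs more work but yields a type-sensitive divisibility, $\cont(S_{\kappa,\mathbf m})\mid\prod_\nu s_{\kappa^{(\nu)}}(1^{m_\nu})$.
\begin{itemize}
\item For the squarefree type every factor is $s_{(k)}(1^1)=1$, which recovers $\mathcal B_{n,(1^r)}=\varnothing$ at once.
\item In general your numerator bound gives $p\le m_*+n-r-1$ with $m_*=\max_\nu m_\nu$. This is sharper than $p<n$ whenever $s\ge2$.
\end{itemize}

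A small simplification applies to both arguments. Only the numerator of the hook--content formula matters, since the value is an integer dividing the numerator product. So you never need to bound the hook lengths.
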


\begin{proof}
The equivalence of (i) and (ii) is the definition of content.  If a required
factor is the zero polynomial modulo $p$, Theorem~\ref{thm:multiplicity-pullback}
puts the entire stratum in the non-MDS boundary over every extension.

Conversely, if $p\notin\mathcal B_{n,\mathbf m}$, every factor
$S_{\kappa,\mathbf m}$ is nonzero in $\F_p[Y_1,\ldots,Y_s]$.  Hence so is
\begin{equation}
 \left(\prod_{\nu=1}^sY_\nu\right)
 \left(\prod_{\mu<\nu}(Y_\nu-Y_\mu)\right)
 \prod_{\kappa\in\Part}S_{\kappa,\mathbf m}(Y).
 \label{eq:stratum-avoidance}
\end{equation}
A nonzero polynomial has a nonzero value over a sufficiently large finite
extension of $\F_p$.  At such a value the centers are nonzero and distinct,
and Theorem~\ref{thm:multiplicity-pullback} gives an MDS member.

It remains to prove the asserted restriction on the bad primes.  Let
$p_0\ge n$ be prime.  For every
$\kappa\subseteq(n-r)^{r-1}$, specialization of all centers to $1$ gives
\[
 S_{\kappa,\mathbf m}(1,\ldots,1)=s_\kappa(1^r).
\]
For $\kappa\ne\varnothing$, the hook--content formula
\cite[Chapter~I, \S3, Example~4]{Macdonald} reads
\[
 s_\kappa(1^r)
 =\prod_{(i,j)\in\kappa}\frac{r+j-i}{h_{ij}},
\]
where $h_{ij}$ is the hook length of the cell $(i,j)$.  Here
\[
 1\le r+j-i\le r+\kappa_1-1\le n-1<p_0,
 \qquad
 1\le h_{ij}\le\kappa_1+\ell(\kappa)-1\le n-2<p_0.
\]
Thus the numerator and denominator products both have $p_0$-adic valuation
zero, and hence $p_0\nmid s_\kappa(1^r)$.  The same conclusion is immediate
for $\kappa=\varnothing$, when the value is $1$.  Therefore no
$S_{\kappa,\mathbf m}$ is the zero polynomial modulo $p_0$, so
$p_0\notin\mathcal B_{n,\mathbf m}$.
\end{proof}

\begin{corollary}[Finite-field density on a multiplicity stratum]
\label{cor:multiplicity-field-bound}
Assume $p\notin\mathcal B_{n,\mathbf m}$, let $q$ be a power of $p$, and
put
\begin{equation}
 d_{n,r,\mathbf m}=W_{n,r}+s+\binom s2.
 \label{eq:multiplicity-degree}
\end{equation}
Then the number of ordered center tuples in $\F_q^s$ that give MDS members
is at least
\begin{equation}
 q^s-d_{n,r,\mathbf m}q^{s-1}.
 \label{eq:multiplicity-ordered-count}
\end{equation}
For $c_d=\#\{\nu:m_\nu=d\}$ define
\begin{equation}
 \operatorname{Aut}(\mathbf m)=\prod_{d\ge1}c_d!.
 \label{eq:multiplicity-aut}
\end{equation}
The number of distinct monic split polynomials of type $\mathbf m$, with
distinct nonzero centers, that give MDS codes is at least
\begin{equation}
 \max\left\{0,
 \left\lceil
 \frac{q^s-d_{n,r,\mathbf m}q^{s-1}}
      {\operatorname{Aut}(\mathbf m)}
 \right\rceil\right\}.
 \label{eq:multiplicity-polynomial-count}
\end{equation}
In particular, the sufficient existence condition is
\begin{equation}
 q>d_{n,r,\mathbf m},
 \label{eq:multiplicity-field-bound}
\end{equation}
and, for fixed $p\notin\mathcal B_{n,\mathbf m}$, the MDS density on this
split stratum over $\F_{p^e}$ is
$1-O_{n,r,\mathbf m}(p^{-e})$ as $e\to\infty$.  If moreover
$r\ge3$, $n\ge r+3$, $s\ge2$, and some $m_\nu>1$, the lower bound
in~\eqref{eq:multiplicity-polynomial-count} also counts MDS non-GRS
polynomials.
\end{corollary}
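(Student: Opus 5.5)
The plan is to mimic the avoidance argument of Theorem~\ref{thm:split-existence}, now on the ordered center space $U_{\mathbf m}$. Consider
\[
 \mathcal F_{\mathbf m}(Y)=\Bigl(\prod_{\nu=1}^sY_\nu\Bigr)\Bigl(\prod_{\mu<\nu}(Y_\nu-Y_\mu)\Bigr)\prod_{\kappa\in\Part}S_{\kappa,\mathbf m}(Y)\in\F_p[Y_1,\ldots,Y_s].
\]
Since $p\notin\mathcal B_{n,\mathbf m}$, every $S_{\kappa,\mathbf m}$ is nonzero mod $p$, so $\mathcal F_{\mathbf m}\ne0$. Each $S_{\kappa,\mathbf m}$ is homogeneous of degree $|\kappa|$ (or identically zero, which is excluded), so by \eqref{eq:sum-sizes} the total degree of $\mathcal F_{\mathbf m}$ is $s+\binom s2+W_{n,r}=d_{n,r,\mathbf m}$. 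The finite-field zero bound then gives at least $q^s-d_{n,r,\mathbf m}q^{s-1}$ tuples in $\F_q^s$ with $\mathcal F_{\mathbf m}\ne0$. For each such tuple the centers are nonzero and distinct, and by Theorem~\ref{thm:multiplicity-pullback} the polynomial $\prod(x-Y_\nu)^{m_\nu}$ is MDS. This proves \eqref{eq:multiplicity-ordered-count}.

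For the polynomial count, I will use the subgroup $G\subseteq\mathfrak S_s$ permuting indices of equal multiplicity; it has order $\operatorname{Aut}(\mathbf m)$. Because the centers are distinct, two ordered tuples give the same polynomial $\prod(x-Y_\nu)^{m_\nu}$ iff they differ by an element of $G$: unique factorization recovers each center together with its multiplicity. Moreover $G$ acts freely on distinct tuples. The good set is $G$-stable, since the coordinate and Schur blocks are invariant (the alphabet is unchanged) and the Vandermonde changes only by a sign. So each fiber has exactly $\operatorname{Aut}(\mathbf m)$ elements, and dividing, then taking the integer ceiling and the maximum with $0$, gives \eqref{eq:multiplicity-polynomial-count}. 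The existence condition $q>d_{n,r,\mathbf m}$ and the density statement follow at once: the number of split type-$\mathbf m$ polynomials with distinct nonzero centers is at most $q^s/\operatorname{Aut}(\mathbf m)$, so the MDS fraction is $1-O(q^{-1})$.

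The main obstacle is the non-GRS clause. Under $r\ge3$, $n\ge r+3$, $s\ge2$ and some $m_\nu>1$, I would argue that every MDS polynomial of this type is automatically non-GRS. The GRS classification of \cite{LiYuanCyclicOrbits} via cyclic projective orbits on a rational normal curve should force the companion class to act as a projectivity of $\PP^1$. In this situation $g$ is either squarefree, or the power of a single linear factor (the unipotent case, tied to the term $\mathbf 1_{\{n\le p\}}$ in \eqref{eq:exact-GRS-count}). A polynomial with at least two distinct centers and a repeated root is neither. I would first try to extract this from \cite[Theorems~5.3 and~6.6]{LiYuanCyclicOrbits}, for instance by noting that the GRS count \eqref{eq:exact-GRS-count} is exhausted by squarefree and pure-power members.

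Failing that, I would use Proposition~\ref{prop:Omega-obstruction}. I would show $\Psi_r(Y_1^{[m_1]},\ldots,Y_s^{[m_s]})\ne0$ whenever the type is mixed, and then include it as an extra factor in $\mathcal F_{\mathbf m}$. That route would change the degree constant, though, so the cited structural description is the intended path.
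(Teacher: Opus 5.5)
Your proposal is correct and follows the paper's proof: the same avoidance polynomial of degree $W_{n,r}+s+\binom s2$, the finite-field zero bound, fibers of size $\operatorname{Aut}(\mathbf m)$ from permuting centers of equal multiplicity, and the density claim. Your upper bound $q^s/\operatorname{Aut}(\mathbf m)$ on the size of the stratum is a harmless variant of the paper's exact count $(q-1)\cdots(q-s)$. For the non-GRS clause the paper likewise only cites Li--Yuan: \cite[Corollary~5.6]{LiYuanCyclicOrbits} directly excludes GRS type for an MDS polynomial with a repeated root and a second distinct root, so you need neither the count-exhaustion idea nor the $\Psi_r$ fallback (which, as you note, would not give the stated bound).
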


\begin{proof}
The avoidance polynomial~\eqref{eq:stratum-avoidance} is nonzero and has
degree $d_{n,r,\mathbf m}$: its Schur, coordinate, and diagonal blocks
have degrees $W_{n,r}$, $s$, and $\binom s2$, respectively.  The
finite-field zero bound proves~\eqref{eq:multiplicity-ordered-count}.  Two
ordered tuples define the same polynomial precisely when centers carrying
equal multiplicities are permuted, so every fiber has size
$\operatorname{Aut}(\mathbf m)$; this proves
\eqref{eq:multiplicity-polynomial-count}.  The full ordered stratum has
$(q-1)(q-2)\cdots(q-s)=q^s+O_s(q^{s-1})$ points, giving the density claim.
Under the additional hypotheses $r\ge3$ and $n\ge r+3$, Li and
Yuan~\cite[Corollary~5.6]{LiYuanCyclicOrbits} exclude GRS type for an MDS
polynomial having both a repeated root and a second distinct root.
\end{proof}

The inclusion of partition rectangles immediately gives, for every
multiplicity type $\mathbf m$ and every $n>r$,
\begin{equation}
 \mathcal B_{n,\mathbf m}\subseteq\mathcal B_{n+1,\mathbf m}.
 \label{eq:bad-prime-monotonicity}
\end{equation}
For a prime $p$, define
\begin{equation}
 \nu_p(\mathbf m)=
 \min\{n>r:p\in\mathcal B_{n,\mathbf m}\},
 \label{eq:nu-p-m}
\end{equation}
with value $\infty$ if the set is empty.  Then the stratum can be
structurally nondegenerate in characteristic $p$ only at lengths
$r<n<\nu_p(\mathbf m)$.

This elementary monotonicity is the multiplicity-stratum shadow of the full
length filtration in Section~\ref{sec:length}, which identifies
$\nu_p(\mathbf m)-1$ with the generic sparse-multiple threshold.

\begin{theorem}[Repeated-root Frobenius bound]
\label{thm:repeated-frobenius-bound}
Assume $s<r$, let $p$ be prime, and let $P$ be the least power of $p$
satisfying
\[
 P\ge\max_{1\le\nu\le s}m_\nu.
\]
For the generic split polynomial
\[
 g_{\mathbf m}^{\mathrm{gen}}(x)
 =\prod_{\nu=1}^s(x-Y_\nu)^{m_\nu}
 \in\F_p(Y_1,\ldots,Y_s)[x]
\]
one has
\begin{equation}
 \sigma_r(g_{\mathbf m}^{\mathrm{gen}})\le sP,
 \qquad
 \max\{r+1,p+1\}\le\nu_p(\mathbf m)\le sP+1.
 \label{eq:repeated-frobenius-bound}
\end{equation}
More precisely, there exists a partition $\kappa$ such that
\begin{equation}
 \ell(\kappa)\le r-1,\qquad
 \kappa_1=sP-r+1,\qquad
 S_{\kappa,\mathbf m}=0
 \quad\text{in }\F_p[Y_1,\ldots,Y_s].
 \label{eq:repeated-frobenius-outer-factor}
\end{equation}
Thus the pullback to the multiplicity stratum of the outer Schur layer
introduced at length $sP+1$ contains an identically vanishing factor.
More generally, every concrete split polynomial of type $\mathbf m$ in
characteristic $p$ fails to be MDS at each length $n>sP$.
\end{theorem}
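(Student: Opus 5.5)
The plan is to use the Frobenius identity to produce an explicit sparse multiple of $g^{\mathrm{gen}}_{\mathbf m}$, and then read off both the sparse-degree bound and the vanishing outer Schur factor from it. In characteristic $p$ one has $(x-Y_\nu)^P=x^P-Y_\nu^P$. Since $P\ge m_\nu$ for every $\nu$, the polynomial
\[
 f(x):=\prod_{\nu=1}^s(x^P-Y_\nu^P)
\]
is divisible by $g^{\mathrm{gen}}_{\mathbf m}$. It is a polynomial in $x^P$ of degree $s$, so its support lies in $\{0,P,2P,\ldots,sP\}$ and $\wt(f)\le s+1\le r$, using the hypothesis $s<r$. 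It is monic of degree $sP$, hence nonzero. This gives $\sigma_r(g^{\mathrm{gen}}_{\mathbf m})\le sP$ directly from the definition~\eqref{eq:sigma-definition}. The same $f$ also specializes at any concrete tuple of distinct nonzero centers to a nonzero monic multiple of degree $sP$ and weight $\le r$. By Proposition~\ref{prop:remainder-parity-sparse}, every concrete split polynomial of type $\mathbf m$ then fails MDS at each length $n>sP$.

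For the Schur statement~\eqref{eq:repeated-frobenius-outer-factor}, I will convert $f$ into a vanishing maximal minor containing the columns $0$ and $sP$. The columns $Q_i$ with $i\in\operatorname{supp}(f)$ are linearly dependent, because $\sum_i f_iQ_i$ is the remainder of $f$ modulo $g$, which is zero. The support of $f$ contains both $0$ and $sP$: the constant term is $\pm\prod_\nu Y_\nu^P\ne0$ and $f$ is monic.

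Next I enlarge the support to an $r$-subset $I=\{0<u_1<\cdots<u_{r-1}=sP\}$ by adjoining $r-\#\operatorname{supp}(f)$ indices from $\{1,\ldots,sP-1\}$ outside the support. This requires at least $r-s-1$ free indices there. Available are at least $sP-1-(s-1)=s(P-1)$ indices, and $s(P-1)\ge r-s-1$ holds because $sP\ge\sum_\nu m_\nu=r$.

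The minor $\delta_{U}(g^{\mathrm{gen}}_{\mathbf m})$ for $U=\{u_1,\ldots,u_{r-1}\}$ then vanishes in $\F_p(Y)$. By Theorem~\ref{thm:reduced-Schur} (or Theorem~\ref{thm:multiplicity-pullback}), this minor equals $S_{\kappa(U),\mathbf m}$, and $\kappa(U)_1=u_{r-1}-(r-1)=sP-r+1$ with $\ell(\kappa(U))\le r-1$. Since $S_{\kappa(U),\mathbf m}$ is a polynomial in $\F_p[Y]$ vanishing in the fraction field, it is zero. This $\kappa$ lies in $(n-r)^{r-1}$ exactly when $n\ge sP+1$, so $p\in\mathcal B_{sP+1,\mathbf m}$ and $\nu_p(\mathbf m)\le sP+1$.

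For the lower bound, $\nu_p(\mathbf m)\ge r+1$ holds by definition. For $n\le p$, Theorem~\ref{thm:structural-bad} gives $\mathcal B_{n,\mathbf m}\subseteq\{\ell\text{ prime}:\ell<n\}$, which excludes $p$; hence $\nu_p(\mathbf m)\ge p+1$. The only step needing real care is the padding: checking that enough free indices exist and that $sP$ is genuinely the maximum of $I$, so that $\kappa_1$ comes out exactly $sP-r+1$. The rest is bookkeeping with earlier results.
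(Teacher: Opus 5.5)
Your proof is correct and follows the paper's skeleton. You use the same Frobenius multiple $\prod_\nu(x^P-Y_\nu^P)$ and the same padding of its support to an $r$-set containing $0$ and $sP$. The lower bound $\nu_p(\mathbf m)\ge p+1$ also comes, as in the paper, from the final assertion of Theorem~\ref{thm:structural-bad}.

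The one real difference is how you turn the vanishing maximal minor into the vanishing of $S_{\kappa(U),\mathbf m}$. You stay in remainder coordinates and apply Theorem~\ref{thm:reduced-Schur} to $g^{\mathrm{gen}}_{\mathbf m}$ over $\F_p(Y_1,\ldots,Y_s)$. Its root alphabet $(Y_1^{[m_1]},\ldots,Y_s^{[m_s]})$ already lies in that field, and the theorem needs no squarefreeness. So $\delta_U(g^{\mathrm{gen}}_{\mathbf m})$ is literally the reduction of $S_{\kappa(U),\mathbf m}$.

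The paper instead transports the linear relation to the Hasse-evaluation columns to get $F_I=0$. It then invokes Proposition~\ref{prop:confluent-Schur} and cancels the nonzero factor $\prod_{\mu<\nu}(Y_\nu-Y_\mu)^{m_\mu m_\nu}$ in the domain $\F_p[Y]$. Your route is shorter and needs no cancellation. The paper's route keeps the argument inside the confluent Vandermonde picture it also uses for the type $(2,1)$ content computation.

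On the padding step, your count tacitly assumes $\#\operatorname{supp}(f)=s+1$. This is true, since the coefficients of $f$ are $\pm e_j(Y)^P\ne0$. In any case, for a support of size $k$ you need $r-k$ indices out of $sP-k+1$ available. That reduces to the same inequality $sP\ge r-1$.
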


\begin{proof}
In characteristic $p$,
\begin{equation}
 F_{\mathbf m}(x)=
 \prod_{\nu=1}^s(x^P-Y_\nu^P)
 \label{eq:repeated-frobenius-multiple}
\end{equation}
is divisible by $g_{\mathbf m}^{\mathrm{gen}}$.  It is a degree-$s$
polynomial in $x^P$, hence has at most $s+1\le r$ nonzero coefficients and
degree $sP$.  This proves the first inequality.

By definition, $\nu_p(\mathbf m)\ge r+1$.  The final assertion of
Theorem~\ref{thm:structural-bad} shows that
$p\notin\mathcal B_{n,\mathbf m}$ whenever $n\le p$, and therefore
$\nu_p(\mathbf m)\ge p+1$.

Moreover, $r=\sum_\nu m_\nu\le sP$ and $s+1\le r$.  Hence the exponent set
$\{0,P,2P,\ldots,sP\}$ can be enlarged to an $r$-element set
$I\subseteq\{0,1,\ldots,sP\}$ containing both $0$ and $sP$.  The
 coefficients of $F_{\mathbf m}$ give a nontrivial linear relation among the
 remainder columns indexed by $I$, with coefficient zero on the added
 indices.  The same relation annihilates the Hasse-evaluation columns, so
 $F_I=0$ in $\F_p[Y_1,\ldots,Y_s]$.  If $U=I\setminus\{0\}$, then
\[
 \kappa(U)_1=sP-(r-1)=sP-r+1,
\]
 and Proposition~\ref{prop:confluent-Schur} gives
\[
 F_I=
 \prod_{\mu<\nu}(Y_\nu-Y_\mu)^{m_\mu m_\nu}
 S_{\kappa(U),\mathbf m}.
\]
The coefficient ring $\F_p[Y_1,\ldots,Y_s]$ is an integral domain, and the
first factor is nonzero.  Hence $S_{\kappa(U),\mathbf m}=0$, proving
\eqref{eq:repeated-frobenius-outer-factor}.  At length $sP+1$ this factor
belongs to the new outer layer, proving the upper bound for
$\nu_p(\mathbf m)$ in~\eqref{eq:repeated-frobenius-bound}.  Specializing the
$Y_\nu$ in
\eqref{eq:repeated-frobenius-multiple} proves the concrete statement.
\end{proof}

The squarefree type is the opposite extreme:
\begin{equation}
 \mathcal B_{n,(1^r)}=\varnothing
 \qquad(n>r),
 \label{eq:squarefree-no-structural-bad}
\end{equation}
because the ordinary Schur polynomials in independent variables remain
nonzero in every characteristic.  The exact type $(2,1)$ formula below
shows that the Frobenius bound is sharp.  Indeed, in characteristic $p$,
\begin{equation}
 (x-\alpha)^2(x-\beta)\mid
 (x^p-\alpha^p)(x^p-\beta^p)
 =x^{2p}-(\alpha^p+\beta^p)x^p+\alpha^p\beta^p,
 \label{eq:two-one-frobenius-multiple}
\end{equation}
and the first structurally bad length is $2p+1$.

\subsection{Explicit repeated-root strata}

\begin{theorem}[Pure-power stratum]
\label{thm:pure-power-bad}
Let $r\ge2$.
\begin{enumerate}[label=\textup{(\arabic*)}]
\item If $n\ge r+2$, then
\begin{equation}
 \mathcal B_{n,(r)}=\{p\text{ prime}:p<n\}.
 \label{eq:pure-power-bad}
\end{equation}
\item If $n=r+1$, then
\begin{equation}
 \mathcal B_{r+1,(r)}
 =\left\{p:p\mid\binom rj
 \text{ for some }1\le j<r\right\}.
 \label{eq:pure-power-first-boundary}
\end{equation}
\end{enumerate}
Consequently,
\begin{equation}
 \nu_p((r))=
 \begin{cases}
 r+1,&p\mid\binom rj\text{ for some }1\le j<r,\\
 \max\{r+2,p+1\},&\text{otherwise}.
 \end{cases}
 \label{eq:pure-power-nu-p}
\end{equation}
\end{theorem}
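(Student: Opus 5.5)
On the pure-power stratum $s=1$ and $\mathbf m=(r)$, so every required factor is the one-variable polynomial
\[
 S_{\kappa,(r)}(Y)=s_\kappa(Y^{[r]})=s_\kappa(1^r)\,Y^{|\kappa|}.
\]
Hence $\cont(S_{\kappa,(r)})=s_\kappa(1^r)$. The plan is therefore to show that $\mathcal B_{n,(r)}$ is exactly the set of primes dividing some value $s_\kappa(1^r)$ with $\kappa\in\Part$. I will compute these values with the hook--content formula, as in the proof of Theorem~\ref{thm:structural-bad}. That theorem already gives the inclusion $\mathcal B_{n,(r)}\subseteq\{p<n\}$, so only the reverse inclusion needs new work.

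For part~(2), take $n=r+1$. The rectangle is $1^{r-1}$, so the partitions are the columns $\kappa=(1^j)$ with $0\le j\le r-1$. For these, $s_{(1^j)}(1^r)=e_j(1^r)=\binom rj$. The case $j=0$ gives $1$, and the remaining values are $\binom rj$ for $1\le j<r$. This gives~\eqref{eq:pure-power-first-boundary} immediately.

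For part~(1), take $n\ge r+2$ and a prime $p<n$. I will exhibit some $\kappa\subseteq(n-r)^{r-1}$ with $p\mid s_\kappa(1^r)$. The natural first choice is a one-row partition $\kappa=(k)$ with $k\le n-r$, for which $s_{(k)}(1^r)=h_k(1^r)=\binom{r+k-1}{k}$. A second choice is a two-column shape $(2^j)$, available because $n-r\ge 2$. Cleaner still is a hook $\kappa=(k,1^{j})$ with $j\le r-2$ and $k\le n-r$, whose value is
\[
 s_{(k,1^j)}(1^r)=\binom{r+k-1}{k+j}\binom{k+j-1}{j}.
\]
If $p\le r$, Lucas' theorem should produce a suitable $\binom{r+k-1}{\cdot}$ with small $k$ that is divisible by $p$. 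The main obstacle is the case $r<p<n$. There I will take $\kappa$ with $\kappa_1=p-r+1$; this is at most $n-r$ because $p\le n-1$. For a hook whose top content is $r+\kappa_1-1=p$, the content $p$ appears in the numerator of the hook--content formula. The denominator hook lengths are at most $\kappa_1+\ell(\kappa)-1$, and I need these to be less than $p$. For the one-row shape $(p-r+1)$ the largest hook is $p-r+1<p$, and the numerator $\binom{p}{p-r+1}$ is divisible by $p$ since $1\le p-r+1<p$. So $\kappa=(p-r+1)$ works whenever $r<p$. The case $p=r$ is covered by the same choice: it gives $\kappa=(1)$ with value $r$, divisible by $p$.

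For $p<r$, I will try $\kappa=(1^j)$ with value $\binom rj$, or $\kappa=(2)$ with value $\binom{r+1}{2}$. If no $\binom rj$ with $0<j<r$ is divisible by $p$, then $r$ is a power of $p$, say $r=p^a$. In that case I use the row $(k)$ with $k=p^a(p-1)\cdots$, or simply $k=2\le n-r$. That gives $\binom{r+1}{2}=r(r+1)/2$, which is divisible by $p$ when $p$ is odd. For $p=2$ and $r=2^a$ with $a\ge1$, the hook $(2,1)$ is available when $r\ge3$ and gives $r(r^2-1)/3$, which is even. When $r=2$, the shape $(2)$ gives $3$, so for $r=2$, $p=2$ I instead take $(3)$ if $n\ge5$; otherwise the case $n=4$, $p=3$ is handled by the $r<p$ argument. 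Honestly, I expect this small-prime bookkeeping to be the fiddly step; the fallback is $\kappa=(1)$ with value $r$, which settles every $p\mid r$.

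Finally, \eqref{eq:pure-power-nu-p} follows from parts~(1) and~(2) together with the monotonicity~\eqref{eq:bad-prime-monotonicity}. If $p\mid\binom rj$ for some $1\le j<r$, then $\nu_p((r))=r+1$. Otherwise $p\notin\mathcal B_{r+1,(r)}$, and part~(1) shows that the least $n\ge r+2$ with $p<n$ is $\max\{r+2,p+1\}$.
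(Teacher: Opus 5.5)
Your route differs from the paper's. For part (1) the paper simply invokes Li--Yuan's pure-power criterion (for $n\ge r+2$, $(x-\beta)^r$ with $\beta\ne0$ is MDS iff $n\le p$) together with Theorem~\ref{thm:structural-bad}. You instead try to exhibit, for each prime $p<n$, an explicit $\kappa\in\Part$ with $p\mid s_\kappa(1^r)$. Several pieces of your proposal are correct:
\begin{itemize}
\item the reduction to the contents $s_\kappa(1^r)$;
\item part (2);
\item the deduction of \eqref{eq:pure-power-nu-p};
\item the case $r<p<n$, via $\kappa=(p-r+1)$ and $p\mid\binom{p}{p-r+1}$;
\item the case $p\mid r$, via $\kappa=(1)$.
\end{itemize}

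The gap is the case $p<r$, $p\nmid r$. Your key claim is that if $p$ divides no $\binom{r}{j}$ with $0<j<r$, then $r$ is a power of $p$. This is backwards: if $r=p^a$, then $p$ divides every such $\binom{r}{j}$ (already $\binom{r}{1}=r$). By Lucas' theorem, $p$ divides none of them exactly when all base-$p$ digits of $r$ below the leading one equal $p-1$, i.e.\ $r=cp^a-1$ with $2\le c\le p$. Take $r=3,\ p=2$ or $r=8,\ p=3$. There no column $(1^j)$ works, nor does $(1)$, and nothing in your text supplies a valid witness. Your subsequent bookkeeping with $r=p^a$, the hook $(2,1)$, and $r=2$ only revisits cases that $\kappa=(1)$ already settles.

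The repair is short. In the genuine exceptional case $r=cp^a-1$, the hypothesis $p<r$ forces $a\ge1$. The partition $(2)$ lies in $(n-r)^{r-1}$ because $n-r\ge2$. Its value is $h_2(1^r)=\binom{r+1}{2}=cp^a r/2$, which is divisible by $p$:
\begin{itemize}
\item for odd $p$, because $p\mid r+1$;
\item for $p=2$, because then $c=2$ and $(r+1)/2=2^a$ with $a\ge1$.
\end{itemize}
With this correction your argument gives a self-contained proof of part (1) that does not depend on the cited Lemma~5.2 of Li--Yuan. The cost is a Lucas-type case split that the paper avoids by citation.
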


\begin{proof}
On the type $(r)$,
$S_{\kappa,(r)}(Y)=s_\kappa(1^r)Y^{|\kappa|}$.  For $n\ge r+2$, Li and
Yuan~\cite[Lemma~5.2]{LiYuanCyclicOrbits} prove that for $\beta\ne0$ the code
attached to $(x-\beta)^r$ is MDS precisely when $n\le p$.  Combine this with
Theorem~\ref{thm:structural-bad}.  At $n=r+1$, the nonempty partitions are
the columns $(1^j)$, and
$s_{(1^j)}(1^r)=e_j(1^r)=\binom rj$.  The threshold formula follows from
the first two statements and monotonicity.
\end{proof}

\begin{theorem}[The cubic type $(2,1)$]
\label{thm:two-one-bad}
For $r=3$ and every $n>3$,
\begin{equation}
 \mathcal B_{n,(2,1)}
 =\{p\text{ prime}:2p\le n-1\},
 \qquad
 \nu_p((2,1))=2p+1.
 \label{eq:two-one-bad}
\end{equation}
\end{theorem}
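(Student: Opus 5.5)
The plan is to prove the two inclusions separately, working throughout with the reduced confluent minors. By Proposition~\ref{prop:confluent-Schur} with $s=2$, $\mathbf m=(2,1)$, centers $(Y,Z)$, and $I=\{0,b,c\}$ with $0<b<c\le n-1$, one has
\[
 F_I(Y,Z)=(Z-Y)^{2}\,S_{\kappa(U),(2,1)}(Y,Z),\qquad U=\{b,c\}.
\]
Since $\F_p[Y,Z]$ is a domain and $(Z-Y)^2\ne0$, the factor $S_{\kappa(U),(2,1)}$ vanishes identically mod $p$ exactly when $F_I$ does. By Lemma~\ref{lem:index-partition}, the pairs $\{b,c\}$ run over all $\kappa\in\Part$. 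So the condition $p\in\mathcal B_{n,(2,1)}$ holds if and only if some $F_{\{0,b,c\}}$ with $c\le n-1$ is the zero polynomial in $\F_p[Y,Z]$.

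\emph{Lower bound} ($p\notin\mathcal B_{n,(2,1)}$ when $n\le 2p$). Take the rows ordered as $(Y,0),(Y,1),(Z,0)$. The Hasse confluent determinant is
\[
 F_{\{0,b,c\}}=\det\begin{pmatrix}1&Y^b&Y^c\\0&bY^{b-1}&cY^{c-1}\\1&Z^b&Z^c\end{pmatrix}
 =bY^{b-1}(Z^c-Y^c)-cY^{c-1}(Z^b-Y^b).
\]
Because $b\ne c$, the monomials $Y^{b-1}Z^c$ and $Y^{c-1}Z^b$ are distinct, and they appear with coefficients $b$ and $-c$. Hence $F_{\{0,b,c\}}\equiv0\pmod p$ forces $p\mid b$ and $p\mid c$. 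With $0<b<c$ this forces $c\ge 2p$. Thus if $n-1<2p$, no required factor vanishes and $p\notin\mathcal B_{n,(2,1)}$. The same computation covers $p=2$ directly, with no separate case.

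\emph{Upper bound} ($p\in\mathcal B_{n,(2,1)}$ when $n\ge 2p+1$). For the upper bound I will invoke Theorem~\ref{thm:repeated-frobenius-bound} with $s=2<r=3$. The least power of $p$ that is at least $\max m_\nu=2$ is $P=p$ for every prime (for $p=2$, $P=2$). This gives $\nu_p((2,1))\le 2p+1$. Alternatively, one can check directly that $b=p$, $c=2p$ gives $F=pY^{p-1}(\cdots)-2pY^{2p-1}(\cdots)\equiv0$, matching the multiple~\eqref{eq:two-one-frobenius-multiple}. Monotonicity~\eqref{eq:bad-prime-monotonicity} then gives $p\in\mathcal B_{n,(2,1)}$ for all $n\ge 2p+1$.

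Combining the two parts yields $\mathcal B_{n,(2,1)}=\{p:2p\le n-1\}$ and $\nu_p((2,1))=2p+1$. The only delicate point is checking the sign and normalization conventions in Proposition~\ref{prop:confluent-Schur}, so that the identification of $S_\kappa$ with $F_I/(Z-Y)^2$ is legitimate in characteristic $p$. Those conventions are already stated there as integral identities, so I expect no real obstacle; the main work is the two-monomial non-cancellation argument.
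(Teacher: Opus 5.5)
Your proposal is correct and follows the paper's proof. Both arguments expand the ordered Hasse confluent minor as $F_{\{0,b,c\}}=bY^{b-1}Z^c-cY^{c-1}Z^b+(c-b)Y^{b+c-1}$, transfer its vanishing to $S_{\kappa(U),(2,1)}$ through Proposition~\ref{prop:confluent-Schur}, and observe that $p\mid b$ and $p\mid c$ can hold with $c\le n-1$ exactly when $2p\le n-1$. The differences are cosmetic: the paper computes the content $\gcd(b,c)$ via Gauss's lemma where you argue in the domain $\F_p[Y,Z]$, and for the upper bound it simply uses the pair $(p,2p)$ (your ``alternative'' check) rather than citing Theorem~\ref{thm:repeated-frobenius-bound}.
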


\begin{proof}
Write $g(x)=(x-\alpha)^2(x-\beta)$.  For a reduced index set
$U=\{u<v\}$, direct expansion of the ordered Hasse confluent minor gives
\begin{equation}
 F_{\{0,u,v\}}(\alpha,\beta)
 =u\alpha^{u-1}\beta^v-v\alpha^{v-1}\beta^u
  +(v-u)\alpha^{u+v-1}.
 \label{eq:two-one-minor}
\end{equation}
The three displayed monomials are distinct, so the content of
$F_{\{0,u,v\}}$ is
$\gcd(u,v,v-u)=\gcd(u,v)$.  On the other hand,
Proposition~\ref{prop:confluent-Schur} gives
\[
 F_{\{0,u,v\}}=(\beta-\alpha)^2S_{\kappa(U),(2,1)}.
\]
The factor $(\beta-\alpha)^2$ is primitive.  Hence the multivariable Gauss
lemma shows that
\[
 \operatorname{cont}\bigl(S_{\kappa(U),(2,1)}\bigr)=\gcd(u,v).
\]
This confluent Schur polynomial is therefore identically zero in
characteristic $p$ exactly when $p$ divides both $u$ and $v$.  A pair
$1\le u<v\le n-1$ with this property exists exactly when $u=p,v=2p$ is
available, namely when $2p\le n-1$.  Theorem~\ref{thm:structural-bad} gives
the result.
\end{proof}

\begin{proposition}[Littlewood--Richardson expansion]
\label{prop:LR-expansion}
For every multiplicity type,
\begin{equation}
 S_{\kappa,\mathbf m}(Y)
 =\sum_{\mu^{(1)},\ldots,\mu^{(s)}}
 c^\kappa_{\mu^{(1)},\ldots,\mu^{(s)}}
 \prod_{\nu=1}^s
 s_{\mu^{(\nu)}}(1^{m_\nu})Y_\nu^{|\mu^{(\nu)}|},
 \label{eq:LR-expansion}
\end{equation}
where $c^\kappa_{\mu^{(1)},\ldots,\mu^{(s)}}$ is the iterated
Littlewood--Richardson coefficient.
\end{proposition}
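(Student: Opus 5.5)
The plan is to combine the branching (coproduct) rule for Schur polynomials with the evaluation of a Schur polynomial on an alphabet consisting of one repeated letter.

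\emph{Step 1: split the alphabet.} Write the root alphabet $(Y_1^{[m_1]},\ldots,Y_s^{[m_s]})$ as the union of the disjoint sub-alphabets $\mathcal Y_\nu=Y_\nu^{[m_\nu]}$. The standard skew-Schur branching rule (Macdonald, Chapter~I, \S5) gives
\[
 s_\kappa(\mathcal Y_1,\mathcal Y_2)
 =\sum_{\mu}s_{\kappa/\mu}(\mathcal Y_2)\,s_\mu(\mathcal Y_1),
 \qquad
 s_{\kappa/\mu}=\sum_{\rho}c^\kappa_{\mu\rho}s_\rho .
\]
By induction on $s$ this yields
\[
 s_\kappa(\mathcal Y_1,\ldots,\mathcal Y_s)
 =\sum_{\mu^{(1)},\ldots,\mu^{(s)}}
 c^\kappa_{\mu^{(1)},\ldots,\mu^{(s)}}
 \prod_{\nu}s_{\mu^{(\nu)}}(\mathcal Y_\nu).
\]
Here the iterated coefficient is defined as the sum of products of ordinary Littlewood--Richardson coefficients along the chain $\mu^{(1)}\subseteq\cdots\subseteq\kappa$. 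Equivalently, it is the coefficient of $s_\kappa$ in $s_{\mu^{(1)}}\cdots s_{\mu^{(s)}}$. I would verify the identity as an identity of symmetric functions in infinitely many variables, then specialize all but finitely many variables to zero. Terms with $\ell(\mu^{(\nu)})>m_\nu$ vanish automatically, so no truncation issue arises.

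\emph{Step 2: evaluate on a repeated letter.} Homogeneity of Schur polynomials gives $s_\mu(Y^{[m]})=s_\mu(1^m)\,Y^{|\mu|}$, since $s_\mu$ is homogeneous of degree $|\mu|$ and substituting $Y$ for every variable scales by $Y^{|\mu|}$. Substituting this into the expansion of Step~1 gives \eqref{eq:LR-expansion}.

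\emph{Step 3: check the identity over the integers.} The identity holds in $\Z[Y_1,\ldots,Y_s]$, because every ingredient is integral. The LR coefficients are nonnegative integers, and $s_\mu(1^m)\in\Z$. So the identity specializes to every characteristic, consistent with \eqref{eq:repeated-Schur}.

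The only real obstacle is bookkeeping: fixing precisely what "iterated Littlewood--Richardson coefficient" means. I would define it as the structure constant $\langle s_\kappa,\,s_{\mu^{(1)}}\cdots s_{\mu^{(s)}}\rangle$. With that definition, the branching rule follows from Hall-inner-product adjointness, namely $\Delta s_\kappa=\sum c^\kappa_{\mu\rho}\,s_\mu\otimes s_\rho$ iterated $s-1$ times. This avoids any ordering ambiguity, since multiplication is commutative.
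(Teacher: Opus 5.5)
Your proposal is correct and follows essentially the same route as the paper: expand $s_\kappa$ over the disjoint union of the sub-alphabets $Y_\nu^{[m_\nu]}$ using the Littlewood--Richardson branching rule from Macdonald, Chapter~I, \S5, then apply homogeneity in the form $s_\mu(Y^{[m]})=s_\mu(1^m)Y^{|\mu|}$. Your extra steps are harmless refinements that the paper leaves implicit: pinning down the iterated coefficient as $\langle s_\kappa, s_{\mu^{(1)}}\cdots s_{\mu^{(s)}}\rangle$, noting that terms with $\ell(\mu^{(\nu)})>m_\nu$ vanish, and observing that the identity holds over $\mathbb Z$.
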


\begin{proof}
Apply the Schur expansion for a disjoint union of alphabets
\cite[Chapter~I, \S5]{Macdonald} and specialize the $\nu$-th alphabet to
$m_\nu$ copies of $Y_\nu$.  Homogeneity gives the
factor $s_{\mu^{(\nu)}}(1^{m_\nu})Y_\nu^{|\mu^{(\nu)}|}$.
\end{proof}

Formula~\eqref{eq:LR-expansion} explains the structural contents through
Littlewood--Richardson multiplicities and the Weyl dimensions
$s_\mu(1^m)$.

\subsection{The irreducible Frobenius stratum}

Let $g\in\F_q[x]$ be irreducible of degree $r\ge2$, let
$L=\F_{q^r}$, and let $\alpha\in L$ be a root.  Its roots are
\begin{equation}
 \Lambda_g=(\alpha,\alpha^q,\ldots,\alpha^{q^{r-1}}).
 \label{eq:Frobenius-alphabet}
\end{equation}

\begin{proposition}[Trace and power-orbit models]
\label{prop:trace-model}
For $c=(c_0,\ldots,c_{n-1})\in\F_q^n$,
\begin{equation}
 c\in\mathcal C_g(n)
 \quad\Longleftrightarrow\quad
 \sum_{j=0}^{n-1}c_j\alpha^j=0\text{ in }L.
 \label{eq:irreducible-kernel}
\end{equation}
With respect to any $\F_q$-basis $\mathcal B$ of $L$, a parity-check
matrix is
\begin{equation}
 H_{\alpha,\mathcal B}
 =([1]_{\mathcal B}\ [\alpha]_{\mathcal B}\ \cdots
 [\alpha^{n-1}]_{\mathcal B}),
 \label{eq:power-orbit-matrix}
\end{equation}
and the dual code is
\begin{equation}
 \mathcal C_g(n)^\perp=
 \left\{
 (\Tr_{L/\F_q}(\gamma),\Tr_{L/\F_q}(\gamma\alpha),\ldots,
 \Tr_{L/\F_q}(\gamma\alpha^{n-1})):\gamma\in L
 \right\}.
 \label{eq:trace-dual}
\end{equation}
\end{proposition}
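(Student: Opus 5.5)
The plan is to obtain all three assertions from the divisibility criterion in the proof of Proposition~\ref{prop:remainder-parity-sparse}, combined with the fact that $g$ is the minimal polynomial of $\alpha$ over $\F_q$.

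\textbf{Kernel description.} Write $c(x)=\sum_{j<n}c_jx^j\in\F_q[x]$. Since $\deg c<n$, the proof of Proposition~\ref{prop:remainder-parity-sparse} gives $c\in\mathcal C_g(n)$ if and only if $g\mid c$. The polynomial $g$ is irreducible over $\F_q$ and has $\alpha$ as a root, so it is the minimal polynomial of $\alpha$ (up to the monic normalization). Hence $g\mid c$ if and only if $c(\alpha)=0$ in $L$. This proves \eqref{eq:irreducible-kernel}.

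\textbf{Parity-check matrix.} Fix an $\F_q$-basis $\mathcal B$ of $L$. Taking coordinates $[\cdot]_{\mathcal B}$ is an $\F_q$-linear isomorphism $L\to\F_q^r$. Therefore $\sum_jc_j\alpha^j=0$ if and only if $H_{\alpha,\mathcal B}c^T=0$, and the kernel of $H_{\alpha,\mathcal B}$ is exactly $\mathcal C_g(n)$. Rank is automatic: $n>r$ and $1,\alpha,\ldots,\alpha^{r-1}$ is an $\F_q$-basis of $L$, because $[L:\F_q]=r=\deg g$. So the first $r$ columns are independent, and $H_{\alpha,\mathcal B}$ is a full-rank $r\times n$ parity-check matrix. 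It also differs from $H_g^{\mathrm{rem}}$ only by the invertible change of basis from the power basis to $\mathcal B$.

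\textbf{Trace dual.} Let $D$ be the displayed trace code. Take $\gamma\in L$ and $c\in\mathcal C_g(n)$. By $\F_q$-linearity of the trace,
\[
 \sum_jc_j\Tr(\gamma\alpha^j)=\Tr\Bigl(\gamma\sum_jc_j\alpha^j\Bigr)=0,
\]
so $D\subseteq\mathcal C_g(n)^\perp$. For equality it suffices to compare dimensions, since $\dim\mathcal C_g(n)^\perp=r$. Consider the $\F_q$-linear map $\gamma\mapsto(\Tr(\gamma\alpha^j))_j$. Suppose $\gamma$ is in its kernel. Then $\Tr(\gamma\alpha^j)=0$ for $0\le j<r$, so by linearity $\Tr(\gamma\beta)=0$ for all $\beta\in L$, because these powers span $L$. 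Nondegeneracy of the trace form for the separable extension $L/\F_q$ then forces $\gamma=0$. The map is therefore injective, $\dim D=r$, and $D=\mathcal C_g(n)^\perp$.

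The only step needing any care is this last one: the dimension count relies on nondegeneracy of the trace pairing, which holds because finite field extensions are separable. The remaining steps are routine translations of Proposition~\ref{prop:remainder-parity-sparse}.
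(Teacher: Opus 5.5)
Your proposal is correct and takes the same approach as the paper: the minimal-polynomial criterion gives the kernel, coordinates in $\mathcal B$ give the parity-check matrix, and the trace dual follows from orthogonality plus the dimension count from the nondegenerate trace pairing and the power basis. Your explicit rank check and the injectivity argument for $\gamma\mapsto(\Tr(\gamma\alpha^j))_j$ spell out what the paper's proof states in a single sentence.
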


\begin{proof}
Since $g$ is the minimal polynomial of $\alpha$, it divides a polynomial
$c(x)$ exactly when $c(\alpha)=0$, proving~\eqref{eq:irreducible-kernel}.
Writing this equation in a basis gives~\eqref{eq:power-orbit-matrix}.
Every word in~\eqref{eq:trace-dual} is orthogonal to the kernel in
\eqref{eq:irreducible-kernel}.  The trace pairing is nondegenerate, and the
first $r$ powers of $\alpha$ form a basis, so both spaces have dimension
$r$ and are equal; compare Delsarte's trace description~\cite{Delsarte}.
\end{proof}

\begin{theorem}[Frobenius--Schur criterion]
\label{thm:Frobenius-Schur}
Let $g\in\F_q[x]$ be irreducible of degree $r\ge2$.  Then
\begin{equation}
 \mathcal C_g(n)\text{ is MDS}
 \Longleftrightarrow
 s_\kappa(\alpha,\alpha^q,\ldots,\alpha^{q^{r-1}})\ne0
 \quad\text{for every }\kappa\in\Part.
 \label{eq:Frobenius-Schur}
\end{equation}
Moreover, write
\begin{equation}
 \alpha^m=A_{0,m}+A_{1,m}\alpha+\cdots+A_{r-1,m}\alpha^{r-1}.
 \label{eq:power-basis-coordinates}
\end{equation}
For $U=\{u_1<\cdots<u_{r-1}\}$,
\begin{equation}
 \det(A_{i,u_j})_{1\le i,j\le r-1}
 =s_{\kappa(U)}(\alpha,\alpha^q,\ldots,\alpha^{q^{r-1}}).
 \label{eq:recurrence-Schur-bridge}
\end{equation}
The coefficients $A_{i,m}$ are computed from the recurrence
\begin{equation}
 A_{i,m+r}=-a_{r-1}A_{i,m+r-1}-\cdots-a_0A_{i,m},
 \qquad A_{i,m}=\delta_{i,m}\ (0\le m<r).
 \label{eq:Aim-recurrence}
\end{equation}
\end{theorem}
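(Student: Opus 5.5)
The plan is to read the statement off Theorem~\ref{thm:rectangular-Schur-Pluecker} and Theorem~\ref{thm:reduced-Schur}. The only extra input is the explicit shape of the root alphabet of an irreducible polynomial. First I identify the coordinates in \eqref{eq:power-basis-coordinates} with the remainder columns. Since $g$ is the minimal polynomial of $\alpha$, the map $K[x]/(g)\to L$, $\bar x\mapsto\alpha$, is an $\F_q$-isomorphism. It carries the power basis $1,x,\ldots,x^{r-1}$ to $1,\alpha,\ldots,\alpha^{r-1}$. Hence $A_{i,m}=b_{i,m}$ in the notation of \eqref{eq:remainder-vector}, so $(A_{0,m},\ldots,A_{r-1,m})^T=Q_m$. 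The recurrence \eqref{eq:Aim-recurrence} with its initial conditions is then exactly \eqref{eq:remainder-recurrence} together with $Q_i=e_i$ for $0\le i<r$. Equivalently, it follows by multiplying $g(\alpha)=0$ by $\alpha^m$ and using the uniqueness of coordinates in the basis $1,\alpha,\ldots,\alpha^{r-1}$.

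For the bridge identity \eqref{eq:recurrence-Schur-bridge}, I expand $\delta_U(g)=\det(Q_0,Q_{u_1},\ldots,Q_{u_{r-1}})$ along its first column. That column is $Q_0=e_0$, so the only surviving term is the $(0,0)$ cofactor, with sign $+1$. This cofactor is precisely $\det(A_{i,u_j})_{1\le i,j\le r-1}$. Theorem~\ref{thm:reduced-Schur} gives $\delta_U(g)=s_{\kappa(U)}(\Lambda_g)$. The root multiset of an irreducible $g$ over $\F_q$ is the Frobenius orbit \eqref{eq:Frobenius-alphabet}: the $r$ conjugates $\alpha^{q^k}$ are pairwise distinct, since $\alpha$ has degree $r$, and each is a root. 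The squarefree case of Theorem~\ref{thm:reduced-Schur} therefore already suffices, with the evaluation matrix built from the distinct conjugates.

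For the MDS criterion \eqref{eq:Frobenius-Schur}, I apply \eqref{eq:Schur-MDS-test}. It remains only to note that the condition $a_0\ne0$ holds automatically. An irreducible polynomial of degree $r\ge2$ is not divisible by $x$, so $a_0\ne0$. Thus MDS is equivalent to $\Schur{\kappa}(g)=s_\kappa(\alpha,\alpha^q,\ldots,\alpha^{q^{r-1}})\ne0$ for all $\kappa\in\Part$, using \eqref{eq:Schur-evaluation}.

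I expect no real obstacle; every step reduces to earlier results. The points needing care are bookkeeping. The sign in the cofactor expansion must come out as $+1$. The row index $i$ in \eqref{eq:recurrence-Schur-bridge} must run over $1,\ldots,r-1$, which is exactly the minor left after deleting row $0$. Finally, the bijection of Lemma~\ref{lem:index-partition} ensures that as $U$ ranges over all reduced sets, the $\kappa(U)$ exhaust $\Part$.
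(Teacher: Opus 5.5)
Your proposal is correct and follows essentially the same route as the paper. The paper likewise notes that $a_0\ne0$ for irreducible $g$ of degree at least two, substitutes the Frobenius alphabet into the rectangular Schur--Pl\"ucker theorem, expands the reduced minor along the column $Q_0=e_0$, and reads the recurrence off $g(\alpha)=0$. You simply make explicit the bookkeeping the paper leaves implicit: the identification $A_{i,m}=b_{i,m}$, the $+1$ cofactor sign, and the distinctness of the Frobenius conjugates.
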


\begin{proof}
An irreducible polynomial of degree at least two has $a_0\ne0$.  Substitute
the Frobenius alphabet~\eqref{eq:Frobenius-alphabet} into
Theorem~\ref{thm:rectangular-Schur-Pluecker}.  In the power basis
$1,\alpha,\ldots,\alpha^{r-1}$, the reduced determinant containing the
column of $1$ expands along that column to the determinant on the left of
\eqref{eq:recurrence-Schur-bridge}.  The recurrence is multiplication by
$\alpha$ followed by the relation $g(\alpha)=0$.
\end{proof}

The recurrence determinant is precisely the power-basis realization of the
Frobenius--Schur coordinates.

\begin{theorem}[Large-field existence on the irreducible stratum]
\label{thm:irreducible-existence}
Fix $n>r\ge2$, and let $\mu$ be the M\"obius function.  There is an
irreducible monic polynomial $g\in\F_q[x]$ of degree $r$ for which
$\mathcal C_g(n)$ is MDS whenever
\begin{equation}
 \sum_{d\mid r}\mu(d)q^{r/d}
 >(n-1)\binom{n-1}{r-1}\frac{q^r-1}{q-1}.
 \label{eq:irreducible-exact-bound}
\end{equation}
The simpler sufficient condition
\begin{equation}
 q>(r-1)+r(n-1)\binom{n-1}{r-1}
 \label{eq:irreducible-simple-bound}
\end{equation}
also implies existence.
\end{theorem}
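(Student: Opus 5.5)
The plan is to count roots in $L=\F_{q^r}$ rather than polynomials, and to reduce each Schur test to the nonvanishing of a one-variable polynomial in $\alpha$.

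First, the elements $\alpha\in L$ of exact degree $r$ over $\F_q$ number $\sum_{d\mid r}\mu(d)q^{r/d}$ by M\"obius inversion. Their minimal polynomials are exactly the irreducible monic $g$ of degree $r$. By Theorem~\ref{thm:Frobenius-Schur}, such a $g$ is MDS if and only if
\[
 f_\kappa(\alpha)\ne0\quad(\kappa\in\Part),
 \qquad f_\kappa(X):=s_\kappa\bigl(X,X^q,\ldots,X^{q^{r-1}}\bigr)\in\F_q[X].
\]
Hence it suffices to show that the elements of exact degree $r$ outnumber the roots in $L$ of $\prod_{\kappa}f_\kappa$, counted without multiplicity. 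That number is at most $\sum_\kappa\deg f_\kappa$ as soon as every $f_\kappa$ is a nonzero polynomial.

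Second, bound the degrees. Every monomial $X_1^{b_1}\cdots X_r^{b_r}$ of $s_\kappa$ has $b_i\le\kappa_1\le n-r$, since Schur polynomials expand in semistandard tableaux with at most $\kappa_1$ entries of each value. It is sent to $X^{\sum_i b_iq^{i-1}}$, so
\[
 \deg f_\kappa\le (n-r)\frac{q^r-1}{q-1}\le (n-1)\frac{q^r-1}{q-1}.
\]
By Lemma~\ref{lem:index-partition} there are $\binom{n-1}{r-1}$ partitions in $\Part$, so the total number of bad roots is at most the right side of \eqref{eq:irreducible-exact-bound}.

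Third, and this is the step needing care, show $f_\kappa\ne0$ in characteristic $p$. The substitution $X_i\mapsto X^{q^{i-1}}$ is injective on monomials whose exponents are all $<q$, by uniqueness of base-$q$ expansions. All exponents of $s_\kappa$ are at most $n-r$, so if $q>n-r$ distinct monomials of $s_\kappa$ map to distinct powers of $X$. The monomial $X^\kappa$ occurs with coefficient $1$ (unitriangular Kostka expansion, as in the proof of Theorem~\ref{thm:degree-primitivity}), so it survives and $f_\kappa\neq 0$. I will derive $q>n-1\ge n-r$ from \eqref{eq:irreducible-exact-bound} itself: its left side is at most $q^r$, and its right side is at least $(n-1)q^{r-1}$. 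Combining the three steps, \eqref{eq:irreducible-exact-bound} yields some $\alpha$ of exact degree $r$ avoiding all zeros of the $f_\kappa$, and its minimal polynomial $g$ works.

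Finally, show that \eqref{eq:irreducible-simple-bound} implies \eqref{eq:irreducible-exact-bound}; this is the only fiddly estimate. Write $C=(n-1)\binom{n-1}{r-1}$. On the left, use the standard lower bound
\[
 \sum_{d\mid r}\mu(d)q^{r/d}\;\ge\; q^r-\sum_{j\le r/2}q^j\;\ge\; q^r-\frac{q^{\lfloor r/2\rfloor+1}-1}{q-1}.
\]
On the right, use $\frac{q^r-1}{q-1}\le q^{r-1}+\frac{q^{r-1}}{q-1}$, and absorb the tail terms using $q\ge2$ and $r\ge2$. After dividing by $q^{r-1}$ it will suffice to have roughly $q>C\bigl(1+\tfrac1{q-1}\bigr)+(\text{small})$. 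The hypothesis $q>(r-1)+rC$ gives ample room, because $rC\ge 2C\ge C+C/(q-1)+1$. If this crude absorption fails for very small $r$, I would check $r=2$ and $r=3$ directly, where the M\"obius sum is explicit.
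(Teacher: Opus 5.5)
Your proof is correct, but it turns each Schur condition into a root count by a different mechanism than the paper.

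The paper does not use the Schur polynomials here. It works in the power-orbit model: if $1,\alpha^{u_1},\ldots,\alpha^{u_{r-1}}$ are $\F_q$-dependent, then $\alpha$ is a root of one of the $|\PP^{r-1}(\F_q)|$ sparse polynomials $c_0+c_1X^{u_1}+\cdots+c_{r-1}X^{u_{r-1}}$. Each of these is visibly nonzero of degree at most $n-1$. This gives at most $(n-1)\frac{q^r-1}{q-1}$ bad $\alpha$ per reduced set $U$, with no nonvanishing lemma and no a priori comparison of $q$ with $n$.

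You instead attach a single polynomial $f_\kappa(X)=s_\kappa(X,X^q,\ldots,X^{q^{r-1}})$ to each partition. That costs you your third step, the base-$q$ injectivity of the substitution, which needs $q>n-r$. Your derivation of $q>n-1$ from the hypothesis is fine: the left side is at most $q^r$ and the right side is at least $(n-1)q^{r-1}$. In exchange you get a sharper count. Since every exponent of $s_\kappa$ is at most $\kappa_1$, one has $\deg f_\kappa\le\kappa_1\frac{q^r-1}{q-1}$. Summing over $\Part$ then bounds the bad roots by $(\delta_{n,r}-1)\frac{q^r-1}{q-1}$. So the constant $(n-1)\binom{n-1}{r-1}$ could be replaced by $\frac{(r-1)(n-r)}{r}\binom{n-1}{r-1}$.

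For the final reduction, put $C=(n-1)\binom{n-1}{r-1}$. Your absorption argument does close uniformly, so no separate check of $r=2,3$ is needed. The tail term is below $q/(q-1)\le 2$, and $C\ge 2$, $q\ge 3$ give $2C\ge C+C/(q-1)+1$.

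The paper's version is a one-liner that also explains the exact shape of \eqref{eq:irreducible-simple-bound}. Elements of degree less than $r$ lie in at most $r-1$ proper subfields, each of size at most $q^{r-1}$. Hence the left side of \eqref{eq:irreducible-exact-bound} is at least $q^r-(r-1)q^{r-1}$. Since $\frac{q^r-1}{q-1}\le rq^{r-1}$, the inequality reduces to $q>(r-1)+rC$.
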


\begin{proof}
Let $L=\F_{q^r}$ and let
\[
 \mathcal A_r(q)=\{\alpha\in L:[\F_q(\alpha):\F_q]=r\}.
\]
The exact count is
\begin{equation}
 |\mathcal A_r(q)|=\sum_{d\mid r}\mu(d)q^{r/d}.
 \label{eq:degree-r-elements}
\end{equation}
Fix $U=\{u_1<\cdots<u_{r-1}\}\subseteq\{1,\ldots,n-1\}$.  If
 $1,\alpha^{u_1},\ldots,\alpha^{u_{r-1}}$ are dependent over $\F_q$, then
 $\alpha$ is a root of
 \[
  c_0+c_1X^{u_1}+\cdots+c_{r-1}X^{u_{r-1}}
 \]
 for some projective coefficient vector
 $[c_0:\cdots:c_{r-1}]\in\PP^{r-1}(\F_q)$.  Each such nonzero polynomial
has degree at most $n-1$, so the number of bad $\alpha$ for this fixed $U$
is at most
\begin{equation}
 (n-1)|\PP^{r-1}(\F_q)|
 =(n-1)\frac{q^r-1}{q-1}.
 \label{eq:bad-alpha-fixed-U}
\end{equation}
There are $\binom{n-1}{r-1}$ reduced sets $U$.  Under
\eqref{eq:irreducible-exact-bound}, a degree-$r$ element survives all these
bad sets.  Its minimal polynomial is irreducible, and
Proposition~\ref{prop:trace-model}, or equivalently
Theorem~\ref{thm:Frobenius-Schur}, gives the MDS property.

For the simpler estimate, every element of degree less than $r$ lies in a
proper subfield, so the union bound gives
\[
 |\mathcal A_r(q)|\ge q^r-(r-1)q^{r-1}.
\]
Also $(q^r-1)/(q-1)\le rq^{r-1}$.  Substitution shows that
\eqref{eq:irreducible-simple-bound} implies
\eqref{eq:irreducible-exact-bound}.
\end{proof}

\begin{corollary}[Density one on the irreducible stratum]
\label{cor:irreducible-count}
Let $I_q(n,r)$ be the number of monic irreducible degree-$r$ polynomials
$g$ for which $\mathcal C_g(n)$ is MDS, put
\begin{equation}
 \operatorname{Irr}_q(r)
 =\frac1r\sum_{d\mid r}\mu(d)q^{r/d},
 \qquad
 C_{n,r}=(n-1)\binom{n-1}{r-1}.
 \label{eq:Irr-and-C}
\end{equation}
Then
\begin{equation}
 \operatorname{Irr}_q(r)
 -\frac{C_{n,r}}r\frac{q^r-1}{q-1}
 \le I_q(n,r)\le\operatorname{Irr}_q(r).
 \label{eq:irreducible-count}
\end{equation}
Consequently, for fixed $n>r\ge2$ and $q\to\infty$,
\begin{equation}
 I_q(n,r)=\frac{q^r}{r}+O_{n,r}(q^{r-1}),
 \qquad
 \frac{I_q(n,r)}{\operatorname{Irr}_q(r)}\longrightarrow1.
 \label{eq:irreducible-density-one}
\end{equation}
If $r\ge3$ and $n\ge r+3$, every member counted by $I_q(n,r)$ is non-GRS,
so irreducible MDS non-GRS members likewise have density one.
\end{corollary}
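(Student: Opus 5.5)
The count will come from the proof of Theorem~\ref{thm:irreducible-existence}, organized by roots. Each monic irreducible $g$ of degree $r$ has exactly $r$ roots in $\mathcal A_r(q)$, namely its Frobenius orbit. Conversely, every $\alpha\in\mathcal A_r(q)$ has a unique minimal polynomial of this kind. This gives $|\mathcal A_r(q)|=r\operatorname{Irr}_q(r)$, which is the usual necklace count. The upper bound $I_q(n,r)\le\operatorname{Irr}_q(r)$ is then trivial.

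For the lower bound, call $\alpha\in\mathcal A_r(q)$ \emph{bad} if its minimal polynomial $g$ gives a non-MDS code. By Proposition~\ref{prop:trace-model}, or Theorem~\ref{thm:Frobenius-Schur}, this is a property of $g$, so the bad elements form a union of full Frobenius orbits. If $g$ is non-MDS, Theorem~\ref{thm:rectangular-Schur-Pluecker} (with $a_0\neq0$) shows that some reduced minor $\delta_U(g)$ vanishes. In the power-orbit model this means $1,\alpha^{u_1},\ldots,\alpha^{u_{r-1}}$ are $\F_q$-dependent. The argument leading to \eqref{eq:bad-alpha-fixed-U} then bounds the number of bad $\alpha$ for a fixed $U$ by $(n-1)(q^r-1)/(q-1)$. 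Summing over the $\binom{n-1}{r-1}$ choices of $U$, the number of bad $\alpha$ is at most $C_{n,r}(q^r-1)/(q-1)$.

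Dividing by $r$, the orbit size, bounds the number of non-MDS irreducible $g$ by $\frac{C_{n,r}}{r}\frac{q^r-1}{q-1}$. Subtracting this from $\operatorname{Irr}_q(r)$ gives \eqref{eq:irreducible-count}. The main point to get right is this orbit bookkeeping. The bad set is a union of orbits of size exactly $r$, so we may divide by $r$ rather than merely bounding.

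For the asymptotics, note that $\sum_{d\mid r}\mu(d)q^{r/d}=q^r+O_r(q^{r/2})$. Hence $\operatorname{Irr}_q(r)=q^r/r+O_r(q^{r/2})$. Also $(q^r-1)/(q-1)\le rq^{r-1}$, so the subtracted term is $O_{n,r}(q^{r-1})$. Since $r\ge2$ gives $r/2\le r-1$, together these yield $I_q(n,r)=q^r/r+O_{n,r}(q^{r-1})$, and the ratio $I_q(n,r)/\operatorname{Irr}_q(r)$ tends to $1$.

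For the final assertion, take $r\ge3$ and $n\ge r+3$. Every member counted by $I_q(n,r)$ is irreducible and MDS, so \cite[Corollary~5.5]{LiYuanCyclicOrbits} shows it is non-GRS, as recalled in the introduction. Density one for irreducible MDS non-GRS members then follows directly from the density statement just proved.
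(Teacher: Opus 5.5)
Your proposal is correct and follows the paper's proof essentially step for step. Both arguments count the $r\operatorname{Irr}_q(r)$ elements of degree $r$ and apply the union bound from the proof of Theorem~\ref{thm:irreducible-existence}. Both then use Frobenius stability to divide by the orbit size $r$, derive the asymptotics from the M\"obius formula with $r/2\le r-1$, and invoke \cite[Corollary~5.5]{LiYuanCyclicOrbits} for the non-GRS assertion.
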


\begin{proof}
There are $r\operatorname{Irr}_q(r)$ degree-$r$ elements.  The union bound
in the proof of Theorem~\ref{thm:irreducible-existence} excludes at most
$C_{n,r}(q^r-1)/(q-1)$ of them.  The good set is Frobenius-stable, and each
orbit has size $r$, proving~\eqref{eq:irreducible-count}.  The M\"obius
formula gives $\operatorname{Irr}_q(r)=q^r/r+O_r(q^{r/2})$, while the
subtracted term is $O_{n,r}(q^{r-1})$; this proves
\eqref{eq:irreducible-density-one}.  Under $r\ge3$ and $n\ge r+3$, Li and
Yuan~\cite[Corollary~5.5]{LiYuanCyclicOrbits} show that an MDS polynomial
with an irreducible factor of degree at least three is automatically non-GRS.
\end{proof}

\begin{corollary}[Growing-length density on the irreducible stratum]
\label{cor:growing-length-density}
Fix $r\ge2$, and let $n=n(q)>r$ be an integer-valued function of the prime
power $q$.  If
\[
 n(q)^r=o(q),
\]
then, as $q\to\infty$ through prime powers,
\begin{equation}
 \frac{I_q(n(q),r)}{\operatorname{Irr}_q(r)}
 =1-O_r\!\left(\frac{n(q)^r}{q}\right)\longrightarrow1.
 \label{eq:growing-irreducible-density}
\end{equation}
If $r\ge3$ and $n(q)\ge r+3$ for all sufficiently large $q$, every
irreducible MDS member counted in~\eqref{eq:growing-irreducible-density} is
non-GRS.
\end{corollary}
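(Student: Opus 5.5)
The plan is to feed the growing-length hypothesis directly into the two-sided count~\eqref{eq:irreducible-count} of Corollary~\ref{cor:irreducible-count}. That count is proved for each fixed pair $(n,r)$, and its proof is a plain union bound with no implicit constants depending on $n$. It therefore applies verbatim with $n=n(q)$ for each individual $q$. Dividing by $\operatorname{Irr}_q(r)$ gives
\[
 1-\frac{C_{n(q),r}}{r\,\operatorname{Irr}_q(r)}\cdot\frac{q^r-1}{q-1}
 \le\frac{I_q(n(q),r)}{\operatorname{Irr}_q(r)}\le1.
\]
What remains is to bound the subtracted ratio by $O_r(n(q)^r/q)$, with the implied constant depending on $r$ only.

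First I bound $C_{n,r}=(n-1)\binom{n-1}{r-1}\le (n-1)^r/(r-1)!\le n^r$, which is uniform in $n$ for fixed $r$. Next I use $(q^r-1)/(q-1)\le r q^{r-1}$, already invoked in the proof of Theorem~\ref{thm:irreducible-existence}. For the denominator I take the lower bound from that same proof: the degree-$r$ elements number at least $q^r-(r-1)q^{r-1}$, so
\[
 r\,\operatorname{Irr}_q(r)\ge q^{r-1}\bigl(q-(r-1)\bigr)\ge \tfrac12 q^r
 \qquad\text{for } q\ge 2(r-1).
\]
Combining these estimates, the subtracted term is at most $2r\,n(q)^r/q$ once $q\ge2(r-1)$. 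This gives $1-O_r(n(q)^r/q)$. Since $n(q)^r=o(q)$, the ratio tends to $1$. The finitely many small $q$ do not affect an asymptotic statement.

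For the final assertion, suppose $r\ge3$ and $n(q)\ge r+3$ for all large $q$. Then for each such $q$, every polynomial counted by $I_q(n(q),r)$ is irreducible of degree $r\ge3$ and defines an MDS code of length $n(q)\ge r+3$. By \cite[Corollary~5.5]{LiYuanCyclicOrbits}, as quoted in the proof of Corollary~\ref{cor:irreducible-count}, such a code is non-GRS.

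The only point needing care is uniformity. The $O_{n,r}$ form stated in~\eqref{eq:irreducible-density-one} cannot be used, because its constant depends on $n$. One must instead return to the explicit inequality~\eqref{eq:irreducible-count} and keep every constant dependent on $r$ alone, exactly as above. No other obstacle is expected.
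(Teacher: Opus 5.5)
Your proposal is correct and follows the paper's own proof: divide the explicit two-sided inequality~\eqref{eq:irreducible-count} by $\operatorname{Irr}_q(r)$, use $C_{n,r}=O_r(n^r)$, $(q^r-1)/(q-1)=O_r(q^{r-1})$ and $\operatorname{Irr}_q(r)\asymp_r q^r$, and cite \cite[Corollary~5.5]{LiYuanCyclicOrbits} for the non-GRS assertion. Your explicit constants, giving the bound $2r\,n(q)^r/q$ for $q\ge2(r-1)$, simply make the paper's $O_r$ and $\asymp_r$ steps concrete. You are also right that the $O_{n,r}$ form cannot be used here and that one must return to the explicit inequality.
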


\begin{proof}
For fixed $r$,
\[
 C_{n,r}=(n-1)\binom{n-1}{r-1}=O_r(n^r).
\]
Divide~\eqref{eq:irreducible-count} by
$\operatorname{Irr}_q(r)$ and use
$\operatorname{Irr}_q(r)\asymp_r q^r$ together with
$(q^r-1)/(q-1)=O_r(q^{r-1})$.  The final assertion follows
from~\cite[Corollary~5.5]{LiYuanCyclicOrbits}.
\end{proof}

\subsection{Quadratic Frobenius strata as a coefficient application}
\label{subsec:quadratic-boundaries}

The Schur product tests membership in $\MDSlocus$; conditional on that test,
the rational-normal-curve classification
in~\cite[Theorem~5.3]{LiYuanCyclicOrbits} tests membership in the set of
GRS-type coefficient points over $\F_q$.  We record the resulting coefficient
criteria on two quadratic factorization strata.

\begin{proposition}[The all-quadratic factorization stratum]
\label{prop:all-quadratic-coeff}
Let $r=2v\ge4$, let $n\ge r+3$, and suppose
\[
 g(x)=\prod_{j=0}^{v-1}(x^2-\tau_jx+P_j)\in\F_q[x]
\]
is a product of pairwise distinct monic irreducible quadratics for which
$\mathcal C_g(n)$ is MDS.  Then $\mathcal C_g(n)$ is of GRS type if and only
if the factors can be reordered so that there are
$P\in\F_q^*$ and $\sigma\in\F_q$ satisfying
\begin{enumerate}[label=\textup{(\arabic*)},leftmargin=2.7em]
\item $P_0=\cdots=P_{v-1}=P$;
\item $T^2-\sigma T+1$ is irreducible over $\F_q$, and either of its roots
has multiplicative order at least $n$;
\item $\tau_{j+2}=\sigma\tau_{j+1}-\tau_j$ for $0\le j\le v-3$;
\item $\tau_{v-2}=(\sigma-1)\tau_{v-1}$;
\item
\[
 \tau_0^2+\tau_1^2-\sigma\tau_0\tau_1=(4-\sigma^2)P.
\]
\end{enumerate}
For $v=2$, condition \textup{(3)} is empty and condition \textup{(4)}
reads $\tau_0=(\sigma-1)\tau_1$.  In particular, unequal constant terms
force an MDS member of this stratum to be non-GRS.
\end{proposition}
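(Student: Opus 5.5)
The plan is to take the rational-normal-curve classification \cite[Theorem~5.3]{LiYuanCyclicOrbits} as the only geometric input, using it in the following form. For $r\ge3$, $n\ge r+3$ and $\mathcal C_g(n)$ MDS, the code is of GRS type if and only if the companion class $[T_g]\in\operatorname{PGL}_r(\F_q)$ preserves a split rational normal curve through $[Q_0],\dots,[Q_{n-1}]$. Equivalently, $T_g$ is proportional to $\operatorname{Sym}^{r-1}(M)$ for some $M\in\operatorname{GL}_2(\F_q)$ whose induced M\"obius map has order at least $n$ on $\PP^1$.

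Granting this, I will first rewrite the root multiset $\Lambda_g$. If $M$ has eigenvalues $\mu,\nu$ and $T_g=c\operatorname{Sym}^{r-1}(M)$, then
\[
 \Lambda_g=\{c\mu^{r-1-k}\nu^k:0\le k\le r-1\}.
\]
These are pairwise distinct because the $r=2v$ quadratic factors are distinct and irreducible. None of the roots lies in $\F_q$, which excludes the case where $M$ is split with $\mu/\nu\in\F_q^*$; I expect to exclude the unipotent case the same way. Hence $M$ is elliptic: $\nu=\mu^q$, and $\omega:=\nu/\mu$ satisfies $\omega^{q+1}=1$ with $\omega\notin\F_q$. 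Its multiplicative order is the order of the M\"obius map, which gives condition~(2) with $\sigma=\omega+\omega^{-1}$.

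Next I write the roots symmetrically as $\beta\,\omega^{k-(r-1)/2}$ for a suitable $\beta$, after passing to a square root of $\omega$ if necessary. Frobenius pairs the exponent $e$ with $-e$, so the $v$ conjugate pairs are indexed by the half-integers $e_j=j+\tfrac12$, $0\le j\le v-1$. The product of each pair is the common value $P=\beta^{q+1}$, which gives condition~(1). The trace of the $j$-th pair is $\tau_j=\beta\theta^{2j+1}+\beta^q\theta^{-(2j+1)}$ with $\theta^2=\omega$. The identity $\theta^{m+2}+\theta^{-(m+2)}$-type recurrence, namely multiplication by $\sigma=\theta^2+\theta^{-2}$, gives~(3).

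Condition~(4) should come from the top end of the chain: the recurrence extended one step must wrap back to the mirror pair, and because $r$ is even the boundary closes with a $\sigma-1$ term. Condition~(5) is the norm identity obtained by eliminating $\beta$: the expression $\tau_0^2+\tau_1^2-\sigma\tau_0\tau_1$ reduces to $(2-\sigma)(2+\sigma)\beta^{q+1}$.

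For the converse, given $P,\sigma,\tau_j$ satisfying (1)--(5), I choose $\omega$ as a root of $T^2-\sigma T+1$ and recover $\beta\in\F_{q^2}$ from $\tau_0,\tau_1$. The two linear equations $\tau_0=\beta\theta+\beta^q\theta^{-1}$ and $\tau_1=\beta\theta^3+\beta^q\theta^{-3}$ have nonzero determinant because $\sigma^2\ne4$. Condition~(5) then forces $\beta^{q+1}=P$. I then rebuild $M$ and $c$ over $\F_q$, and the recurrence together with~(4) shows that the roots of $g$ are exactly the Sym-orbit. The order condition in~(2) then delivers GRS type through \cite[Theorem~5.3]{LiYuanCyclicOrbits}. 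The final sentence of the statement follows at once from~(1).

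I expect the main obstacle to be pinning down conditions~(4) and~(5) exactly. The half-integer exponents require choosing $\theta$ with $\theta^2=\omega$, possibly in $\F_{q^4}$, and one must check that the resulting relations are independent of that choice and descend to $\F_q$. The exact form of~(4), and its interaction with the freedom to reorder the factors (reversing the index order or replacing $\omega$ by $\omega^{-1}$), must be verified by direct computation on the two end pairs. I will do that computation in Appendix~A, starting from the case $v=2$.
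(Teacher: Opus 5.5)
Your reduction is the paper's own. With $\gamma:=c\mu^{r-1}\in\F_{q^2}$ and $t:=\omega$, your root set is $\gamma,\gamma t,\ldots,\gamma t^{2v-1}$ with $t^{q+1}=1$ and $\gamma^q=c\nu^{r-1}=\gamma t^{2v-1}$. That is exactly family (N0) of \cite[Theorem~5.3]{LiYuanCyclicOrbits}, which Appendix~\ref{app:all-quadratic-reconstruction} translates into (1)--(5). The translation step, however, fails as you set it up, for two concrete reasons.

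\textbf{The square root $\theta$ introduces a sign error.} Since $\omega^{q+1}=1$, every $\theta$ with $\theta^2=\omega$ satisfies $\theta^q=\epsilon\theta^{-1}$ with $\epsilon=\pm1$. For odd $q$ and $\omega$ a non-square in the cyclic group of $(q+1)$-th roots of unity (for instance a generator, when $q+1\ge n$), every choice gives $\epsilon=-1$. From $\beta=\gamma\theta^{r-1}$ and $r-1$ odd one gets $\beta^q=\epsilon\beta$. Consequently:
\begin{itemize}
\item the conjugate of $\beta\theta^m$ is $\beta\theta^{-m}$, not $\beta^q\theta^{-m}$;
\item the pair product is $\beta^2=\epsilon\beta^{q+1}$;
\item the trace is $\beta(\theta^m+\theta^{-m})$.
\end{itemize}
So your formulas $P=\beta^{q+1}$ and $\tau_j=\beta\theta^{2j+1}+\beta^q\theta^{-(2j+1)}$ are off by a sign when $\epsilon=-1$. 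In the converse this is fatal. Apply Frobenius to your system $\tau_0=x\theta+y\theta^{-1}$, $\tau_1=x\theta^3+y\theta^{-3}$ and use uniqueness: the solution satisfies $y=\epsilon x^q$, so for $\epsilon=-1$ no $\beta$ with $(x,y)=(\beta,\beta^q)$ exists. The fix is to avoid square roots and work, as the paper does, with $\gamma$ and $t$ directly. Then $\tau_j=\gamma t^j+\gamma^q t^{-j}$ and $P_j=\gamma^{q+1}$. In the converse take $\gamma=(\tau_1-\tau_0t^{-1})/(t-t^{-1})$; the identity $\gamma+\gamma^q=\tau_0$ is then automatic because $t^q=t^{-1}$.

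\textbf{Condition (4) sits at the wrong end of your chain.} In your labelling $\tau_j=T_{2j+1}$ with $T_m:=\beta(\theta^m+\theta^{-m})$, index $0$ is the innermost pair. Extending the recurrence upward gives $T_{2v+1}$, which is not the trace of any root pair. Extending downward gives $\tau_{-1}=T_{-1}=T_1=\tau_0$, i.e.\ $\tau_1=(\sigma-1)\tau_0$. So your closure relation is at the bottom, and the stated $\tau_{v-2}=(\sigma-1)\tau_{v-1}$ fails in general. You must reverse the order, which the reordering clause allows, so that $\tau_0$ is the outermost pair $\{\gamma,\gamma t^{2v-1}\}$ and $\tau_{v-1}$ the innermost $\{\gamma t^{v-1},\gamma t^v\}$. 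Then:
\begin{itemize}
\item (4) reads $\tau_v=\tau_{v-1}$, which is equivalent to $\gamma^q=\gamma t^{2v-1}$;
\item (3) is unchanged by the reversal;
\item (5) holds for every consecutive pair, because $x^2+y^2-\sigma xy$ is invariant under $(x,y)\mapsto(y,\sigma y-x)$ and equals $(4-\sigma^2)$ times the common pair product.
\end{itemize}
With these two corrections your argument coincides with the paper's appendix computation.
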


\begin{proof}
By~\cite[Theorem~5.3, family~\textup{(N0)}]{LiYuanCyclicOrbits}, the only
possible GRS family with this factorization pattern is \textup{(N0)}.
Appendix~\ref{app:all-quadratic-reconstruction}
translates its parametrization, after reordering the Frobenius-conjugate
pairs, exactly into conditions \textup{(1)}--\textup{(5)}.
\end{proof}

\begin{proposition}[The one-linear-plus-quadratic factorization stratum]
\label{prop:mixed-quadratic-coeff}
Let $r=2v+1\ge3$, let $n\ge r+3$, and suppose
\[
 g(x)=(x-\beta)\prod_{j=1}^{v}(x^2-s_jx+P_j)\in\F_q[x],
 \qquad \beta\in\F_q^*,
\]
where the quadratic factors are pairwise distinct and irreducible and
$\mathcal C_g(n)$ is MDS.  Define
\begin{equation}
 \mathcal D_0(\sigma)=2,\qquad \mathcal D_1(\sigma)=\sigma,\qquad
 \mathcal D_{j+1}(\sigma)=\sigma\mathcal D_j(\sigma)
 -\mathcal D_{j-1}(\sigma).
 \label{eq:Dickson-recurrence}
\end{equation}
Then $\mathcal C_g(n)$ is of GRS type if and only if the quadratic factors
can be reordered so that, for some $\sigma\in\F_q$,
\begin{equation}
 P_j=\beta^2,\qquad s_j=\beta\mathcal D_j(\sigma)
 \quad(1\le j\le v),
 \label{eq:mixed-coefficient-test}
\end{equation}
and $T^2-\sigma T+1$ is irreducible over $\F_q$ with either root of
multiplicative order at least $n$.  In particular, $P_j\ne\beta^2$ for one
$j$ forces an MDS member of this stratum to be non-GRS.
\end{proposition}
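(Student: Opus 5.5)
The proof will run parallel to that of Proposition~\ref{prop:all-quadratic-coeff}.  First invoke the classification \cite[Theorem~5.3]{LiYuanCyclicOrbits} to isolate the unique GRS family compatible with the factorization pattern ``one linear factor plus $v$ irreducible quadratics.''  The hypotheses $r\ge3$, $n\ge r+3$ and MDS are exactly those of the cited theorem.  Then translate that family's parametrization into coefficient conditions.  For this pattern the family is, up to the root dilation of Proposition~\ref{prop:scaling-action}, a cyclic projective orbit in which the roots of $g$ form a geometric progression $\beta\zeta^{j}$ with $\zeta$ of norm one in $\F_{q^2}$.  Concretely, the root multiset should be
\[
\{\beta\}\cup\{\beta\zeta^{\pm j}:1\le j\le v\},
\]
with $\zeta\in\F_{q^2}\setminus\F_q$, $\zeta^{q+1}=1$, and the multiplicative order of $\zeta$ at least $n$.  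The order condition makes the $2v+1$ orbit points distinct and is what the classification requires.

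The coefficient translation is elementary.  Put $\sigma=\zeta+\zeta^{-1}\in\F_q$.  Then $T^2-\sigma T+1$ has roots $\zeta^{\pm1}$, and it is irreducible exactly when $\zeta\notin\F_q$.  The quadratic factor with roots $\beta\zeta^{\pm j}$ is
\[
x^2-\beta(\zeta^j+\zeta^{-j})x+\beta^2 .
\]
Its constant term is therefore $P_j=\beta^2$.  Its trace is $s_j=\beta\mathcal D_j(\sigma)$, because the power sums $\zeta^j+\zeta^{-j}$ satisfy the Dickson recurrence~\eqref{eq:Dickson-recurrence} with initial values $2$ and $\sigma$.  Each such factor is irreducible over $\F_q$ since $\zeta^{j}\notin\F_q$ when $\operatorname{ord}(\zeta)\ge n>2j$.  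The factors are pairwise distinct because the $\zeta^{\pm j}$ are distinct.  Reordering the factors matches the labelling $j=1,\dots,v$.

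Conversely, suppose~\eqref{eq:mixed-coefficient-test} holds with $T^2-\sigma T+1$ irreducible and its root $\zeta$ of order at least $n$.  Then the roots of $g$ are $\beta$ and $\beta\zeta^{\pm j}$, which is the family's form, so the cited theorem gives GRS type.  The final sentence follows at once: if $P_j\ne\beta^2$ for some $j$, no reordering can satisfy~\eqref{eq:mixed-coefficient-test}.

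The main obstacle is matching exactly the cited family's normalization to the form above.  Specifically, I must check that the linear root $\beta$ is the centre of the symmetric progression rather than an endpoint, and that the order condition is stated on $\zeta$ rather than on $\zeta^2$.  I would first read off the orbit description in \cite[Theorem~5.3]{LiYuanCyclicOrbits} and, if it is phrased via a M\"obius transformation of $\PP^1$ of order at least $n$, conjugate that transformation to $x\mapsto\zeta x$ fixing $0,\infty$.  As in Appendix~\ref{app:all-quadratic-reconstruction}, the Frobenius-stable root set must be symmetric under $\zeta\mapsto\zeta^{-1}$.  I would then use the rational root of $g$ to pin down which orbit point is $\beta$.
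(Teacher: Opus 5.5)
Your proposal is correct and follows the paper's proof. The paper cites \cite[Theorem~5.3, family~\textup{(N1)}]{LiYuanCyclicOrbits}, which is exactly the centered nonsplit orbit $\{\beta t^{-v},\ldots,\beta,\ldots,\beta t^{v}\}$ with $t^{q+1}=1$ and $\ord(t)\ge n$, so your normalization worry is settled as you expected. It then translates this family into coefficients via $\mathcal D_j(\sigma)=t^j+t^{-j}$, as you do. In the converse, state explicitly, as the paper does, that irreducibility of $T^2-\sigma T+1$ forces $\zeta^q=\zeta^{-1}$ and $\zeta\ne\zeta^{-1}$ (also in characteristic $2$), so the reconstructed orbit really is the norm-one family.
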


\begin{proof}
By~\cite[Theorem~5.3, family~\textup{(N1)}]{LiYuanCyclicOrbits}, the only
possible GRS family with this factorization pattern is the centered nonsplit
family \textup{(N1)}.
Appendix~\ref{app:mixed-quadratic-reconstruction} translates its
parametrization into~\eqref{eq:mixed-coefficient-test} and the stated
irreducibility and order conditions.
\end{proof}

\begin{remark}[Separation of the MDS and GRS tests]
The hypotheses in Propositions~\ref{prop:all-quadratic-coeff} and
\ref{prop:mixed-quadratic-coeff} deliberately include the MDS condition:
that condition is tested by the rectangular Schur product, whereas the
recurrences test only the subsequent intersection with the GRS-type
coefficient set.
An MDS polynomial having an irreducible factor of degree at least three, or
having a repeated root together with a second distinct root, is already
automatically non-GRS when $r\ge3$ and $n\ge r+3$.
\end{remark}

\section{Length filtrations and sparse-multiple thresholds}
\label{sec:length}

\subsection{Length filtration and first-failure schemes}

The height $r-1$ of the partition rectangle is fixed by the parity-check
dimension, whereas its width $n-r$ grows with the code length.  This gives a
canonical filtration of both the MDS divisor and the sparse-multiple
problem.

\begin{theorem}[Length filtration]
\label{thm:length-filtration}
For fixed $r$ and every $n>r$,
\begin{equation}
 \mathscr D_{n+1,r}
 =\mathscr D_{n,r}
 \prod_{\substack{\ell(\kappa)\le r-1\\
                   \kappa_1=n-r+1}}
 \Schur{\kappa}.
 \label{eq:length-filtration}
\end{equation}
The same formula holds after evaluation at any monic $g$.  In particular,
$\mathscr D_{n,r}$ divides $\mathscr D_{n+1,r}$ over $\Z$.
\end{theorem}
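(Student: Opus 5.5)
The plan is to use the canonical Schur product \eqref{eq:canonical-Schur-product} and compare the two partition rectangles directly. By Theorem~\ref{thm:rectangular-Schur-Pluecker}, $\mathscr D_{n,r}=A_0\prod_{\kappa\in\mathcal P_{n,r}}\Schur{\kappa}$ and $\mathscr D_{n+1,r}=A_0\prod_{\kappa\in\mathcal P_{n+1,r}}\Schur{\kappa}$, where $\mathcal P_{n,r}=\{\kappa:\kappa\subseteq(n-r)^{r-1}\}$ and $\mathcal P_{n+1,r}=\{\kappa:\kappa\subseteq(n-r+1)^{r-1}\}$. The inclusion $\mathcal P_{n,r}\subseteq\mathcal P_{n+1,r}$ is immediate from the definition, since the length bound $r-1$ is unchanged and the width bound only increases. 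The set difference consists exactly of the partitions with $\ell(\kappa)\le r-1$ and $\kappa_1=n-r+1$, because a partition in the larger rectangle fails to lie in the smaller one precisely when $\kappa_1>n-r$. Splitting the product over $\mathcal P_{n+1,r}$ into these two disjoint pieces then gives \eqref{eq:length-filtration} as an identity in $R_r$.

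The case $r=1$ must be checked separately using the conventions \eqref{eq:r-one-conventions}. In that case $\mathcal P_{n,1}=\{\varnothing\}$ for every $n$, and no partition with $\ell\le0$ has $\kappa_1=n$. The outer product is therefore empty, equal to $1$, and both sides are $A_0$. It is also worth cross-checking the bijection side using Lemma~\ref{lem:index-partition}: the new factors correspond to reduced sets $U\subseteq\{1,\ldots,n\}$ with $u_{r-1}=n$, that is, to the minors $\delta_U$ involving the new column $Q_n$. This is consistent with the determinantal form \eqref{eq:determinantal-D}, although it is not needed for the proof.

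For the evaluation statement, I would note that evaluation at a monic $g$ is a ring homomorphism $R_r\to K$. It therefore carries the polynomial identity to the corresponding identity of values, with $\Schur{\kappa}(g)=s_\kappa(\Lambda_g)$ by \eqref{eq:Schur-evaluation}. Divisibility over $\Z$ follows because the extra factor $\prod\Schur{\kappa}$ lies in $R_r=\Z[A_0,\ldots,A_{r-1}]$.

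No step here is a genuine obstacle. The only point requiring care is the bookkeeping of the rectangle difference, including the degenerate case $r=1$ and the convention that $\kappa_1=0$ for $\kappa=\varnothing$.
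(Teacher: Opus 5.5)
Your proposal is correct and follows the same route as the paper: the rectangle grows from $(n-r)^{r-1}$ to $(n-r+1)^{r-1}$, the new partitions are exactly those with $\kappa_1=n-r+1$, and splitting the canonical Schur product \eqref{eq:canonical-Schur-product} gives the formula, with the factor $A_0$ appearing on both sides. Your separate check of $r=1$ and your remark that evaluation at $g$ is a ring homomorphism spell out details the paper leaves implicit.
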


\begin{proof}
The rectangle grows from $(n-r)^{r-1}$ to
$(n-r+1)^{r-1}$.  Its new partitions are exactly those with first part
$n-r+1$.  Apply~\eqref{eq:canonical-Schur-product}; the external factor
$A_0$ occurs on both sides.
\end{proof}

To start the filtration at the width-zero rectangle, we introduce the
following formal endpoint convention.  No coefficient code of length
$n=r$ is being introduced; throughout, the coding-theoretic construction
continues to be used only for $n>r$.  Set
\begin{equation}
 \mathscr D_{r,r}:=A_0,\qquad
 \mathcal M_{r,r}:=D(A_0),\qquad
 \Delta_{r,r}:=\operatorname{div}(A_0),
 \label{eq:length-endpoint-convention}
\end{equation}
and, for every $n\ge r$, define the outer-layer polynomial
\begin{equation}
 \mathscr O_{n,r}:=
 \prod_{\substack{\ell(\kappa)\le r-1\\
                    \kappa_1=n-r+1}}
 \Schur{\kappa}.
 \label{eq:outer-layer-polynomial}
\end{equation}
Let
\begin{equation}
 \mathcal L_{n,r}:=\operatorname{div}(\mathscr O_{n,r})
 \quad\text{on }\A_{\Z}^r.
 \label{eq:outer-layer-divisor}
\end{equation}
For $r=1$, the polynomial is the empty product $1$ and $\mathcal L_{n,1}$ is the
empty effective Cartier divisor.
The endpoint open subscheme satisfies
\[
 \mathcal M_{r,r}=D(A_0)
 \simeq\mathbb G_{m,\Z}\times\A_{\Z}^{r-1},
\]
so its structural morphism is likewise smooth and surjective.

\begin{corollary}[Polynomial and divisor filtrations]
\label{cor:outer-divisor-filtration}
For $n\ge r$ one has
\begin{align}
 \mathscr D_{n+1,r}&=\mathscr D_{n,r}\mathscr O_{n,r},
 \label{eq:outer-polynomial-filtration}\\
 \Delta_{n+1,r}&=\Delta_{n,r}+\mathcal L_{n,r}.
 \label{eq:outer-divisor-filtration}
\end{align}
Thus every new length adds exactly the effective Cartier divisor carried by
one outer Schur layer.
\end{corollary}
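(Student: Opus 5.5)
The plan is to derive both identities from the length-filtration theorem (Theorem~\ref{thm:length-filtration}), after handling the formal endpoint $n=r$ separately.

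\emph{Polynomial identity for $n>r$.} Formula~\eqref{eq:outer-polynomial-filtration} is exactly~\eqref{eq:length-filtration}, once we note that the product there is the definition~\eqref{eq:outer-layer-polynomial} of $\mathscr O_{n,r}$.

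\emph{Polynomial identity at the endpoint $n=r$.} We must check $\mathscr D_{r+1,r}=A_0\,\mathscr O_{r,r}$, using the convention $\mathscr D_{r,r}=A_0$. By~\eqref{eq:canonical-Schur-product}, $\mathscr D_{r+1,r}=A_0\prod_{\kappa\subseteq 1^{r-1}}\Schur{\kappa}$. The rectangle $1^{r-1}$ consists of the empty partition, for which $\Schur{\varnothing}=1$, together with the partitions having $\kappa_1=1$ and $\ell(\kappa)\le r-1$. The latter are exactly the factors of $\mathscr O_{r,r}$, so the identity holds. For $r=1$ both outer products are empty; then $\mathscr D_{n,1}=A_0$ for all $n$ and $\mathscr O_{n,1}=1$, and the identity is trivial.

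\emph{Divisor identity.} Every polynomial involved is nonzero in the domain $R_r$, because each $\Schur{\kappa}$ is nonzero (Theorem~\ref{thm:degree-primitivity}). Principal Cartier divisors of non-zero-divisors add under multiplication, so $\operatorname{div}(fh)=\operatorname{div}(f)+\operatorname{div}(h)$. Applying $\operatorname{div}$ to~\eqref{eq:outer-polynomial-filtration} therefore gives~\eqref{eq:outer-divisor-filtration}. Here we use $\Delta_{n,r}=\operatorname{div}(\mathscr D_{n,r})$ from~\eqref{eq:MDS-Cartier-divisor}, and at $n=r$ the endpoint convention~\eqref{eq:length-endpoint-convention}.

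There is no real obstacle in this argument. The only point needing care is the endpoint bookkeeping: checking that the $1^{r-1}$ rectangle splits as the empty partition plus the $\kappa_1=1$ layer, and treating the degenerate case $r=1$ separately.
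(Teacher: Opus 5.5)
Your proposal is correct and follows essentially the same route as the paper. For $n>r$ you cite the length-filtration theorem, at $n=r$ you use the endpoint convention together with the splitting of the $1^{r-1}$ rectangle into the empty partition and the $\kappa_1=1$ layer, and you obtain the divisor identity from additivity of $\operatorname{div}$ on nonzero elements of $R_r$.
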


\begin{proof}
For $n>r$, the first identity is
Theorem~\ref{thm:length-filtration}.  For $n=r$, the width-zero rectangle
contains only the empty partition, so the same identity follows from
\eqref{eq:length-endpoint-convention} and the Schur product for length
$r+1$.  The divisor identity follows from
$\operatorname{div}(fg)=\operatorname{div}(f)+\operatorname{div}(g)$.
\end{proof}

\begin{corollary}[Enumerative data of the outer layer]
\label{cor:outer-layer-data}
For $n\ge r$ and $r\ge2$, the passage from length $n$ to length $n+1$ adds
\begin{equation}
 \binom{n-1}{r-2}
 \label{eq:outer-layer-count}
\end{equation}
Schur factors, counted with their partition labels.  Moreover,
\begin{align}
 \deg\frac{\mathscr D_{n+1,r}}{\mathscr D_{n,r}}
 &=(n-r+1)\binom{n-1}{r-2},
 \label{eq:outer-layer-degree}\\
 \rwt\!\left(\frac{\mathscr D_{n+1,r}}{\mathscr D_{n,r}}\right)
 &=\binom r2\binom{n-1}{r-1}.
 \label{eq:outer-layer-weight}
\end{align}
The size distribution of the complete rectangle is encoded by
\begin{equation}
 \sum_{\kappa\subseteq(n-r)^{r-1}}z^{|\kappa|}
 =\genfrac{[}{]}{0pt}{}{n-1}{r-1}_z.
 \label{eq:Gaussian-rectangle}
\end{equation}
Here
\begin{equation}
 \genfrac{[}{]}{0pt}{}{a}{b}_z
 :=\prod_{j=1}^{b}\frac{1-z^{a-b+j}}{1-z^j}
 \qquad(0\le b\le a)
 \label{eq:Gaussian-definition}
\end{equation}
denotes the Gaussian binomial coefficient; see
\cite[Chapter~I, \S1]{Macdonald}.
Differentiating~\eqref{eq:Gaussian-rectangle} at $z=1$ recovers the first
moment $\sum_\kappa|\kappa|=W_{n,r}$.
For $r=1$ there is no new factor and the quotient is $1$.
\end{corollary}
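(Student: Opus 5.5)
The plan is to read off all four assertions of Corollary~\ref{cor:outer-layer-data} from the outer-layer description in Theorem~\ref{thm:length-filtration} and Corollary~\ref{cor:outer-divisor-filtration}. There the quotient $\mathscr D_{n+1,r}/\mathscr D_{n,r}$ is the outer-layer polynomial $\mathscr O_{n,r}$, a product of $\Schur{\kappa}$ over partitions with $\ell(\kappa)\le r-1$ and $\kappa_1=n-r+1$. Write $a=r-1$ and $b=n-r$.

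\emph{Counting.} A partition with first part exactly $b+1$ and length at most $a$ is determined by its remaining $a-1$ parts, which form an arbitrary partition in the rectangle $(b+1)^{a-1}$. By Lemma~\ref{lem:index-partition} there are $\binom{(b+1)+(a-1)}{a-1}=\binom{n-1}{r-2}$ of these. This gives \eqref{eq:outer-layer-count}, and it also covers $n=r$, where the single new factor is $\Schur{(1)}$.

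\emph{Degree.} Lemma~\ref{lem:Schur-degree} gives $\deg\Schur{\kappa}=\kappa_1=n-r+1$ for each new factor. Degrees add in the domain $R_r$, so multiplying by the count gives \eqref{eq:outer-layer-degree}.

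\emph{Root weight.} This is the step needing a small computation. Lemma~\ref{lem:Schur-degree} shows the weight of the quotient is $\sum|\kappa|$ over the outer layer. Differencing \eqref{eq:sum-sizes} between lengths $n+1$ and $n$ (using the endpoint convention $W_{r,r}=0$) gives
\[
 W_{n+1,r}-W_{n,r}=\frac{r-1}{2}\Bigl[(n-r+1)\binom{n}{r-1}-(n-r)\binom{n-1}{r-1}\Bigr].
\]
Substitute $\binom{n}{r-1}=\frac{n}{n-r+1}\binom{n-1}{r-1}$. The bracket then collapses to $r\binom{n-1}{r-1}$, which yields $\binom r2\binom{n-1}{r-1}$, as in \eqref{eq:outer-layer-weight}.

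\emph{Gaussian binomial.} For \eqref{eq:Gaussian-rectangle} I would use the standard fact that partitions in an $a\times b$ box have size generating function $\genfrac{[}{]}{0pt}{}{a+b}{a}_z$ (Macdonald, Chapter~I, \S1). This follows by induction on $a+b$ from the $q$-Pascal recurrence, splitting according to whether $\kappa_1=b$.

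Differentiating at $z=1$ recovers $W_{n,r}$. The cleanest route is the complement symmetry $z^{ab}P(1/z)=P(z)$, which gives $P'(1)=\tfrac{ab}{2}P(1)$ and so matches the proof of Theorem~\ref{thm:degree-primitivity}.

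For $r=1$, the convention \eqref{eq:r-one-conventions} makes the outer layer empty, so the quotient is $1$.
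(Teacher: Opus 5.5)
Your proposal is correct and follows the paper's proof step for step. You count the new factors by fixing the first part $n-r+1$, multiply that count by the common degree $n-r+1$, obtain the root weight as $W_{n+1,r}-W_{n,r}$, and cite the standard box generating function for the Gaussian binomial; your complement-symmetry derivation of $P'(1)=\tfrac{ab}{2}P(1)$ just makes explicit a step the paper leaves implicit. One side remark is wrong: at $n=r$ the new layer is not the single factor $\Schur{(1)}$ but the $r-1$ column factors $\Schur{(1^j)}$, $1\le j\le r-1$, as your own uniform count $\binom{r-1}{r-2}=r-1$ shows, so the argument itself is unaffected.
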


\begin{proof}
After fixing the first part $n-r+1$, the remaining at most $r-2$ parts
form a partition in an $(r-2)\times(n-r+1)$ rectangle, giving
\eqref{eq:outer-layer-count}.  Every new factor has ordinary degree
$n-r+1$, proving~\eqref{eq:outer-layer-degree}.  Subtracting
$W_{n,r}$ from $W_{n+1,r}$ and simplifying gives
\eqref{eq:outer-layer-weight}.  Formula~\eqref{eq:Gaussian-rectangle} is
the standard Gaussian-binomial generating function for partitions in a
rectangle.
\end{proof}

\begin{corollary}[Flatness and constant degree of an outer layer]
\label{cor:outer-layer-flatness}
For every $n\ge r\ge2$, the structural morphism
\[
 \mathcal L_{n,r}\longrightarrow\Spec\Z
\]
is flat.  Over every field $k$, its fiber is the nonzero effective Cartier
hypersurface
\[
 V\bigl((\mathscr O_{n,r})_k\bigr)\subseteq\A_k^r
\]
of exact total degree
\begin{equation}
 (n-r+1)\binom{n-1}{r-2}.
 \label{eq:outer-layer-fiber-degree}
\end{equation}
For $r=1$, one has $\mathscr O_{n,1}=1$ and $\mathcal L_{n,1}=\varnothing$, which is
flat over $\Spec\Z$.  Consequently, for $r\ge2$ the effective-divisor
identity
$\Delta_{n+1,r}=\Delta_{n,r}+\mathcal L_{n,r}$ has a nonzero Cartier
increment over $\Z$ and after base change to every field.  This is an
inequality of effective Cartier divisors; it does not by itself assert strict
containment of their supports, or strictness of the corresponding MDS opens,
in every characteristic.
\end{corollary}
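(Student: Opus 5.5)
The plan is to follow the same template as Corollary~\ref{cor:boundary-flatness}, with $\mathscr O_{n,r}$ in place of $\Duniv$. First I will dispose of $r=1$. Then the empty product gives $\mathscr O_{n,1}=1$, so $V(1)=\varnothing$, and the empty scheme is trivially flat over $\Spec\Z$.

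For $r\ge2$, the first step is to check that the index set of \eqref{eq:outer-layer-polynomial} is nonempty. The partition $(n-r+1)$, of length $1\le r-1$, always qualifies, and Corollary~\ref{cor:outer-layer-data} counts exactly $\binom{n-1}{r-2}\ge1$ factors. The second step is the degree control. The proof of Lemma~\ref{lem:Schur-degree} exhibits, in the ordinary top-degree part of each $\Schur{\kappa}$, the monomial $e_{\kappa'}$ with coefficient $1$ in the elementary symmetric variables. Since $e_j=(-1)^jA_{r-j}$, this monomial has coefficient $\pm1$ in the coefficient variables.

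I will also verify that this monomial cannot cancel against the lower-order terms. The terms $e_{\mu'}$ with $\mu\lhd\kappa$ have $\mu_1\le\kappa_1$, hence ordinary degree at most $\kappa_1$. When the degree is equal, the monomial still differs from $e_{\kappa'}$, and the whole expansion has integer coefficients, so nothing cancels modulo $p$. Consequently $\deg(\Schur{\kappa}\bmod p)=\kappa_1=n-r+1$ for every new factor and every prime $p$, and the same holds in characteristic zero. Because the polynomial ring over a field $k$ is a domain, degrees add. Therefore $(\mathscr O_{n,r})_k$ is nonzero of exact total degree $(n-r+1)\binom{n-1}{r-2}$, which is \eqref{eq:outer-layer-fiber-degree}.

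For flatness, the nonvanishing of $\mathscr O_{n,r}$ modulo every prime is precisely the hypothesis of Lemma~\ref{lem:integral-hypersurface-flatness}. So $\mathcal L_{n,r}=V(\mathscr O_{n,r})\to\Spec\Z$ is flat. Over any field $k$, the fiber is $V((\mathscr O_{n,r})_k)$. Since $(\mathscr O_{n,r})_k$ is a nonzero element of a domain, it is a non-zero-divisor, so the fiber is a nonzero effective Cartier hypersurface; it is nonzero because $\deg\ge1$. Combining this with \eqref{eq:outer-divisor-filtration} shows that the increment $\mathcal L_{n,r}$ in $\Delta_{n+1,r}=\Delta_{n,r}+\mathcal L_{n,r}$ is a nonzero effective Cartier divisor over $\Z$ and in every field fiber. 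The closing caveat about supports needs no proof; it is a disclaimer.

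The only delicate point I anticipate is the non-cancellation of the leading monomial modulo $p$ in the second step. I will handle it by reusing the inverse-Kostka unitriangularity argument from Lemma~\ref{lem:Schur-degree} and Corollary~\ref{cor:boundary-flatness}, rather than redoing it.
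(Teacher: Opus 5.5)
Your proposal is correct and follows the paper's proof essentially step for step: the top-degree monomial $e_{\kappa'}$ from the proof of Lemma~\ref{lem:Schur-degree} gives $\deg(\Schur{\kappa}\bmod p)=\kappa_1=n-r+1$ for each outer factor, the factor count from Corollary~\ref{cor:outer-layer-data} gives the exact degree, and Lemma~\ref{lem:integral-hypersurface-flatness} gives flatness. Your explicit check that this monomial cannot cancel modulo $p$ only spells out what the paper cites from Corollary~\ref{cor:boundary-flatness}.
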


\begin{proof}
The top-degree argument in the proof of
Corollary~\ref{cor:boundary-flatness} gives
\[
 \deg(\Schur{\kappa}\bmod p)=\kappa_1
\]
for every prime $p$.  Each factor in $\mathscr O_{n,r}$ therefore remains
nonzero and has exact degree $n-r+1$ in every field fiber.  Combining this
with the factor count in Corollary~\ref{cor:outer-layer-data} proves
\eqref{eq:outer-layer-fiber-degree}.  In particular, every fiber is a
nonzero effective Cartier hypersurface.
Lemma~\ref{lem:integral-hypersurface-flatness}, applied to
$\mathscr O_{n,r}$, proves flatness.  The assertion for $r=1$ is immediate.
\end{proof}

For a monic polynomial $g$ with $a_0\ne0$, define its first vanishing Schur
width by
\begin{equation}
 \tau(g)
 :=\min\{\kappa_1:\ell(\kappa)\le r-1,
                    \ \Schur{\kappa}(g)=0\},
 \label{eq:tau-definition}
\end{equation}
with value $\infty$ when the displayed set is empty.

\begin{lemma}[Invariance under scalar extension]
\label{lem:scalar-extension-thresholds}
Let $L/K$ be a field extension, and let $g\in K[x]$ be monic of degree $r$
with $g(0)\ne0$.  If the subscripts indicate the coefficient field, then
\begin{equation}
 \sigma_{r,K}(g)=\sigma_{r,L}(g),
 \qquad
 \tau_K(g)=\tau_L(g).
 \label{eq:scalar-extension-thresholds}
\end{equation}
\end{lemma}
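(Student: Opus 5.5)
The two equalities in \eqref{eq:scalar-extension-thresholds} are handled separately, and the second is the easy one.

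\emph{The equality $\tau_K(g)=\tau_L(g)$.} For each partition $\kappa$ with $\ell(\kappa)\le r-1$, the value $\Schur{\kappa}(g)$ is obtained by evaluating the integral polynomial $\Schur{\kappa}\in R_r$ at the coefficient vector $(a_0,\ldots,a_{r-1})\in K^r$. Equivalently, it is given by the dual Jacobi--Trudi determinant~\eqref{eq:dual-JT} with $e_j=(-1)^ja_{r-j}$. This element of $K$ is the same whether we regard $g$ as lying in $K[x]$ or in $L[x]$, and $K\to L$ is injective. So the sets of $\kappa$ with $\Schur{\kappa}(g)=0$ coincide, and hence so do their minima of $\kappa_1$, including the value $\infty$.

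\emph{The equality $\sigma_{r,K}(g)=\sigma_{r,L}(g)$.} The inequality $\sigma_{r,L}\le\sigma_{r,K}$ is trivial, since every $K$-multiple is an $L$-multiple. For the converse I would translate the existence of a sparse multiple of degree exactly $d$ into linear algebra over $K$.

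Suppose $f\in L[x]$ is nonzero, $g\mid f$, $\wt(f)\le r$ and $\deg f=d$. Let $S$ be its support, so $\#S\le r$ and $d=\max S$. By the proof of Proposition~\ref{prop:remainder-parity-sparse}, $g\mid f$ is equivalent to $\sum_{i\in S}c_iQ_i=0$. The columns $Q_i$ have entries in $K$, so this is a homogeneous linear system over $K$ in the unknowns $(c_i)_{i\in S}$. It has a solution over $L$ with $c_d\ne0$.

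The remaining task is to produce a $K$-solution with $c_d\ne0$. The solution space over $L$ is $V_K\otimes_K L$, where $V_K$ is the $K$-solution space. If every vector in $V_K$ had $c_d=0$, then so would every vector in $V_K\otimes_K L$, which is a contradiction. So some $v\in V_K$ has $c_d\ne0$. It defines $f_K\in K[x]$ with $g\mid f_K$, support contained in $S$ (so weight at most $r$), and degree exactly $d$. Hence $\sigma_{r,K}\le\sigma_{r,L}$. The case $\infty$ follows from the same argument applied in contrapositive form.

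This last step is the only point needing care: one must guarantee that the $K$-solution keeps the top coefficient nonzero, so that the degree is preserved and not merely the support. The base-change argument on the solution space handles this.
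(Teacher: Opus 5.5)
Your proof is correct and is essentially the paper's argument: the $\tau$ part is identical, and for $\sigma$ both rely on the fact that linear dependence among the $K$-rational remainder columns $Q_i$ is unchanged by scalar extension. The paper phrases this through the sublevel condition $\sigma_r(g)\le d$ (some set of at most $r$ of the columns $Q_0,\ldots,Q_d$ is dependent), so your extra step securing $c_d\ne0$ is harmless but unnecessary: any nonzero $K$-solution supported in $S$ already gives a multiple of degree at most $d$, which is enough to bound the minimum.
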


\begin{proof}
For every partition $\kappa$, the value $\Schur{\kappa}(g)$ belongs to $K$,
and it vanishes in $K$ if and only if its image vanishes in $L$.  This
proves the assertion for $\tau$.

For every integer $d$, the inequality $\sigma_{r,K}(g)\le d$ is equivalent
to the existence of a linearly dependent subset of at most $r$ columns
among
\[
 Q_0,Q_1,\ldots,Q_d.
\]
These columns lie in $K^r$, and the rank of every such column submatrix is
unchanged after scalar extension from $K$ to $L$.  Hence the same integers
$d$ satisfy $\sigma_{r,K}(g)\le d$ and $\sigma_{r,L}(g)\le d$, proving the
equality of the minima.
\end{proof}

\begin{theorem}[First Schur zero and first sparse multiple]
\label{thm:sigma-tau}
For every monic $g$ with $a_0\ne0$,
\begin{equation}
 \boxed{\ \sigma_r(g)=r+\tau(g)-1\ }.
 \label{eq:sigma-tau}
\end{equation}
More precisely,
\begin{equation}
 \mathcal C_g(n)\text{ is MDS}
 \quad\Longleftrightarrow\quad
 r<n\le\sigma_r(g).
 \label{eq:MDS-length-interval}
\end{equation}
Thus the first vanishing Schur width determines, through
\eqref{eq:sigma-tau}, the degree of the first coefficient-weight-$\le r$
multiple.
\end{theorem}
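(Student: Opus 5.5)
The plan is to prove the interval statement \eqref{eq:MDS-length-interval} first, and then read off \eqref{eq:sigma-tau} from it together with the Schur MDS test \eqref{eq:Schur-MDS-test}. Fix $g$ with $a_0\ne0$.

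\emph{Step 1: the interval.} By \eqref{eq:sparse-criterion}, $\mathcal C_g(n)$ fails to be MDS exactly when there is a nonzero multiple $f$ of $g$ with $\deg f<n$ and $\wt(f)\le r$. By the definition \eqref{eq:sigma-definition}, this happens exactly when $\sigma_r(g)\le n-1$. Hence, for $n>r$, the code is MDS if and only if $n\le\sigma_r(g)$. Since $g$ itself has degree $r$ but may have weight $r+1$, one observes that $\sigma_r(g)\ge r$ always holds. Indeed, a nonzero multiple of $g$ has degree at least $r$, and degree exactly $r$ forces $f=cg$. Then $\wt(g)\le r$ means some $a_i=0$, and $\sigma_r(g)=r$. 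This is consistent with the interval being empty in that case, and the equivalence holds with the convention $\sigma_r=\infty$ included.

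\emph{Step 2: translate via the Schur test.} By Theorem~\ref{thm:rectangular-Schur-Pluecker}, with $a_0\ne0$ the code of length $n$ is MDS if and only if $\Schur{\kappa}(g)\ne0$ for all $\kappa\subseteq(n-r)^{r-1}$. This holds if and only if every $\kappa$ with $\ell(\kappa)\le r-1$ and $\Schur{\kappa}(g)=0$ has $\kappa_1>n-r$, that is, if and only if $\tau(g)>n-r$, or $n\le r+\tau(g)-1$. Comparing with Step 1, the two conditions $n\le\sigma_r(g)$ and $n\le r+\tau(g)-1$ agree for every $n>r$.

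\emph{Step 3: conclude \eqref{eq:sigma-tau}.} Two quantities $x,y\in\{r,r+1,\dots\}\cup\{\infty\}$ with the same set of $n>r$ satisfying $n\le x$ must be equal, and this covers the infinite case as well. For this one needs $r+\tau(g)-1\ge r$, i.e.\ $\tau(g)\ge1$. This holds because the only $\kappa$ with $\kappa_1=0$ is the empty partition, and $\Schur{\varnothing}=1\ne0$.

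\emph{Main obstacle.} The delicate point is this boundary matching. One must check that $\tau(g)=1$ corresponds to $\sigma_r(g)=r$. The only partitions with $\kappa_1=1$ are the columns $(1^j)$ with $j\le r-1$, and $\Schur{(1^j)}=e_j=\pm a_{r-j}$. So $\tau(g)=1$ exactly when some $a_i$ with $1\le i\le r-1$ vanishes, which, given $a_0\ne0$, is exactly when $\wt(g)\le r$. This confirms the edge case directly. The case $r=1$ also needs a separate check: there $\tau=\infty$ and $\Part=\{\varnothing\}$. One verifies that $x-c$ with $c\ne0$ has no nonzero multiple of weight $1$, since monomials are not divisible by $x-c$. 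Hence $\sigma_1=\infty$, matching $\tau=\infty$.
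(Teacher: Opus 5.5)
Your proof is correct and follows the paper's argument: both obtain ``$\mathcal C_g(n)$ is MDS $\iff n\le\sigma_r(g)$'' from the sparse-multiple criterion and ``$\iff n-r<\tau(g)$'' from the Schur test, and then match the endpoints of the common initial interval of lengths $n>r$. Your explicit checks that $\sigma_r(g)\ge r$ and $\tau(g)\ge1$ (the latter because $\Schur{\varnothing}=1$) make the endpoint matching rigorous in the empty-interval case $\sigma_r(g)=r$, which the paper leaves implicit.
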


\begin{proof}
Proposition~\ref{prop:remainder-parity-sparse} gives
\[
 \mathcal C_g(n)\text{ MDS}\Longleftrightarrow n\le\sigma_r(g).
\]
The Schur criterion gives the same property exactly when the rectangle of
width $n-r$ contains no vanishing factor, namely when
$n-r<\tau(g)$.  These inequalities describe the same initial interval of
integer lengths, so their endpoints agree and yield~\eqref{eq:sigma-tau}.
\end{proof}

\begin{corollary}[First-failure strata]
\label{cor:first-failure-strata}
Let $r\ge2$.  For every $n\ge r$, define the first-failure subscheme
\begin{equation}
 \mathcal X_{n,r}:=
 V_{\mathcal M_{n,r}}(\mathscr O_{n,r})
 =\mathcal M_{n,r}\times_{\A^r_{\Z}}V(\mathscr O_{n,r}).
 \label{eq:first-failure-scheme}
\end{equation}
Its underlying topological space is
\begin{equation}
 |\mathcal X_{n,r}|
 =|\mathcal M_{n,r}|\setminus|\mathcal M_{n+1,r}|.
 \label{eq:first-failure-underlying-set}
\end{equation}
For every field $K$, on the coefficient space $D(A_0)(K)$ one has
\begin{align}
 \{g:\sigma_r(g)\ge n\}&=\mathcal M_{n,r}(K),
 \label{eq:sigma-superlevel}\\
 \{g:\sigma_r(g)=n\}&=\mathcal X_{n,r}(K).
 \label{eq:sigma-exact-level}
\end{align}
More intrinsically, evaluate $\sigma_r$ at a point of $D(A_0)$ over its
residue field.  Then
\[
 g\longmapsto\sigma_r(g)\in\{r,r+1,\ldots,\infty\}
\]
is Zariski lower semicontinuous: every superlevel set is open.  Each finite
level set is locally closed, and set-theoretically
\begin{equation}
 D(A_0)=\coprod_{n\ge r}\{\sigma_r=n\}
 \ \coprod\ \{\sigma_r=\infty\}.
 \label{eq:sigma-stratification}
\end{equation}
\end{corollary}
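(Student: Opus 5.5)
The plan is to derive everything from the polynomial filtration of Corollary~\ref{cor:outer-divisor-filtration} and the threshold identity of Theorem~\ref{thm:sigma-tau}.

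\emph{Underlying space.} By \eqref{eq:outer-polynomial-filtration}, $\mathscr D_{n+1,r}=\mathscr D_{n,r}\mathscr O_{n,r}$. Hence
\[
\mathcal M_{n+1,r}=D(\mathscr D_{n+1,r})=\mathcal M_{n,r}\cap D(\mathscr O_{n,r}).
\]
For $n=r$ this uses the endpoint convention \eqref{eq:length-endpoint-convention}. Its complement inside $\mathcal M_{n,r}$ is therefore the closed set $V(\mathscr O_{n,r})\cap\mathcal M_{n,r}$, which is $|\mathcal X_{n,r}|$. This proves \eqref{eq:first-failure-underlying-set}.

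\emph{Point counts.} Fix a field $K$ and $g\in D(A_0)(K)$. For $n>r$, Theorem~\ref{thm:sigma-tau} gives $g\in\mathcal M_{n,r}(K)$ if and only if $n\le\sigma_r(g)$, and $\sigma_r(g)\ge r$ always holds. For $n=r$, both sides of \eqref{eq:sigma-superlevel} are all of $D(A_0)(K)$. Indeed, $\sigma_r(g)\ge r$ because a nonzero multiple of $g$ has degree at least $r$, and $\mathcal M_{r,r}=D(A_0)$. This gives \eqref{eq:sigma-superlevel}. Taking differences of consecutive superlevel sets and applying the first step on $K$-points gives \eqref{eq:sigma-exact-level}: a point $g\in\mathcal M_{n,r}(K)$ has $\sigma_r(g)=n$ exactly when $g\notin\mathcal M_{n+1,r}(K)$, that is, exactly when $\mathscr O_{n,r}(g)=0$.

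\emph{Residue-field evaluation and semicontinuity.} At a point $x\in D(A_0)$ with residue field $\kappa(x)$, we evaluate $\sigma_r$ over $\kappa(x)$. Lemma~\ref{lem:scalar-extension-thresholds} shows that this value is insensitive to field extension, so it is well defined. By \eqref{eq:sigma-superlevel} applied over $\kappa(x)$, we have $\sigma_r(x)\ge n$ if and only if $x\in|\mathcal M_{n,r}|$, because membership in a principal open is tested at the residue field. So each superlevel set $\{\sigma_r\ge n\}$ is the open set $\mathcal M_{n,r}$. For $n\le r$ it is all of $D(A_0)$, since $\sigma_r\ge r$. Each finite level set $\{\sigma_r=n\}=\mathcal M_{n,r}\setminus\mathcal M_{n+1,r}$ is closed in an open set, hence locally closed.

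\emph{Disjoint decomposition.} The decomposition \eqref{eq:sigma-stratification} is just the partition of $D(A_0)$ by the value of $\sigma_r$. Here $\{\sigma_r=\infty\}=\bigcap_n\mathcal M_{n,r}$, and the values $\sigma_r<r$ are excluded.

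The only delicate step is the boundary case $n=r$. There Theorem~\ref{thm:sigma-tau} was stated only for $n>r$, so we must check directly that $\{\sigma_r=r\}$ equals $V(\mathscr O_{r,r})\cap D(A_0)$. Now $\mathscr O_{r,r}$ is the product of $\Schur{\kappa}$ over $\kappa_1=1$, $\ell(\kappa)\le r-1$, which means $\kappa=(1^j)$ with $1\le j\le r-1$. Each of these is $\pm A_{r-j}$, so $\mathscr O_{r,r}$ vanishes exactly when some $a_i=0$ with $1\le i\le r-1$. On the other hand, $\sigma_r(g)=r$ means that $g$ itself has weight at most $r$, i.e.\ some coefficient of $g$ vanishes. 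Since $a_0\ne0$ on $D(A_0)$, the two conditions agree, which handles the boundary case.
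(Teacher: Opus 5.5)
Your proposal is correct and follows essentially the paper's own argument: the superlevel identity \eqref{eq:sigma-superlevel} from Theorem~\ref{thm:sigma-tau} together with the endpoint convention, the principal-open identity $\mathcal M_{n+1,r}=D_{\mathcal M_{n,r}}(\mathscr O_{n,r})$ from \eqref{eq:outer-polynomial-filtration}, and differences of consecutive superlevel sets. Your separate direct check of the case $n=r$ is correct but not needed. The superlevel identities at $n=r$ (automatic) and at $n=r+1$ (Theorem~\ref{thm:sigma-tau}), together with the filtration identity at $n=r$, already give the exact-level statement there, as your own ``taking differences'' step shows.
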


\begin{proof}
For $n>r$, Theorem~\ref{thm:sigma-tau} shows that $\sigma_r(g)\ge n$ is
equivalent to the MDS condition at length $n$.  At the formal endpoint
$n=r$, the inequality $\sigma_r(g)\ge r$ is automatic, while
$\mathcal M_{r,r}=D(A_0)$ by~\eqref{eq:length-endpoint-convention}; hence
the same superlevel identity holds by convention.  Moreover,
\[
 \mathcal M_{n+1,r}
 =D_{\mathcal M_{n,r}}(\mathscr O_{n,r})
 \]
by~\eqref{eq:outer-polynomial-filtration}.  Thus $\sigma_r(g)=n$ means
membership in $\mathcal M_{n,r}$ and failure at length $n+1$, which is
exactly~\eqref{eq:sigma-exact-level}.  The same principal-open identity and
\eqref{eq:first-failure-scheme} give
\eqref{eq:first-failure-underlying-set}.  The superlevel sets are therefore
open and the finite level sets locally closed; the disjoint decomposition is
immediate.
\end{proof}

\begin{corollary}[Geometry of the first-failure subschemes]
\label{cor:first-failure-geometry}
Let $N\ge r\ge2$.  The closed immersion
\[
 \mathcal X_{N,r}\hookrightarrow\mathcal M_{N,r}
\]
is an effective Cartier divisor.  Its structural morphism
\[
 \mathcal X_{N,r}\longrightarrow\Spec\Z
\]
 is flat and is a local complete intersection morphism.  Every nonempty field
 fiber is a pure $(r-1)$-dimensional local complete intersection scheme.
\end{corollary}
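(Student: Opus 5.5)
The plan is to work on the open subscheme $\mathcal M_{N,r}=D(\mathscr D_{N,r})\subseteq\A^r_{\Z}$, which is the spectrum of the localization $R:=R_r[\mathscr D_{N,r}^{-1}]$. This ring is a regular domain, flat over $\Z$, and by Corollary~\ref{cor:boundary-flatness} smooth over $\Spec\Z$ of relative dimension $r$ with nonempty geometrically integral fibers. By definition \eqref{eq:first-failure-scheme}, $\mathcal X_{N,r}=\Spec R/(\mathscr O_{N,r})$.

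The first step is the Cartier claim. Since $R$ is a domain, we only need $\mathscr O_{N,r}\neq0$ in $R$. This holds because $\mathscr O_{N,r}$ is a nonzero product of Schur coordinates in the domain $R_r$, each nonzero by Lemma~\ref{lem:Schur-degree}, and localization of a domain is injective. Hence $\mathscr O_{N,r}$ is a non-zero-divisor, and $\mathcal X_{N,r}\hookrightarrow\mathcal M_{N,r}$ is an effective Cartier divisor.

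The second step, flatness over $\Z$, adapts the torsion argument of Lemma~\ref{lem:integral-hypersurface-flatness} to the localized ring. Suppose $p f=\mathscr O_{N,r}h$ in $R$. Reduce modulo $p$: the ring $R/pR=\F_p[A]_{\mathscr D_{N,r}}$ is a domain, and it is nonzero because $\mathscr D_{N,r}\bmod p\neq0$ by Theorem~\ref{thm:degree-primitivity}. Moreover, $\mathscr O_{N,r}\bmod p\neq0$ by Corollary~\ref{cor:outer-layer-flatness}, and it remains nonzero in the localization. Therefore $\bar h=0$, so $h\in pR$. Since $p$ is a non-zero-divisor on $R$, we can cancel $p$ and conclude $f\in \mathscr O_{N,r}R$. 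So $R/(\mathscr O_{N,r})$ is torsion-free, hence flat over the PID $\Z$.

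The third step handles the lci and fiber assertions. Since $\mathcal M_{N,r}\to\Spec\Z$ is smooth, and $\mathcal X_{N,r}$ is cut out by one non-zero-divisor, the structure map is a composite of a regular immersion of codimension one with a smooth morphism. It is therefore a local complete intersection morphism; the required local-complete-intersection property is stable under composition.

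For a field $k$, the fiber is $V(\bar{\mathscr O})$ inside $(\mathcal M_{N,r})_k=D(\bar{\mathscr D}_{N,r})\subseteq\A^r_k$. Here $(\mathcal M_{N,r})_k$ is an integral regular scheme of dimension $r$, and $\bar{\mathscr O}$ is nonzero on it by the same degree argument as in step two. Thus $\bar{\mathscr O}$ is a non-zero-divisor, and the fiber is a hypersurface in a regular integral scheme, hence lci. When it is nonempty, Krull's principal ideal theorem makes every component have codimension exactly one, so the fiber is pure of dimension $r-1$.

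The main point requiring care is that nonvanishing survives the passage to fibers and to localization. Over an arbitrary field $k$, I need $\bar{\mathscr O}_{N,r}$ and $\bar{\mathscr D}_{N,r}$ to be nonzero in $k[A]$, not merely over $\Z$. I take this from the exact-degree statements in Corollaries~\ref{cor:boundary-flatness} and~\ref{cor:outer-layer-flatness}. Nonemptiness of the fiber itself is not needed, since the statement is only made for nonempty fibers. For the case $N=r$, the conventions \eqref{eq:length-endpoint-convention} give $\mathcal M_{r,r}=D(A_0)$, and the same argument applies verbatim.
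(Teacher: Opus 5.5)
Your proof is correct and follows the paper's argument for the Cartier, lci, and fiber assertions: a nonzero element of a domain is a non-zero-divisor, the divisor is a regular immersion into the smooth $\Z$-scheme $\mathcal M_{N,r}$, and Krull's theorem applies on the regular integral fiber $D((\mathscr D_{N,r})_k)$. The one difference is the flatness step: the paper notes that $\mathcal X_{N,r}$ is the base change of the open immersion $\mathcal M_{N,r}\hookrightarrow\A^r_{\Z}$, hence an open subscheme of the flat scheme $\mathcal L_{N,r}$ of Corollary~\ref{cor:outer-layer-flatness}, whereas you rerun the torsion-freeness argument of Lemma~\ref{lem:integral-hypersurface-flatness} in the localized ring $R_r[\mathscr D_{N,r}^{-1}]$, which is equally valid but slightly longer.
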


\begin{proof}
Let
$j:\mathcal M_{N,r}\hookrightarrow\A^r_{\Z}$ be the open immersion.  The
pullback along $j$ of the effective Cartier divisor
$\mathcal L_{N,r}=V(\mathscr O_{N,r})$ is the effective Cartier divisor
$\mathcal X_{N,r}$ on $\mathcal M_{N,r}$.  Moreover,
$\mathcal X_{N,r}\to\mathcal L_{N,r}$ is the base change of $j$, so
$\mathcal X_{N,r}$ is an open subscheme of $\mathcal L_{N,r}$.  Since
$\mathcal L_{N,r}\to\Spec\Z$ is flat by
Corollary~\ref{cor:outer-layer-flatness}, its restriction to
$\mathcal X_{N,r}$ is flat.  As an effective Cartier divisor in the smooth
 $\Z$-scheme $\mathcal M_{N,r}$, its structural morphism is a local complete
 intersection morphism.

For a field $K$, the scheme $\mathcal M_{N,r,K}$ is a smooth geometrically
integral open subscheme of $\A_K^r$ of pure dimension $r$.  If
$\mathcal X_{N,r,K}$ is nonempty, it is an effective Cartier divisor in
this regular scheme, and hence is a pure $(r-1)$-dimensional local complete
intersection.
\end{proof}

\begin{corollary}[Generic structural and sparse thresholds]
\label{cor:generic-structural-sparse}
For every multiplicity type $\mathbf m$ and prime $p$, let
\[
 g_{\mathbf m}^{\mathrm{gen}}(x)
 =\prod_{\nu=1}^s(x-Y_\nu)^{m_\nu}
 \in\F_p(Y_1,\ldots,Y_s)[x].
\]
Then
\begin{equation}
 \boxed{\ \nu_p(\mathbf m)
 =\sigma_r(g_{\mathbf m}^{\mathrm{gen}})+1\ },
 \qquad
 \tau(g_{\mathbf m}^{\mathrm{gen}})=\nu_p(\mathbf m)-r,
 \label{eq:generic-structural-sparse}
\end{equation}
with the convention that these identities preserve $\infty$.
\end{corollary}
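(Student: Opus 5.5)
The plan is to apply Theorem~\ref{thm:sigma-tau} and Corollary~\ref{cor:first-failure-strata} to the single polynomial $g=g_{\mathbf m}^{\mathrm{gen}}$ over the field $K=\F_p(Y_1,\ldots,Y_s)$, and to translate the structural condition $p\in\mathcal B_{n,\mathbf m}$ into the MDS condition for this generic point. First note that $g$ has constant coefficient $(-1)^r\prod_\nu Y_\nu^{m_\nu}\ne0$ in $K$, so both $\sigma_r(g)$ and $\tau(g)$ are defined. Its root alphabet, in an algebraic closure of $K$, is $(Y_1^{[m_1]},\ldots,Y_s^{[m_s]})$. Hence, by~\eqref{eq:Schur-evaluation}, for every partition $\kappa$ with $\ell(\kappa)\le r-1$ we have $\Schur{\kappa}(g)=S_{\kappa,\mathbf m}(Y)$, read as an element of $\F_p[Y_1,\ldots,Y_s]\subseteq K$.

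The key step is the equivalence
\[
 p\in\mathcal B_{n,\mathbf m}
 \Longleftrightarrow
 \mathcal C_g(n)\text{ is not MDS over }K .
\]
By definition, $p\in\mathcal B_{n,\mathbf m}$ means that some $\kappa\in\Part$ has $S_{\kappa,\mathbf m}\equiv0$ in $\F_p[Y]$, which is the same as $\Schur{\kappa}(g)=0$ in $K$. By the Schur MDS test~\eqref{eq:Schur-MDS-test}, using $a_0\ne0$, this says exactly that $\mathcal C_g(n)$ fails to be MDS. Equivalently, in terms of widths, it says that some $\kappa$ with $\kappa_1\le n-r$ has a vanishing Schur coordinate, that is, $\tau(g)\le n-r$.

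Taking the minimum over $n>r$ gives
\[
 \nu_p(\mathbf m)=\min\{n>r:\tau(g)\le n-r\}=r+\tau(g).
\]
This holds whenever $\tau(g)\ge1$, which is true because $\Schur{\varnothing}=1$ never vanishes. If $\tau(g)=\infty$, both sides are $\infty$. This proves the second identity. Substituting $\sigma_r(g)=r+\tau(g)-1$ from~\eqref{eq:sigma-tau} then gives $\nu_p(\mathbf m)=\sigma_r(g)+1$, and both conventions propagate $\infty$ consistently. As a check from the other side, \eqref{eq:MDS-length-interval} says directly that the set of lengths where $g$ fails to be MDS is $\{n>\sigma_r(g)\}$, whose minimum is $\sigma_r(g)+1$.

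The only delicate point is that Theorem~\ref{thm:sigma-tau} is being used over the non-algebraically-closed function field $K$, where Schur values are computed through a splitting field. Lemma~\ref{lem:scalar-extension-thresholds} settles this, since both $\sigma_r$ and $\tau$ are unchanged under scalar extension. The remaining fact needed is that a polynomial in $\F_p[Y]$ vanishes as an element of $K$ exactly when it is the zero polynomial, which is immediate.
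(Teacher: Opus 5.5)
Your proposal is correct and follows essentially the paper's route. You identify $\Schur{\kappa}(g_{\mathbf m}^{\mathrm{gen}})$ with $S_{\kappa,\mathbf m}$ reduced mod $p$, so that failure of the MDS property at length $n$ is exactly $p\in\mathcal B_{n,\mathbf m}$, and then read off the first failing length via Theorem~\ref{thm:sigma-tau}. The only differences are in ordering and in one superfluous citation. You obtain $\nu_p(\mathbf m)=r+\tau$ first and then deduce the $\sigma_r$ identity, whereas the paper obtains $\sigma_r+1$ first from \eqref{eq:MDS-length-interval} and then deduces $\tau$; and your appeal to Lemma~\ref{lem:scalar-extension-thresholds} is harmless but unnecessary, since the Schur MDS test and Theorem~\ref{thm:sigma-tau} already hold over any field, including $\F_p(Y_1,\ldots,Y_s)$.
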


\begin{proof}
By Theorems~\ref{thm:multiplicity-pullback} and
\ref{thm:structural-bad}, the generic polynomial fails the MDS test at
length $n$ exactly when $p\in\mathcal B_{n,\mathbf m}$.  Thus its first
failing length is $\nu_p(\mathbf m)$.  Equation~\eqref{eq:MDS-length-interval}
identifies the same length with
$\sigma_r(g_{\mathbf m}^{\mathrm{gen}})+1$, and
Theorem~\ref{thm:sigma-tau} gives the formula for $\tau$.
\end{proof}

\begin{corollary}[Dilation and reciprocal invariance of the thresholds]
\label{cor:symmetry-thresholds}
For every monic $g\in K[x]$ with nonzero constant coefficient and every
$c\in K^*$,
\begin{align}
 \sigma_r(g_c)&=\sigma_r(g),&
 \tau(g_c)&=\tau(g),
 \label{eq:dilation-thresholds}\\
 \sigma_r(g^\#)&=\sigma_r(g),&
 \tau(g^\#)&=\tau(g).
 \label{eq:reciprocal-thresholds}
\end{align}
Consequently, for $N\ge r\ge2$, the first-failure subscheme
$\mathcal X_{N,r}$ is invariant under the $\Gm\rtimes C_2$-action on
$D(A_0)$.
\end{corollary}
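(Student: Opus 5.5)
We prove the two $\sigma_r$ identities directly from the code equivalences of Remark~\ref{rem:code-equivalences}, transfer them to $\tau$ through Theorem~\ref{thm:sigma-tau}, and then read off the invariance of $\mathcal X_{N,r}$ from its description as a level set of $\sigma_r$.

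\textbf{Step 1: dilation.} Let $f$ be a nonzero multiple of $g$ with $\wt(f)\le r$ and $\deg f=d$. Then $f_c(x):=c^{d}f(x/c)$ is a nonzero multiple of $g_c$: it equals $c^{d-r}\,u(x/c)\,g_c(x)$, and $c^{d-r}u(x/c)$ is a polynomial. It has the same support as $f$, hence the same degree and weight. This gives $\sigma_r(g_c)\le\sigma_r(g)$. The constant coefficient of $g_c$ is $c^ra_0\ne0$, and $(g_c)_{c^{-1}}=g$, so the reverse inequality holds as well.

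\textbf{Step 2: reciprocity.} Take a minimal sparse multiple $f$ of $g$ of degree $d$, and divide out any power of $x$. This is harmless: since $a_0\ne0$, $g$ is coprime to $x$, so $f/x^k$ is still a multiple of $g$ with no larger degree or weight. By minimality of $d$, we may therefore assume $f(0)\ne0$. Then $x^df(1/x)$ is a multiple of $g^\#$ of degree exactly $d$, with weight equal to $\wt(f)$. Hence $\sigma_r(g^\#)\le\sigma_r(g)$. Since $(g^\#)^\#=g$, equality follows.

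The infinite case is covered automatically: if $g$ has no sparse multiple, the transformed polynomial has none either, by the same symmetric argument. Equivalently, one can argue through lengths: the maps $\Theta_c$ and $R_n$ of Remark~\ref{rem:code-equivalences} preserve the MDS property at every length $n$, and \eqref{eq:MDS-length-interval} characterizes $\sigma_r$ as the supremum of the MDS lengths.

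\textbf{Step 3: the thresholds $\tau$.} The $\tau$ equalities now follow at once from $\sigma_r=r+\tau-1$ (Theorem~\ref{thm:sigma-tau}), including the value $\infty$. This requires that $g_c$ and $g^\#$ are monic with nonzero constant term, which we have checked.

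\textbf{Step 4: invariance of $\mathcal X_{N,r}$.} For $N\ge r$, equation \eqref{eq:sigma-exact-level} identifies the $K$-points of $\mathcal X_{N,r}$ with $\{\sigma_r=N\}$. Steps 1 and 2 therefore give invariance at the level of points over every field.

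This is the main obstacle: the statement asks for invariance of the subscheme, not just of its set of points. For that we use Propositions~\ref{prop:scaling-action} and \ref{prop:reciprocal-symmetry} factor by factor.

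\emph{Dilation.} Each $\Schur{\kappa}$ with $\kappa_1=N-r+1$ transforms by a unit multiple $c^{|\kappa|}$. Hence the ideal $(\mathscr O_{N,r})$ is preserved by the $\Gm$-action, and $\mathcal M_{N,r}$ is stable by Proposition~\ref{prop:scaling-action}. To make this scheme-theoretic over $\Z$, we note that the action morphism $\Gm\times D(A_0)\to D(A_0)$ pulls $\mathscr O_{N,r}$ back to a unit times $\mathscr O_{N,r}$.

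\emph{Reciprocity.} The map $\kappa\mapsto\kappa^\vee$ preserves $\kappa_1$ and is an involution on the outer layer $\{\kappa:\kappa_1=N-r+1,\ \ell(\kappa)\le r-1\}$. Therefore \eqref{eq:reciprocal-Schur} gives
\[
 \mathscr O_{N,r}(g^\#)=\text{unit}\cdot\mathscr O_{N,r}(g)
\]
on $D(A_0)$. Combined with \eqref{eq:reciprocal-D}, which shows that $\mathcal M_{N,r}$ is stable, this proves that $\mathcal X_{N,r}$ is stable under the $C_2$ factor, and hence under the whole group $\Gm\rtimes C_2$.
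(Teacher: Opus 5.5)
Your proof is correct. For the scheme-theoretic invariance of $\mathcal X_{N,r}$ you argue exactly as the paper does. Dilation rescales each outer-layer factor by the unit $c^{|\kappa|}$. Reciprocity permutes the outer layer through $\kappa\mapsto\kappa^\vee$, which preserves $\kappa_1$, up to the units $\rho^{-\kappa_1}$.

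Where you differ is in proving the threshold identities. The paper applies the factorwise laws \eqref{eq:scaling-action} and \eqref{eq:reciprocal-Schur} at every length to show that $g$, $g_c$, $g^\#$ have the same MDS length interval, and then reads off $\sigma_r$ and $\tau$ from Theorem~\ref{thm:sigma-tau}. You instead transport sparse multiples directly:
\[
 f\mapsto c^{d}f(x/c),\qquad f\mapsto x^{d}f(1/x).
\]
For the second map you first strip powers of $x$, which is legitimate because $a_0\ne0$ and $d$ is minimal. This is the sparse-multiple form of the code equivalences $\Theta_c$ and $R_n$ in Remark~\ref{rem:code-equivalences}.

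Your route is more elementary. It gives the invariance of $\sigma_r$ without any Schur identity, handles the value $\infty$ through $(g_c)_{c^{-1}}=g$ and $(g^\#)^\#=g$, and uses Theorem~\ref{thm:sigma-tau} only to pass to $\tau$. The paper's route is shorter on the page, because it reuses the Schur transformation laws it needs anyway for the scheme-level statement.
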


\begin{proof}
Propositions~\ref{prop:scaling-action} and
\ref{prop:reciprocal-symmetry}, applied at every length, show that $g$, $g_c$,
and $g^\#$ have the same MDS length interval.  Use
Theorem~\ref{thm:sigma-tau}.  Scheme-theoretic invariance follows because
the scaling action multiplies every outer-layer factor by a unit, while
reciprocity permutes the factors in a fixed outer layer up to units; see
\eqref{eq:scaling-action} and~\eqref{eq:reciprocal-Schur}.
\end{proof}

\begin{corollary}[Scaling quotient of a first-failure layer]
\label{cor:first-failure-scaling-quotient}
Let $N>r\ge2$, and put
\[
 \mathcal X_{N,r}^{(1)}
 :=\mathcal X_{N,r}\cap V(A_{r-1}-1).
\]
Then the product decomposition of
Proposition~\ref{prop:scaling-quotient} restricts to
\begin{equation}
 \mathcal X_{N,r}
 \simeq\Gm\times\mathcal X_{N,r}^{(1)}.
 \label{eq:first-failure-scaling-product}
\end{equation}
Hence, for every finite field $\F_q$,
\begin{equation}
 \#\mathcal X_{N,r}(\F_q)
 =(q-1)\#\mathcal X_{N,r}^{(1)}(\F_q).
 \label{eq:first-failure-count-divisibility}
\end{equation}
If a field fiber of $\mathcal X_{N,r}$ is nonempty, the corresponding
fiber of $\mathcal X_{N,r}^{(1)}$ is pure of dimension $r-2$.
\end{corollary}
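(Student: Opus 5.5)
The plan is to restrict the isomorphism of Proposition~\ref{prop:scaling-quotient} to the closed subscheme $\mathcal X_{N,r}$ of $\MDSlocus$ (with $n=N$). Under that isomorphism, $\mathcal M_{N,r}\simeq\Gm\times\mathcal M_{N,r}^{(1)}$ via $(t,b)\mapsto t\cdot b$, so $A_{r-j}=t^jB_j$. The subscheme $\mathcal X_{N,r}$ is cut out of $\mathcal M_{N,r}$ by the single equation $\mathscr O_{N,r}=0$. By Proposition~\ref{prop:scaling-action}, every factor $\Schur{\kappa}$ is root-weight homogeneous of weight $|\kappa|$, so $\mathscr O_{N,r}$ is homogeneous of some weight $w$. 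Hence
\[
 \mathscr O_{N,r}(t^rB_r,\ldots,t^2B_2,t)=t^{w}\,\mathscr O_{N,r}(B_r,\ldots,B_2,1).
\]
Since $t$ is a unit on $\Gm$, the ideal generated by $\mathscr O_{N,r}$ in the coordinate ring of $\Gm\times\mathcal M_{N,r}^{(1)}$ equals the ideal generated by the pullback of $\mathscr O_{N,r}|_{V(A_{r-1}-1)}$ from the second factor. This is exactly the ideal of $\Gm\times\mathcal X_{N,r}^{(1)}$, where $\mathcal X_{N,r}^{(1)}=V_{\mathcal M_{N,r}^{(1)}}(\mathscr O_{N,r})$.

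The step that needs care is checking that the restriction is scheme-theoretic, not merely set-theoretic. This is precisely the unit-factor observation above, together with the fact that closed subschemes of a product pulled back from one factor correspond to the restricted ideal. Everything is over $\Z$, so the identification commutes with base change to any field. In particular, $\mathcal X_{N,r}\cap V(A_{r-1}-1)$, taken as the fibre product, agrees with $V_{\mathcal M^{(1)}_{N,r}}(\mathscr O_{N,r})$, because $V(A_{r-1}-1)$ is the slice $t=1$.

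The point count \eqref{eq:first-failure-count-divisibility} then follows by taking $\F_q$-points of the product, since $\#\Gm(\F_q)=q-1$.

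For the dimension statement, take a field $K$ with $\mathcal X_{N,r,K}$ nonempty. By Corollary~\ref{cor:first-failure-geometry}, $\mathcal X_{N,r,K}$ is pure of dimension $r-1$. The product decomposition gives $\mathcal X_{N,r,K}\simeq\Gm\times\mathcal X^{(1)}_{N,r,K}$, which forces $\mathcal X^{(1)}_{N,r,K}$ to be nonempty. Each irreducible component $Z$ of $\mathcal X^{(1)}_{N,r,K}$ yields an irreducible component $\mathbb G_{m,K}\times Z$ of the product, of dimension $1+\dim Z$. Every irreducible component of the product arises this way, because $\mathbb G_{m,K}$ is geometrically irreducible. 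Purity of dimension $r-1$ upstairs therefore gives $\dim Z=r-2$ for all $Z$. I do not anticipate any genuine obstacle; the only subtle point is the ideal comparison in the first paragraph.
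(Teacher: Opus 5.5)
Your proposal is correct and follows the paper's proof: root-weight homogeneity of $\mathscr O_{N,r}$ makes its pullback under the isomorphism of Proposition~\ref{prop:scaling-quotient} a unit power $t^{w}$ times its restriction to $A_{r-1}=1$. Hence its zero scheme splits as $\Gm\times\mathcal X_{N,r}^{(1)}$, and the count and purity statements follow from Corollary~\ref{cor:first-failure-geometry}, with your treatment of the ideal comparison and of the components of $\mathbb G_{m,K}\times Z$ spelling out what the paper leaves implicit (the homogeneity itself is more cleanly cited from Lemma~\ref{lem:Schur-degree}, a polynomial statement, than from the pointwise scaling identity).
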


\begin{proof}
The outer-layer polynomial $\mathscr O_{N,r}$ is root-weight homogeneous.
Under Proposition~\ref{prop:scaling-quotient}, its pullback is therefore a
unit power of the first coordinate times its restriction to
$A_{r-1}=1$.  Its zero scheme has the product
decomposition~\eqref{eq:first-failure-scaling-product}.  The count and
dimension statements follow from
Corollary~\ref{cor:first-failure-geometry}.
\end{proof}

\begin{remark}
Theorem~\ref{thm:sigma-tau} identifies four versions of one threshold: the
first vanishing normalized Pl\"ucker coordinate, the first vanishing Schur
factor, the first dependent $r$-subset in the companion orbit, and the first
sparse multiple of $g$ with at most $r$ nonzero coefficients.
\end{remark}

\subsection{Exact and effective thresholds}

Li and Yuan~\cite[Lemma~3.5]{LiYuanCyclicOrbits} prove the MDS implication
for lengths $n\le\ord(t)$.  The next theorem strengthens that result by
determining the exact first-failure and sparse-multiple thresholds, and by
recording the automatic descent $\gamma^N\in K^*$ in the finite-order case.

\begin{theorem}[Geometric-progression threshold]
\label{thm:semisimple-threshold}
Let $K$ be a field, let $g\in K[x]$ be monic of degree $r\ge2$, and let
$L/K$ be a splitting field of $g$.  Suppose that the roots of $g$ in $L$
are pairwise distinct and have the form
\begin{equation}
 \{\gamma,\gamma t,\ldots,\gamma t^{r-1}\},
 \qquad \gamma,t\in L^*.
 \label{eq:progression-roots}
\end{equation}
If $t$ has infinite multiplicative order, then
\begin{equation}
 \sigma_r(g)=\tau(g)=\infty.
 \label{eq:infinite-progression-threshold}
\end{equation}
If $t$ has finite order $N=\ord(t)$, then $\gamma^N\in K^*$ automatically,
necessarily $N\ge r$, and
\begin{equation}
 \sigma_r(g)=N,
 \qquad
 \tau(g)=N-r+1.
 \label{eq:semisimple-threshold}
\end{equation}
\end{theorem}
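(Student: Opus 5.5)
By Theorem~\ref{thm:sigma-tau} it suffices to determine $\sigma_r(g)$; the value of $\tau(g)$ then follows from $\sigma_r(g)=r+\tau(g)-1$, with $\infty$ preserved. By Lemma~\ref{lem:scalar-extension-thresholds} we may compute $\sigma_r$ over $L$. Over $L$ we use the Vandermonde form of Proposition~\ref{prop:root-jet-recalled} with squarefree roots: a set of exponents $I=\{i_0<\dots<i_{r-1}\}$ gives dependent columns exactly when $\det\bigl((\gamma t^{k})^{i_j}\bigr)_{k,j}=0$. Factoring $\gamma^{i_j}$ out of column $j$ reduces this to $\det\bigl((t^{i_j})^{k}\bigr)_{k,j}$, which is a Vandermonde determinant in the nodes $t^{i_0},\dots,t^{i_{r-1}}$. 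It therefore equals $\prod_{j<l}(t^{i_l}-t^{i_j})$, up to the nonzero factor $\gamma^{\sum i_j}$. This determinant vanishes if and only if $t^{i_l}=t^{i_j}$ for some pair $j<l$, that is, if and only if $\ord(t)$ divides some difference $i_l-i_j$.

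If $t$ has infinite order, no minor ever vanishes. Hence $\mathcal C_g(n)$ is MDS for every $n>r$, and \eqref{eq:MDS-length-interval} gives $\sigma_r(g)=\infty$, so also $\tau(g)=\infty$.

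Now suppose $N=\ord(t)<\infty$. The roots $\gamma t^k$ with $0\le k<r$ are distinct, so $t^k\ne1$ for $0<k<r$, and hence $N\ge r$. For descent, we have $\prod_k(x-\gamma t^k)\mid x^N-\gamma^N$, because $(\gamma t^k)^N=\gamma^N$. Equivalently, $g(\zeta)=0$ implies $\zeta^N=\gamma^N$, so $\gamma^N$ is the common $N$-th power of the roots. To show it lies in $K$, take the remainder of $x^N$ modulo $g$, which has coefficients in $K$. This remainder takes the value $\gamma^N$ at all $r$ distinct roots, so it is the polynomial of degree $<r$ interpolating a constant, namely the constant $\gamma^N$ itself. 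Thus $\gamma^N\in K$, and $\gamma^N\ne0$. In particular $g\mid x^N-\gamma^N$ over $K$, and this binomial has weight $2\le r$, so $\sigma_r(g)\le N$.

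For the lower bound, note that any $I\subseteq\{0,\dots,N-1\}$ has all differences in $(0,N)$. Since $N$ divides none of them, every minor is nonzero, and the code of length $N$ is MDS. So $\sigma_r(g)\ge N$ by \eqref{eq:MDS-length-interval}. (The case $N=r$ is compatible with this: the interval of MDS lengths is then empty, and $\sigma_r(g)=r$ is attained by $x^r-\gamma^r$.) Combining the two bounds gives $\sigma_r(g)=N$, and hence $\tau(g)=N-r+1$.

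The main point requiring care is the descent $\gamma^N\in K$, which the interpolation argument settles, together with applying the Vandermonde reduction uniformly in every characteristic; the latter holds because it only uses that the roots are distinct.
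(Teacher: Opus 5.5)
Your proof is correct and follows the paper's argument essentially step for step: the remainder of $x^N$ modulo $g$ interpolates the constant $\gamma^N$, which gives $\gamma^N\in K^*$ and $g\mid x^N-\gamma^N$. The column-scaled Vandermonde reduction over $L$, justified by Lemma~\ref{lem:scalar-extension-thresholds}, handles the infinite-order case and the lower bound, and Theorem~\ref{thm:sigma-tau} then gives $\tau$. The only soft spot is the case $N=r$, where \eqref{eq:MDS-length-interval} does not apply; there the lower bound is simply $\sigma_r(g)\ge\deg g=r$, which is exactly how the paper handles it.
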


\begin{proof}
Suppose first that $t$ has finite order $N$.  Distinctness of the displayed
roots gives $N\ge r$.  Let $R_N\in K[x]$ be the remainder of $x^N$ modulo
$g$, so $\deg R_N<r$.  In the splitting field $L$, for $0\le j<r$ one has
\[
 R_N(\gamma t^j)=(\gamma t^j)^N=\gamma^N.
\]
Thus $R_N-\gamma^N\in L[x]$ has the $r$ distinct roots
$\gamma,\gamma t,\ldots,\gamma t^{r-1}$ and degree less than $r$.
Consequently $R_N=\gamma^N$.  Since $R_N$ has coefficients in $K$, this
also proves $\gamma^N\in K^*$ and
\begin{equation}
 g(x)\mid x^N-\gamma^N\qquad\text{in }K[x].
 \label{eq:automatic-progression-binomial}
\end{equation}

We now treat both possible orders of $t$ simultaneously.
By Lemma~\ref{lem:scalar-extension-thresholds}, scalar extension from $K$ to
the splitting field $L$ does not change either threshold.  We may therefore
compute over $L$.  The root-evaluation parity-check matrix has entries
\[
 (\gamma t^j)^i=\gamma^i(t^i)^j,
 \qquad 0\le j<r,\quad 0\le i<n.
\]
Every maximal minor is a nonzero column-scaling factor times a Vandermonde
determinant in the parameters $t^i$.  Hence all maximal minors are nonzero
exactly while the relevant exponents are distinct modulo the order of $t$.
If $t$ has infinite order, this holds at every length and gives
\eqref{eq:infinite-progression-threshold}.  If $\ord(t)=N$, it holds for
every $n\le N$.  Equation~\eqref{eq:automatic-progression-binomial} gives
$\sigma_r(g)\le N$.  If $N=r$, the reverse inequality follows from
$\deg g=r$, while if $N>r$, MDS at length $N$ gives it
through~\eqref{eq:MDS-length-interval}.
The formula for $\tau$ follows from Theorem~\ref{thm:sigma-tau}.
\end{proof}

\begin{corollary}[Cyclotomic nonemptiness of first-failure strata]
\label{cor:nonempty-first-failure}
Fix $r\ge2$ and $N\ge r$.  If a field $K$ contains a primitive $N$-th root
of unity, then $\mathcal X_{N,r}(K)$ is nonempty.  Consequently, the
$\mathbb Q$-scheme $(\mathcal X_{N,r})_{\mathbb Q}$ is geometrically
nonempty; indeed, it has a $\mathbb Q(\zeta_N)$-rational point.  In
particular,
\begin{equation}
 (\mathcal M_{N+1,r})_{\mathbb Q}
 \subsetneq
 (\mathcal M_{N,r})_{\mathbb Q}.
 \label{eq:strict-length-filtration}
\end{equation}
Thus $\mathcal X_{N,r,\mathbb Q}$ is geometrically nonempty for every finite
threshold $N\ge r$, and the MDS-open filtration is strict over $\mathbb Q$.
Moreover,
$\mathcal X_{N,r}(\F_q)$ is nonempty whenever $N\mid(q-1)$.
\end{corollary}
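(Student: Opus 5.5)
The plan is to build an explicit $K$-point of $\mathcal X_{N,r}$ from the geometric-progression family of Theorem~\ref{thm:semisimple-threshold}. Let $\zeta\in K$ be a primitive $N$-th root of unity and set
\[
 g(x)=\prod_{j=0}^{r-1}(x-\zeta^j)\in K[x].
\]
This is the case $\gamma=1$, $t=\zeta$ of \eqref{eq:progression-roots}. Since $N\ge r$, the roots $1,\zeta,\ldots,\zeta^{r-1}$ are pairwise distinct. The polynomial $g$ is monic of degree $r$, and its constant coefficient is $\pm\zeta^{\binom r2}\ne0$, so $g\in D(A_0)(K)$. Because $\ord(\zeta)=N$, Theorem~\ref{thm:semisimple-threshold} gives $\sigma_r(g)=N$. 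Then \eqref{eq:sigma-exact-level} in Corollary~\ref{cor:first-failure-strata} says exactly that $g\in\mathcal X_{N,r}(K)$. This argument applies to every $N\ge r$, including the formal endpoint $N=r$, where the superlevel identity holds by convention.

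For the $\mathbb Q$-statements, take $K=\mathbb Q(\zeta_N)$. This produces a $\mathbb Q(\zeta_N)$-rational point of $(\mathcal X_{N,r})_{\mathbb Q}$, so the scheme is nonempty, and after passing to an algebraic closure it is geometrically nonempty. For strictness, \eqref{eq:first-failure-underlying-set} identifies $|\mathcal X_{N,r}|$ with $|\mathcal M_{N,r}|\setminus|\mathcal M_{N+1,r}|$. The image of our point in $\A^r_{\mathbb Q}$ is therefore a point of $(\mathcal M_{N,r})_{\mathbb Q}$ that does not lie in $(\mathcal M_{N+1,r})_{\mathbb Q}$. Concretely, $g$ is MDS at length $N$ but not at length $N+1$, by \eqref{eq:MDS-length-interval}. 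The inclusion $\mathcal M_{N+1,r}\subseteq\mathcal M_{N,r}$ comes from the divisibility in Theorem~\ref{thm:length-filtration}, so \eqref{eq:strict-length-filtration} holds.

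For the finite-field statement, note that $\F_q^*$ is cyclic of order $q-1$, so it contains a primitive $N$-th root of unity whenever $N\mid(q-1)$. Applying the construction with $K=\F_q$ then gives the claim. The only point I expect to need care is that the threshold theorem is applied over $K$ itself, with $L=K$ as splitting field, and that it needs $N\ge r$ for the roots to be distinct. That hypothesis is exactly what the statement assumes, so I expect no real obstacle beyond checking these hypotheses.
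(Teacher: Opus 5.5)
Your proposal is correct and follows the paper's proof essentially step for step: the same point $\prod_{j=0}^{r-1}(x-\zeta^j)$, Theorem~\ref{thm:semisimple-threshold} with $\gamma=1$, $t=\zeta$ to get $\sigma_r=N$, Corollary~\ref{cor:first-failure-strata} for membership, and $K=\mathbb Q(\zeta_N)$ resp.\ $K=\F_q$ for the remaining claims. The only difference is in the strictness step over $\mathbb Q$. The paper argues that equal $\mathbb Q$-schemes would have equal base changes to $\mathbb Q(\zeta_N)$, whereas you read strictness off the image point via \eqref{eq:first-failure-underlying-set}; at $N=r$, your phrase ``MDS at length $N$'' should be read through the convention $\mathcal M_{r,r}=D(A_0)$.
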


\begin{proof}
Let $\zeta\in K$ be a primitive $N$-th root of unity and set
\[
 g_{N,r}(x):=\prod_{j=0}^{r-1}(x-\zeta^j).
\]
Since $N\ge r$, its roots are pairwise distinct and its constant coefficient
is nonzero.  Theorem~\ref{thm:semisimple-threshold}, with
$\gamma=1$ and $t=\zeta$, gives $\sigma_r(g_{N,r})=N$.  Thus
$g_{N,r}\in\mathcal X_{N,r}(K)$ by
Corollary~\ref{cor:first-failure-strata}.  Taking
$K=\mathbb Q(\zeta_N)$ produces a rational point of
$(\mathcal X_{N,r})_{\mathbb Q}$ over this cyclotomic extension, so the
$\mathbb Q$-scheme is geometrically nonempty.  If the two $\mathbb Q$-schemes
in~\eqref{eq:strict-length-filtration} were equal, their base changes to
$\mathbb Q(\zeta_N)$ would be equal, contradicting the constructed point in
their difference.  Finally, take $K=\F_q$ when $N\mid(q-1)$.
\end{proof}

\begin{corollary}[Faithful flatness away from the cyclotomic primes]
\label{cor:first-failure-faithfully-flat}
Let $N\ge r\ge2$, and put $S_N=\Spec\Z[1/N]$.  Then
\begin{equation}
 (\mathcal X_{N,r})_{S_N}\longrightarrow S_N
 \label{eq:first-failure-faithfully-flat}
\end{equation}
is a faithfully flat local complete intersection morphism of finite
presentation, and every fiber is pure of dimension $r-1$.  Consequently, if
$K$ is a field of characteristic zero or of characteristic $\ell\nmid N$,
then
\begin{equation}
 (\mathcal M_{N+1,r})_K
 \subsetneq
 (\mathcal M_{N,r})_K
 \label{eq:strict-filtration-away-from-N}
\end{equation}
is a strict open immersion and remains strict after every field extension of
$K$; in particular, it is geometrically strict.
\end{corollary}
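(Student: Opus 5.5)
We need to show that $(\mathcal X_{N,r})_{S_N}\to S_N$ is faithfully flat, lci, of finite presentation, and has pure $(r-1)$-dimensional fibers; the strictness statement will then follow. Flatness comes from Corollary~\ref{cor:first-failure-geometry}, since base change to $S_N$ preserves flatness. The same corollary gives the local complete intersection property, which is also stable under flat base change. Finite presentation is automatic: $\mathcal X_{N,r}$ is a closed subscheme, cut out by one equation, of an open subscheme of $\A^r_{\Z}$, and $\mathcal M_{N,r}=D(\Duniv)$ is principal. So everything reduces to surjectivity onto $S_N$, that is, nonemptiness of every fiber over $\Spec\mathbb Q$ and over $\Spec\F_\ell$ with $\ell\nmid N$. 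Once a fiber is nonempty, Corollary~\ref{cor:first-failure-geometry} says it is pure of dimension $r-1$.

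For the generic fiber we use Corollary~\ref{cor:nonempty-first-failure} directly: $(\mathcal X_{N,r})_{\mathbb Q}$ has a $\mathbb Q(\zeta_N)$-point, so it is nonempty. For $\ell\nmid N$, the multiplicative order of $\ell$ modulo $N$ is some $f\ge1$. Put $q=\ell^f$; then $N\mid q-1$. The last assertion of Corollary~\ref{cor:nonempty-first-failure} gives $\mathcal X_{N,r}(\F_q)\ne\varnothing$. That point lies over the closed point $(\ell)$, so the fiber over $\F_\ell$ is nonempty. This is the one step with real content: the cyclotomic polynomial $\prod_{j<r}(x-\zeta^j)$ needs a primitive $N$-th root of unity to exist in characteristic $\ell$, which is exactly the condition $\ell\nmid N$. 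Theorem~\ref{thm:semisimple-threshold} is characteristic-free, so it applies unchanged. Flat plus surjective gives faithfully flat.

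For strictness, let $K$ have characteristic $0$ or $\ell\nmid N$, and let $K'/K$ be any field extension. Take $K''$ to be $K'(\zeta_N)$ inside an algebraic closure; it contains a primitive $N$-th root of unity because $N$ is invertible in $K$. Corollary~\ref{cor:nonempty-first-failure} produces $g\in\mathcal X_{N,r}(K'')$. By \eqref{eq:first-failure-underlying-set}, equivalently \eqref{eq:sigma-exact-level}, $g$ is a point of $(\mathcal M_{N,r})_{K''}$ not lying in $(\mathcal M_{N+1,r})_{K''}$. Therefore the open immersion $(\mathcal M_{N+1,r})_{K'}\subseteq(\mathcal M_{N,r})_{K'}$ cannot be an equality, because equality would persist after base change to $K''$. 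This holds for every extension $K'$, which is the geometric strictness.
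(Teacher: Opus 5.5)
Your proposal is correct and follows the paper's proof for flatness, the local complete intersection property, surjectivity (including the choice of $e$ with $N\mid \ell^e-1$ and the appeal to Corollary~\ref{cor:nonempty-first-failure}) and purity of the fibers. The only difference is in the strictness step. The paper gets nonemptiness of $\mathcal X_{N,r,K'}$ from the fact that surjectivity is preserved by base change along $\Spec K'\to S_N$. You instead rebuild the cyclotomic point over $K'(\zeta_N)$ and observe that an equality of opens would survive base change to that field; this is equally valid and does not need the faithful flatness you just proved.
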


\begin{proof}
Flatness and the local complete intersection property follow from
Corollary~\ref{cor:first-failure-geometry} after restriction to $S_N$.  Let
$\ell\nmid N$ be a prime.  The class of $\ell$ has finite order in
$(\Z/N\Z)^*$, so $N\mid(\ell^e-1)$ for some $e\ge1$.  Hence
$\F_{\ell^e}$ contains a primitive $N$-th root of unity, and
Corollary~\ref{cor:nonempty-first-failure} gives
\[
 \mathcal X_{N,r}(\F_{\ell^e})\ne\varnothing.
\]
Thus the fiber over $(\ell)$ is geometrically nonempty.  The generic fiber
is geometrically nonempty because it has a $\mathbb Q(\zeta_N)$-rational
point.  Every fiber over $S_N$ is therefore nonempty, so the flat morphism
\eqref{eq:first-failure-faithfully-flat} is surjective and hence faithfully
flat.  Corollary~\ref{cor:first-failure-geometry} also shows that all its
fibers are pure $(r-1)$-dimensional local complete intersection schemes.

For the stated fields $K$, the morphism $\Spec K\to S_N$ is defined.
Faithful flatness is preserved by base change, so $\mathcal X_{N,r,K}$ is
nonempty and remains nonempty after every extension of $K$.  Its underlying
space is the complement of $(\mathcal M_{N+1,r})_K$ in
$(\mathcal M_{N,r})_K$ by~\eqref{eq:first-failure-underlying-set}.  This
proves~\eqref{eq:strict-filtration-away-from-N} and its geometric strictness.
\end{proof}

\begin{remark}[Location of possible empty special fibers]
\label{rem:empty-fibers-divide-N}
Corollary~\ref{cor:first-failure-faithfully-flat} shows that the image of
$\mathcal X_{N,r}\to\Spec\Z$ contains $\Spec\Z[1/N]$.  Consequently, every
possible empty special fiber lies over a prime divisor of $N$.  This does not
assert that every fiber over a prime divisor of $N$ is empty.

The restriction on the characteristic is necessary.  For example, let
$r=2$, $N=4$, and let $K$ have characteristic $2$.  The only new partition
in the outer layer is $(3)$, and in two variables
\[
 \mathscr O_{4,2}=\Schur{(3)}=h_3=e_1^3-2e_1e_2=e_1^3.
\]
On $\mathcal M_{4,2,K}$ the earlier Schur factor
$\Schur{(1)}=e_1$ is already invertible.  Hence
\[
 \mathcal X_{4,2,K}
 =V_{\mathcal M_{4,2,K}}(\mathscr O_{4,2})
 =\varnothing.
\]
\end{remark}

\begin{corollary}[Pure-power sparse threshold]
\label{thm:pure-power-threshold}
Let $r\ge2$, let $\beta\in K^*$, and put $g(x)=(x-\beta)^r$.  Then
\begin{equation}
 \sigma_r(g)=
 \begin{cases}
 \infty,&\operatorname{char}K=0,\\
 p,&\operatorname{char}K=p>0\text{ and }r<p,\\
 r,&\operatorname{char}K=p>0\text{ and }
       \binom rj\equiv0\pmod p
       \text{ for some }1\le j<r,\\
 r+1,&\operatorname{char}K=p>0,\ r\ge p,\text{ and }
       \binom rj\not\equiv0\pmod p
       \text{ for all }1\le j<r.
 \end{cases}
 \label{eq:pure-power-sigma}
\end{equation}
In every case $\tau(g)=\sigma_r(g)-r+1$.
\end{corollary}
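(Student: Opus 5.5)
The plan is to compute $\sigma_r(g)$ directly from the sparse-multiple definition~\eqref{eq:sigma-definition} and the MDS length interval~\eqref{eq:MDS-length-interval}. The formula for $\tau$ then follows from Theorem~\ref{thm:sigma-tau}, which applies because $a_0=(-\beta)^r\ne0$. By Lemma~\ref{lem:scalar-extension-thresholds} and Corollary~\ref{cor:symmetry-thresholds}, dilation by $c=\beta^{-1}$ reduces everything to $\beta=1$, so $g=(x-1)^r$. Always $\sigma_r(g)\ge r$, because a nonzero multiple has degree at least $r$.

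\emph{Lower bounds via Schur values.} On the pure-power stratum one has $\Schur{\kappa}(g)=s_\kappa(1^r)$, as in the proof of Theorem~\ref{thm:pure-power-bad}. In characteristic zero the hook--content formula gives $s_\kappa(1^r)>0$ for every $\kappa$ with $\ell(\kappa)\le r-1$. Hence $\tau(g)=\infty$, and so $\sigma_r(g)=\infty$ by~\eqref{eq:sigma-tau}.

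In characteristic $p$, Theorem~\ref{thm:pure-power-bad} describes the bad primes. If some $\binom rj\equiv0\pmod p$ with $1\le j<r$, then the code already fails at length $r+1$. By~\eqref{eq:MDS-length-interval} this gives $\sigma_r(g)<r+1$, so $\sigma_r(g)=r$.

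Otherwise the code is MDS at length $r+1$, and for $n\ge r+2$ it is MDS exactly when $n\le p$, by Li--Yuan's Lemma~5.2 as used in Theorem~\ref{thm:pure-power-bad}. If $r<p$, the binomial condition holds automatically, and the code is MDS precisely for $r<n\le p$. Thus $\sigma_r(g)=p$. This also covers the boundary case $p=r+1$, where MDS holds at $n=r+1$ but fails at $r+2$.

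\emph{The case $r\ge p$ with no vanishing binomial.} Here the MDS interval is exactly $\{r+1\}$, since every length $n\ge r+2$ exceeds $p$. Hence $\sigma_r(g)=r+1$. As a direct check of the upper bound, $(x-1)^p=x^p-1$ divides $x^{Pp'}$-type binomials. Concretely, with $P$ the least power of $p$ that is at least $r$, one has $(x-1)^r\mid x^P-1$. The hypothesis that all $\binom rj$ vanish mod $p$ fail is Lucas' condition forcing $r=P-1$ or similar structure. I would not rely on this and would simply quote the interval argument.

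\emph{Main obstacle.} The delicate point is keeping the case split consistent. The case $r<p$ and the vanishing-binomial case are disjoint, since $\binom rj$ is a unit mod $p$ when $r<p$. The last two cases together cover $r\ge p$. One must also check that Li--Yuan's criterion is invoked only for $n\ge r+2$, while the length $r+1$ is handled by~\eqref{eq:pure-power-first-boundary}. I expect the main care to go into verifying that the endpoint $n=r+1$ behaves correctly in each case.
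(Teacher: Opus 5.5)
Your proof is correct and is essentially the paper's. Both use $\Schur{\kappa}(g)=\beta^{|\kappa|}s_\kappa(1^r)$ to see that $g$ has the same Schur vanishing pattern as the pure-power stratum, read off the first failing length $\nu_p((r))$ from Theorem~\ref{thm:pure-power-bad}, and convert it to $\sigma_r$ and $\tau$ via~\eqref{eq:MDS-length-interval} and Theorem~\ref{thm:sigma-tau}. Your reduction to $\beta=1$ over the prime field is harmless, and it justifies the direct appeal to Li--Yuan's lemma over arbitrary $K$, though the paper skips it because the vanishing pattern is already independent of $\beta$ and $K$; the Lucas-type aside in the last case is unnecessary and imprecise, so you are right to rely on the interval argument instead.
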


\begin{proof}
In characteristic zero,
$\Schur{\kappa}(g)=s_\kappa(1^r)\beta^{|\kappa|}\ne0$ for every partition
$\kappa$ of length at most $r$.  Hence the code is MDS at every length and
$\sigma_r(g)=\infty$.

If $\operatorname{char}K=p>0$, then for every partition $\kappa$ one has
\[
 \Schur{\kappa}(g)
 =S_{\kappa,(r)}(\beta)
 =s_\kappa(\beta,\ldots,\beta)
 =\beta^{|\kappa|}s_\kappa(1^r).
\]
Since $\beta\ne0$, the concrete point $g=(x-\beta)^r$ has exactly the same
Schur-coordinate vanishing pattern as the pure-power stratum.
Theorem~\ref{thm:pure-power-bad} therefore shows that its first failing
length is $\nu_p((r))$.  Hence~\eqref{eq:MDS-length-interval} gives
\[
 \sigma_r(g)=\nu_p((r))-1.
\]
Substitution of~\eqref{eq:pure-power-nu-p} yields the three
positive-characteristic cases in~\eqref{eq:pure-power-sigma}.  The formula
for $\tau$ follows from Theorem~\ref{thm:sigma-tau}.
\end{proof}

For completeness, if $r=1$ and $g=x-\beta$ with $\beta\ne0$, then
$\sigma_1(g)=\tau(g)=\infty$ in every characteristic.

\begin{corollary}[The GRS part of a first-failure stratum]
\label{cor:GRS-first-failure}
Let $r\ge3$, let $N\ge r+3$, and let $q$ be a prime power of
characteristic $p$.  Define
\begin{equation}
 \mathcal X^{\mathrm{GRS}}_{N,r}(q)
 :=\mathcal X_{N,r}(\F_q)\cap\mathcal G_{N,r}(\F_q)
 =\{g\in\mathcal G_{N,r}(\F_q):\sigma_r(g)=N\}.
 \label{eq:GRS-first-failure-definition}
\end{equation}
Then
\begin{equation}
 \mathcal X^{\mathrm{GRS}}_{N,r}(q)
 =\mathcal G_{N,r}(\F_q)\setminus
  \mathcal G_{N+1,r}(\F_q),
 \label{eq:GRS-first-failure-difference}
\end{equation}
and
\begin{equation}
 \boxed{
 \#\mathcal X^{\mathrm{GRS}}_{N,r}(q)
 =
 \begin{cases}
  q-1,
    &N=p,\\[1mm]
  \dfrac{\varphi(N)}2(q-1),
    &N\mid(q-1)\text{ or }N\mid(q+1),\\[2mm]
  0,&\text{otherwise}.
 \end{cases}}
 \label{eq:GRS-first-failure-count}
\end{equation}
The three cases in~\eqref{eq:GRS-first-failure-count} are mutually
exclusive.  Thus the GRS part of the stratum is nonempty exactly when
$N=p$, $N\mid(q-1)$, or $N\mid(q+1)$.  Moreover,
\begin{equation}
 \#\mathcal X^{\mathrm{GRS}}_{N,r}(q)
 \le \frac{\varphi(N)}2(q-1).
 \label{eq:GRS-first-failure-linear-bound}
\end{equation}
\end{corollary}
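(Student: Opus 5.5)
The plan is to reduce everything to the exact GRS count \eqref{eq:exact-GRS-count} of Li and Yuan, applied at the two consecutive lengths $N$ and $N+1$. Both satisfy $N,N+1\ge r+3$, so \eqref{eq:exact-GRS-count} holds at each of them.

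\emph{Set identity.} A GRS-type code is in particular MDS, so $\mathcal G_{N,r}(\F_q)\subseteq\mathcal M_{N,r}(\F_q)$. The GRS condition at length $N+1$ implies it at length $N$, because puncturing the last column of the parity-check matrix preserves the property that the projective columns are distinct points on a split rational normal curve. Hence $\mathcal G_{N+1,r}(\F_q)\subseteq\mathcal G_{N,r}(\F_q)$. For $g\in\mathcal G_{N,r}(\F_q)$ we have $\sigma_r(g)\ge N$ by Corollary~\ref{cor:first-failure-strata}. So $\sigma_r(g)=N$ exactly when $g$ is not MDS at length $N+1$. To obtain \eqref{eq:GRS-first-failure-difference}, we must show that a $g\in\mathcal G_{N,r}$ which is MDS at length $N+1$ is automatically GRS at length $N+1$. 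This is the step I expect to be the main obstacle. I would use the rigidity behind Proposition~\ref{prop:Omega-obstruction}: the first $r+2$ columns determine a unique rational normal curve $\Gamma$, and GRS at length $N$ places $Q_0,\dots,Q_{N-1}$ on the split curve $\Gamma$. Then $Q_N=T_gQ_{N-1}$, and $T_g$ maps $\Gamma$ to the unique rational normal curve through $T_gQ_0,\dots,T_gQ_{N-2}$. Since $N-1\ge r+2$, this curve is again $\Gamma$, so $Q_N\in\Gamma$. If $Q_N$ is projectively distinct from the earlier columns, which follows from MDS, the code is GRS at length $N+1$. As an alternative, one can read the same fact off the explicit GRS classification in \cite{LiYuanCyclicOrbits}: the GRS members are, up to dilation, the pure powers $(x-\beta)^r$ or the cyclic-orbit families indexed by an element $t$, and their GRS property holds exactly for $n\le p$ or $n\le\ord(t)$, which coincides with their MDS range.

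\emph{Counting.} Given the set identity and the inclusion, $\#\mathcal X^{\mathrm{GRS}}_{N,r}(q)=\#\mathcal G_{N,r}(\F_q)-\#\mathcal G_{N+1,r}(\F_q)$. Subtracting \eqref{eq:exact-GRS-count} at $N+1$ from the same formula at $N$ gives
\[
(q-1)\bigl(\mathbf 1_{\{N\le p\}}-\mathbf 1_{\{N+1\le p\}}\bigr)
+\tfrac{q-1}{2}\bigl(\mathcal E_{q-1}(N)-\mathcal E_{q-1}(N+1)+\mathcal E_{q+1}(N)-\mathcal E_{q+1}(N+1)\bigr).
\]
The first bracket equals $\mathbf 1_{\{N=p\}}$. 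By \eqref{eq:EMn}, $\mathcal E_M(N)-\mathcal E_M(N+1)=\varphi(N)\mathbf 1_{\{N\mid M\}}$.

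\emph{Exclusivity and the final bound.} First, $N\mid(q-1)$ and $N\mid(q+1)$ cannot both hold, since $N\ge6$ and then $N\mid 2$. Second, $N=p$ is incompatible with $N\mid q\pm1$, because $p\mid q$. Hence the three cases in \eqref{eq:GRS-first-failure-count} are mutually exclusive and the formula follows. For the bound \eqref{eq:GRS-first-failure-linear-bound}, note that $N=p\ge6$ forces $\varphi(p)/2=(p-1)/2\ge1$, so in the case $N=p$ the count $q-1$ is still at most $\tfrac{\varphi(N)}{2}(q-1)$. The remaining cases satisfy the bound trivially.
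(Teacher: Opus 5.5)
Your proof is correct, and the counting half matches the paper exactly: you subtract \eqref{eq:exact-GRS-count} at $N$ and $N+1$, use $\mathcal E_M(N)-\mathcal E_M(N+1)=\varphi(N)\mathbf 1_{\{N\mid M\}}$, and give the same two exclusivity arguments. You depart from the paper only for the set identity \eqref{eq:GRS-first-failure-difference}.

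\emph{The paper's route.} It obtains the set identity from the classification. A GRS member is either $(x-\beta)^r$, with GRS range $n\le p$ and $\sigma_r=p$ by Corollary~\ref{thm:pure-power-threshold}. Or its roots form a geometric progression with ratio of order $e$, with GRS range $n\le e$ and $\sigma_r=e$ by Theorem~\ref{thm:semisimple-threshold}. In both cases GRS type persists to length $N+1$ exactly when $\sigma_r\ge N+1$.

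\emph{Your route.} You avoid the classification. GRS at length $N$ gives MDS at length $N$, so $Q_1,\dots,Q_{N-1}$ are $N-1\ge r+2$ points in linearly general position. They lie both on the split curve $\Gamma$ and on its image $T_g(\Gamma)$. Rigidity forces $T_g(\Gamma)=\Gamma$, geometrically and hence over $\F_q$, since both curves are defined there. So $Q_N=T_gQ_{N-1}$ is an $\F_q$-point of $\Gamma$, and MDS at length $N+1$ supplies the pairwise distinctness.

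\emph{What each route buys.} Yours proves a more intrinsic statement from only the rigidity lemma and the projective characterization of GRS type. Once $\mathcal C_g(N)$ is GRS with $N\ge r+3$, the whole projective orbit lies on one $T_g$-stable split curve. Hence GRS type persists exactly on the MDS interval $n\le\sigma_r(g)$. The paper's route instead gives the explicit value $\sigma_r(g)\in\{p,\ord(t)\}$ on each GRS family. That shows directly which families feed each case of \eqref{eq:GRS-first-failure-count}. The count \eqref{eq:exact-GRS-count} still rests on the classification, so your route removes it only from the set identity, not from the corollary as a whole.

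Two smaller points. You justify $\mathcal G_{N+1,r}(\F_q)\subseteq\mathcal G_{N,r}(\F_q)$ explicitly, whereas the paper leaves this implicit in the subtraction. Deleting a parity-check column is shortening rather than puncturing, though this does not affect your argument.
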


\begin{proof}
Under the stated hypotheses $r\ge3$ and $N\ge r+3$, the classification
underlying
\eqref{eq:exact-GRS-count} separates the GRS members into the unipotent and
the split and nonsplit semisimple families.  In the unipotent family,
$g=(x-\beta)^r$ with $\beta\in\F_q^*$, and the length condition is
$n\le p$.  Since the member is present at length $N\ge r+3$, this forces
$p\ge N>r$, and
    Corollary~\ref{thm:pure-power-threshold} gives $\sigma_r(g)=p$.

In either semisimple family the roots form a geometric progression with
ratio $t$ of order $e$, where $e\mid(q-1)$ in the split case and
$e\mid(q+1)$ in the nonsplit case.  The GRS length condition is $n\le e$.
Since $g\in\F_q[x]$, Theorem~\ref{thm:semisimple-threshold} applies directly
and gives $\sigma_r(g)=e$.  Consequently, a GRS member at length $N$ remains a GRS
member at length $N+1$ exactly when its sparse threshold is at least
$N+1$.  This proves~\eqref{eq:GRS-first-failure-difference}.

Now subtract the exact counts~\eqref{eq:exact-GRS-count} at lengths $N$ and
$N+1$.  The unipotent contribution is
\[
 (q-1)\bigl(\mathbf 1_{\{N\le p\}}
 -\mathbf 1_{\{N+1\le p\}}\bigr)
 =(q-1)\mathbf 1_{\{N=p\}},
\]
whereas, for a cyclic group of order $M$,
\[
 \mathcal E_M(N)-\mathcal E_M(N+1)
 =\varphi(N)\mathbf 1_{\{N\mid M\}}.
\]
Taking $M=q-1$ and $M=q+1$ gives
the two possible semisimple contributions.  Since $N\ge r+3\ge6$ and
$\gcd(q-1,q+1)\le2$, the integer $N$ cannot divide both $q-1$ and $q+1$.
If $N=p$, then $p\nmid(q-1)(q+1)$, so the unipotent contribution cannot
coincide with a semisimple one.  This proves the mutually exclusive formula
\eqref{eq:GRS-first-failure-count}.  Finally, $\varphi(N)\ge2$ for $N\ge3$,
and~\eqref{eq:GRS-first-failure-linear-bound} follows.
\end{proof}

\begin{corollary}[Generic non-GRS behavior on fixed first-failure strata]
\label{cor:generic-non-GRS-first-failure}
Let $r\ge3$, let $N\ge r+3$, and let $K$ be an algebraically closed field.
Assume that $\mathcal X_{N,r,K}\ne\varnothing$, and let
\[
 \mathcal Z^{\mathrm{GRS}}_{N,r,K}
 :=\overline{\{a\in\mathcal X_{N,r,K}(K):
       \mathcal C_{g_a}(N)\text{ is of GRS type}\}}
       ^{\,\mathcal X_{N,r,K}},
\]
where $g_a$ is the monic polynomial with coefficient point $a$ and the
closure is endowed with its reduced induced structure.  Then
\begin{equation}
 \dim \mathcal Z^{\mathrm{GRS}}_{N,r,K}\le1.
 \label{eq:fixed-layer-GRS-dimension}
\end{equation}
Consequently, for every irreducible component $C$ of
$\mathcal X_{N,r,K}$, the set
\[
 C\setminus\mathcal Z^{\mathrm{GRS}}_{N,r,K}
\]
is a nonempty dense open subset of $C$, and, whenever the intersection is
nonempty,
\begin{equation}
 \operatorname{codim}_C
 \bigl(C\cap\mathcal Z^{\mathrm{GRS}}_{N,r,K}\bigr)\ge r-2.
 \label{eq:generic-non-GRS-codimension}
\end{equation}
Every coefficient point $a$ in this dense open subset corresponds to the
monic polynomial $g_a$ satisfying
\[
 \sigma_r(g_a)=N,
 \qquad
 \mathcal C_{g_a}(N)\text{ is MDS and non-GRS}.
\]

The nonemptiness hypothesis is automatic when $\operatorname{char}K=0$, or
when $\operatorname{char}K=\ell\nmid N$.  Consequently, for every prime
$\ell\nmid N$, there exist $e\ge1$ and a monic polynomial
$g\in\F_{\ell^e}[x]$ of degree $r$ such that $\sigma_r(g)=N$, and
$\mathcal C_g(N)$ is MDS and remains non-GRS after scalar extension to
$\overline{\F}_\ell$.
\end{corollary}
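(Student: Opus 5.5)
The plan is to bound the GRS-type $K$-points of $\mathcal X_{N,r,K}$ by a finite union of curves, using the classification behind~\eqref{eq:exact-GRS-count}. Over an algebraically closed $K$, with $r\ge3$ and $N\ge r+3$, that classification \cite[Theorems~5.3 and~6.6]{LiYuanCyclicOrbits} says that an MDS code $\mathcal C_g(N)$ of GRS type has one of two shapes. Either $g=(x-\beta)^r$ (unipotent case), or the roots of $g$ form a geometric progression $\gamma,\gamma t,\ldots,\gamma t^{r-1}$ with $\gamma\ne0$ and $t$ of order at least $N$. I would take the classification in this form over an arbitrary algebraically closed field; the argument there is projective and rational-normal-curve based. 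The nonsplit family collapses into the split one after scalar extension.

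First, restrict to the layer. A unipotent point in $\mathcal X_{N,r}(K)$ has $\sigma_r(g)=N$. Corollary~\ref{thm:pure-power-threshold} then forces $\operatorname{char}K=p=N$. So these points lie on the image of the curve $\beta\mapsto(x-\beta)^r$, a one-dimensional closed set in $\A^r_K$. For a progression point, Theorem~\ref{thm:semisimple-threshold} gives $\sigma_r(g)=\ord(t)$. Membership in the layer therefore forces $\ord(t)=N$ exactly, which leaves only finitely many choices of $t$ (the primitive $N$-th roots of unity). For each such $t$, the map $\gamma\mapsto\prod_j(x-\gamma t^j)$ is a morphism $\Gm\to\A^r_K$ whose image has dimension at most one; by Proposition~\ref{prop:scaling-action} it is a $\Gm$-orbit. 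Hence the GRS set is contained in a finite union of images of irreducible curves. Its closure $\mathcal Z^{\mathrm{GRS}}_{N,r,K}$ is thus contained in a finite union of irreducible closed sets of dimension at most one, which gives~\eqref{eq:fixed-layer-GRS-dimension}.

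Next, compare dimensions. By Corollary~\ref{cor:first-failure-geometry}, every irreducible component $C$ of the nonempty fiber $\mathcal X_{N,r,K}$ has dimension $r-1\ge2$. Then $C\cap\mathcal Z^{\mathrm{GRS}}_{N,r,K}$ is a closed subset of dimension at most $1<r-1$. Its complement in $C$ is therefore open, nonempty, and dense (since $C$ is irreducible), and the codimension is at least $(r-1)-1=r-2$. Any point $a$ of this open set lies in $\mathcal X_{N,r}(K)$, so $\sigma_r(g_a)=N$ by~\eqref{eq:sigma-exact-level}. By~\eqref{eq:MDS-length-interval}, $\mathcal C_{g_a}(N)$ is MDS, and it is non-GRS because $a\notin\mathcal Z^{\mathrm{GRS}}_{N,r,K}$.

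Finally, nonemptiness in characteristic $0$ or $\ell\nmid N$ follows from Corollary~\ref{cor:first-failure-faithfully-flat}. For the arithmetic consequence, take $K=\overline{\F}_\ell$. The dense open non-GRS subset of a component is a nonempty constructible set, so it has a point. That point has coordinates in some finite field $\F_{\ell^e}$, since every $\overline{\F}_\ell$-point is defined over a finite subfield. The resulting $g\in\F_{\ell^e}[x]$ keeps $\sigma_r(g)=N$ by Lemma~\ref{lem:scalar-extension-thresholds}. Its code is non-GRS over $\overline{\F}_\ell$ by construction.

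The main obstacle I expect is the first step: justifying that the GRS classification, stated over finite fields in \cite{LiYuanCyclicOrbits}, applies over an arbitrary algebraically closed field. This includes checking that the unipotent case requires $\operatorname{char}K=p\ge N$ and that no further families appear. I would handle it by noting that the cited rational-normal-curve rigidity argument is field-independent, as the introduction remarks.
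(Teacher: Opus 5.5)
Your proposal is correct and follows essentially the same route as the paper. The paper also bounds the GRS points of the layer by the finitely many semisimple curves with $\ord(t)=N$ (via Theorem~\ref{thm:semisimple-threshold}) together with the pure-power curve, compares this with the pure $(r-1)$-dimensionality from Corollary~\ref{cor:first-failure-geometry}, and obtains nonemptiness and the finite-field point from Corollary~\ref{cor:first-failure-faithfully-flat} and Lemma~\ref{lem:scalar-extension-thresholds}. Your extra observation that pure-power points of the layer force $\operatorname{char}K=N$ is correct but unnecessary: the paper simply places all pure-power points in the one-dimensional closure of $\beta\mapsto(x-\beta)^r$, and, like you, it takes the classification over algebraically closed $K$ directly from \cite[Theorem~6.6]{LiYuanCyclicOrbits}.
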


\begin{proof}
By Corollary~\ref{cor:first-failure-geometry}, every irreducible component
of the nonempty scheme $\mathcal X_{N,r,K}$ has dimension $r-1$.

Let
\[
 T_N(K):=\{t\in K^*: \ord(t)=N\}.
\]
This is a finite set.  For $t\in T_N(K)$, consider the coefficient morphism
\[
 \phi_t:\mathbb G_{m,K}\longrightarrow\A_K^r,
 \qquad
 \gamma\longmapsto
 \coeff\!\left(\prod_{j=0}^{r-1}(x-\gamma t^j)\right),
\]
and put
\[
 W_{N,r,K}^{\mathrm{ss}}
 :=\bigcup_{t\in T_N(K)}
   \overline{\phi_t(\mathbb G_{m,K})}^{\,\A_K^r}.
\]
This is a finite union of closed subsets of dimension at most one.  Also put
\[
 \psi:\mathbb G_{m,K}\longrightarrow\A_K^r,
 \qquad
 \beta\longmapsto\coeff((x-\beta)^r),
 \qquad
 W_{r,K}^{\mathrm{pp}}
 :=\overline{\psi(\mathbb G_{m,K})}^{\,\A_K^r}.
\]

Under the stated bounds $r\ge3$ and $N\ge r+3$, Li and
Yuan~\cite[Theorem~6.6]{LiYuanCyclicOrbits} show that every GRS point
of $\mathcal X_{N,r,K}(K)$ is either semisimple, with pairwise distinct roots
\[
 \{\gamma,\gamma t,\ldots,\gamma t^{r-1}\},
\]
or belongs to the pure-power family.  In the semisimple case,
Theorem~\ref{thm:semisimple-threshold} gives
\[
 \sigma_r(g)=\ord(t),
\]
with the value $\infty$ when $t$ has infinite order.  Since a point of
$\mathcal X_{N,r,K}$ has $\sigma_r(g)=N$, one must have $t\in T_N(K)$.
The pure-power points are contained in $W_{r,K}^{\mathrm{pp}}$.  Hence all
GRS $K$-points of the fixed layer are contained in
\[
 \mathcal X_{N,r,K}\cap
 \bigl(W_{N,r,K}^{\mathrm{ss}}\cup W_{r,K}^{\mathrm{pp}}\bigr).
\]
The set on the right is closed in $\mathcal X_{N,r,K}$ and has dimension at
most one.  It therefore contains $\mathcal Z^{\mathrm{GRS}}_{N,r,K}$,
proving~\eqref{eq:fixed-layer-GRS-dimension}.

Since $r-1>1$, no irreducible component of $\mathcal X_{N,r,K}$ is
contained in $\mathcal Z^{\mathrm{GRS}}_{N,r,K}$.  The density and
codimension assertions follow.  Corollary~\ref{cor:first-failure-strata}
shows that every point of the complement has $\sigma_r(g)=N$ and is MDS at
length $N$; it is non-GRS by the definition of
$\mathcal Z^{\mathrm{GRS}}_{N,r,K}$.

The stated nonemptiness follows from
Corollary~\ref{cor:first-failure-faithfully-flat}.  Finally take
$K=\overline{\F}_\ell$.  The nonempty open complement of
$\mathcal Z^{\mathrm{GRS}}_{N,r,K}$ has a $K$-point, whose finitely many
coefficients lie in some $\F_{\ell^e}$.  Lemma~\ref{lem:scalar-extension-thresholds}
shows that the resulting polynomial over $\F_{\ell^e}$ still satisfies
$\sigma_r(g)=N$.  Viewed again over $K$, its coefficient point is the chosen
point outside $\mathcal Z^{\mathrm{GRS}}_{N,r,K}$; hence
$\mathcal C_g(N)\otimes_{\F_{\ell^e}}K$ is non-GRS.
\end{proof}

\begin{proposition}[Companion projective period]
\label{prop:general-projective-order}
Let $g\in\F_q[x]$ be monic of degree $r\ge2$ with $a_0\ne0$, let $T_g$ be
its companion multiplication matrix, and set
\begin{equation}
 N_g:=\ord_{\operatorname{PGL}_r(\F_q)}([T_g]).
 \label{eq:Ng-definition}
\end{equation}
Then the projective remainder orbit has exact period $N_g$:
\begin{equation}
 [Q_i]=[Q_j]
 \quad\Longleftrightarrow\quad
 i\equiv j\pmod{N_g}
 \qquad(i,j\ge0).
 \label{eq:exact-projective-period}
\end{equation}
In particular,
\begin{equation}
 r\le N_g\le\#\mathbf P^{r-1}(\F_q)
 =\frac{q^r-1}{q-1}.
 \label{eq:Ng-range}
\end{equation}
There is a unique $c_g\in\F_q^*$ such that
\begin{equation}
 T_g^{N_g}=c_gI_r,\qquad
 Q_{N_g}=c_gQ_0,\qquad
 g(x)\mid x^{N_g}-c_g.
 \label{eq:Ng-scalar-relation}
\end{equation}
Equivalently,
\begin{equation}
 \boxed{\ 
 \begin{aligned}
 N_g
 &=\min\{N\ge1:\exists\,c\in\F_q^*,\ g(x)\mid x^N-c\}\\
 &=\min\{\deg f:0\ne f\in\F_q[x],\ g\mid f,\ \wt(f)=2\}.
 \end{aligned}
 \ }
 \label{eq:Ng-first-binomial}
\end{equation}
Thus $N_g$ is the degree of the first binomial multiple of $g$.
Consequently,
\begin{equation}
 \boxed{\ \Schur{(N_g-r+1)}(g)=0\ },
 \label{eq:Ng-Schur-zero}
\end{equation}
and
\begin{equation}
 \tau(g)\le N_g-r+1,\qquad
 \sigma_r(g)\le N_g\le\frac{q^r-1}{q-1}.
 \label{eq:Ng-threshold-bounds}
\end{equation}
Thus $\mathcal C_g(n)$ can be MDS only for $n\le N_g$.
\end{proposition}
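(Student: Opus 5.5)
The plan is to work entirely with the cyclic orbit $Q_i=T_g^iQ_0$ and the fact that $Q_0,\ldots,Q_{r-1}$ is the standard basis. Since $a_0\ne0$, $T_g$ is invertible, so $[T_g]$ lies in the finite group $\operatorname{PGL}_r(\F_q)$ and $N_g$ is finite. First I will prove the key equivalence: for $m\ge0$, $[Q_m]=[Q_0]$ if and only if $T_g^m$ is a scalar. If $T_g^m=cI_r$, then $Q_m=cQ_0$. Conversely, suppose $Q_m=T_g^mQ_0=cQ_0$. Because $T_g^m$ commutes with $T_g$, we get $T_g^mQ_j=T_g^jT_g^mQ_0=cQ_j$ for every $j<r$. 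These $Q_j$ span $\F_q^r$, so $T_g^m=cI_r$. This is the cyclic-vector argument behind Proposition~\ref{prop:cyclic-pair-scheme}.

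Next I will derive the period statement \eqref{eq:exact-projective-period}. Take $i\ge j$. Then $[Q_i]=[Q_j]$ is equivalent to $[T_g^{j}Q_{i-j}]=[T_g^jQ_0]$, and since $T_g^j$ is invertible this is equivalent to $[Q_{i-j}]=[Q_0]$. By the first step this holds exactly when $[T_g]^{i-j}=1$ in $\operatorname{PGL}_r$, that is, when $N_g\mid i-j$. The bounds \eqref{eq:Ng-range} then follow. The points $[Q_0],\ldots,[Q_{N_g-1}]$ are pairwise distinct, which gives $N_g\le\#\PP^{r-1}(\F_q)$. The standard basis points $[Q_0],\ldots,[Q_{r-1}]$ are distinct, which forces $N_g\ge r$. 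For \eqref{eq:Ng-scalar-relation}, let $c_g$ be the scalar with $T_g^{N_g}=c_gI_r$; it is unique and nonzero by invertibility. Then $Q_{N_g}=c_gQ_0$, and this says exactly that the remainder of $x^{N_g}$ modulo $g$ is $c_g$, so $g\mid x^{N_g}-c_g$.

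For \eqref{eq:Ng-first-binomial}, I will show that the three quantities coincide. If $g\mid x^N-c$ with $c\ne0$, then $Q_N=cQ_0$, so $N_g\mid N$ and in particular $N\ge N_g$. The first minimum is therefore attained at $N_g$. For weight-two multiples, write such a multiple as $f=\alpha x^i+\beta x^j$ with $i>j$ and $\alpha\beta\ne0$. Since $a_0\ne0$, $g$ is coprime to $x$, so $g\mid x^{i-j}+\beta/\alpha$. This gives $\deg f\ge i-j\ge N_g$, and $x^{N_g}-c_g$ attains the bound.

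Finally, for \eqref{eq:Ng-Schur-zero} I will take the index set $I=\{0,1,\ldots,r-2,N_g\}$. This is legitimate because $N_g\ge r>r-2$; when $r=2$ the set is just $\{0,N_g\}$. Its columns are dependent because $Q_{N_g}=c_gQ_0$. By \eqref{eq:kappa-U}, $U_I=\{1,\ldots,r-2,N_g\}$ has $\kappa(U_I)=(N_g-r+1,0,\ldots,0)$, and Theorem~\ref{thm:reduced-Schur} gives $\Schur{(N_g-r+1)}(g)=\delta_{U_I}(g)=0$. Then $\tau(g)\le N_g-r+1$ by \eqref{eq:tau-definition}. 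Theorem~\ref{thm:sigma-tau} gives $\sigma_r(g)\le N_g$, which can also be seen directly from the binomial multiple since $2\le r$. The MDS restriction $n\le N_g$ is then \eqref{eq:MDS-length-interval}. The main obstacle is the first step, the passage from projective equality of a single vector to a scalar matrix; the cyclicity of $Q_0$ resolves it.
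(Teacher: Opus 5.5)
Your proposal is correct and follows essentially the same route as the paper. Both arguments use the cyclic-vector step upgrading $T_g^{m}Q_0=cQ_0$ to $T_g^{m}=cI_r$, the binomial characterization through $Q_N=cQ_0$, and the dependent index set $\{0,1,\ldots,r-2,N_g\}$ yielding the one-row partition $(N_g-r+1)$. The differences are cosmetic: you read off $g\mid x^{N_g}-c_g$ from the remainder interpretation of $Q_{N_g}$ rather than from the minimal polynomial of $T_g$, and you treat weight-two multiples by cancelling the power of $x$ via $\gcd(g,x)=1$ rather than by inverting $T_g$ in $aT_g^{u}+bT_g^{v}=0$.
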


\begin{proof}
The vector $Q_0$ is cyclic for $T_g$, since
$Q_0,Q_1,\ldots,Q_{r-1}$ is the power basis.  If
$[Q_i]=[Q_j]$ with $i\ge j$, then
$T_g^iQ_0=cT_g^jQ_0$ for some $c\in\F_q^*$.  Since $T_g$ is invertible,
applying $T_g^{-j}$ gives $T_g^{i-j}Q_0=cQ_0$.  Commuting
$T_g^{i-j}$ with the powers of $T_g$ shows that it acts by $c$ on the
entire power basis; hence $T_g^{i-j}=cI_r$.  The converse is immediate, and
\eqref{eq:exact-projective-period} follows.  The first $r$ orbit points are
distinct, while the orbit lies in $\mathbf P^{r-1}(\F_q)$, proving
\eqref{eq:Ng-range}.

By the definition of $N_g$, there is a unique $c_g\in\F_q^*$ with
$T_g^{N_g}=c_gI_r$.  Since the minimal polynomial of the cyclic companion
matrix $T_g$ is $g$, this identity is equivalent to
$g\mid x^{N_g}-c_g$, and it also gives $Q_{N_g}=c_gQ_0$.  Therefore
\[
 g\mid x^N-c
 \quad\Longleftrightarrow\quad
 T_g^N=cI_r
 \qquad(N\ge1,\ c\in\F_q^*).
\]
It follows that any such $N$ is divisible by $N_g$, proving the first
minimum in~\eqref{eq:Ng-first-binomial}.  More generally, if
$f=ax^u+bx^v$ with $a,b\ne0$ and $u>v\ge0$ is divisible by $g$, then
\[
 aT_g^u+bT_g^v=0.
\]
Since $T_g$ is invertible, $T_g^{u-v}=-(b/a)I_r$.  Hence
$N_g\mid(u-v)$ and $\deg f=u\ge N_g$; the binomial
$x^{N_g}-c_g$ gives equality.  This proves the second minimum in
\eqref{eq:Ng-first-binomial}.

Finally,
\[
 \det(Q_0,Q_1,\ldots,Q_{r-2},Q_{N_g})=0,
\]
where the middle list is empty for $r=2$.  The reduced index set
$\{1,\ldots,r-2,N_g\}$ corresponds to the one-row partition
$(N_g-r+1)$.  Theorem~\ref{thm:reduced-Schur} proves
\eqref{eq:Ng-Schur-zero}; the threshold bounds follow from the definitions
and Theorem~\ref{thm:sigma-tau}.  The vanishing factor lies in the outer
 Schur layer introduced at length $N_g+1$.
\end{proof}

The projective-space bound in~\eqref{eq:Ng-range} is elementary.  The
factorwise certificate~\eqref{eq:Ng-Schur-zero}, which locates the first
projective repetition in the specific outer Schur layer introduced at
length $N_g+1$, is the finer information supplied by the rectangular Schur
system.  Thus~\eqref{eq:Ng-threshold-bounds} compares the degree of the first
multiple having at most $r$ terms with the degree of the first binomial
multiple; Example~\ref{ex:irreducible-cubic} shows that the comparison can
be strict.

\begin{proposition}[Root-ratio and multiplicity formula]
\label{prop:root-ratio-projective-period}
Let $g\in\F_q[x]$ be monic of degree $r\ge2$ with $g(0)\ne0$, let $N_g$ be
its companion projective period, and put $p=\operatorname{char}\F_q$.  Write
the factorization of $g$ over $\overline{\F}_q$ as
\begin{equation}
 g(x)=\prod_{\nu=1}^s(x-\lambda_\nu)^{m_\nu},
 \qquad \lambda_\nu\ne0,
 \label{eq:root-multiplicity-factorization-period}
\end{equation}
where the $\lambda_\nu$ are pairwise distinct.  Put
\begin{equation}
 m_*:=\max_{1\le\nu\le s}m_\nu,
 \qquad
 P_g:=\min\{p^a:a\ge0,\ p^a\ge m_*\},
 \label{eq:Pg-definition}
\end{equation}
and
\begin{equation}
 R_g:=\operatorname*{lcm}_{1\le\nu\le s}
       \ord\!\left(\frac{\lambda_\nu}{\lambda_1}\right).
 \label{eq:Rg-definition}
\end{equation}
Here all multiplicative orders are taken in $\overline{\F}_q^*$; when
$s=1$, we set $R_g=1$.
Then $R_g$ is independent of the choice of the reference root $\lambda_1$,
$p\nmid R_g$, and
\begin{equation}
 \boxed{\ N_g=P_gR_g\ }.
 \label{eq:Ng-root-multiplicity-formula}
\end{equation}
Moreover, the scalar in~\eqref{eq:Ng-scalar-relation} is
\begin{equation}
 c_g=\lambda_1^{N_g}=\cdots=\lambda_s^{N_g}\in\F_q^*.
 \label{eq:cg-root-formula}
\end{equation}
Consequently,
\begin{equation}
 \sigma_r(g)\le P_gR_g\le\frac{q^r-1}{q-1}.
 \label{eq:PgRg-threshold-bound}
\end{equation}
\end{proposition}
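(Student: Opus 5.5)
By Proposition~\ref{prop:general-projective-order}, $N_g$ is the least $N\ge1$ such that $g\mid x^N-c$ for some $c\in\F_q^*$. The plan is to compute this minimum from the root data of $g$. The key tool is a divisibility criterion for binomials over $\overline{\F}_q$. Write $N=p^aM$ with $p\nmid M$. Then
\[
 x^N-c=(x^M-c')^{p^a},\qquad c'=c^{1/p^a},
\]
and $x^M-c'$ is separable because $p\nmid M$ and $c'\ne0$. Hence every root of $x^N-c$ has multiplicity exactly $p^a$. It follows that $g\mid x^N-c$ holds if and only if two conditions are met. First, $\lambda_\nu^N=c$ for every $\nu$. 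Second, $p^a\ge m_*$.

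The first condition says that $\lambda_\nu^N$ is independent of $\nu$. Equivalently, $(\lambda_\nu/\lambda_1)^N=1$ for all $\nu$, that is, $R_g\mid N$. Each ratio $\lambda_\nu/\lambda_1$ lies in $\overline{\F}_q^*$, so its order is prime to $p$; therefore $p\nmid R_g$. Independence of the reference root holds because $\lambda_\nu/\lambda_\mu=(\lambda_\nu/\lambda_1)(\lambda_\mu/\lambda_1)^{-1}$. As a result, the subgroup generated by all ratios $\lambda_\nu/\lambda_\mu$ is the same as the subgroup generated by the ratios $\lambda_\nu/\lambda_1$. This group is finite cyclic, its exponent is the lcm of the orders of its generators, and so $R_g$ is its exponent.

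Combining the two conditions, the admissible $N$ are exactly those with $R_g\mid N$ and $v_p(N)\ge\log_pP_g$. Since $p\nmid R_g$, these are exactly the multiples of $P_gR_g$, and the least one is $N_g=P_gR_g$.

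One issue remains: the scalar $c$ must lie in $\F_q^*$, not merely in $\overline{\F}_q^*$. Take $N=P_gR_g$ and $c=\lambda_1^N$, so that $g\mid x^N-c$ over $\overline{\F}_q$. Reducing $x^N$ modulo $g$ over $\F_q$ then gives the constant $c$, as in the proof of Theorem~\ref{thm:semisimple-threshold}. Therefore $c\in\F_q$, and $c\ne0$. This proves~\eqref{eq:cg-root-formula} together with the uniqueness statement in~\eqref{eq:Ng-scalar-relation}. Finally,~\eqref{eq:PgRg-threshold-bound} follows from~\eqref{eq:Ng-threshold-bounds}.

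The main obstacle is the multiplicity criterion: showing that $(x-\lambda)^m\mid x^N-c$ if and only if $\lambda^N=c$ and $p^{v_p(N)}\ge m$. This comes from the factorization $x^N-c=\prod_\zeta(x-\zeta\lambda)^{p^a}$, where $\zeta$ runs over the $M$-th roots of unity, and that factorization must be written out carefully. Everything else is bookkeeping.
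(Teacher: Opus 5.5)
Your proof is correct and follows essentially the paper's route: the factorization $x^N-c=(x^M-c_0)^{p^b}$ gives the multiplicity constraint $P_g\mid p^b$ and the root-ratio constraint $R_g\mid M$, and the binomial $x^{P_gR_g}-\lambda_1^{P_gR_g}$ attains the minimum. The only difference is minor: you show the scalar lies in $\F_q^*$ because the remainder of $x^N$ modulo $g$ is rational, whereas the paper shows that $\lambda_1^{R_g}$ is Frobenius-fixed since Frobenius permutes the roots.
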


\begin{proof}
Let $H_g$ be the finite subgroup of $\overline{\F}_q^*$ generated by all pairwise
root ratios $\lambda_\nu/\lambda_\mu$.  For any choice of reference root,
the ratios to that root generate $H_g$, so $R_g$ is the exponent of $H_g$
and is independent of the choice.  Every finite-order element of
$\overline{\F}_q^*$ has order prime to $p$, whence $p\nmid R_g$.

Frobenius permutes the distinct roots of $g$.  If
$\lambda_1^q=\lambda_j$, then, with $d:=\lambda_1^{R_g}$,
\[
 d^q=(\lambda_1^q)^{R_g}=\lambda_j^{R_g}
 =\lambda_1^{R_g}=d.
\]
Thus $d\in\F_q^*$.  Every $\lambda_\nu$ is a root of $x^{R_g}-d$, and this
polynomial is squarefree because $p\nmid R_g$.  Since $P_g$ is a power of
$p$ and $P_g\ge m_\nu$ for every $\nu$,
\[
 g(x)\mid (x^{R_g}-d)^{P_g}
 =x^{R_gP_g}-d^{P_g}.
\]
The first minimum in~\eqref{eq:Ng-first-binomial} therefore gives
$N_g\le R_gP_g$.

Conversely, suppose that $g(x)\mid x^N-c$ with $c\in\F_q^*$, and write
$N=p^bM$ with $p\nmid M$.  Over $\overline{\F}_q$ one has
\[
 x^N-c=(x^M-c_0)^{p^b}
\]
for the unique $p^b$-th root $c_0$ of $c$.  Since $\F_q$ is perfect,
$c_0\in\F_q^*$.  The polynomial $x^M-c_0$ is squarefree, so each root of
$x^N-c$ has multiplicity exactly $p^b$.
Consequently $p^b\ge m_*$ and hence $P_g\mid p^b$.  Also
$(\lambda_\nu/\lambda_1)^N=1$ for every $\nu$.  The orders of these ratios
are prime to $p$, so they divide $M$, and therefore $R_g\mid M$.  It follows
that $P_gR_g\mid N$.  Since $N_g$ is the least exponent for which such a
binomial multiple exists,~\eqref{eq:Ng-root-multiplicity-formula} follows.
Finally, evaluating $x^{N_g}-c_g$ at every distinct root gives
$c_g=\lambda_1^{N_g}=\cdots=\lambda_s^{N_g}$; this common value lies in
$\F_q^*$ also by the construction above.
\end{proof}

In particular, if $g$ is squarefree, then $N_g=R_g$.  If
$g(x)=(x-\beta)^r$ with $\beta\in\F_q^*$, then
\begin{equation}
 N_g=p^{\lceil\log_p r\rceil}.
 \label{eq:pure-power-projective-period}
\end{equation}

\begin{remark}[The data controlling the projective period]
\label{rem:projective-period-data}
The multiplicity pattern enters $N_g$ only through its largest part $m_*$,
whereas the semisimple contribution is the exponent of the projective
root-ratio group.  Thus polynomials with the same projective root
configuration and the same maximal root multiplicity have the same
companion projective period, even when their remaining multiplicities are
distributed differently.  The formula also gives, for $c\in\F_q^*$,
\begin{equation}
 N_{g_c}=N_g,
 \qquad
 N_{g^\#}=N_g,
 \label{eq:Ng-symmetry}
\end{equation}
because dilation preserves root ratios, while reciprocity inverts them, and
both operations preserve root multiplicities.  This agrees with the
$\Gm\rtimes C_2$-symmetry of the first-failure filtration.
\end{remark}

\begin{corollary}[Projective period on the irreducible stratum]
\label{cor:irreducible-projective-order}
Suppose in addition that $g$ is irreducible, let
$\alpha\in\F_{q^r}$ be a root, and put
\begin{equation}
 N_\alpha:=\min\{N>0:\alpha^N\in\F_q^*\}.
 \label{eq:Nalpha}
\end{equation}
Then
\begin{equation}
 N_g=N_\alpha
 =\frac{m}{\gcd(m,q-1)},
 \qquad
 m:=\ord_{\F_{q^r}^*}(\alpha).
 \label{eq:irreducible-projective-period}
\end{equation}
In particular, all conclusions of
Proposition~\ref{prop:general-projective-order} hold with $N_g=N_\alpha$.
\end{corollary}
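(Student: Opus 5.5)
The plan is to specialize Proposition~\ref{prop:root-ratio-projective-period} to the irreducible case, and then to check the closed formula for $N_\alpha$ by an elementary computation in the cyclic group $\F_{q^r}^*$.

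First, identify $N_g$ with $N_\alpha$. By Proposition~\ref{prop:general-projective-order}, $N_g$ is the least $N\ge1$ such that $g\mid x^N-c$ for some $c\in\F_q^*$. Since $g$ is irreducible, it is the minimal polynomial of $\alpha$ over $\F_q$. Therefore $g\mid x^N-c$ holds if and only if $\alpha^N=c$. Taking minima over $N$ then gives $N_g=N_\alpha$ immediately.

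Alternatively, Proposition~\ref{prop:root-ratio-projective-period} yields the same conclusion. Irreducible polynomials over a perfect field are squarefree, so $P_g=1$. The roots are $\alpha^{q^j}$, so the root ratios are $\alpha^{q^j-1}$, and $R_g$ is the exponent of the group they generate. I would rely on the direct minimal-polynomial argument, since it avoids any computation of $R_g$.

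Next, compute $N_\alpha$. Let $m=\ord(\alpha)$, so that $m\mid q^r-1$. The subgroup $\F_q^*\subseteq\F_{q^r}^*$ is the unique subgroup of order $q-1$. The group $\langle\alpha\rangle$ is cyclic of order $m$, so its intersection with $\F_q^*$ is the unique subgroup of $\langle\alpha\rangle$ of order $d:=\gcd(m,q-1)$. Thus $\alpha^N\in\F_q^*$ exactly when $\alpha^N$ lies in the order-$d$ subgroup $\langle\alpha^{m/d}\rangle$. This condition is equivalent to $(m/d)\mid N$, and so $N_\alpha=m/\gcd(m,q-1)$. The only point needing care is this cyclic-subgroup intersection; it is routine.

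Finally, the clause ``all conclusions hold'' is a restatement of Proposition~\ref{prop:general-projective-order} with $N_g$ replaced by $N_\alpha$. Its hypotheses are satisfied because $r\ge2$ and an irreducible $g$ of degree at least two has $a_0\ne0$. I expect no real obstacle in any of these steps.
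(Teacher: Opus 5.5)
Your proof is correct, but it takes a different route from the paper's.

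The paper derives the corollary from Proposition~\ref{prop:root-ratio-projective-period}. Since $g$ is squarefree, $P_g=1$. Every root ratio $\alpha^{q^j-1}$ is a power of $\alpha^{q-1}$, and that ratio itself occurs because $r\ge2$. Hence $N_g=R_g=\ord(\alpha^{q-1})=m/\gcd(m,q-1)$. The paper leaves the final identification with $N_\alpha$ implicit; it follows from $\alpha^N\in\F_q^*\iff(\alpha^{q-1})^N=1$.

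You instead go straight to the first-binomial characterization~\eqref{eq:Ng-first-binomial} and use that $g$ is the minimal polynomial of $\alpha$, so $g\mid x^N-c$ iff $\alpha^N=c$. This makes $N_g=N_\alpha$ essentially definitional and bypasses the root-ratio and multiplicity machinery. You then compute $N_\alpha$ by intersecting $\langle\alpha\rangle$ with the order-$(q-1)$ subgroup of the cyclic group $\F_{q^r}^*$, rather than by the order-of-a-power formula.

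Your route is more self-contained, and it proves explicitly the equality $N_g=N_\alpha$ that the paper only asserts at the end of its chain. The paper's route has the advantage of presenting the corollary as the special case of $N_g=P_gR_g$ in which the multiplicity factor is trivial and the root-ratio group is cyclic, generated by $\alpha^{q-1}$.
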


\begin{proof}
The polynomial $g$ is squarefree, and its roots are
$\alpha,\alpha^q,\ldots,\alpha^{q^{r-1}}$.  Relative to $\alpha$, every root
ratio is a power of $\alpha^{q-1}$, and the ratio $\alpha^{q-1}$ itself
occurs.  Proposition~\ref{prop:root-ratio-projective-period} therefore gives
\[
 N_g=R_g=\ord(\alpha^{q-1})
 =\frac{m}{\gcd(m,q-1)}=N_\alpha,
\]
which is~\eqref{eq:irreducible-projective-period}.
\end{proof}

\begin{remark}[Finite first-failure stratification over finite fields]
\label{rem:finite-first-failure-stratification}
For $r\ge2$, put
\[
 \Pi_{q,r}:=\#\PP^{r-1}(\F_q)=\frac{q^r-1}{q-1}.
\]
Corollary~\ref{cor:first-failure-strata} and
Proposition~\ref{prop:general-projective-order} give the finite point-set
stratification
\begin{equation}
 D(A_0)(\F_q)
 =\bigsqcup_{N=r}^{\Pi_{q,r}}\mathcal X_{N,r}(\F_q).
 \label{eq:finite-first-failure-stratification}
\end{equation}
In particular, the infinite-threshold stratum has no $\F_q$-points.
\end{remark}

\section{Specializations and examples}
\label{sec:examples}

The examples below are organized by the rectangular Schur system.  They
recover two boundary formulas, display the first nontrivial rectangle, and
show how an irreducible example simultaneously records a Frobenius alphabet
and a sparse threshold.

\begin{example}[The first boundary $n=r+1$]
\label{ex:first-boundary}
The rectangle is $1^{r-1}$.  Its nonempty partitions are the columns
$(1^j)$, and $s_{(1^j)}=e_j$.  Hence
\begin{equation}
 \mathscr D_{r+1,r}(g)=a_0\prod_{j=1}^{r-1}e_j(\Lambda_g)
 =\pm\prod_{j=0}^{r-1}a_j.
 \label{eq:first-boundary}
\end{equation}
Thus $\mathcal C_g(r+1)$ is MDS exactly when every coefficient of $g$ is
nonzero.
\end{example}

\begin{example}[The second boundary $n=r+2$]
\label{ex:second-boundary}
 Every partition in $2^{r-1}$ has a unique expression $(2^i,1^j)$ with
 $i,j\ge0$ and $i+j\le r-1$, and dual Jacobi--Trudi gives
\begin{equation}
 s_{(2^i,1^j)}=e_{i+j}e_i-e_{i+j+1}e_{i-1}.
 \label{eq:hook-boundary-Schur}
\end{equation}
The cases $i=0$ give the column factors $e_j$.  For $i\ge1$, the change of
indices
\[
 u=r-i-j,\qquad v=r-i+1
\]
is a bijection with $1\le u<v\le r$, and, up to sign, the corresponding
factor is $a_u a_{v-1}-a_{u-1}a_v$.  Thus, with $a_r=1$, the rectangular
product becomes
\begin{equation}
 \mathscr D_{r+2,r}(g)
 =\pm\left(\prod_{j=0}^{r-1}a_j\right)
 \prod_{1\le i<j\le r}
 (a_i a_{j-1}-a_{i-1}a_j).
 \label{eq:second-boundary}
\end{equation}
Therefore the code is MDS precisely when $a_0,\ldots,a_{r-1}$ are nonzero
and
\[
 \frac{a_1}{a_0},\frac{a_2}{a_1},\ldots,
 \frac{a_r}{a_{r-1}}
\]
are pairwise distinct.  The Schur system thus recovers from one rectangle
construction the two boundary criteria obtained in
\cite[Theorem~A.1\textup{(iii)--(iv)}]{LiYuanCyclicOrbits}.
\end{example}

\begin{example}[The cubic length-six rectangle]
\label{ex:cubic-six-product}
Let $r=3$, $n=6$, and $g(x)=x^3+ax^2+bx+c$.  The rectangle is $3^2$, and
its canonical product is
\begin{align}
 \mathscr D_{6,3}(g)={}&abc(a^2-b)(a^3-2ab+c)(ab-c)
 \notag\\
 &\cdot(a^2b-ac-b^2)(ac-b^2)
 \notag\\
 &\cdot(a^2c-ab^2+bc)(b^3-2abc+c^2).
 \label{eq:cubic-six-product}
\end{align}
The displayed determinantal polynomial is exactly the Schur product for
$3^2$.
\end{example}

\begin{example}[The repeated-root type $(2,1)$]
\label{ex:two-one}
For the alphabet $(\alpha,\alpha,\beta)$,
\begin{equation}
 s_{(2,1)}(\alpha,\alpha,\beta)
 =2\alpha(\alpha+\beta)^2.
 \label{eq:local-two-one}
\end{equation}
This local characteristic-$2$ collapse is the first case of the exact
classification
\[
 \mathcal B_{n,(2,1)}=\{p:2p\le n-1\}.
\]
For example, the bad set is $\{2\}$ at $n=5,6$ and $\{2,3\}$ at
$n=7,8$.
\end{example}

\begin{example}[An irreducible Frobenius--Schur example]
\label{ex:irreducible-cubic}
This is a complete worked example of Theorem~\ref{thm:sigma-tau}.
Over $\F_{11}$ let
\begin{equation}
 g(x)=x^3+x^2+2x+10.
 \label{eq:irreducible-example-g}
\end{equation}
The cubic has no root in $\F_{11}$ and is irreducible.  If $\alpha$ is one
of its roots, the Frobenius alphabet is
$(\alpha,\alpha^{11},\alpha^{121})$.  Evaluating the rectangular product
\eqref{eq:cubic-six-product}, the ten Schur factors indexed by
\[
 \varnothing,(1),(2),(3),(1,1),(2,1),(3,1),(2,2),(3,2),(3,3)
\]
take the respective values
\[
 1,10,10,4,2,8,10,5,4,2.
\]
Together with $a_0=10$, their product gives
\begin{equation}
 \mathscr D_{6,3}(g)=8\ne0\quad\text{in }\F_{11},
 \label{eq:irreducible-example-D}
\end{equation}
so $\mathcal C_g(6)$ is MDS and,
by~\cite[Corollary~5.5]{LiYuanCyclicOrbits}, automatically non-GRS.

At length seven, the new outer-layer factors
$\Schur{(4)},\Schur{(4,1)},\ldots,\Schur{(4,4)}$ have values
\[
 8,9,8,4,0,
\]
so the first failure occurs there.  The remainder columns satisfy
\[
 Q_5=(10,1,4)^T,
 \qquad
 Q_6=(4,2,8)^T=2Q_5+6Q_0.
\]
Equivalently,
\begin{equation}
 g(x)\mid x^6-2x^5-6
 =(x^3+8x^2+x+6)g(x)
 \quad\text{in }\F_{11}[x].
 \label{eq:irreducible-example-sparse}
\end{equation}
Thus $\sigma_3(g)=6$.  The dependent set
$\{0,5,6\}$ corresponds to $\kappa=(4,4)$, so
$\Schur{(4,4)}(g)=0$, $\tau(g)=4$, and
$\sigma_3(g)=3+4-1$ exactly as in~\eqref{eq:sigma-tau}.  Repeated reduction
using $x^3\equiv10x^2+9x+1\pmod g$ gives
\[
 x^9\equiv5+6x+4x^2,\qquad
 x^{10}\equiv4+8x+2x^2,\qquad
 x^{19}\equiv1\pmod g.
\]
Thus $\alpha^{19}=1$.  Since $19$ is prime and irreducibility gives
$\alpha\notin\F_{11}$, the multiplicative order of $\alpha$ is $19$.
Moreover $\gcd(19,|\F_{11}^*|)=\gcd(19,10)=1$, so no smaller positive power
of $\alpha$ is scalar over $\F_{11}$.  Hence its projective order is
$N_\alpha=19$, illustrating that
$\sigma_3(g)\le N_\alpha$ can be strict.
\end{example}

\section{Conclusion}
\label{sec:conclusion}

The cyclic-pair quotient identifies coefficient space with the companion
slice, whose remainder orbit is a smooth complete intersection in the
Grassmannian big cell.  Its normalized maximal Pl\"ucker coordinates form one
rectangular Schur system; hence the coefficient-space MDS locus of this
principal-ideal family is the pullback of the uniform-matroid open set, with
a flat principal Cartier boundary of explicit degree and flat outer layers.
The full Pl\"ucker boundary has the same support as the canonical Schur
boundary, while its Cartier multiplicities are explicitly determined and
generally larger.  Dilation and reciprocity give the two basic symmetries.

The same rectangles control multiplicity and Frobenius strata and identify
first failure with the first sparse multiple through
$\sigma_r(g)=r+\tau(g)-1$.  The first-failure Cartier layers are flat and,
for $N>r$, split off scaling.  Every structural bad characteristic on a
root-multiplicity stratum is smaller than the code length.  The GRS
applications combine these
Schur--Pl\"ucker results with the pure-power criterion and the
rational-normal-curve classification and counting results
in~\cite{LiYuanCyclicOrbits}.  Over an algebraically closed field, for
$r\ge3$ and
$N\ge r+3$, the closure of the GRS-type coefficient set in every nonempty
first-failure layer has dimension at most one, so the non-GRS locus contains
a dense open subset of every irreducible component.  This separates the
coefficient-space MDS locus of this principal-ideal family from the
rational-normal-curve geometry governing its GRS-type subset.

\appendix

\section{Quadratic Frobenius reconstruction}
\label{app:quadratic-reconstruction}

We give the coefficient reconstructions used in
Propositions~\ref{prop:all-quadratic-coeff} and
\ref{prop:mixed-quadratic-coeff}; their MDS hypotheses are supplied by the
rectangular Schur system.

\subsection{The all-quadratic stratum}
\label{app:all-quadratic-reconstruction}

The nonsplit family \textup{(N0)}
in~\cite[Theorem~5.3, family~\textup{(N0)}]{LiYuanCyclicOrbits} has roots
\[
 \gamma,\gamma t,\ldots,\gamma t^{2v-1},
 \qquad
 t^{q+1}=1,\quad \ord(t)\ge n,\quad
 \gamma^q=\gamma t^{2v-1}.
\]
Frobenius pairs the exponents $j$ and $2v-1-j$.  Thus, after reordering
the irreducible quadratic factors, their traces and norms are
\[
 \tau_j=\gamma t^j+\gamma^qt^{-j},
 \qquad P_j=\gamma^{q+1}
 \qquad(0\le j\le v-1).
\]
Put $\sigma=t+t^{-1}\in\F_q$ and $P=\gamma^{q+1}$.  Then
$T^2-\sigma T+1$ is the irreducible polynomial of $t$, all $P_j=P$, and
\[
 \tau_{j+2}=\sigma\tau_{j+1}-\tau_j.
\]
If the recurrence is extended by
$\tau_v=\sigma\tau_{v-1}-\tau_{v-2}$, the Frobenius relation
$\gamma^q=\gamma t^{2v-1}$ is equivalent to $\tau_v=\tau_{v-1}$, or
$\tau_{v-2}=(\sigma-1)\tau_{v-1}$.  Direct expansion gives
\[
 \tau_0^2+\tau_1^2-\sigma\tau_0\tau_1=(4-\sigma^2)P.
\]
These identities prove the necessity of conditions
\textup{(1)}--\textup{(5)} in
Proposition~\ref{prop:all-quadratic-coeff}.

Conversely, assume those five conditions and let $t$ be a root of
$T^2-\sigma T+1$.  Irreducibility gives $t^q=t^{-1}$ and
$t\ne t^{-1}$, also in characteristic $2$, so put
\[
 \gamma=\frac{\tau_1-\tau_0t^{-1}}{t-t^{-1}}.
\]
Since
\[
 4-\sigma^2=-(t-t^{-1})^2\ne0,
\]
a direct calculation from the definition of $\gamma$ and
$t^q=t^{-1}$ gives
\[
 \tau_0^2+\tau_1^2-\sigma\tau_0\tau_1
 =(4-\sigma^2)\gamma^{q+1}.
\]
Condition \textup{(5)} therefore gives $\gamma^{q+1}=P$, and the recurrence
yields
\[
 \tau_j=\gamma t^j+\gamma^qt^{-j}
 \qquad(0\le j\le v).
\]
The terminal condition $\tau_v=\tau_{v-1}$ then implies
$\gamma^q=\gamma t^{2v-1}$.  Hence the reconstructed roots form the
displayed orbit, and the classification
in~\cite[Theorem~5.3, family~\textup{(N0)}]{LiYuanCyclicOrbits} proves the
claimed GRS equivalence.

\subsection{The one-linear-plus-quadratic stratum}
\label{app:mixed-quadratic-reconstruction}

Let $t$ be a norm-one element of order at least $n$, and put
$\sigma=t+t^{-1}$.  The recurrence~\eqref{eq:Dickson-recurrence} gives
\[
 \mathcal D_j(\sigma)=t^j+t^{-j}.
\]
The centered nonsplit orbit \textup{(N1)}
\[
 \{\beta t^{-v},\ldots,\beta t^{-1},\beta,
   \beta t,\ldots,\beta t^v\}
\]
therefore has factorization
\[
 (x-\beta)\prod_{j=1}^{v}
 \bigl(x^2-\beta\mathcal D_j(\sigma)x+\beta^2\bigr).
\]
This proves the necessity of~\eqref{eq:mixed-coefficient-test}.
Conversely, irreducibility of $T^2-\sigma T+1$ gives $t^q=t^{-1}$ and
$t\ne t^{-1}$, including in characteristic $2$.  Equations
\eqref{eq:Dickson-recurrence} and~\eqref{eq:mixed-coefficient-test}
recover exactly this orbit.  The order condition makes its points distinct,
and the classification
in~\cite[Theorem~5.3, family~\textup{(N1)}]{LiYuanCyclicOrbits} proves
sufficiency.

\end{document}